\title[Tractable General Equilibrium]{Tractable General Equilibrium}
\author{Denizalp Goktas}
\affiliation{
  \institution{Cornell Tech}
  \city{New York}
  \country{USA}
}
\email{dg776@cornell.edu}
\author{Kl\'ara Chura}
\affiliation{
  \institution{Brown University}
  \city{Providence}
  \country{USA}
}
\email{klara\_chura@brown.edu}
\author{Emin Alp Guneri}
\affiliation{
  \institution{Brown University}
  \city{Providence}
  \country{USA}
}
\email{emin\_alp\_guneri@brown.edu}
\author{Amy Greenwald}
\affiliation{
  \institution{Brown University}
  \city{Providence}
  \country{USA}
}
\email{amy@brown.edu}
\begin{abstract}
A key challenge in Walrasian economies is identifying price-adjustment processes that converge to (Walrasian) equilibrium. One such process, tâtonnement, is an auction-like algorithm first proposed in 1874 by Léon Walras. While tâtonnement is known to converge to equilibrium in economies satisfying the Weak Axiom of Revealed Preferences (WARP), the process fails to converge in a pathological Walrasian economy known as the Scarf economy. To address this issue, we analyze Walrasian economies using variational inequalities (VIs).
We study the class of mirror extragradient algorithms, which under Bregman continuity converge to an $\varepsilon$-approximate solution of any VI satisfying the Minty condition in $O\left(\nicefrac{1}{\varepsilon^2}\right)$ iterations. Applying the mirror extragradient algorithm to suitable VIs, we obtain a class of tâtonnement-like processes, which we call mirror extratâtonnement. We can then establish the analogous polynomial-time convergence of mirror extratâtonnement in WARP economies assuming bounded elasticity of demand, generalizing known results for economies that satisfy weak gross substitutes, as the corresponding VI satisfies the Minty condition. For the Scarf economy, we secure the Minty condition by expanding the search space to the unit box, and we require Bregman continuity along the path of extrat\^atonnement; under these modifications, our theory extends to this pathological economy. We validate our approach in experiments on large Arrow-Debreu economies, including the Scarf economy, demonstrating convergence in all cases within the bounds given by our theoretical guarantees. This work thus provides one resolution to the challenge set by Herbert Scarf’s 50-year-old agenda in general equilibrium theory, namely, providing "a general method for the explicit numerical solution of the neoclassical model."
\end{abstract}
\begin{abstract} 
    We study Walrasian economies and their solution concept, the Walrasian equilibrium. A key challenge in this domain is identifying price-adjustment processes that converge to equilibrium. One such process, tâtonnement, is an auction-like algorithm first proposed in 1874 by Léon Walras. While tâtonnement is known to converge to equilibrium in economies satisfying the Weak Axiom of Revealed Preferences (WARP), the process fails to converge in a pathological Walrasian economy known as the Scarf economy. To address these issues, we analyze Walrasian economies using variational inequalities (VIs). We study the class of mirror extragradient algorithms, which, under a suitable Bregman continuity assumption, converges to an $\varepsilon$-approximate solution of any VI satisfying the Minty condition in $O\left(\nicefrac{1}{\varepsilon^2}\right)$ iterations. \samy{We show that the set of Walrasian equilibria of any balanced economy—which includes among others Arrow-Debreu economies—corresponds to the solution set of an associated VI that satisfies the Minty condition but is generally discontinuous.}{} Applying the mirror extragradient algorithm to \samy{this}{a suitable} VI, we obtain a class of tâtonnement-like processes, which we call mirror extratâtonnement. Although \samy{our}{this} VI \samy{formulation}{} is generally discontinuous, it is Lipschitz-continuous in variationally stable Walrasian economies with bounded elasticity—including those satisfying WARP and the Scarf economy—thus establishing the polynomial-time convergence of mirror extratâtonnement in these economies. We validate our approach through experiments on large Arrow-Debreu economies with Cobb-Douglas, Leontief, and general CES consumers, as well as on the Scarf economy, demonstrating convergence in all cases without failure. Our results suggest that the lack of polynomial-time computability results for general Arrow-Debreu economies stems from Bregman-discontinuities rather than fundamental computational intractability. This paper thus provides one resolution to the challenge set by Herbert Scarf’s fifty-year-old agenda on applied general equilibrium, namely, providing “a general method for the explicit numerical solution of the neoclassical model.”
\end{abstract}
\begin{document}

\begin{titlepage}

\maketitle

\vspace{1cm}
\setcounter{tocdepth}{2} 

\end{titlepage}

\section{Introduction}
Walrasian economies, introduced by French economist L\'eon Walras in 1874, are a broad mathematical framework for modeling any economic system governed by supply and demand \cite{walras}. A Walrasian economy consists of a finite set of commodities, characterized by an excess demand function that maps values for commodities, called \mydef{prices}, to positive (resp. negative) quantities of each commodity demanded (resp. supplied) in excess. Walras proposed a steady-state solution of his economy namely a collection of per-commodity prices which is \mydef{feasible}, i.e., there is no excess demand for any commodity, and for which \mydef{Walras' law} holds, i.e., the value of the excess demand is equal to 0. We call such a solution a Walrasian (or competitive) equilibrium.

Walras did not establish conditions ensuring the existence of an equilibrium, leaving the question unresolved until the 1950s \cite{arrow-debreu}, but argued
that his economy would settle at a Walrasian equilibrium via a \mydef{price-adjustment process} (i.e., any process that generates a sequence of prices based on prior prices and associated excess demands), known as \mydef{t\^atonnement}, which mimics the behavior of the \mydef{law of supply and demand}, updating prices at a rate equal to the excess demand \cite{walras, uzawa1960walras, arrow-hurwicz}. To motivate the relevance of t\^atonnement to real-world economies, Walras argued that t\^atonnement is a \mydef{natural} price-adjustment process, in the following sense: if each commodity is owned by a different seller, then each seller can update the price of its commodity, using only information about the excess demand of its commodity, without coordinating with other sellers. It is thus plausible that t\^atonnement could explain the movement of prices in real-world economies.

Nearly half a century after Walras' initial foray into general equilibrium analysis, a group of academics brought together by the Cowles Commission in 1939 reinitiated a study of Walras' economic model with the aim of bringing rigorous mathematics to the analysis of markets. One of the earliest and most important outputs of this collaborative effort was the introduction of a broad and well-justified class of Walrasian economies known as \mydef{competitive economies} \cite{arrow-debreu}, for which the existence of Walrasian equilibrium was established by a novel application of fixed point theorems to economics. With the question of existence thus resolved, the field subsequently turned its focus to questions on the \mydef{stability} of Walrasian equilibrium, i.e., which price-adjustment processes can settle at a Walrasian equilibrium and under which assumptions~\cite{uzawa1960walras,balasko1975some,arrow-hurwicz, cole2008fast, cheung2018dynamics,fisher-tatonnement, jain2005market, codenotti2005market, codenotti2006leontief, chen2009spending}?

Most relevant work on stability has been concerned with the convergence properties of t\^atonnement.
Beyond Walras' justification for t\^atonnement's relevance to real-world economies, research on t\^atonnement in the post-World-War-II economics literature is motivated by the fact that it can be understood as a plausible explanation of how prices move in real-world markets \cite{gillen2020divergence}. Hence, if one could prove that t\^atonnement is a \mydef{universal} price-adjustment process, i.e., one that converges to a Walrasian equilibrium in all competitive economies, then perhaps it would be justifiable to claim real-world economies would also eventually settle at a Walrasian equilibrium.

In 1958, \citet{arrow-hurwicz} 
established the convergence of a continuous-time variant of t\^atonnement in Walrasian economies with an excess demand function satisfying the weak axiom of revealed preferences (WARP) \cite{afriat1967construction}, which among others, includes Walrasian economies satisfying \mydef{gross substitutes (GS)} (i.e., the excess demand of any commodity can only increase when the price of another commodity increases, fixing all other prices) \cite{arrow1959stability, arrow1960competitive}. This result was complemented by \citeauthor{nikaido1960stability}'s (\citeyear{nikaido1960stability}) result on the convergence of a discrete-time variant of t\^atonnement in Walrasian economies satisfying WARP---albeit without any non-asymptotic convergence guarantees. These initial results sparked hopes that t\^atonnement could be a universal price-adjustment process.

Furthermore, as Walrasian equilibria in general cannot be expressed in closed form, these results ignited further interest in discovering algorithms to compute a Walrasian equilibrium, as t\^atonnement could be implemented on a computer to obtain numerical approximations of Walrasian equilibria in Walrasian economies.
Indeed, these early results on the stability of t\^atonnement inspired a new line of work on \mydef{applied general equilibrium} \cite{scarf1967computation, scarf1967approximation, scarf1973book, scarf1982computation} initiated by Herbert Scarf \cite{scarf-eaves}, whose goal was to establish ``a general method for the explicit numerical solution of the neoclassical [Walrasian economy] model'' \cite{scarf1973book}. The motivation behind this research agenda was a desire to predict the impact of economic policy on an economy by estimating the parameters of a parametric Walrasian economy from empirical data, and then running a comparative static analysis to compare the numerical solution of the Walrasian economy before and after the implementation of the policy.

Soon after this research agenda was initiated, Scarf dashed all hopes that t\^atonnement could be a universal price-adjustment process by exhibiting a competitive economy with only three commodities and an excess demand function generated by three consumers with Leontief preferences, i.e., \mydef{the Scarf economy}, for which the sequence of prices generated by a continuous-time variant of t\^atonnement cycles ad infinitum around the unique Walrasian equilibrium \cite{scarf1960instable}. Even more disheartening, the prices generated by discrete-time variants of t\^atonnement, when applied to the Scarf economy, spiral farther and farther away from the Walrasian equilibrium.

Scarf's negative result seems to have discouraged further research by economists on the stability of Walrasian equilibrium \cite{fisher1975stability}.
Despite research on this question coming to a near halt, one positive outcome was achieved, on the convergence of a non-t\^atonnement update rule known as \mydef{Smale's process} \cite{herings1997globally, kamiya1990globally, van1987convergent, smale1976convergent}, which updates prices at the rate of the product of the excess demand and the inverse of its Jacobian, to a Walrasian equilibrium in competitive economies that have an excess demand with a non-singular Jacobian, including Scarf economies. Unfortunately, this convergence result for Smale's process comes with two caveats: 1) Smale's process is not a ``natural" price-adjustment process, as it updates the price of each commodity using information not only about the excess demand of the commodity but also the derivative of the excess demand function with respect to each commodity in the economy, 2) convergence of discrete time-variants of Smale's process requires the excess demand to satisfy the law of supply and demand, which even Walrasian economies that satisfy the GS or WARP conditions need not satisfy. 

Nearly half a century after these seminal analyses of competitive economies, research on the stability and efficient computation of Walrasian equilibrium is once again coming to the fore, motivated by applications of algorithms to compute Walrasian equilibrium in dynamic stochastic general equilibrium models in macroeconomics \cite{geanakoplos1990introduction,sargent2000recursive,taylor1999handbook,FernandezVillaverde2023CompMethodsMacro}, and the use of algorithms such as t\^atonnement to solve models of transactions on crypotocurrency blockchains \cite{leonardos2021dynamical, liu2022empirical, reijsbergen2021transaction} and load balancing over networks \cite{jain2013constrained}.
In contrast to the prior literature on the stability of t\^atonnement, which was primarily concerned with proving asymptotic convergence of price-adjustment processes to Walrasian equilibria, this line of work is also concerned with obtaining non-asymptotic convergence rates, and hence computing approximate Walrasian equilibria in polynomial time.

The first result on this question is due to \citet{codenotti2005market}, who introduced a discrete-time version of t\^atonnement, and showed that in exchange economies that satisfy \mydef{weak gross substitutes (WGS)} (i.e., the excess demand of any commodity \emph{weakly} increases if the price of any other commodity increases, fixing all other prices), the t\^atonnement process converges to an approximate Walrasian equilibrium in a number of steps polynomial in the inverse of the approximation factor and the size of the problem.
Soon after this positive result appeared, \citet{papadimitriou2010impossibility} argued that no natural (i.e., decentralized)%
\footnote{\citet{papadimitriou2010impossibility} deemed a price-adjustment process ``natural" if the update to commodity $\good$'s price depends only on the history of commodity $\good$'s prices and excess demands, but not on the prices or excess demands of any other commodities.} price-adjustment process based on excess demand can converge to a Walrasian equilibrium in polynomial time in general in competitive economies, thus ruling out the possibility of Smale's process (and many others) justifying the notion of Walrasian equilibrium.
Nonetheless, further study of the convergence of price-adjustment processes such as t\^atonnement under stronger assumptions, or in simpler models than full-blown Arrow-Debreu competitive economies, continued, as these processes are being deployed in practice \cite{jain2013constrained, leonardos2021dynamical, liu2022empirical, reijsbergen2021transaction}.%
\footnote{We refer the reader to \Cref{sec_app:related} for additional related work on algorithms for Walrasian Economies and VIs.}

\subsection{A Tractable Variational Inequality Framework for Walrasian Economies}

The intent of this paper is to bridge the gap between the practical relevance of t\^atonnement and \citeauthor{papadimitriou2010impossibility}' \citep{papadimitriou2010impossibility} impossibility result.
To do so, we employ the variational inequality (VI) optimization framework \cite{lions1967variational}, exploiting a well-known connection between (Walrasian) VIs and Walrasian economies, 
namely that the set of Walrasian equilibria of any Walrasian economy can be characterized as the solution set of an associated complementarity problem (i.e., a VI where the constraint set is the positive orthant) \cite{dafermos1990exchange}.
Within the VI framework, we study the class of \mydef{mirror extragradient} algorithms \cite{nemirovski2004prox},%
\footnote{While \citet{nemirovski2004prox} introduced this algorithm as the ``prox-method,'' the name ``mirror extragradient''
is synonymous.}
which is known \citep{zhang2023mirror} \amy{correct reference?} to converge in polynomial time to a strong solution in VIs that satisfy a computational tractability condition known as the Minty condition \cite{minty1967generalization} 
and a generalization of Lipschitz continuity known as Bregman (or relative \cite{lu2019relative}) continuity.
Applied to Walrasian VIs, 
mirror extragradient corresponds to a novel, natural price-adjustment process we call \mydef{mirror extrat\^atonnement}.


\amy{The Minty condition can be interpreted as an analogue of the gradient dominance (Polyak–\L ojasiewicz) condition in non-convex optimization, as both rule out spurious stationary points by ensuring that every stationary point is globally optimal.}



Our focus in this paper is on \mydef{balanced economies}, a class of Walrasian economies that includes among others Arrow and Debreu's competitive economies \cite{arrow-debreu}.
In a balanced economy, excess demand is a homogeneous correspondence of degree $0$, and weak Walras' Law is satisfied (i.e., the value of the excess demand is less than or equal to $0$).
While it is known \cite{dafermos1990exchange} that the set of Walrasian equilibria of balanced economies is equal to the set of strong solutions of the \mydef{Walrasian VI} $(\R_+^\numgoods,-\excessset)$, where $\numgoods$ is the number of goods and $\excessset$ is the excess demand correspondence,
we observe that this connection extends to the modified Walrasian VI $([0,1]^\numgoods, -\excessset)$, which we call the \mydef{(Walrasian) box VI}, and to the modified Walrasian VI $(\simplex[\numgoods], -\excessset)$, which we call the \mydef{(Walrasian) simplex VI}.
\citet{papadimitriou2010impossibility} restrict their attention to price-adjustment processes that generate prices in the unit simplex.
The simplex VI, however, can fail to satisfy the Minty condition, and accordingly, can be computationally intractable. 
On the other hand, by relaxing this requirement, so that prices instead lie in the unit box, the Minty condition is satisfied, with $\zeros[m]$ as the Minty solution.
Consequently, if the excess demand correspondence can be shown to satisfy something like Bregman continuity, we can apply mirror extrat\^atonnement to solve it.




\if 0
This observation suggests a plan of attack by which to overcome \citeauthor{papadimitriou2010impossibility}' impossibility result,
and instead allow for the efficient computation of Walrasian equilibrium, at least in practice.
The reader might wonder what we mean by ``in practice''.
\fi

As it turns out, the box VI is Bregman discontinuous at one point in its search space, namely the Minty solution, i.e., when all prices are $0$. 
It is thus impossible to simultaneously ensure both the Minty condition and Bregman continuity 
of excess demand on the unit box in general.
Moreover, the only balanced economies with Lipschitz-continuous excess demand on the unit box are those with a constant excess demand.
It is therefore not possible to obtain any meaningful polynomial-time convergence results for mirror extrat\^atonnement applied to the box VI without further assumptions. 
Nevertheless, 
in our experiments, we observe the fast convergence of mirror extrat\^atonnement in a large class of competitive economies, 
including very large instances with Leontief consumers, for which the computation of a Walrasian equilibrium is known to be PPAD-complete \cite{codenotti2006leontief, deng2008computation}. 
To explain this empirical observation, we introduce a \mydef{pathwise Bregman continuity} assumption, which requires  Bregman-continuous excess demand only along the sequence of prices generated by mirror extrat\^atonnement.
We prove that this condition is sufficient to guarantee polynomial-time convergence to a Walrasian equilibrium in balanced and thus also competitive economies.



\if 0
In \Cref{chap:vis}, after reviewing background material on variational inequalities, we introduce two new types of methods with polynomial-time convergence guarantees.

The first type of methods is a family of first-order methods known as the mirror extragradient method. We prove that this method converges to a strong solution of any variational inequality for which a weak solution exists. Furthermore, in the absence of a weak solution, we establish local convergence to a strong solution when the algorithm’s first iterate is initialized sufficiently close to a local weak solution. Since first-order methods are not guaranteed to converge beyond settings where a (local) weak solution exists, we then turn our attention to a class of second-order methods known as merit function methods. In particular, we introduce the primal mirror descent, which we show is guaranteed to converge to a local minimum of the regularized primal gap function of any Lipschitz-smooth variational inequality.

In \Cref{chap:walrasian_economies}, after reviewing background material on Walrasian economies, we demonstrate that the set of Walrasian equilibria of any Walrasian economy is equivalent to the set of strong solutions of an associated variational inequality. Additionally, we show that applying the gradient method to this variational inequality corresponds to solving the Walrasian economy through a well-known price-adjustment process known as \emph{tâtonnement}. Extending this analysis, we introduce a new family of price-adjustment processes, termed the mirror \emph{extratâtonnement} process, by applying the mirror extragradient method to the variational inequality formulation. Leveraging results from the prior section, we establish that this process converges to a Walrasian equilibrium in all Walrasian economies whose excess demand satisfies the Weak Axiom of Revealed Preferences (WARP).

As neither \emph{tâtonnement} nor \emph{extratâtonnement} processes are guaranteed to converge beyond this class of Walrasian economies, we introduce a class of merit function methods tailored for Walrasian economies with a Lipschitz-smooth excess demand. These methods are guaranteed to converge to a local minimum of the regularized primal gap function of any Lipschitz-smooth variational inequality. Our approach provides a novel perspective on Walrasian equilibria through the lens of variational inequalities, enabling the design of price-adjustment processes with strong theoretical guarantees. Additionally, we validate our framework through numerical experiments on large Arrow-Debreu economies with Cobb-Douglas, Leontief, and CES consumers, as well as the Scarf economy, demonstrating robust convergence in all cases. These results suggest that the computational intractability of general equilibrium models primarily arises from discontinuities rather than fundamental algorithmic limitations, addressing a long-standing challenge posed by Herbert Scarf regarding the explicit numerical solution of neoclassical equilibrium models.

\fi
\subsection{Technical Contributions}

\paragraph{Variational Inequalities}
Our first contribution is an analysis of the class of constrained
mirror extragradient algorithms, a generalization of Korpelevich's extragradient method \cite{korpelevich1976extragradient} for solving VIs. For VIs that satisfy the Minty condition and are Bregman continuous, we establish best-iterate convergence to an $\varepsilon$-strong solution in $O(\nicefrac{1}{\varepsilon^2})$ evaluations of the VI's optimality operator (\Cref{thm:mirror_extragradient_global_convergence}). Our result generalizes the results and proof techniques of \citet{huang2023beyond} for the extragradient method, and extends the convergence results of \citet{zhang2023mirror} for the unconstrained mirror extragradient method to constrained domains. In addition, 
we establish suitable conditions for the local convergence of the mirror extragradient algorithm to an $\varepsilon$-strong solution of any Bregman-continuous VI that does \emph{not\/} satisfy the Minty condition---to the best of our knowledge, the first result of its kind (\Cref{thm:vi_mirror_extragrad_local}).

\vspace{-2mm}
\paragraph{Walrasian Economies}
It is known that the set of Walrasian equilibria of any Walrasian economy can be characterized as the solution set of an associated complementarity problem (i.e., a VI where the constraint set is the positive orthant) \cite{dafermos1990exchange}.
For balanced economies, however, we provide the first computationally tractable characterization of Walrasian equilibria as the set of scalar multiples ($\geq 1$) of the set of strong solutions of a box VI.
As the box trivially satisfies the Minty condition, we apply the mirror extragradient method to obtain a novel natural price-adjustment process we call mirror extrat\^atonnement (\Cref{alg:mirror_extratatonnement}), and establish its polynomial-time convergence in all balanced economies that satisfy pathwise Bregman continuity (\Cref{thm:bregman_mirror_exta_tatonn_convergence}).


Next, we apply these insights to the Scarf economy.
First, we observe that the Scarf economy is balanced (\Cref{lemma:Scarf_balanced}).
Then, we prove polynomial-time convergence of mirror extrat\^atonnement to the unique Walrasian equilibrium of the Scarf economy (\Cref{thm:scarf_convergence}).
As such, the mirror extrat\^atonnement process is the first discrete-time \emph{natural\/} price-adjustment process known to converge in the Scarf economy.

\amy{Rather than bound prices away from zero, we could instead assume bounded excess demand. The intent of either assumption is to ensure Bregman continuity.}
\klara{We can say that in practice, Scarf satisfies pathwise Bregman, hence in practice convergence is guaranteed.}
\klara{Note to self: rewrite proof of Scarf's full Bregman on subbox as modified Scarf's Bregman on box.}

While the pathwise Bregman continuity assumption provides intuition for the fast convergence of mirror extrat\^atonnement in practice, it is hard to verify this assumption analytically, in advance of running the process.
We obtain stronger theoretical results by restricting our attention to competitive economies with \mydef{bounded elasticity of excess demand} (i.e., the change in excess demand as prices change is bounded across all price changes)\sklara{}{, bounded aggregate demand, and bounded aggregate supply} that are \mydef{variationally stable} \cite{mertikopoulos_learning_2019} on the unit simplex (i.e., simplex VIs for which the Minty condition holds) \sklara{}{in \Cref{thm:mirror_extratatonn_var_stable}}. 
We show that under these assumptions, the excess demand is Bregman continuous, thus providing the first polynomial-time convergence result for a natural price-adjustment process in this broad class of Walrasian economies, which includes economies that satisfy weak GS \sklara{}{\cite{codenotti2005market}}, and more generally, WARP.

\if 0
We then restrict our attention to competitive economies, the special case of balanced economies that are variationally stable on the unit simplex. We establish polynomial-time convergence of mirror extrat\^atonnement in all such economies assuming bounded elasticity of excess demand, bounded aggregate demand, and bounded aggregate supply (\Cref{thm:mirror_extratatonn_var_stable}). This convergence result provides the first polynomial-time convergence result for price-adjustment processes in WARP economies with bounded elasticity of excess demand, thereby generalizing the well-known t\^atonnement convergence result in weak GS \cite{codenotti2005market} to such economies.

\klara{The two paragraphs were virtually identical.}
\fi

Finally, we run experiments (\Cref{sec:experiments}) which demonstrate that mirror extrat\^atonnement converges to a Walrasian equilibrium 
in a variety of large (600, 800, or 1000 consumers and 500 or 1000 commodities)
competitive economies, including some of which are known to be PPAD-complete (e.g., Leontief economies).
The convergence rate matches our theory because we first conduct a search for an appropriate step size, meaning one that ensures pathwise Bregman continuity.
In particular, convergence is fast in all economies except those with linear consumers, 
where our convergence bounds are weak due to the magnitude of the pathwise Bregman continuity coefficient.
We conclude with an additional set of experiments using step sizes that violate pathwise Bregman continuity, but we nonetheless observe a similar pattern of convergence.


\section{Preliminaries}
\paragraph{Notation.} 
We use caligraphic uppercase letters to denote sets (e.g., $\calX$), bold uppercase letters to denote matrices (e.g., $\allocation$), bold lowercase letters to denote vectors (e.g., $\price$), lowercase letters to denote scalar quantities (e.g., $x$). 
We denote the $i$th row vector of a matrix (e.g., $\allocation$) by the corresponding bold lowercase letter with subscript $i$ (e.g., $\allocation[\buyer])$. 
Similarly, we denote the $j$th entry of a vector (e.g., $\price$ or $\allocation[\buyer]$) by the corresponding lowercase letter with subscript $j$ (e.g., $\price[\good]$ or $\allocation[\buyer][\good]$).
We denote functions by a letter determined by the value of the function, e.g., $f$ if the mapping is scalar valued, $\f$ if the mapping is vector valued, and $\calF$ if the mapping is set valued (i.e., $\calF$ is a correspondence).
If a correspondence $\calF$ happens to be singleton valued, we overload notation and denote it by $\f$.
We denote the set $\left\{1, \hdots, n\right\}$ by $[n]$, the set of natural numbers by $\N$, and the set of real numbers by $\R$. 
We denote the positive and strictly positive elements of a set using a $+$ or $++$ subscript, respectively (e.g., $\R_+$ and $\R_{++}$). 

For any $n \in \N$, we denote the  $n$-dimensional vector of zeros and ones by $\zeros[n]$ and $\ones[n]$, respectively, and the $i^{th}$ basis vector in $\R^n$ by $\basis[i]$.
We let $\simplex[n] = \{\x \in \R_+^n \mid \sum_{i = 1}^n x_i = 1 \}$ denote the unit simplex in $\R^n$.
Unless otherwise noted, the norm notation denotes the 2-norm, i.e., $\| \cdot \| \doteq \| \cdot \|_2$, and we denote the Euclidean projection operator onto a set $C$ by $\project[C]$, i.e., $\project[C](\x) \doteq \argmin_{\y \in C} \left\|\x - \y \right\|^2$. Given a metric space $(\metricspace, \metric)$ and $\varepsilon \geq 0$, we write $\closedball[\varepsilon][\var] = \{ \var[\prime] \in \metricspace \mid \metric(\var, \var[\prime]) \leq \varepsilon \}$ to denote the closed $\varepsilon$-ball centered at $\var \in \metricspace$.
The multiplication of a scalar and a set is defined as the Minkowksi product, i.e., for all $a \in \R$ and $\set \subseteq \R^\numgoods$, we define $a \set \doteq \{a \vartuple \mid \vartuple \in \set \}$.

\paragraph{Functions. }
Given a Euclidean vector space $\set \subseteq \R^n$, we define its dual space $\set^*$ as the set of all linear maps $\vioper: \set \to \R^d$.
Let $(\set, \norm_{\set})$ and $(\otherset, \norm_{\otherset})$ be normed spaces. 
A function $\vioper: \set \to \otherset$ is \mydef{continuous} if for all sequences $\left\{ \var[(n)] \right\}_{n \in \N}$ s.t.\ $\var[(n)] \to \var \in \set$, it holds that $\obj(\var[(n)]) \to \obj(\var)$.
Given $\lipschitz \geq 0$, $\vioper$ is said to be $\lipschitz$-\mydef{Lipschitz continuous} on $\calA \subseteq \set$ iff for all $\x_1, \x_2 \in \calA, \left\| \obj(\x_1) - \obj(\x_2) \right\|_{\otherset} \leq \lipschitz \left\| \x_1 - \x_2 \right\|_{\set}$. 
If $f$ is differentiable, then it is $\lipschitz$-\mydef{Lipschitz smooth} if its gradient is $\lipschitz$-Lipschitz continuous.
A function $\obj: \set \to \R$ is \mydef{convex} iff for all $\lambda \in [0,1]$ and $\vartuple, \vartuple[][\prime] \in  \set$, 
$
\obj (\lambda \vartuple + (1-\lambda) \vartuple[][\prime]) \leq \lambda \obj(\vartuple) + (1-\lambda) \obj(\vartuple[][\prime])  \enspace .
$
Given $\sconvex \geq 0$, $\obj$ is $\sconvex$-strongly-convex\sklara{,}{} iff $\vartuple \mapsto \obj(\vartuple) - \nicefrac{\sconvex}{2}\| \vartuple\|^2$ is convex.

\paragraph{Correspondences.} 
Let $(\set, \innerprod)$ be an inner product space. 
A correspondence $\relation: \set \rightrightarrows \set^*$ is said to be \mydef{upper hemicontinuous} if for any sequence $\{(\var[(n)], y^{(n)}) \}_{n \in \N} \subset \set \times \set^*$ that converges to $(\var, y)$
with $y^{(n)} \in \relation(\var[(n)])$, for all $n \in \N$, it also holds that $y \in \relation(\var)$. 
$\relation$ is \mydef{continuous} if for any sequence $\left\{ \var[(n)] \right\}_{n \in \N} \subset \set$ that converges to $\var$
it also holds that $\relation(\var[(n)]) \to \relation(\var)$. 
$\relation$ is said to be \mydef{closed-valued} (resp.\ \mydef{compact-valued} / \mydef{convex-valued} / \mydef{singleton-valued}) iff for all $\var \in \set$, $\relation(\var)$ is closed (resp.\ compact / convex / a singleton).
$\relation$ is \mydef{monotone} iff for all $\var, \var[\prime] \in \set$ and $\othervar \in \relation(\var),  \othervar[\prime] \in  \relation(\var[\prime])$,
$
    \left< \othervar[\prime] - \othervar, \var[\prime] - \var \right> \geq 0.
$
$\relation$ is \mydef{pseudomonotone} iff for all $\var, \var[\prime] \in \set$, $\othervar \in \relation(\var)$, and $\othervar[\prime] \in  \relation(\var[\prime])$, 
$
    \left< \othervar, \var[\prime] - \var \right> \geq 0 \, \text{ implies } \left< \othervar[\prime], \var[\prime] - \var \right> \geq 0
$.
$\relation$ is \mydef{quasimonotone} iff for all $\var, \var[\prime] \in \set$, $\othervar \in \relation(\var)$, and $\othervar[\prime] \in  \relation(\var[\prime])$, 
$
    \left< \othervar, \var[\prime] - \var \right> > 0 \, \text{ implies } \left< \othervar[\prime], \var[\prime] - \var \right> \geq 0
$. 
We note the following relationship between these notions of monotonicity: $\mathrm{monotone} \implies \mathrm{pseudomonotone} \implies \mathrm{quasimonotone}$.

\paragraph{Bregman divergence}
 Given a set $\set$ and a \mydef{kernel function}  $\kernel: \set \to \R$, the \mydef{Bregman divergence} $\divergence[\kernel]: \set \times \set \to \R$ associated with $\kernel$ is defined as
$
    \divergence[\kernel](\vartuple, \othervartuple) \doteq \kernel(\vartuple) - \kernel(\othervartuple) - \innerprod[{\grad \kernel(\othervartuple)}][{\vartuple - \othervartuple}].
$
If $\kernel$ is convex, then the Bregman divergence is non-negative.
If $\kernel$ is strictly convex, then $\divergence[\kernel](\vartuple, \othervartuple) = 0$ iff $\vartuple = \othervartuple$.
If $\kernel$ is $\sconvex$-strongly convex, then for all $\vartuple, \othervartuple \in \set$, we have $\divergence[\kernel](\vartuple, \othervartuple) \geq \nicefrac{\sconvex}{2} \| \vartuple - \othervartuple\|^2$. 
When the kernel function is chosen s.t.\ $\kernel(\vartuple) \doteq \nicefrac{1}{2}\norm[\vartuple]^2$, then the Bregman divergence corresponds to Euclidean square distance, i.e., $\divergence[\kernel][\vartuple][{\othervartuple}] \doteq \frac{1}{2} \norm[{\vartuple - \othervartuple}]^2$.
%
Given a modulus of continuity $\lsmooth \geq 0$ and a kernel function $\kernel$, a function $\vioper: \set \to \otherset$ is said to be \mydef{$(\lsmooth, \kernel)$-Bregman-continuous} (or relatively continuous \cite{lu2019relative}) on $\calA \subseteq \set$ iff for all $\vartuple, \othervartuple \in \calA$, $\nicefrac{1}{2} \norm[\vioper(\vartuple) - \vioper(\othervartuple)]^2 \leq \lsmooth^2 \divergence[\kernel][{\vartuple}][{\othervartuple}]$.

\section{Variational Inequalities}
\label{section:vis}
Consider an inner product space $(\universe, \innerprod)$. A 
\mydef{variational inequality (VI)}, denoted $(\set, \vioperset)$, 
%
%
comprises a \mydef{constraint set} $\set \subseteq \universe$ and an \mydef{optimality operator} $\vioperset: \universe \rightrightarrows \universe^*$. For notational convenience, for any $\vartuple \in \set$, we denote any arbitrary element of $\vioperset(\vartuple)$ by $\vioper(\vartuple)$, and denote the variational inequality by $(\set, \vioper)$ when $\vioperset$ is singleton-valued.%
\footnote{In the literature, a nomenclatural distinction is sometimes made between a \mydef{variational inequality} where the optimality operator is a function and a \mydef{generalized variational inequality}, where it is a correspondence. Since we make this distinction through our choice of notation, we forego the ``generalized'' nomenclature.}

\if 0
Any VI $(\set, \vioperset)$ defines a problem known as the 
\mydef{(generalized) Stampacchia variational inequality (SVI)} \cite{lions1967variational}: 
\begin{align}
        &\text{Find } \vartuple[][*] \in \set \text{ such that } &\innerprod[{\vioper(\vartuple[][*])}][{\vartuple - \vartuple[][*]}] \geq 0 && \text{for all } \vartuple \in \set\\
        &\text{and for some } &\vioper(\vartuple[][*]) \in \vioperset(\vartuple[][*])
\end{align}

\subsubsection{Solution Concepts}
A solution to a SVI is called a \mydef{strong solution} of the variational inequality $(\set, \vioperset)$. Just like in convex optimization settings (see Section 1.1.2 of \citet{nesterov1998introductory}), in practice, it is not possible to compute an exact strong solution to a VI $(\set, \vioperset)$, and as such we have to resort to approximate solutions which we call the $\vepsilon$-strong solution. Note that in the following definition, in line with the literature (see, for instance Section 1.2 of \citet{diakonikolas2020halpern}), the inequality is negated (and as such inverted). 
\fi

\subsection{Solution Concepts and their Properties}

The canonical solution concept for VIs is the strong, or {\mydef{Stampacchia}} \cite{lions1967variational}, solution. In practice, it is not possible to compute an exact strong solution 
to an arbitrary VI $(\set, \vioperset)$,%
\footnote{The exact strong solution may be irrational, which is computationally intractable (see \citet{anagnostides2025polynomialtimealgorithmvariationalinequalities}).}
so we resort to an approximate solution.
Given an \mydef{approximation parameter} $\vepsilon \geq 0$, an $\vepsilon$-\mydef{strong} (or $\vepsilon$-\mydef{Stampacchia}) \mydef{solution} of the VI $(\set, \vioperset)$ is an $\vartuple[][*] \in \set$ s.t.\ for all $\vartuple \in \set$, there exists an $\vioper(\vartuple[][*]) \in \vioperset(\vartuple[][*])$ s.t.\ $\innerprod[{\vioper(\vartuple[][*])}][{\vartuple[][*]  - \vartuple }] \leq \vepsilon$.
A $0$-strong solution is simply called a \mydef{strong solution}. 
We denote the set of $\varepsilon$-strong (resp.\ the set of strong) solutions of a VI $(\set, \vioperset)$ by $\svi[\varepsilon](\set, \vioperset)$ (resp.\ $\svi(\set, \vioperset)$).

\if 0
\begin{definition}[Strong Solution]
    Given an \mydef{approximation parameter} $\vepsilon \geq 0$, a $\vepsilon$-\mydef{strong} (or \mydef{Stampacchia}) \mydef{solution} of the VI $(\set, \vioperset)$ is a $\vartuple[][*] \in \set$ that satisfies the following:
    \begin{align}
        &\exists \vioper(\vartuple[][*]) \in \vioperset(\vartuple[][*]), & \max_{\vartuple \in \set} \innerprod[{\vioper(\vartuple[][*])}][{\vartuple[][*]  - \vartuple }] \leq \vepsilon
    \end{align}

    A $0$-strong solution is simply called a \mydef{strong solution}. 
    We denote the set of $\varepsilon$-strong (resp.\ the set of strong) solutions a VI $(\set, \vioperset)$ by $\svi[\varepsilon](\set, \vioperset)$ (resp.\ $\svi(\set, \vioperset)$).
\end{definition}
\fi


Strong solutions are known to exist in a broad of class of VIs known as continuous. 
A \mydef{continuous} VI is a VI $(\set, \vioperset)$ s.t.\ $\set$ is non-empty, compact, and convex and $\vioperset$ is  upper hemicontinuous, non-empty-, compact-, and convex-valued.
One can establish the existence of a strong solution in a continuous VI by defining a mapping whose fixed points correspond to the strong solutions of the VI, and then invoking the Glicksberg-Kakutani fixed point theorem. We refer the reader to Theorem 2.2.1 of \citet{facchinei2003finite}.

An alternative but related solution to a VI is the weak (or Minty) solution \cite{minty1967generalization}, for which we can once again define an approximate variant for computational purposes. Given a VI $(\set, \vioperset)$ and an \mydef{approximation parameter} $\vepsilon \geq 0$, an \mydef{$\vepsilon$-weak (or Minty) solution} is an $\vartuple[][*] \in \set$ s.t.\ for all $\vartuple \in \set, \vioper(\vartuple) \in \vioperset(\vartuple)$, it holds that
$\innerprod[{\vioper(\vartuple)}][{\vartuple[][*] - \vartuple}] \leq \vepsilon$.
A $0$-weak solution to the VI is simply called a \mydef{weak solution}. 
We denote the set of $\varepsilon$-weak (resp.\ the set of weak) solutions a VI $(\set, \vioperset)$ by $\mvi[\varepsilon](\set, \vioperset)$ (resp.\ $\mvi(\set, \vioperset)$).

\if 0 
\begin{definition}[Weak (or Minty) Solution]
    Given a VI $(\set, \vioperset)$ and an \mydef{approximation parameter} $\vepsilon \geq 0$, a \mydef{$\vepsilon$-weak (or Minty) solution}
    is a $\vartuple[][*] \in \set$ that satisfies the following:
    \begin{align}
    &\max_{\substack{\vartuple \in \set\\ \vioper(\vartuple) \in \vioperset(\vartuple)}} \innerprod[{\vioper(\vartuple)}][{\vartuple[][*] - \vartuple}] \leq \vepsilon
    \end{align}

    A $0$-weak solution to the VI is simply called a \mydef{weak solution}. 
    We denote the set of $\varepsilon$-weak (resp.\ the set of weak) solutions a VI $(\set, \vioperset)$ by $\mvi[\varepsilon](\set, \vioperset)$ (resp.\ $\mvi(\set, \vioperset)$).
\end{definition}
\fi 


In continuous VIs, $\svi$ is a refinement of $\mvi$, so that any weak solution is also strong.
Surprisingly, an $\vepsilon$-weak-solution is not guaranteed to be $\vepsilon$-strong.
If the optimality operator $\vioperset$ is monotone though, the set of strong and weak solutions coincide. 
Moreover, any $\vepsilon$-strong solution is also $\vepsilon$-weak, but not vice versa.

A VI $(\set, \vioperset)$ is \{ \mydef{monotone}, \mydef{pseudomonotone}, \mydef{quasimonotone} \} iff the optimality operator $\vioperset$ is \{ monotone, pseudomonotone, quasimonotone \}.

Unlike strong solutions, the existence of weak solutions is not guaranteed in continuous VIs.
A VI $(\set, \vioperset)$ is said to satisfy the \mydef{Minty condition} iff the set of weak solutions is non-empty, i.e., $\mvi(\set, \vioperset) \neq \emptyset$.
With these definitions in place, we summarize the following known properties of the solution sets of VIs. 


\begin{remark}[Solution Set Properties]
\label{remark:sol_set_propert}
Assuming $\varepsilon \geq 0$, the following implications hold:

    \begin{itemize}
        \item $(\set, \vioperset)$ is continuous implies $\svi(\set, \vioperset) \neq \emptyset$ (Theorem 2.2.1 of \citet{facchinei2003finite}))
        
        \item $(\set, \vioperset)$ is continuous implies $\mvi(\set, \vioperset) \subseteq \svi(\set, \vioperset)$ 
        
        \item $(\set, \vioperset)$ is monotone implies $\svi[\varepsilon](\set, \vioperset) \subseteq \mvi[\varepsilon](\set, \vioperset)$
        
        \item $(\set, \vioperset)$ is pseudomonotone implies $\svi(\set, \vioperset) \subseteq \mvi(\set, \vioperset)$
        
        \item $(\set, \vioperset)$ is quasimonotone with $\set$ non-empty and compact implies Minty's condition, i.e., $\mvi(\set, \vioperset) \neq \emptyset$ (Lemma 3.1 of \cite{he2017solvability})
        
        \item If $\svi(\set, \vioperset) \neq \emptyset$, then monotone implies pseudomonotone, which in turn implies Minty's condition, i.e., $\mvi(\set, \vioperset) \neq \emptyset$ 
    \end{itemize}
    
\end{remark}

\if 0
Note that while it has become common place to use the Minty condition in the analysis of VIs as it is much more general (see, for instance, \citet{he2022convergence}), the Minty condition can at the cost of generality be replaced by the assumption that the VI $(\set, \vioperset)$ is quasimonotone with $\set$ non-empty, and compact by Lemma 3.1 and Proposition 3.1 of \cite{he2017solvability}.
\fi

\subsection{First-Order Methods}
We now turn our attention to the computation of solutions to variational inequalities.
For simplicity, we restrict our attention to VIs $(\set, \vioperset)$ in which $\vioperset$ is singleton-valued, which we denote as $(\set, \vioper)$. 
In future work, analogous results to those described here could be developed for the more general non-singleton-valued VI setting. 
\klara{In the application to general equilibrium theory, set-values seem only to arise at indifference points, whose set has zero-measure. So, smoothing would appear a more natural and productive approach.} \klara{I've made edits assuming that results for correspondences might nonetheless find applications to general equilibrium search.} \klara{Since a correspondence, albeit unsolved, arises in the case of linear utilities, I propose keeping this as is.}

We consider first-order methods for computing strong solutions of VIs.
Given a VI $(\set, \vioper)$, and an initial iterate $\vartuple[][][0] \in \set$, a \mydef{first-order method} $\kordermethod: \bigcup_{\numhorizons \geq 1} (\set \times \set^*)^{\numhorizons} \to \set$ consists of an update function that generates the sequence of iterates $\left\{ \vartuple[][][\numhorizon] \right\}_{\numhorizon \in \N}$, 
given by $\vartuple[][][\numhorizon + 1] \doteq \kordermethod \left( \left\{ (\vartuple[][][i], \vioper(\vartuple[][][i])) \right\}_{i=0}^{\numhorizon} \right)$.
When $\kordermethod$ depends solely on the last item in the sequence, we simply write $ \vartuple[][][\numhorizon + 1] = \kordermethod(\vartuple[][][\numhorizon], \vioper(\vartuple[][][\numhorizon]))$. 

A common assumption that is used to establish polynomial-time convergence to strong solutions of VIs is Lipschitz continuity. 
Given a modulus of continuity $\lsmooth \geq 0$, a $\lsmooth$-\mydef{Lipschitz-continuous} VI is a VI $(\set, \vioper)$ s.t. $\set$ is non-empty, compact, and convex and $\vioper$ is $\lsmooth$-Lipschitz-continuous.
As is standard in the literature (see, for instance, \citet{cai2022tight}), the computational complexity measures in this section consider the number of evaluations of the optimality operator $\vioper$ as the unit of account.

\subsubsection{Mirror Gradient Algorithm}

The canonical class of first-order methods for VIs is the class of \mydef{mirror gradient algorithms} \cite{nemirovskij1983problem}. 
These algorithms are parameterized by a kernel function $\kernel: \set \to \R$, which induces a Bregman divergence $\divergence[\kernel]: \set \times \set \to \R$ that in turn defines the update function $\kordermethod^{\mathrm{MG}}(\othervartuple,  \vioper(\othervartuple)) \doteq \argmin\limits_{\vartuple \in \set} \left\{ \innerprod[{\vioper(\othervartuple)}][{ \vartuple - \othervartuple}] + \frac{1}{2 \learnrate[][ ]} \divergence[\kernel][\vartuple][{\othervartuple}]\right\} 
$. When the kernel function is chosen s.t.\ $\kernel(\vartuple) \doteq \frac{1}{2}\norm[\vartuple]^2$, the mirror gradient method $\kordermethod^{\mathrm{MG}}$ reduces to the well-known \mydef{projected gradient} method $\kordermethod^{\mathrm{PG}}$  \cite{cauchy1847methode}, i.e., $\kordermethod^{\mathrm{PG}}(\othervartuple,  \vioper(\othervartuple)) \doteq \proj[\outerset] \left[ \othervartuple - \learnrate[ ][ ] \vioper(\othervartuple) \right]$.

\if 0
\begin{algorithm}
\caption{Mirror Gradient Algorithm}\label{alg:VI_mirror_proj_method}
\textbf{Input:} $\set, \vioper,  \kernel, \numhorizons, \learnrate[ ][ ], \vartuple[][][0]$\\
\textbf{Output:} $\{\vartuple[][][\numhorizon]\}_{\numhorizon}$
\begin{algorithmic}[1]
\For {$\numhorizon = 1, \hdots, \numhorizons $}
    \State $\vartuple[][][\numhorizon +1] \gets 
    \argmin\limits_{\vartuple \in \set} \left\{ \innerprod[{\vioper(\vartuple[][(\numhorizon)])}][{ \vartuple - \vartuple[][(\numhorizon)]}] + \frac{1}{2 \learnrate[][ ]} \divergence[\kernel][\vartuple][{\vartuple[][(\numhorizon)]}]\right\}  $
\EndFor
\State \Return $\{\vartuple[][][\numhorizon]\}_{\numhorizon}$
\end{algorithmic}
\end{algorithm}

, the Bregman divergence corresponds to the Euclidean square norm, i.e., $\divergence[\kernel][\vartuple][{\othervartuple}] \doteq \norm[{\vartuple - \othervartuple}]^2$, in which case 

\begin{algorithm}
\caption{Project Gradient Algorithm}\label{alg:VI_proj_method}
\textbf{Input:} $\set, \vioper,  \numhorizons, \learnrate[ ][ ], \vartuple[][][0]$\\
\textbf{Output:} $\{\vartuple[][][\numhorizon]\}_{\numhorizon}$
\begin{algorithmic}[1]
\State Initialize $\vartuple[][][1] \in \set$ arbitrarily
\For {$\numhorizon = 1, \hdots, \numhorizons $}
    \State $\vartuple[][][\numhorizon +1] \gets 
    \proj[\outerset] \left[ \vartuple[][][\numhorizon] - \learnrate[ ][ ] \vioper(\vartuple[][][\numhorizon]) \right]
    = \argmin\limits_{\vartuple \in \set} \left\{ \innerprod[{\vioper(\vartuple[][(\numhorizon)])}][{ \vartuple - \vartuple[][(\numhorizon)]}] + \nicefrac{1}{2 \learnrate[][ ]} \| \vartuple - \vartuple[][(\numhorizon)] \|^2 \right\}  $
\EndFor
\State \Return $\{\vartuple[][][\numhorizon]\}_{\numhorizon}$
\end{algorithmic}
\end{algorithm}
\fi

The average of the iterates of the mirror gradient method converges asymptotically to a strong solution in monotone and Lipschitz-continuous VIs \cite{nemirovskij1983problem}.
In general, however, it is only possible to prove polynomial-time (i.e., non-asymptotic) computation of an $\varepsilon$-weak solution in such VIs, which does not necessarily imply convergence to an $\varepsilon$-strong solution (see, for instance, Proposition~8 and Appendix~D of \citet{liu2021first}).
Even more troublesome, the sequence of iterates generated by the mirror gradient method is not guaranteed to converge at all, nor does averaging the iterates imply polynomial-time computation of an $\varepsilon$-strong solution, as shown by \Cref{example:monotone_non_convergence} in Appendix~\ref{sec_app:vi_examples}.


\subsubsection{Global Convergence of the Mirror Extragradient Algorithm}
%
%
Mirror extragradient (\Cref{alg:VI_mirror_extragrad}, \cite{nemirovski2004prox}) generalizes the well-known extragradient algorithm \cite{korpelevich1976extragradient}, which is known to asymptotically converge to a strong solution under reasonable assumptions%
\footnote{A closed convex constraint set, $L$-Lipschitzness of the operator, and a non-empty set of strong solutions.}
\cite{popov1980modification}, and allows for the polynomial-time computation of an $\varepsilon$-strong solution \cite{nemirovski2004prox, golowich2020eglast, cai2022tight}.
Similar to the class of mirror gradient methods, these algorithms are parameterized by a kernel function $\kernel: \set \to \R$, which induces a Bregman divergence $\divergence[\kernel]: \set \times \set \to \R$ that defines the update function $\kordermethod^{\mathrm{MEG}}(\othervartuple,  \vioper(\othervartuple)) \doteq  \kordermethod^{\mathrm{MG}}\left(\othervartuple,  \vioper(\kordermethod^{\mathrm{MG}}( \othervartuple, \vioper(\othervartuple)))\right)$. 
Moreover, when the kernel function for mirror extragradient is chosen s.t.\ $\kernel(\vartuple) \doteq \frac{1}{2}\norm[\vartuple]^2$, it reduces to Korpelevich's projected extragradient algorithm \cite{korpelevich1976extragradient}.

\begin{figure}
\begin{minipage}{0.65\textwidth}
\begin{algorithm}[H]
\caption{Mirror Extragradient Algorithm}
\label{alg:VI_mirror_extragrad}
\begin{flushleft}
\textbf{Input:} $\set, \vioper,  \kernel, \numhorizons, \learnrate[ ][ ], \vartuple[][][0]$\\
\textbf{Output:} $\{\vartuple[][][\numhorizon + 0.5], \vartuple[][][\numhorizon + 1]\}_{\numhorizon \in [\numhorizons]}$
\end{flushleft}
\begin{algorithmic}[1]
\For {$\numhorizon = 1, \hdots, \numhorizons $}
    \State $\vartuple[][][\numhorizon + 0.5] \gets 
    \argmin\limits_{\vartuple \in \set} \left\{ \innerprod[{\vioper(\vartuple[][(\numhorizon)])}][{ \vartuple - \vartuple[][(\numhorizon)]}] + \frac{1}{2 \learnrate[][ ]} \divergence[\kernel][\vartuple][{\vartuple[][(\numhorizon)]}]\right\}  $
    \State $\vartuple[][][\numhorizon +1] \gets 
    \argmin\limits_{\vartuple \in \set} \left\{ \innerprod[{\vioper(\vartuple[][(\numhorizon + 0.5)])}][{ \vartuple - \vartuple[][(\numhorizon)]}] + \frac{1}{2 \learnrate[][ ]} \divergence[\kernel][\vartuple][{\vartuple[][(\numhorizon)]}]\right\}$
\EndFor
\State \Return $\{\vartuple[][][\numhorizon + 0.5], \vartuple[][][\numhorizon + 1]\}_{\numhorizon \in [\numhorizons]}$
\end{algorithmic}
\end{algorithm}
\end{minipage}
\end{figure}

\if 0
\begin{algorithm}
\caption{Extragradient Algorithm}\label{alg:VI_extragrad}
\textbf{Input:} $\set, \vioper,  \numhorizons, \learnrate[ ][ ], \vartuple[][][0]$\\
\textbf{Output:} $\{\vartuple[][][\numhorizon + 0.5], \vartuple[][][\numhorizon + 1]\}_{\numhorizon}$
\begin{algorithmic}[1]
\For {$\numhorizon = 1, \hdots, \numhorizons $}
    \State $\vartuple[][][\numhorizon + 0.5] \gets 
    \proj[\outerset] \left[ \vartuple[][][\numhorizon] - \learnrate[ ][ ] \vioper(\vartuple[][][\numhorizon]) \right]
    = \argmin\limits_{\vartuple \in \set} \left\{ \innerprod[{\vioper(\vartuple[][(\numhorizon)])}][{ \vartuple - \vartuple[][(\numhorizon)]}] + \nicefrac{1}{2 \learnrate[][ ]} \| \vartuple - \vartuple[][(\numhorizon)] \|^2 \right\}  $
    \State $\vartuple[][][\numhorizon + 1] \gets 
    \proj[\outerset] \left[ \vartuple[][][\numhorizon] - \learnrate[ ][ ] \vioper(\vartuple[][][\numhorizon + 0.5]) \right]
    = \argmin\limits_{\vartuple \in \set} \left\{ \innerprod[{\vioper(\vartuple[][(\numhorizon + 0.5)])}][{ \vartuple - \vartuple[][(\numhorizon)]}] + \nicefrac{1}{2 \learnrate[][ ]} \| \vartuple - \vartuple[][(\numhorizon)] \|^2 \right\}  $
\EndFor
\Return $\{\vartuple[][][\numhorizon + 0.5], \vartuple[][][\numhorizon + 1]\}_{\numhorizon}$
\end{algorithmic}
\end{algorithm}
\fi 

A seminal result by \citet{nemirovski2004prox} shows that the average of the iterates output by the extragradient algorithm is an $\varepsilon$-strong solution for any monotone VI with a Lipschitz-continuous optimality operator when the algorithm is run for $O(\nicefrac{1}{\varepsilon})$ iterations.
In the same setting, \citet{golowich2020eglast} and \citet{ cai2022tight} show best-iterate convergence to an $\varepsilon$-strong solution in $O(\nicefrac{1}{\varepsilon^2})$ iterations.
Later, \citet{huang2023beyond} extended this polynomial-time computation result to VIs which satisfy the weaker Minty condition rather than the monotonicity assumption.

We extend \citeauthor{huang2023beyond}'s result to mirror extragradient algorithm with the following theorem.
Our result states that an $\varepsilon$-strong solution of a VI $(\set, \vioper)$ can be found in polynomial time if $(\set, \vioper)$ satisfies the Minty condition and is \mydef{pathwise Bregman\sklara{}{-}continuous} over the outputs of the mirror extragradient method $\left\{ \vartuple[][][\numhorizon + 0.5], \vartuple[][][\numhorizon] \right\}_{\numhorizon \in [\numhorizons]}$, i.e., there exists $\lsmooth \geq 0$, s.t.\ for all $\numhorizon \in [\numhorizons]$, 
$\frac{1}{2} \norm[\vioper({\vartuple[][][t+0.5]}) - \vioper({\vartuple[][][t]})]^2 \leq \lsmooth^2 \divergence[\kernel][{\vartuple[][][t+0.5]}][{\vartuple[][][t]}]$.
As we show in \Cref{section:walrasian_economies}, this weaker pathwise Bregman continuity condition can be useful in the analysis of price-adjustment processes, and has indeed found such applications in the past (see, for example, \citet{fisher-tatonnement}).%
\footnote{All proofs omitted from this section can be found in Appendix~\ref{sec_app:vis}.}

\begin{restatable}[Mirror Extragradient Method Convergence]{theorem}{thmmirrorextragradglobal}
\label{thm:mirror_extragradient_global_convergence}
    Let $(\set, \vioper)$ be a continuous 
    VI satisfying the Minty condition and let $\kernel$ be a $1$-strongly-convex and $\kernelsmooth$-Lipschitz-smooth kernel function.%
    \footnote{The assumption that $\kernel$ is $1$-strongly-convex is without loss of generality since any $\mu$-strongly-convex kernel $\kernel^\prime$ can be converted to a $1$-strongly-convex kernel $\frac{1}{\mu} \kernel^\prime$.}
    %
    Imagine running the mirror extragradient algorithm (\Cref{alg:VI_mirror_extragrad}) on the VI $(\set, \vioper)$, with kernel function $\kernel$, time horizon $\numhorizons \in \N$, and initial iterate $\vartuple[][][0] \in \set$.
    Let $\left\{ \vartuple[][][\numhorizon + 0.5], \vartuple[][][\numhorizon + 1] \right\}_{\numhorizon \in [\numhorizons]}$ be the sequence of outputs generated, and assume there exists a step size $\learnrate[ ][ ] > 0$ and a corresponding $\lsmooth \in (0, \frac{1}{\sqrt{2}\learnrate[ ][ ]}]$ s.t.\ $\frac{1}{2} \norm[\vioper({\vartuple[][][k+0.5]}) - \vioper({\vartuple[][][k])}]^2 \leq \lsmooth^2 \divergence[\kernel][{\vartuple[][][k+0.5]}][{\vartuple[][][k]}]$ for all $k \in [\numhorizons]$.
    Then, the following bound holds:
    \begin{align}
    \label{eq:estrong_exists}
     \min_{k = 0, \hdots, \numhorizons} \max_{\vartuple \in \set} \ \langle \vioper(\vartuple[][][k + 0.5]), \vartuple[][][k + 0.5] - \vartuple \rangle \leq \, \frac{2 (1 + \kernelsmooth) \diam(\set)}{\learnrate[ ][ ]} \frac{\sqrt{\divergence[\kernel][{\vartuple[][*]}][{\vartuple[][][0]}]}}{\sqrt{\numhorizons}} \enspace ,
    \end{align}
    where $\vartuple[][*] \in \mvi(\set, \vioper)$ is a weak solution of $(\set, \vioper)$.

    In addition, for all $\varepsilon > 0$, if there exist time horizon $\altnumhorizons[\varepsilon] \gg O(\nicefrac{1}{\varepsilon^2})$ and step size
    $\learnrate[\varepsilon][ ] > 0$ with corresponding $\lsmooth[\varepsilon] \in (0, \frac{1}{\sqrt{2}\learnrate[\varepsilon][ ]}]$ s.t.\ $\frac{1}{2} \norm[\vioper({\vartuple[][][k+0.5]}) - \vioper({\vartuple[][][k])}]^2 \leq \lsmooth[\varepsilon]^2 \divergence[\kernel][{\vartuple[][][k+0.5]}][{\vartuple[][][k]}]$ for all $k \in [\altnumhorizons[\varepsilon]]$, then there exists a choice of time horizon $\numhorizons[\varepsilon] \ \dotin \ O \left( \nicefrac{\kernelsmooth^2\diam(\set)^2\divergence[\kernel][{\vartuple[][*]}][{\vartuple[][][0]}]}{\learnrate[\varepsilon][ ]^2\varepsilon^2} \right)$
    s.t.\ $\bestiter[\vartuple][{\numhorizons[\varepsilon]}] \in \argmin_{\vartuple[][][k+0.5] : k = 0, \hdots, \numhorizons[\varepsilon]} \divergence[\kernel] (\vartuple[][][k+0.5], \vartuple[][][k])$ is an $\varepsilon$-strong solution of $(\set, \vioper)$.
%
\end{restatable}

\klara{The $\eta,\lambda,\tau$ triple is specific to $\varepsilon$ (written out in my email from last night, validating Alp's experiments).}
\klara{The main point is $\altnumhorizons[\varepsilon] \gg O \left( 1 / \varepsilon^2 \right)$ while $\numhorizons[\varepsilon] \in O \left( 1 /\varepsilon^2 \right)$, treating the learning rate as a constant (because we assume it fixed when looking for our $\varepsilon$-eqm). We just need to convey this meaning, because an exact mathematical formulation, e.g. spelling out the big-O, would make the proof much, much longer.}

\amy{DISCUSS REMARK TOGETHER!}

\begin{remark}
Here are a few remarks intended to help the reader interpret this theorem.
1.~Even if the weak solution $\vartuple[][*]$ is also a strong solution, as in the continuous VIs we study, mirror extragradient may still converge to a strong solution other than $\vartuple[][*]$.
2.~While the right-hand side of \Cref{eq:estrong_exists} in \Cref{thm:mirror_extragradient_global_convergence} includes terms denoting divergence ($\divergence[\kernel]$) and distance ($\kernelsmooth$, $\diam(\set)$, and $\learnrate[ ][ ]$), the left-hand side only captures how ``strong'' the best iterate is, rather than what separates it from a strong solution.
3.~Since obtaining the best iterate according to the bound in \Cref{eq:estrong_exists} would require solving $\numhorizons$ optimization problems, $\bestiter[\vartuple][{\numhorizons[\varepsilon]}]$ is a more practical alternative.
\end{remark}

\klara{I'm still not super confident about the typing (in a programming-languages sense) of this whole thing.}

%
%


\if 0
\begin{definition}[Bregman continuity]
    Given a \mydef{continuity modulus} $\lipschitz \geq 0$, a function $\obj: \set \to \R$ is $\lipschitz$-Bregman-continous iff $\divergence[\kernel] (\obj(\vartuple), \obj(\othervartuple)) \leq \lipschitz \divergence[\kernel] (\vartuple, \othervartuple)$ 
\end{definition}

\begin{corollary}[Convergence to weak solution]
    Convergence to $\varepsilon$-weak solution under monotonicity, and to weak solution under pseudo-monotonicity
\end{corollary}
\fi

\if0
\begin{lemma}
For the extragradient method, assume that $\vioper$ is Lipschitz-continuous with constant $\lsmooth$, and $t \leq \frac{1}{\sqrt{2\lsmooth}}$, the following inequality holds:
\begin{align}
\langle \vioper(\vartuple[][][k+0.5]), \vartuple[][][k+0.5] - \vartuple \rangle + \frac{1}{4\learnrate[ ][ ]} \|\vartuple[][][k+0.5] - \vartuple[][][k]\|^2 \leq \frac{1}{2\learnrate[ ][ ]} \left[ \|\vartuple[][][k] - \vartuple\|^2 - \|\vartuple[][][k+1] - \vartuple\|^2 \right] \enspace .    
\end{align}
\end{lemma}
\begin{proof}
By the first-order optimality conditions of $\vartuple[][][k+0.5]$, we have for all $\vartuple \in \set$:
\begin{align*}
\langle \vioper(\vartuple[][][k]), \vartuple[][][k+0.5] - \vartuple \rangle + \frac{1}{\learnrate[ ][ ]} \langle \vartuple[][][k+0.5] - \vartuple[][][k], \vartuple - \vartuple[][][k+0.5] \rangle \geq 0.
\end{align*}

Substituting $\vartuple = \vartuple[][][k+1]$ above, we have:
\begin{align*}
\langle \vioper(\vartuple[][][k]), \vartuple[][][k+1] - \vartuple[][][k+0.5] \rangle 
&\geq \frac{1}{\learnrate[ ][ ]} \langle \vartuple[][][k+0.5] - \vartuple[][][k], \vartuple[][][k+0.5] - \vartuple[][][k+1] \rangle \\
&= \frac{1}{2\learnrate[ ][ ]} \left( \|\vartuple[][][k+0.5] - \vartuple[][][k]\|^2 + \|\vartuple[][][k+1] - \vartuple[][][k+0.5]\|^2 - \|\vartuple[][][k+1] - \vartuple[][][k]\|^2 \right).
\end{align*}

On the other hand, by the optimality condition at $\vartuple[][][k+1]$, we have for all $\vartuple \in \set$:
\begin{align*}
\langle \vioper(\vartuple[][][k+0.5]), \vartuple - \vartuple[][][k+1] \rangle + \frac{1}{\learnrate[ ][ ]} \langle \vartuple[][][k+1] - \vartuple[][][k], \vartuple - \vartuple[][][k+1] \rangle \geq 0 \enspace .
\end{align*}

Hence, for all $\vartuple \in \set$:
\begin{align*}
\langle \vioper(\vartuple[][][k+0.5]), \vartuple - \vartuple[][][k+1] \rangle 
&\geq \frac{1}{\learnrate[ ][ ]} \langle \vartuple[][][k+1] - \vartuple[][][k], \vartuple[][][k+1] - \vartuple \rangle \\
&= \frac{1}{2\learnrate[ ][ ]} \left( \|\vartuple[][][k+1] - \vartuple\|^2 + \|\vartuple[][][k] - \vartuple\|^2 - \|\vartuple[][][k+1] - \vartuple[][][k]\|^2 \right) \enspace.
\end{align*}

Continue with the above inequality, for any given $\vartuple \in \set$, we have:

\begin{align}
    &\frac{1}{2\learnrate[ ][ ]} \left( \|\vartuple[][][k+1] - \vartuple \|^2 + \|\vartuple[][][k+0.5] - \vartuple[][][k]\|^2 - \|\vartuple[][][k] - \vartuple \|^2 \right)\\
    &\leq \langle \vioper(\vartuple[][][k+0.5]), \vartuple - \vartuple[][][k+1] \rangle \\
    &= \langle \vioper(\vartuple[][][k+0.5]), \vartuple - \vartuple[][][k+0.5] \rangle + \langle \vioper(\vartuple[][][k+0.5]), \vartuple[][][k+0.5] - \vartuple[][][k+1] \rangle \\
    &= \langle \vioper(\vartuple[][][k+0.5]), \vartuple - \vartuple[][][k+0.5] \rangle + \langle \vioper(\vartuple[][][k+0.5]) - \vioper(\vartuple[][][k]), \vartuple[][][k+0.5] - \vartuple[][][k+1] \rangle \notag \\
    &\quad + \langle \vioper(\vartuple[][][k]), \vartuple[][][k+0.5] - \vartuple[][][k+1] \rangle \\
    &\leq \langle \vioper(\vartuple[][][k+0.5]), \vartuple - \vartuple[][][k+0.5] \rangle + \|\vioper(\vartuple[][][k+0.5]) - \vioper(\vartuple[][][k])\| \cdot \|\vartuple[][][k+0.5] - \vartuple[][][k+1]\| \notag \\
    &\quad + \langle \vioper(\vartuple[][][k]), \vartuple[][][k+0.5] - \vartuple[][][k+1] \rangle \\
    &\leq \langle \vioper(\vartuple[][][k+0.5]), \vartuple - \vartuple[][][k+0.5] \rangle + \frac{t \|\vioper(\vartuple[][][k+0.5]) - \vioper(\vartuple[][][k])\|^2}{2} \notag \\
    &\quad + \frac{\|\vartuple[][][k+0.5] - \vartuple[][][k+1]\|^2}{2\learnrate[ ][ ]} + \langle \vioper(\vartuple[][][k]), \vartuple[][][k+0.5] - \vartuple[][][k+1] \rangle \\
    &\leq \langle \vioper(\vartuple[][][k+0.5]), \vartuple - \vartuple[][][k+0.5] \rangle + \frac{t\lsmooth^2 \|\vartuple[][][k+0.5] - \vartuple[][][k]\|^2}{2} \notag \\
    &\quad + \frac{\|\vartuple[][][k+0.5] - \vartuple[][][k+1]\|^2}{2\learnrate[ ][ ]} + \langle \vioper(\vartuple[][][k]), \vartuple[][][k+0.5] - \vartuple[][][k+1] \rangle.
\end{align}

Since \(\learnrate[ ][ ] \leq \frac{1}{\sqrt{2} \lsmooth}\), and with (23), we have
\begin{align}
    \frac{1}{2\learnrate[ ][ ]} \left( \|\vartuple[][][k+1] - \vartuple\|^2 + \|\vartuple[][][k+1] - \vartuple[][][k]\|^2 - \|\vartuple[][][k] - \vartuple\|^2 \right) 
    &\leq \langle \vioper(\vartuple[][][k+0.5]), \vartuple - \vartuple[][][k+0.5] \rangle + \frac{\|\vartuple[][][k+0.5] - \vartuple[][][k]\|^2}{4\learnrate[ ][ ]} \notag \\
    &\quad + \frac{1}{2\learnrate[ ][ ]} \left(-\|\vartuple[][][k+0.5] - \vartuple[][][k]\|^2 - \|\vartuple[][][k+1] - \vartuple[][][k]\|^2 \right).
\end{align}

Canceling out terms, we simplify the above inequality into
\begin{align}
    \langle \vioper(\vartuple[][][k+0.5]), \vartuple[][][k+0.5] - \vartuple \rangle + \frac{1}{4t} \|\vartuple[][][k+0.5] - \vartuple[][][k]\|^2 
    &\leq \frac{1}{2t} \left( \|\vartuple[][][k] - \vartuple \|^2 - \|\vartuple[][][k+1] - \vartuple \|^2 \right).
\end{align}
\end{proof}

\begin{theorem}
    
\end{theorem}
\fi

\subsubsection{Local Convergence of the Mirror Extragradient Algorithm}

Beyond VIs for which the Minty condition holds, it seems unlikely that we can devise a first-order method which, given an arbitrary initial iterate, converges to a strong solution, as the computation of an $\vepsilon$-strong solution for Lipschitz-continuous VIs is in general a PPAD-complete problem \cite{kapron2024computational}.%
\footnote{See also \Cref{example:non_convergence_non_minty} in Appendix~\ref{sec_app:vi_examples} for an example involving 
a strictly concave function, for which both the mirror gradient and mirror extragradient methods diverge to infinity when initialized at a negative-valued point.}
As a result, 
we investigate the conditions that guarantee local convergence of mirror extragradient to a (global) strong solution. \klara{Ahhh, it's the continuous setting, so the local Minty solution is itself the sought global strong solution\ldots}

Given a VI $(\set, \vioperset)$, and a \mydef{locality parameter} $\vdelta \geq 0$, a \mydef{$\vdelta$-local weak solution} of the VI is an $\vartuple[][*] \in \set$ s.t.\ for all $\vartuple \in \set \cap \closedball[\vdelta][{\vartuple[][*]}]$ and $\vioper(\vartuple) \in \vioperset(\vartuple)$, it holds that $\innerprod[{\vioper(\vartuple[][])}][{\vartuple - \vartuple[][*]}] \leq 0$  \cite{aussel2024variational} 
We denote the set of $\vdelta$-local weak solutions of a VI $(\set, \vioperset)$ by $\lmvi[][\vdelta] (\set, \vioperset)$. 
With these definitions in place, we now show that we can guarantee local convergence by assuming the algorithm is initialized close enough to a local weak (or Minty) solution.
As we will not be applying this result in this paper, for simplicity, we present it under the assumption of Lipschitz continuity.

\klara{What are some applications?} \amy{agreed, even future applications?} \klara{The Minty non-existence example in the ``GAES, revisited'' notes would be one such case. The outer two of the three stationary points there are local Minty.}


\begin{restatable}[Mirror Extragradient Method Local Convergence]{theorem}{thmvimirrorextragradlocal}
\label{thm:vi_mirror_extragrad_local}
    Let $(\set, \vioper)$ be a $\lsmooth$-Lipschitz-continuous VI, let $\kernel$ be a $1$-strongly-convex and $\kernelsmooth$-Lipschitz-smooth kernel function, and let $\learnrate[ ][ ] \in  \left(0, \frac{1}{\sqrt{2}\lsmooth}\right]$.
    Assume $\|\vioper\|_\infty \leq \lipschitz < \infty$, and that for some $\vartuple[][*] \in \lmvi[][\delta] (\set, \vioper)$ $\delta$-local weak solution, the initial iterate $\vartuple[][][0] \in \set$ is chosen s.t.\ $\sqrt{2 \divergence[\kernel][{\vartuple[][*]}][{\vartuple[][][0]}]} \leq \delta - \learnrate[ ][ ] \lipschitz$.
     Imagine running the mirror extragradient algorithm (\Cref{alg:VI_mirror_extragrad}) on the VI $(\set, \vioper)$, with kernel function $\kernel$, time horizon $\numhorizons \in \N$, initial iterate $\vartuple[][][0] \in \set$, and step size $\learnrate[ ][ ]$.
     Let $\left\{ \vartuple[][][\numhorizon + 0.5], \vartuple[][][\numhorizon + 1] \right\}_{\numhorizon \in [\numhorizons]}$ be the sequence of outputs generated. 
     Then, the following bound holds:
    \begin{align*}
     \min_{k = 0, \hdots, \numhorizons} \max_{\vartuple \in \set} \ \langle \vioper (\vartuple[][][k + 0.5]), \vartuple[][][k + 0.5] - \vartuple \rangle \leq \frac{\sqrt{2} (1 + \kernelsmooth) \diam(\set)}{\learnrate[ ][ ]} \frac{\delta}{\sqrt{\numhorizons}} \enspace .
    \end{align*}
    In addition, there exists time horizon 
    $\numhorizons[\varepsilon] \ \dotin \ O(\nicefrac{1}{\varepsilon^2})$ s.t.\ $\bestiter[\vartuple][{\numhorizons[\varepsilon]}] \in \argmin_{\vartuple[][][k+0.5] : k = 0, \hdots, \numhorizons} \divergence[\kernel] (\vartuple[][][k+0.5], \vartuple[][][k])$ is an $\varepsilon$-strong solution of $(\set, \vioper)$ and $\vartuple[][**] \doteq \lim_{\numhorizon \to \infty} \vartuple[][][\numhorizon +0.5] = \lim_{\numhorizon \to \infty} \vartuple[][][\numhorizon]$ is a strong solution of $(\set, \vioper)$.
\end{restatable}

We now turn our attention to an application of Theorem~\ref{thm:mirror_extragradient_global_convergence} to Walrasian economies.

\section{Walrasian economies}
\label{section:walrasian_economies}

A \mydef{Walrasian economy} $(\numgoods, \excessset)$ consists of $\numgoods \in \N$ \mydef{commodities},%
\footnote{The ``commodity'' terminology is used here in the tradition of \citet{arrow-debreu}, and refers to any raw, intermediate, or finished products, as well as labor and services.}
with any quantity of each commodity being exchangeable for a quantity of another. 
The exchange process is governed by a valuation system called \mydef{prices}, modeled as a vector $\price \in \R^\numgoods_+$ s.t.\ $\price[\good] \geq 0$ is the price of commodity $\good \in \goods$.
Prices $\price \in \R^\numgoods_+$ allow the exchange of $x \in \R_+$ units of any commodity $\good \in \goods$ for $x \left( \nicefrac{\price[\good]}{\price[k]} \right)$ units of some other commodity $k \in \goods$.
For example, if $j$ represents apples whose price is \$1 and $k$ represents bananas whose price is \$2, then one apple can be exchanged for half of a banana.

A
Walrasian economy is characterized by an \mydef{excess demand correspondence} $\excessset: \R^\numgoods_+ \rightrightarrows \R^\numgoods$, which, for any price vector $\price \in \R^\numgoods_+$ outputs a set of \mydef{excess demands} $\excessset(\price) \subseteq \R^\numgoods$, with each excess demand denoted by $\excess(\price) \in \excessset(\price)$. 
That is, $\excess[\good](\price) > 0$ denotes the number of units of commodity $\good \in \goods$ \mydef{demanded in excess} (i.e., more units of $\good$ are bought than sold), while $\excess[\good](\price) < 0$ denotes the number of units of commodity $\good \in \goods$ \mydef{supplied in excess} (i.e., more units of $\good$ are sold than bought).
%

\if 0
A price vector $\price \in \R^\numgoods_+$ is said to be \mydef{feasible} iff there exists a $\excess(\price) \in \excessset(\price)$ s.t.\ for all commodities $\good \in \goods$, $\excess[\good](\price) \leq 0$. 
Similarly, a price vector $\price \in \R^\numgoods_+$ satisfies \mydef{Walras' Law} iff there exists a $\excess(\price) \in \excessset(\price)$ s.t.\ $\price \cdot \excess(\price) = 0$. 
\fi

The canonical solution concept for Walrasian economies is the Walrasian equilibrium \cite{walras}.
%
%
Given an approximation parameter $\varepsilon \geq 0$, we say that  a price vector $\price[][][*] \in \R^\numgoods_+$ is an $\varepsilon$-\mydef{Walrasian} \mydef{equilibrium} iff there exists an excess demand $\excess(\price[][][*]) \in \excessset(\price[][][*])$ s.t.\  for all commodities $\good \in \goods$, it holds that $\excess[\good](\price[][][*]) \leq \varepsilon$ \mydef{($\varepsilon$-Feasibility)} and $\price[][][*] \cdot \excess(\price[][][*]) \in [-\varepsilon, \varepsilon]$ \mydef{($\varepsilon$-Walras' law)}.
We denote the set of $\varepsilon$-Walrasian equilibria of a Walrasian economy $(\numgoods, \excessset)$ by $\we[\varepsilon](\numgoods, \excessset)$.
A $0$-Walrasian equilibrium, i.e., one that is ($0$-)feasible and satisfies ($0$-)Walras Law, is simply called a \mydef{Walrasian equilibrium}, and denoted by $\we(\numgoods, \excessset)$.


\if 0
\begin{definition}[Walrasian Equilibrium]
    A price vector $\price[][][*] \in \R^\numgoods_+$ is said to be a \mydef{Walrasian} (or $\varepsilon$-\mydef{competitive}) \mydef{equilibrium} \cite{walras} if there exists an excess demand $\excess(\price[][][*]) \in \excessset(\price[][][*])$ s.t.\ 
    \begin{enumerate}[leftmargin=5cm, itemsep=0cm, labelsep=0.5cm]
        \item[{(Feasility)}] $\excess(\price[][][*]) \leq \zeros$
        \item[{(Walras' law)}] $\price[][][*] \cdot \excess(\price[][][*]) = 0$
    \end{enumerate}
\end{definition}
\fi
\if 0
Seen otherwise, a Walrasian equilibrium $\price[][][*] \in \R^\numgoods_+$ is a price vector s.t.
for all commodities $\good \in \goods$, $\price[\good][][*] > 0 \implies  \excess[\good](\price[][][*]) = 0$ and $\price[\good][][*] = 0 \implies \excess[\good](\price[][][*]) \leq 0$. 
Intuitively, a Walrasian equilibrium is a price vector which ensures that the exchange of any commodity with another can be implemented. On the one hand, if the price of a commodity $\good \in \goods$ is strictly positive, then the exchange system dictates that $\good$ can be exchanged for a strictly positive quantity of some other commodity $k \in \goods$; in other words, at a Walrasian equilibrium commodity $\good$ will always find a buyer since its excess demand is zero. On the other hand, if the price of commodity $\good$ is zero, then the price system dictates that the commodity $\good$ cannot be exchanged for any othercommodity; in other words, at a Walrasian equilibrium the commodity might not find a buyer.
\fi 

\subsection{Walrasian Economies and Variational Inequalities}

With these definitions in place, we now present the fundamental relationship between Walrasian economies and VIs.
The following theorem, due to \citet{dafermos1990exchange}, is to the best of our knowledge the first result exposing the connection between VIs and Walrasian equilibria.%
\footnote{\citet{jofre_viecon_2007} enhance this formulation by incorporating Lagrange multipliers for budget constraints, thereby facilitating equilibrium existence proofs for a broader class of economies.}
It states that the problem of computing a Walrasian equilibrium is equivalent to the problem of computing a strong solution of the Walrasian VI, whose constraint set is given by the positive orthant, an instance of the class of VIs known as \mydef{complementarity problems} \cite{cottle1968complementary}.
For completeness, we include its proof, as well as all other omitted proofs of results in this section in Appendix~\ref{sec_app:walrasian}.

\if 0
\begin{theorem}[Walrasian economies as Complementarity Problems]\label{thm:we_equal_svi}
    The set of Walrasian equilibria of any Walrasian economy $(\numgoods, \excessset)$ is equal to the set of strong solutions of the VI $(\R^\numgoods_+, -\excessset)$, i.e., $\we(\numgoods, \excessset) = \svi(\R^\numgoods_+, -\excessset)$.
\end{theorem}
\fi

\begin{restatable}[Walrasian economies as complementarity problems]{theorem}{thmweequalsvi}\label{thm:we_equal_svi}
    The set of Walrasian equilibria of any Walrasian economy $(\numgoods, \excessset)$ is equal to the set of strong solutions of the Walrasian VI $(\R^\numgoods_+, -\excessset)$, i.e., $\we(\numgoods, \excessset) = \svi(\R^\numgoods_+, -\excessset)$.
\end{restatable}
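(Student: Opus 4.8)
The plan is to unpack both sides of the claimed set equality and show each is equivalent to the same elementary pointwise condition. Recall that $\price[][][*] \in \svi(\R^\numgoods_+, -\excessset)$ means $\price[][][*] \in \R^\numgoods_+$ and there exists $\excess(\price[][][*]) \in \excessset(\price[][][*])$ with $\innerprod[{-\excess(\price[][][*])}][{\price - \price[][][*]}] \geq 0$ for all $\price \in \R^\numgoods_+$, i.e. $\innerprod[{\excess(\price[][][*])}][{\price - \price[][][*]}] \leq 0$ for all $\price \in \R^\numgoods_+$. The first step is the standard characterization of the variational inequality over the positive orthant as a complementarity condition: I would show that this inequality holds for all $\price \in \R^\numgoods_+$ if and only if $\excess(\price[][][*]) \leq \zeros[\numgoods]$, $\price[][][*] \geq \zeros[\numgoods]$, and $\price[][][*] \cdot \excess(\price[][][*]) = 0$. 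The forward direction: take $\price = \price[][][*] + \basis[\good]$ to force $\excess[\good](\price[][][*]) \leq 0$ for each $\good$, and take $\price = \zeros[\numgoods]$ to get $\price[][][*] \cdot \excess(\price[][][*]) \geq 0$; combined with $\excess(\price[][][*]) \leq \zeros[\numgoods]$ and $\price[][][*] \geq \zeros[\numgoods]$ this gives $\price[][][*] \cdot \excess(\price[][][*]) = 0$. The reverse direction: if the complementarity conditions hold then for any $\price \in \R^\numgoods_+$, $\innerprod[{\excess(\price[][][*])}][{\price - \price[][][*]}] = \price \cdot \excess(\price[][][*]) - \price[][][*]\cdot\excess(\price[][][*]) = \price \cdot \excess(\price[][][*]) - 0 \leq 0$ since $\price \geq \zeros[\numgoods]$ and $\excess(\price[][][*]) \leq \zeros[\numgoods]$.

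The second step is to observe that the complementarity conditions $\excess(\price[][][*]) \leq \zeros[\numgoods]$ (feasibility) and $\price[][][*] \cdot \excess(\price[][][*]) = 0$ (Walras' law) for some $\excess(\price[][][*]) \in \excessset(\price[][][*])$ are exactly the definition of $\price[][][*] \in \we(\numgoods, \excessset)$ given earlier in the paper (the $\varepsilon = 0$ case). So $\we(\numgoods, \excessset) = \svi(\R^\numgoods_+, -\excessset)$ follows by chaining the two equivalences. One minor bookkeeping point: the Walrasian equilibrium definition in the excerpt restricts $\price[][][*]$ to $[0,1]^\numgoods$, whereas the VI constraint set is $\R^\numgoods_+$; since excess demand is homogeneous of degree $0$ (for the balanced economies of interest) or one can simply note both definitions are scale-free up to the normalization convention, I would either state the theorem with the natural orthant normalization or remark that the correspondence is understood up to positive rescaling of prices. (The cleanest route is to treat the ``$\in[0,1]^\numgoods$'' in the equilibrium definition as a normalization choice and note strong solutions of the orthant VI are rays, picking the representative in the box.)

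I expect no serious obstacle here — the argument is the classical equivalence between a VI over the nonnegative orthant and its associated nonlinear complementarity problem, applied to $-\excessset$. The only thing requiring a little care is handling the set-valuedness of $\excessset$: the existential quantifier ``there exists $\excess(\price[][][*]) \in \excessset(\price[][][*])$'' must be threaded consistently through both the VI solution definition and the equilibrium definition, but since both definitions quantify the selection existentially in the same way, the same witness $\excess(\price[][][*])$ works on both sides. I would write the proof as a short two-direction argument establishing the pointwise equivalence above and then invoke the definitions.
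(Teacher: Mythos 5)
Your proof is correct and takes essentially the same approach as the paper's: both unpack the variational inequality over the nonnegative orthant as a complementarity problem, using the test points $p^* + e_j$ for feasibility and $p = 0$ for one side of Walras' law, and then match these conditions to the definition of Walrasian equilibrium. The only cosmetic variant is that you obtain the reverse inequality $p^* \cdot z(p^*) \leq 0$ from feasibility together with nonnegativity of $p^*$, whereas the paper instead substitutes $p = 2p^*$; both are valid, and your observation that the equilibrium definition restricts prices to the unit box while the VI constraint set is the full nonnegative orthant does flag a minor bookkeeping inconsistency in the paper's stated definitions.
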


By \Cref{thm:we_equal_svi}, we can view any Walrasian equilibrium computation problem as a strong VI computation problem.
Nevertheless, because the domain of prices is unbounded (i.e., $\R^\numgoods_+$), and because we seek existence and convergence results, we restrict the class of Walrasian economies we study.
To this end, we introduce two important classes of Walrasian economies. 
The first of these is balanced economies; and the second, a further restriction, we call competitive economies.

\subsubsection{Balanced Economies}

A \mydef{balanced economy} is a Walrasian economy $(\numgoods, \excessset)$ whose excess demand correspondence satisfies \mydef{(Homogeneity of degree $0$)} for all $\lambda >0$, $\excessset(\lambda \price) = \excessset(\price)$ and \mydef{(Weak Walras' Law)} for all $\price \in \R^\numgoods_+$ and $\excess(\price) \in \excessset(\price)$, $\price \cdot \excess(\price) \leq 0$.
Intuitively, homogeneity requires that commodity prices have no absolute meaning of their own, but meaning only relative to the prices of other commodities, while weak Walras' law requires budget balance (demand expenditure does not exceed supply revenue). 
While homogeneity of degree 0 is a standard assumption, weak Walras' law is significantly weaker than standard assumptions in the literature \cite{arrow-hurwicz, debreu1974excess}.

\if 0
\begin{remark}[Walrasian equilibrium in Arrow-Debreu economies]
    Note that Walras' law implies weak Walras' law, as such in Arrow-Debreu economies a price vector is a competitive equilibrium if and only if it is a feasible price vector. As we will see in \Cref{chap:arrow_debreu_economies}, the assumption that Walras' law is a very natural assumption which is ensured to hold in many general equilibrium models.
\end{remark}
\fi 


\if 0
\begin{remark}[Bounding Walrasian Equilibirum Prices]\label{remark:homo_we_scaling}
    Now, as the excess demand correspondence in homogeneous of degree $0$, if $\price[][][*]$ is a Walrasian equilibrium price then for any $\lambda > 0$, $\lambda \price[][][*]$ is also a Walrasian equilibrium. Hence, in homogeneous economies, without loss of generality, we can restrict Walrasian equilibrium prices to be bounded by 1, i.e., $\price \in [0, 1]^\numgoods$.    
\end{remark}
\fi 

We now provide a novel characterization of Walrasian equilibrium prices in balanced economies as the \mydef{(Walrasian) box VI}, defined over $[0, 1]^\numgoods$ rather than $\R^\numgoods_+$, which facilitates polynomial-time computation of Walrasian equilibria.
The computational guarantees of our algorithms depend on the diameter of the constraint set of the VIs, which is hereby made finite. 


\begin{restatable}[Balanced economies as VIs]{theorem}{thmwebalancedequalsvi}
\label{thm:we_balanced_equal_svi}
    For any balanced economy $(\numgoods, \excessset)$, the set of Walrasian equilibria is equal to the set of scalar multiples ($\geq 1$) of the set of strong solutions of the box VI $([0, 1]^\numgoods, -\excessset)$, i.e., $\we(\numgoods, \excessset) = \bigcup_{\lambda \geq 1} \lambda \svi([0, 1]^\numgoods, -\excessset)$.
 \klara{It is not a cone. This is the most precise alternative I could come up with.}
\end{restatable}

In the sequel, we make use of the following lemma.

\begin{restatable}[$\varepsilon$-strong solution $\implies$ $\varepsilon$-Walrasian equilibrium]{lemma}{lemmaapproxsvieqapproxwe}\label{lemma:approx_svi_eq_approx_we}
    For any balanced economy $(\numgoods, \excessset)$, any $\varepsilon$-strong solution of the box VI $([0, 1]^\numgoods, -\excessset)$ is an $\varepsilon$-Walrasian equilibrium of $(\numgoods, \excessset)$. 
\end{restatable}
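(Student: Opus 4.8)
The plan is to unpack what it means to be an $\varepsilon$-strong solution of the VI $([0,1]^\numgoods, -\excessset)$ and show directly that the two defining conditions of an $\varepsilon$-Walrasian equilibrium — $\varepsilon$-feasibility and $\varepsilon$-Walras' law — follow. So let $\price[][][*] \in [0,1]^\numgoods$ be an $\varepsilon$-strong solution; by definition there is an excess demand $\excess(\price[][][*]) \in \excessset(\price[][][*])$ such that $\innerprod[{-\excess(\price[][][*])}][{\price[][][*] - \price}] \leq \varepsilon$ for all $\price \in [0,1]^\numgoods$, i.e. $\innerprod[{\excess(\price[][][*])}][{\price - \price[][][*]}] \leq \varepsilon$ for all $\price \in [0,1]^\numgoods$.

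First I would extract $\varepsilon$-feasibility. Since the inequality holds for all $\price$ in the box, I test it at cleverly chosen vertices. Fixing a commodity $\good$, take $\price = \price[][][*] + (1 - \price[\good][][*])\basis[\good]$ — wait, more simply take $\price$ equal to $\price[][][*]$ in every coordinate except coordinate $\good$, where I push the $\good$-th coordinate to its extreme. Concretely, choosing $\price$ with $\price[\good] = 1$ and $\price[k] = \price[k][][*]$ for $k \neq \good$ gives $(1 - \price[\good][][*])\excess[\good](\price[][][*]) \leq \varepsilon$; this is not quite enough when $\price[\good][][*]$ is close to $1$. The cleaner route is to observe that the box $[0,1]^\numgoods$ contains the translated simplex-like directions needed: testing $\price = \price[][][*] + \basis[\good]$ is generally infeasible, so instead I scale. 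Here is where homogeneity of degree $0$ enters: replace $\price[][][*]$ by a rescaled representative if needed, or — following the structure of \Cref{thm:we_balanced_equal_svi} — note we may argue on the cone. The simplest clean argument: for each $\good$, pick $\price = \zeros[\numgoods]$, giving $-\innerprod[{\excess(\price[][][*])}][{\price[][][*]}] \leq \varepsilon$, i.e. $\price[][][*] \cdot \excess(\price[][][*]) \geq -\varepsilon$; and combined with weak Walras' law $\price[][][*]\cdot\excess(\price[][][*]) \leq 0 \leq \varepsilon$, this already yields $\varepsilon$-Walras' law. For $\varepsilon$-feasibility, pick $\price = \price[][][*] + \basis[\good]$ when this lies in $[0,1]^\numgoods$ (i.e. $\price[\good][][*] = 0$ forces $\excess[\good] \le \varepsilon$ trivially after rescaling); in general use $\price$ that equals $\price[][][*]$ except with the $\good$-th entry set to $1$, yielding $(1-\price[\good][][*])\excess[\good](\price[][][*]) \le \varepsilon$, and then handle the degenerate case $\price[\good][][*]=1$ separately by rescaling the whole price vector down by a factor slightly below $1$ (legitimate by homogeneity, since $\excessset$ is unchanged) so that no coordinate equals $1$, then passing to the limit.

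The main obstacle is exactly this boundary/degeneracy issue: the box constraint means that when $\price[\good][][*] = 1$ the "test direction" $\basis[\good]$ points out of the feasible set, so one cannot directly conclude $\excess[\good](\price[][][*]) \le \varepsilon$. The resolution leans on homogeneity of degree $0$: for any $\lambda \in (0,1)$, $\lambda\price[][][*]$ has the same excess demand set, and for $\lambda$ close enough to $1$ the perturbed tests become feasible; a limiting argument as $\lambda \to 1$ then recovers the bound. I would carry out the steps in the order: (1) state the $\varepsilon$-strong solution inequality; (2) plug in $\price = \zeros$ and combine with weak Walras' law to get $\varepsilon$-Walras' law; (3) for $\varepsilon$-feasibility, handle coordinates with $\price[\good][][*] < 1$ directly via the vertex test $\price[\good] = 1$, using $1 - \price[\good][][*] \le 1$ is the wrong direction — rather note $(1-\price[\good][][*])\excess[\good] \le \varepsilon$ does not bound $\excess[\good]$ unless $1 - \price[\good][][*]$ is bounded below, so instead (3') rescale $\price[][][*]$ by homogeneity so all coordinates are strictly below $1$, making $\price = \price[][][*] + t\basis[\good]$ feasible for small $t > 0$, divide by $t$ and let $t$ be as large as feasibility allows, then combine with step (2)'s control on $\price[][][*]\cdot\excess(\price[][][*])$ to absorb cross terms; (4) conclude $\excess[\good](\price[][][*]) \le \varepsilon$ for every $\good$, so $\price[][][*]$ is $\varepsilon$-feasible, hence an $\varepsilon$-Walrasian equilibrium. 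I expect step (3') — getting the feasibility bound uniformly across all coordinates with the right constant, accounting for the interaction between the direction $\basis[\good]$ and the already-established bound on $\price[][][*]\cdot\excess(\price[][][*])$ — to be where the real work lies, and it is likely cleanest to prove it by first establishing $\excess(\price[][][*]) \le \varepsilon\ones[\numgoods]$ coordinatewise via the observation that $\price = \basis[\good]$ itself (a vertex of the box) gives $\excess[\good](\price[][][*]) - \price[][][*]\cdot\excess(\price[][][*]) \le \varepsilon$, and then $\price[][][*]\cdot\excess(\price[][][*]) \le 0$ by weak Walras' law gives $\excess[\good](\price[][][*]) \le \varepsilon$ immediately, with no rescaling needed at all.
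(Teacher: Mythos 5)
Your final argument---test $\price = \zeros[\numgoods]$ to get $\price[][][*]\cdot\excess(\price[][][*]) \geq -\varepsilon$ (combined with weak Walras' law to obtain $\varepsilon$-Walras' law), then test $\price = \basis[\good]$ and subtract off the nonpositive quantity $\price[][][*]\cdot\excess(\price[][][*])$ to get $\excess[\good](\price[][][*]) \leq \varepsilon$---is exactly the paper's proof. The lengthy detour through rescaling, homogeneity, and boundary degeneracies was unnecessary: the vertices $\zeros[\numgoods]$ and $\basis[\good]$ all lie in $[0,1]^\numgoods$, so no limiting or rescaling argument is needed, as you ultimately observed.
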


We now turn our attention to the question of existence of a Walrasian equilibrium.
In balanced economies, the existence of Walrasian equilibrium follows as a corollary of the existence of strong solutions to continuous VIs (Theorem 2.2.1 of \citet{facchinei2003finite}), under the assumption that the excess demand correspondence $\excessset$ is non-empty-, compact-, convex-valued, and upper hemicontinuous on $[0,1]^\numgoods$. \sklara{}{Requiring upper hemicontinuity on $[0,1]^\numgoods$ only is motivated by the observation that in balanced economies, assuming $\excessset$ is upper hemicontinuous on $\R^\numgoods_+$ is too restrictive, because any correspondence which is homogeneous of degree $0$ and continuous on the entirety of its domain is constant.}%
\footnote{
In more stylized applications such as Arrow-Debreu competitive economies \cite{arrow-debreu}, the excess demand correspondence is defined so as to be continuous only on the interior of the unit simplex,
as the excess demand for a commodity can be infinite if its price is 0. 
{An alternative modeling choice, which also does not modify the set of Walrasian equilibria, is} to restrict the excess demand for a commodity to be bounded by the total amount of the commodity that can be ever supplied in the economy.
\citet{arrow-debreu} take exactly this approach in Section 3 of their paper when proving their seminal Walrasian equilibrium existence result, and it is also the approach we take in \Cref{lemma:ad_economies_are_comp_bounded} (Appendix~\ref{sec_app:ad_comp}) to prove that any Arrow-Debreu competitive economy can be represented as a continuous competitive economy with bounded excess demand.
}
This Walrasian equilibrium can be trivial, however, i.e., $\price[][][*] = \zeros[\numgoods]$ is an equilibrium. 
To establish the existence of a non-trivial Walrasian equilibrium, we further restrict our attention to a subset of balanced Walrasian economies. 
We choose to study a canonical subset \cite{debreu1974excess, sonnenschein1972market}, which we call \mydef{competitive economies}.


\subsubsection{Competitive Economies}

A \mydef{competitive economy} is a balanced economy $(\numgoods, \excessset)$ whose excess demand correspondence satisfies \mydef{Non-Satiation}, which we define as follows: for all $\price \in \R^\numgoods_+$ and $\excess(\price) \in \excessset(\price)$, $\excess(\price) \leq \zeros[\numgoods]$ 
implies $\price \cdot \excess(\price) = 0$. 
Intuitively, this condition requires that whenever all commodities are supplied in excess, it must be that the economy has exhausted its purchasing power. 
As such, the excess demand is non-satiated, in the sense that the economy cannot demand more of any commodity, not because its supply is insufficient, but rather because it cannot afford it.
The canonical example of a competitive economy is the \mydef{Arrow-Debreu competitive economy} \cite{arrow-debreu} (see, \Cref{lemma:ad_economies_are_comp_bounded}, Appendix~\ref{sec_app:ad_comp}).

In competitive economies, we can alternatively characterize the set of Walrasian equilibria as the Walrasian VI with constraint set $\simplex[\numgoods]$, 
yielding the \mydef{(Walrasian) simplex VI}. 
This VI is better suited to proving existence, because it is not necessary to rule out the trivial equilibrium.


\begin{restatable}[Competitive economies as VIs]{theorem}{thmwecompequalsvi}
\label{thm:we_comp_equal_svi}
     For any competitive economy $(\numgoods, \excessset)$, the set of Walrasian equilibria is equal to the strictly positive cone generated by the set of strong solutions of the simplex VI $(\simplex[\numgoods], -\excessset)$, i.e., $\we(\numgoods, \excessset) = \bigcup_{\lambda > 0} \lambda \svi(\simplex[\numgoods], -\excessset)$.
\end{restatable}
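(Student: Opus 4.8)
The plan is to prove the two inclusions separately, relying only on the three defining properties of a competitive economy---homogeneity of degree $0$, weak Walras' law, and non-satiation---together with the elementary observation that a strong-solution inequality over $\simplex[\numgoods]$, tested against the vertices $\basis[\good]$, recovers the componentwise sign of the excess demand. The descriptor \emph{continuous} plays no role in the set equality itself; under the standing regularity hypotheses on $\excessset$ (upper hemicontinuity, nonempty-/compact-/convex-valuedness) one simply remarks that $(\simplex[\numgoods], -\excessset)$ is a continuous VI, which then yields existence of a strong solution as a corollary of Theorem 2.2.1 of \citet{facchinei2003finite}.

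For the inclusion $\bigcup_{\lambda > 0} \lambda \svi(\simplex[\numgoods], -\excessset) \subseteq \we(\numgoods, \excessset)$, I would take $\mathbf{q} \in \svi(\simplex[\numgoods], -\excessset)$ with a witnessing excess demand $\excess(\mathbf{q}) \in \excessset(\mathbf{q})$, so that $\innerprod[{\excess(\mathbf{q})}][{\mathbf{q} - \price}] \geq 0$ for every $\price \in \simplex[\numgoods]$. Evaluating at $\price = \basis[\good]$ gives $\mathbf{q} \cdot \excess(\mathbf{q}) \geq \excess[\good](\mathbf{q})$ for each $\good \in \goods$, while weak Walras' law gives $\mathbf{q} \cdot \excess(\mathbf{q}) \leq 0$; hence $\excess(\mathbf{q}) \leq \zeros[\numgoods]$ (feasibility), and non-satiation then upgrades this to $\mathbf{q} \cdot \excess(\mathbf{q}) = 0$ (Walras' law), so $\mathbf{q} \in \we(\numgoods, \excessset)$. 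Finally, homogeneity of degree $0$ gives $\excess(\mathbf{q}) \in \excessset(\lambda \mathbf{q})$ for every $\lambda > 0$, and since $\excess(\mathbf{q}) \leq \zeros[\numgoods]$ is scale-free and $(\lambda \mathbf{q}) \cdot \excess(\mathbf{q}) = \lambda (\mathbf{q} \cdot \excess(\mathbf{q})) = 0$, the entire ray $\{\lambda \mathbf{q} : \lambda > 0\}$ lies in $\we(\numgoods, \excessset)$.

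For the reverse inclusion, take $\price[][][*] \in \we(\numgoods, \excessset)$ with a witness $\excess(\price[][][*]) \in \excessset(\price[][][*])$ satisfying $\excess(\price[][][*]) \leq \zeros[\numgoods]$ and $\price[][][*] \cdot \excess(\price[][][*]) = 0$. Assuming $\price[][][*] \neq \zeros[\numgoods]$ (treated below), set $s \doteq \sum_{\good \in \goods} \price[\good][][*] > 0$ and $\mathbf{q} \doteq \price[][][*]/s \in \simplex[\numgoods]$. By homogeneity $\excess(\price[][][*]) \in \excessset(\mathbf{q})$, so taking it as the witness, for every $\price \in \simplex[\numgoods]$ one computes $\innerprod[{\excess(\price[][][*])}][{\mathbf{q} - \price}] = \frac{1}{s}\,\price[][][*] \cdot \excess(\price[][][*]) - \innerprod[{\excess(\price[][][*])}][{\price}] = -\innerprod[{\excess(\price[][][*])}][{\price}] \geq 0$, where the inequality holds because $\excess(\price[][][*]) \leq \zeros[\numgoods]$ and $\price \geq \zeros[\numgoods]$. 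Hence $\mathbf{q} \in \svi(\simplex[\numgoods], -\excessset)$ and $\price[][][*] = s\,\mathbf{q} \in \bigcup_{\lambda > 0} \lambda \svi(\simplex[\numgoods], -\excessset)$. (Equivalently, one may start from $\price[][][*] \in \svi(\R^\numgoods_+, -\excessset)$ via \Cref{thm:we_equal_svi} and apply the same scaling, using $\simplex[\numgoods] \subseteq \R^\numgoods_+$.)

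The only genuinely delicate point---and precisely what distinguishes this statement from \Cref{thm:we_balanced_equal_svi}, where the cone is generated from $[0,1]^\numgoods$---is ruling out the trivial equilibrium: a competitive economy, unlike a bare balanced one, has no Walrasian equilibrium at $\zeros[\numgoods]$. I expect this to be the main obstacle, and would discharge it from the standing assumptions on competitive economies, under which $\excessset(\zeros[\numgoods])$ contains no feasible excess demand (for Arrow-Debreu economies, for instance, the excess demand blows up as prices approach the boundary of the price domain; cf.\ \Cref{lemma:ad_economies_are_comp_bounded}). Granting this, every $\price[][][*] \in \we(\numgoods, \excessset)$ admits the normalization used above, and the remaining verifications are routine.
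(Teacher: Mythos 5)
Your proof matches the paper's essentially line for line: the same normalization by $\|\price[][][*]\|_1$ for the forward inclusion (verify the strong-solution inequality via Walras' law and feasibility), the same test at the simplex vertices $\basis[\good]$ combined with weak Walras' law to extract feasibility and non-satiation to upgrade to Walras' law for the reverse inclusion, and the same use of homogeneity of degree $0$ to extend both directions along rays. Your caution about the trivial equilibrium is well placed---the paper also tacitly assumes $\price[][][*] \neq \zeros[\numgoods]$ when it forms $\alpha = 1/\|\price[][][*]\|_1$---but be aware that the paper's own definition of a competitive economy does not by itself forbid $\zeros[\numgoods] \in \we(\numgoods, \excessset)$: both non-satiation and Walras' law are vacuous at $\zeros[\numgoods]$, so $\zeros[\numgoods]$ is a Walrasian equilibrium precisely when $\excess(\zeros[\numgoods]) \leq \zeros[\numgoods]$, which none of the stated axioms excludes. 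Your appeal to ``standing assumptions on competitive economies'' is therefore optimistic; the stated equality should really be read modulo $\{\zeros[\numgoods]\}$, or under an explicit extra hypothesis such as $\excess(\zeros[\numgoods]) \not\leq \zeros[\numgoods]$ (which does hold for Arrow-Debreu economies).
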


To establish the existence of a Walrasian equilibrium in competitive economies, 
we assume continuity of the excess demand, which necessitates the following definition.
A \mydef{continuous economy} is a Walrasian economy $(\numgoods, \excessset)$ whose excess demand correspondence $\excessset$ is upper hemicontinuous on $\simplex[\numgoods]$, non-empty-, compact-, and convex-valued.
Requiring upper hemicontinuity on $\simplex[\numgoods]$ only is motivated as before.
Intuitively, continuous economies are those economies in which relative changes in
prices lead to well-behaved changes in excess demands.

\klara{Assuming $\excessset$ is upper hemicontinuous on $\R^\numgoods_+$ is too restrictive, so 1) requiring upper hemicontinuity on $[0,1]^\numgoods$ only is enough for a balanced economy and 2) requiring upper hemicontinuity on $\simplex[\numgoods]$ only is enough for a competitive economy. These allow us to prove that a strong solution exists, but since we're on the unit box, this solution might be trivial. Now we move to competitive economies, again make some minimum viable assumptions; these allow us to prove that a strong solution exists, but since we're now on the simplex, any strong solution is provably non-trivial.}

Next, we leverage the fact that a strong solution is guaranteed to exist in continuous VIs (see \Cref{remark:sol_set_propert}) to establish the existence of a Walrasian equilibrium in continuous competitive economies.
We use this fact to provide an alterative proof---one that is consistent with our theory---of the existence of a Walrasian equilibrium in Arrow-Debreu competitive economies, as they are indeed continuous competitive economies (see Appendix~\ref{sec_app:ad_comp}, \Cref{lemma:ad_economies_are_comp_bounded}).


\begin{restatable}[Existence of Walrasian Equilibrium]{theorem}{thmexistencewe}\label{thm:existence_we}
    The set of non-trivial (i.e., non-zero) Walrasian equilibria of any continuous competitive economy $(\numgoods, \excessset)$ is non-empty, i.e., $\we(\numgoods, \excessset) \setminus \{ \zeros[\numgoods] \} \neq \emptyset$.
\end{restatable}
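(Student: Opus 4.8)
The plan is to deduce existence of a Walrasian equilibrium from the VI machinery developed earlier, using the competitive-economy characterization in \Cref{thm:we_comp_equal_svi} together with the existence of strong solutions in continuous VIs. First I would observe that, by definition, a continuous competitive economy $(\numgoods, \excessset)$ has an excess demand correspondence $\excessset$ that is upper hemicontinuous on $\simplex[\numgoods]$ and non-empty-, compact-, and convex-valued; hence so is $-\excessset$. Since $\simplex[\numgoods]$ is non-empty, compact, and convex, the VI $(\simplex[\numgoods], -\excessset)$ is a \emph{continuous VI} in the sense defined in \Cref{section:vis}. By the first bullet of the Solution Set Properties remark (Theorem 2.2.1 of \citet{facchinei2003finite}), $\svi(\simplex[\numgoods], -\excessset) \neq \emptyset$; pick a strong solution $\price[][][*] \in \simplex[\numgoods]$.

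Next I would invoke \Cref{thm:we_comp_equal_svi}, which states $\we(\numgoods, \excessset) = \bigcup_{\lambda > 0} \lambda \svi(\simplex[\numgoods], -\excessset)$. Taking $\lambda = 1$ gives $\price[][][*] \in \svi(\simplex[\numgoods], -\excessset) \subseteq \we(\numgoods, \excessset)$, so $\we(\numgoods, \excessset) \neq \emptyset$, which is exactly the claim. In fact the union is over $\lambda > 0$ precisely so that the zero vector is excluded, which is what makes this a \emph{non-trivial} equilibrium — but for the bare existence statement one only needs one element, and $\price[][][*]$ itself (lying in the simplex, hence nonzero) suffices.

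The only real content beyond citing earlier results is checking that the hypotheses of \Cref{thm:we_comp_equal_svi} and of the continuous-VI existence theorem are genuinely met; this is where I would spend a sentence verifying that negating the correspondence preserves upper hemicontinuity and the valued-ness conditions (immediate, since negation is a homeomorphism and preserves compactness and convexity of values), and that the competitive-economy axioms plus the continuous-economy definition together supply everything \Cref{thm:we_comp_equal_svi} requires. I expect the main — indeed only — obstacle to be purely expository: making sure the reader sees that ``continuous competitive economy'' unpacks exactly into ``$(\simplex[\numgoods], -\excessset)$ is a continuous VI satisfying the hypotheses of \Cref{thm:we_comp_equal_svi}.'' Once that identification is made, the result is a one-line corollary: continuous VI $\Rightarrow$ strong solution exists $\Rightarrow$ (by \Cref{thm:we_comp_equal_svi}) Walrasian equilibrium exists.
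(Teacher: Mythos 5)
Your proposal is correct and follows essentially the same route as the paper's own proof: invoke \Cref{thm:we_comp_equal_svi} to identify Walrasian equilibria with (a cone over) strong solutions of the continuous VI $(\simplex[\numgoods], -\excessset)$, then apply Theorem 2.2.1 of \citet{facchinei2003finite} to obtain a strong solution. The extra care you take in unpacking the definitions (verifying that negation preserves the hemicontinuity and valued-ness conditions, noting that $\lambda = 1$ already yields a nonzero equilibrium) is sound and only makes explicit what the paper leaves implicit.
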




With our characterization of Walrasian equilibria as strong solutions of VIs complete, we now turn our attention to solving the Walrasian, box, and simplex VIs.

\subsection{Price Adjustment Processes for Walrasian Equilibrium}

In this section, we develop algorithms with polynomial-time 
convergence guarantees. 
Specifically, we consider (first-order) price-adjustment processes, as defined by \citet{papadimitriou2010impossibility}.
Unless otherwise noted, we assume the excess demand correspondence is singleton-valued.

Given a Walrasian economy $(\numgoods, \excess)$ and an initial iterate $\price[][0] \in \R^\numgoods_+$, a \mydef{(first-order) price-adjustment process} $\priceupdate$ consists of an update function $\priceupdate: \bigcup_{\numhorizons \geq 1} (\R^\numgoods_+ \times \R^\numgoods)^{\numhorizons} \to \R^\numgoods_+$ that generates the sequence of iterates $\left\{ \price[][\numhorizon] \right\}_{\numhorizon \in \N}$ given by $\price[][\numhorizon + 1] \doteq \priceupdate \left( \left\{ (\price[][i], \excess(\price[][i])) \right\}_{i=0}^{\numhorizon} \right)$.
As is standard in the literature (see, for instance, \citet{papadimitriou2010impossibility}), the computational complexity measures in this section consider the number of evaluations of the excess demand $\excess$ as the unit of account.

An important class of price-adjustment processes are \mydef{natural} price-adjustment processes. 
These are processes where the price of each commodity is updated using only information about the past prices of the commodity itself and its excess demand.
They are natural in the following sense: if each commodity is sold by exactly one fictional seller,%
\footnote{All commodities are assumed to be unique in some way.}
that seller can update the price of her commodity in a decentralized fashion, that is, without having to coordinate with other sellers.

Formally, given a Walrasian economy $(\numgoods, \excess)$ and an initial price vector $\price[][0] \in \R^\numgoods_+$, a price-adjustment process $\priceupdate$ is said to be \mydef{natural} if it can be written as as $\priceupdate \doteq (\priceupdate[1], \hdots, \priceupdate[\numgoods])$, where, for all commodities $\good \in \goods$, $\priceupdate[\good]:
\bigcup_{\numhorizons \geq 1} (\R_+ \times \R)^{\numhorizons} \to \R_+$ is given by, for all $\numhorizon = 0, 1, \hdots$, $\price[\good][\numhorizon + 1] \doteq \priceupdate[\good] \left( \left\{ \price[\good][k], \excess[\good] (\price[][k]) \right\}_{k = 0}^{\numhorizon} \right)$.
The canonical natural price-adjustment processes are \emph{t\^atonnement processes} \cite{walras, arrow-hurwicz}, i.e., processes $\priceupdate \doteq (\priceupdate[1], \hdots, \priceupdate[\numgoods])$ s.t.\ for all $\good \in \goods$ and $\numhorizon \in \N_{++}$, there exists a function $g: \R_+ \times \R \to \R$ that satisfies $\priceupdate[\good]\left( \left\{\price[\good][k], \excess[\good] (\price[][k])\right\}_{k = 0}^{\numhorizon} \right) \doteq g(\price[\good][\numhorizon], \excess[\good] (\price[][\numhorizon]))$.%
\footnote{Traditionally \klara{Should we cite this?} \amy{YES!!!} \klara{Might need Deni's input.}, the function $\tatonnfunc$ is further restricted to be sign preserving, i.e., for all $\price[ ] \in \R_+$ and $\excess[ ] \in \R$, $\sign(g(\price[ ], \excess[ ])) = \sign(\excess[ ])$. With this restriction in place, a t\^atonnement process can be seen a mathematical model of the law supply and demand, which stipulates that the price of any commodity that is demanded (resp.\ supplied) in excess will increase (resp.\ decrease) \cite{walras, arrow-hurwicz}.}
Observe that $\priceupdate[\good]$ depends only on commodity $j$'s price $\price[\good]$ and excess demand $\excess[\good]$, not on $\price$ and $\excess$.%
\footnote{As $\excess$ is a function of $\price$, there is an indirect coupling of commodity prices, in spite of the informational restrictions of natural price-adjustment processes.}

\subsubsection{Variationally Stable Walrasian Economies}

To characterize the class of Walrasian economies for which a Walrasian equilibrium can be computed using our VI characterizations, we now introduce the class of variationally stable economies. 
A Walrasian economy $(\numgoods, \excessset)$, is said to be \mydef{variationally stable} \cite{mertikopoulos_learning_2019} on a general price space $\pricespace \subseteq \R^\numgoods_+$ iff there exists a $\price[][][*] \in \pricespace$ s.t.\ for all prices $\price \in \pricespace$ and $\excess(\price) \in \excessset(\price)$,
$
        \innerprod[{ \excess(\price)}][{\price[][][*] - \price}] \geq 0
$.
In other words, a Walrasian economy $(\numgoods, \excessset)$ is variationally stable on $\pricespace$ iff the VI $(\pricespace, -\excessset)$ satisfies the Minty condition.%

The next lemma is key to our story: the box VI $([0, 1]^\numgoods, -\excessset)$ associated with balanced economy $(\numgoods, \excessset)$ satisfies the Minty condition, i.e., any balanced economy 
is variationally stable on $[0, 1]^\numgoods$.


\begin{restatable}[Balanced Economies are Variationally Stable on the Unit Box]{lemma}{lemmabalancedisminty}\label{lemma:balanced_is_minty}
     Any balanced economy $(\numgoods, \excessset)$ is variationally stable on $[0, 1]^\numgoods$.
     In particular, letting $\price[][][*] \doteq \zeros[\numgoods]$, for all prices $\price \in [0, 1]^\numgoods$ and $\excess(\price) \in \excessset(\price)$, it holds that $\innerprod[{ \excess(\price)}][{\price[][][*] - \price}] \geq 0$.
\end{restatable}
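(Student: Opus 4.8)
The plan is to verify the claim directly using the proposed witness $\price[][][*] \doteq \zeros[\numgoods]$, which is legitimate since $\zeros[\numgoods] \in [0,1]^\numgoods$. The whole content of the lemma is the observation that evaluating the variational-stability inequality at the zero price vector collapses it to weak Walras' law, which holds by definition of a balanced economy.

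Concretely, I would fix an arbitrary price $\price \in [0,1]^\numgoods$ and an arbitrary excess demand $\excess(\price) \in \excessset(\price)$, and rewrite the target quantity as $\innerprod[{\excess(\price)}][{\price[][][*] - \price}] = \innerprod[{\excess(\price)}][{- \price}] = - \price \cdot \excess(\price)$. Now I invoke the defining property of balanced economies, namely weak Walras' law, which asserts that $\price \cdot \excess(\price) \leq 0$ for every $\price \in \R^\numgoods_+$ and every $\excess(\price) \in \excessset(\price)$; this applies in particular since $[0,1]^\numgoods \subseteq \R^\numgoods_+$. Hence $-\price \cdot \excess(\price) \geq 0$, i.e., $\innerprod[{\excess(\price)}][{\price[][][*] - \price}] \geq 0$. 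Since $\price$ and $\excess(\price)$ were arbitrary, $\price[][][*] = \zeros[\numgoods]$ witnesses variational stability on $[0,1]^\numgoods$, and by the remark following the definition of variational stability this is equivalent to the VI $([0,1]^\numgoods, -\excessset)$ satisfying the Minty condition, i.e., $\zeros[\numgoods] \in \mvi([0,1]^\numgoods, -\excessset)$.

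There is essentially no obstacle to overcome here; the only points worth flagging are (i) that $\zeros[\numgoods]$ indeed lies in the constraint set $[0,1]^\numgoods$, and (ii) that homogeneity of degree $0$ is not used — only weak Walras' law is needed — so the statement in fact holds for any Walrasian economy satisfying weak Walras' law. Homogeneity enters only later, in \Cref{thm:we_balanced_equal_svi}, where it is used to recover the full set of Walrasian equilibria as the positive cone generated by the strong solutions of this VI.
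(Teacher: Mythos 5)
Your proof is correct and takes exactly the same approach as the paper: choose $\zeros[\numgoods]$ as the witness, expand the inner product, and reduce the claim to weak Walras' law. Your side-remark that homogeneity of degree $0$ is never invoked is also accurate (the paper's proof uses only weak Walras' law as well), though it is not material to the lemma as stated.
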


While \Cref{lemma:balanced_is_minty} may seem trivial, it suggests that in balanced economies, which include among others Arrow-Debreu competitive economies, first-order methods for the box VI $([0, 1]^\numgoods, -\excessset)$ are guaranteed to converge to a strong solution under suitable continuity assumptions---possibly even one that is not the zero-price vector. 

\subsubsection{Mirror Extrat\^atonnement Process in Balanced Economies}

We now turn our attention to solving the box VI $([0, 1]^\numgoods, -\excessset)$---or rather, the box VI $([0, 1]^\numgoods, -\excess)$, since we assume for our algorithms that the excess demand is singleton-valued---and hence computing a Walrasian equilibrium with the mirror extragradient method. Solving the VI $(\pricespace, -\excess)$ for a general price space $\pricespace$ with the mirror extragradient method gives rise to a family of price-adjustment processes parameterized by a kernel function $\kernel$, which we call \mydef{mirror extrat\^atonnement} (\Cref{alg:mirror_extratatonnement}).
The name is justified because, as we show next, these processes are natural.%
\footnote{A similar observation was previously made by \citet{fisher-tatonnement} for a smaller class of Walrasian economies known as convex potential markets.}

\begin{figure}
\begin{minipage}{0.67\textwidth}
\begin{algorithm}[H]
\caption{Mirror Extrat\^atonnement}
\label{alg:mirror_extratatonnement}
\begin{flushleft}
\textbf{Input:} $\numgoods, \excess, \numhorizons, \learnrate[ ][ ], \kernel, \pricespace, \price[][0]$\\
\textbf{Output:} $\{\price[][\numhorizon]\}_{\numhorizon \in [\numhorizons]}$
\end{flushleft}
\begin{algorithmic}[1]
\For {$\numhorizon = 1, \hdots, \numhorizons $}
    \State $\price[][\numhorizon + 0.5] \gets 
    \argmin\limits_{\price \in \pricespace} \left\{ \innerprod[{\excess(\price[][\numhorizon])}][{\price[][\numhorizon] - \price}] + \frac{1}{2 \learnrate[][ ]} \divergence[\kernel][\price][{\price[][\numhorizon]}]\right\}  $
    \State $\price[][\numhorizon +1] \gets 
    \argmin\limits_{\price \in \pricespace} \left\{ \innerprod[{\excess(\price[][\numhorizon + 0.5])}][{ \price[][\numhorizon] - \price}] + \frac{1}{2 \learnrate[][ ]} \divergence[\kernel][\price][{\price[][\numhorizon]}]\right\}$
\EndFor
\State \Return $\{\price[][\numhorizon + 0.5], \price[][\numhorizon + 1]\}_{\numhorizon \in [\numhorizons]}$
\end{algorithmic}
\end{algorithm}
\end{minipage}
\end{figure}

    


\begin{remark}[Mirror extrat\^atonnement is a natural price-adjustment process]
    For the price space $\pricespace \doteq [0, 1]^\numgoods$ and any choice of kernel function s.t.\ $\kernel(\price) \doteq \sum_{\good \in \goods} \kernel[\good] (\price[\good])$ for some $\{\kernel[\good]:\R^\numgoods \to \R\}_{\good \in \goods}$, the mirror extrat\^atonnement updates can be written as follows: for all commodities $\good \in \goods$ and iterations $\numhorizon \in \N$,
    \begin{align*}
        &\price[\good][\numhorizon + 0.5] \gets     \argmin\limits_{\price[\good] \in [0, 1]} \left\{ \excess[\good] (\price[][\numhorizon])(\price[\good][\numhorizon] - \price[\good]) + \frac{1}{2 \learnrate[][ ]} \divergence[{\kernel[\good]}][{\price[\good]}][{\price[\good][\numhorizon]}] \right\} \\
        &\price[\good][\numhorizon +1] \gets \argmin\limits_{\price[\good] \in [0, 1]} \left\{ \excess[\good] (\price[][\numhorizon + 0.5])(\price[\good][\numhorizon] - \price[\good]) + \frac{1}{2 \learnrate[][ ]} \divergence[{\kernel[\good]}][{\price[\good]}][{\price[\good][\numhorizon]}]\right\}
    \end{align*}
    Now, multiplying the indices of the sequence of price iterates by 2, mirror extrat\^atonnement (\Cref{alg:mirror_extratatonnement}) can be interpreted as a natural price-adjustment process which on odd time steps applies a t\^atonnement update using the current time step's prices, and on even time steps applies a t\^atonnement update using both the current and the previous time step's prices.
%
\end{remark}

With the mirror extrat\^atonnement process and \Cref{lemma:balanced_is_minty} in hand, we apply \Cref{thm:mirror_extragradient_global_convergence} to characterize the convergence of the mirror extrat\^atonnement process (\Cref{alg:mirror_extratatonnement}).  

\begin{restatable}[Convergence of Mirror Extrat\^atonnement]{theorem}{thmmirrorextratatonnconvergence}
\label{thm:mirror_extra_tatonn_convergence}
    Let $(\numgoods, \excess)$ be a balanced economy. 
    Consider the mirror extrat\^atonnement process run on $(\numgoods, \excess)$, with a $1$-strongly-convex and $\kernelsmooth$-Lipschitz-smooth kernel function $\kernel$, any time horizon $\numhorizons \ \dotin \ \N$, any step size $\learnrate[ ][ ] > 0$, a price space $\pricespace \doteq [0, 1]^\numgoods$, and any initial price vector $\price \in \interior(\pricespace)$, and let $\{\price[][\numhorizon], \price[][\numhorizon + 0.5] \}_{\numhorizon}$ be the sequence of prices generated. Suppose that there exists $\lsmooth \in (0, \frac{1}{\sqrt{2}\learnrate[ ][ ]}]$, s.t.\ $\frac{1}{2}\|\excess(\price[][k+0.5]) - \excess(\price[][k])\|^2 \leq \lsmooth^2 \divergence[\kernel][{\price[][k+0.5]}][{\price[][k]}]$ \samy{}{for all $k \in [\numhorizons]$}. 
    Then, the following bound holds:
    \begin{align}
    \label{eq:weq_exists}
     \min_{k = 0, \hdots, \numhorizons} \max_{\price \in \pricespace} \ \langle \excess(\price[][][k + 0.5]), \price[][][k + 0.5] - \price \rangle \leq \, \frac{2 (1 + \kernelsmooth) \numgoods}{\learnrate[ ][ ]} \frac{\sqrt{\divergence[\kernel][{\zeros[\numgoods]}][{\price[][][0]}]}}{\sqrt{\numhorizons}} \enspace.
    \end{align}
    
    In addition, for all $\varepsilon > 0$, if there exist time horizon $\altnumhorizons[\varepsilon] \gg O(\nicefrac{1}{\varepsilon^2})$ and step size $\learnrate[\varepsilon ][ ] > 0$ with corresponding $\lsmooth[\varepsilon] \in (0, \frac{1}{\sqrt{2} \learnrate[\varepsilon ][ ]}]$ s.t.\ $\frac{1}{2} \norm[\excess({\price[][][k+0.5]}) - \excess({\price[][][k])}]^2 \leq \lsmooth[\varepsilon]^2 \divergence[\kernel][{\price[][][k+0.5]}][{\price[][][k]}]$ for all $k \in [\altnumhorizons[\varepsilon]]$, then there exists a choice of time horizon $\numhorizons[\varepsilon] \ \dotin \ O \left( \nicefrac{\kernelsmooth^2\numgoods^2\divergence[\kernel][{\zeros[\numgoods]}][{\price[][][0]}]}{\learnrate[\varepsilon ][ ]^2\varepsilon^2} \right)$
\end{restatable}

The convergence guarantee provided by \Cref{thm:mirror_extra_tatonn_convergence} 
requires pathwise Bregman continuity over price trajectories generated by the mirror extrat\^atonnement process.
This statement is purposeful, as it is not possible to guarantee Bregman continuity of an excess demand function that is homogeneous of degree $\alpha > 0$,
unless this excess demand is constant and thus trivially Bregman continuous for any Bregman divergence $\divergence[\kernel]$.%
\footnote{\label{footnote:lipschitz_box} 
Suppose that $\excess$ is $\lsmooth$-Lipschitz-continuous on $[0, 1]^\numgoods$. By homogeneity of degree $0$, we have, for all $\alpha > 0$ and $\price, \otherprice \in [0, 1]^\numgoods$, $\norm[{\excess(\price) - \excess(\otherprice)}] = \norm[{\excess(\alpha\price) - \excess(\alpha \otherprice)}] \leq \lambda \alpha \norm[{\price - \otherprice}]$. Hence, taking $\alpha \to 0$, we have, for all  $\price, \otherprice \in [0, 1]^\numgoods$, $\excess(\otherprice) = \excess(\price)$.}
Moreover, pathwise Bregman continuity is achievable by mirror extrat\^atonnement, as we empirically verify for all Walrasian economies simulated in the experiments reported in \Cref{sec:experiments}.


\klara{Reviewer \#1377C questions the relevance of pathwise, seemingly critiquing our exploratory, non-deductivist flow. Can we respond to this anyhow?}

For choices of kernel functions $\kernel$ s.t.\ the associated Bregman divergence $\divergence[\kernel]$ is \emph{not\/} homogeneous of degree $\alpha > 0$ (i.e., for all $\price, \otherprice \in \R^\numgoods_+$ and $\alpha, \lambda > 0$, $\divergence[\kernel] (\lambda \price, \lambda\otherprice) \neq \lambda^\alpha \divergence[\kernel] (\price, \otherprice)$), we define a novel class of economies, which seems likely to capture a broader family of Walrasian economies.%
\footnote{We leave this question open for future work.}
Given a modulus of continuity $\lsmooth > 0$ and a kernel function $\kernel: \pricespace \to \R$, a \mydef{$(\lsmooth, \kernel)$-Bregman-continuous economy} on $\pricespace \subseteq \R^\numgoods_+$ is a Walrasian economy $(\numgoods, \excess)$ whose excess demand $\excess$ is $(\lsmooth, \kernel)$-Bregman-continuous on $\pricespace$.%
\footnote{Bregman-continuous functions have been introduced in recent years in the optimization literature and have been shown to contain a large number of important function classes which are not continuous (see, for instance, \citet{lu2019relative}). Note that when the kernel function $\kernel$ is chosen to be $\kernel(\price) \doteq \frac{1}{2}\|\price\|^2$, $\lsmooth$-Bregman continuity reduces to $\lsmooth$-Lipschitz continuity. Further, the literature on algorithmic general equilibrium theory has considered variants of Bregman continuity to prove the polynomial-time convergence of the mirror t\^atonnement process to Walrasian equilibria in restricted classes of Walrasian economies (see, for instance \citet{fisher-tatonnement} and \citet{cheung2018dynamics}). As such, Bregman continuity seems a natural assumption to prove the convergence of algorithms to a Walrasian equilibrium.} 
We now note the following corollary of \Cref{thm:mirror_extra_tatonn_convergence}.

\begin{corollary}[Convergence of mirror extrat\^atonnement under Bregman continuity]
\label{thm:bregman_mirror_exta_tatonn_convergence}
    Let $\kernel$ be a $1$-strongly-convex and $\kernelsmooth$-Lipschitz-smooth kernel function and let $(\numgoods, \excess)$ be a balanced economy that is $(\lsmooth, \kernel)$-Bregman-continuous on $[0, 1]^\numgoods$.
    Imagine running the mirror extrat\^atonnement process on $(\numgoods, \excess)$, with kernel function $\kernel$, time horizon $\numhorizons \in \N$, price space $\pricespace \doteq [0, 1]^\numgoods$, initial price vector $\price[][0] \in \interior (\pricespace)$, and step size $\learnrate[ ][ ] \in (0, \frac{1}{\sqrt{2}\lsmooth}]$.
    Let $\left\{ \price[][\numhorizon], \price[][\numhorizon + 0.5] \right\}_{\numhorizon \in [\numhorizons]}$ be the sequence of prices generated.
    Then, for all $\varepsilon > 0$, there exists a corresponding choice of time horizon 
    $\numhorizons \ \dotin \ O \left( \frac{\kernelsmooth^2 \numgoods^2 \divergence[\kernel] (\zeros[\numgoods], \price[][0])}{\learnrate[ ][ ]^2 \varepsilon^2} \right)$
    s.t.\ $\bestiter[{\price}][\numhorizons] \ \dotin \ \argmin_{\price[][][(k+0.5)] : k = 0, \hdots, \numhorizons} \divergence[\kernel] (\price[][k+0.5], \price[][k])$ is an $\varepsilon$-Walrasian equilibrium of $(\numgoods, \excess)$.
    Furthermore, $\price[][][*] \doteq \lim_{\numhorizon \to \infty} \price[][\numhorizon+0.5] = \lim_{\numhorizon \to \infty} \price[][\numhorizon]$ is a Walrasian equilibrium of $(\numgoods, \excess)$.
\end{corollary}

\subsubsection{Mirror Extrat\^atonnement Process in Elastic Economies}

While the convergence result established in \Cref{thm:bregman_mirror_exta_tatonn_convergence} seems promising, it is not immediately clear what types of excess demand functions satisfy Bregman continuity.
To characterize the Bregman continuity properties of Walrasian economies, we introduce additional economic parameters, which have been used extensively in the analysis of algorithms for computing Walrasian equilibria (e.g., \citet{cole2008fast}).


Given $\lelastic \geq 0$, an economy $(\numgoods, \demandfunc, \supplyfunc)$ is said to be \mydef{$\lelastic$-elastic on $\pricespace \subseteq \R^\numgoods_+ $}
if it is a Walrasian economy $(\numgoods, \excess)$ with an \mydef{aggregate demand function} $\demandfunc: \R^\numgoods_+ \to \R^\numgoods_+$ and \mydef{aggregate supply function} $\supplyfunc: \R^\numgoods_+ \to \R^\numgoods_+$ s.t.\ $\excess(\price) \doteq \demandfunc(\price) - \supplyfunc(\price)$, for which the following two bounds hold:
\begin{align*}
        &\frac{\norm[\demandfunc(\otherprice)-\demandfunc(\price)]}{\norm[\demandfunc(\price)]} \leq \lelastic \, \frac{\norm[\otherprice-\price]}{\norm[\price]_\infty}
        \quad \quad \quad
        \frac{\norm[\supplyfunc(\otherprice)-\supplyfunc(\price)]}{\norm[\supplyfunc(\price)]} \leq \lelastic \, \frac{\norm[\otherprice-\price]}{\norm[\price]_\infty}.
\end{align*}

The next lemma provides conditions that ensure that an $\lelastic$-economy satisfies Bregman continuity.

\if 0
\begin{lemma}[Bregman Continuity for elastic economies]\label{lemma:bregman_cont_elastic}
    Let $(\numgoods, \demandfunc, \supplyfunc)$ be an $\lelastic$-elastic economy, then for any $1$-strongly-convex kernel function $\kernel: \R^\numgoods_+ \to \R$, the following bound holds:
    \begin{align*}
        \frac{1}{2}\| \excess(\otherprice) - \excess(\price) \|^2 \leq \left(\frac{\lelastic \left(\|\demandfunc(\price)\| + \|\supplyfunc(\price)\| \right)}{\|\price\|_\infty} \right)^2 \divergence[\kernel] (\otherprice,\price)
    \end{align*}
\end{lemma}
\fi

\begin{restatable}[Bregman continuity bound for elastic economies]{lemma}{lemmabregmancontelastic}\label{lemma:bregman_cont_elastic}
    For $(\numgoods, \demandfunc, \supplyfunc)$ an $\lelastic$-elastic economy on $\pricespace \subseteq \R^\numgoods_+\setminus \{ \zeros[\numgoods] \}$ and for any $1$-strongly-convex kernel function $\kernel: \R^\numgoods_+ \to \R$, 
    it holds that:
    \begin{align*}
        \nicefrac{1}{2} \norm[{\excess(\otherprice) - \excess(\price)}]^2 \leq \left(\frac{\lelastic \left(\norm[{\demandfunc(\price)}] + \norm[{\supplyfunc(\price)}] \right)}{\norm[{\price}]_\infty} \right)^2 \divergence[\kernel] (\otherprice, \price) \quad \forall \price, \otherprice \in \pricespace.
    \end{align*}
\end{restatable}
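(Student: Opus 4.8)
The plan is to bound the Euclidean distance $\|\excess(\otherprice) - \excess(\price)\|^2$ coordinate-by-coordinate using the definition of elasticity, and then convert the resulting price-distance bound into a Bregman-divergence bound via $1$-strong convexity of $\kernel$. First I would write $\excess = \demandfunc - \supplyfunc$, so that $\excess(\otherprice) - \excess(\price) = (\demandfunc(\otherprice) - \demandfunc(\price)) - (\supplyfunc(\otherprice) - \supplyfunc(\price))$, and handle each of the two differences separately. For a fixed commodity $\good$, I want to express $\demandfunc[\good](\otherprice) - \demandfunc[\good](\price)$ as a telescoping sum over the coordinates $k \in \goods$, changing one price coordinate at a time from $\price[k]$ to $\otherprice[k]$; each one-coordinate increment is, by definition of $\elastic[{\demandfunc[\good]}][{\price[k]}]$, equal to $\demandfunc[\good](\cdot)\cdot \elastic[{\demandfunc[\good]}][{\price[k]}](\cdot,\cdot)\cdot \frac{\otherprice[k] - \price[k]}{\price[k]}$ evaluated at the appropriate intermediate point. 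Applying the elasticity bound $|\elastic[{\demandfunc[\good]}][{\price[k]}]| \le \lelastic$ and bounding the demand level along the path (this is the step where I must be a little careful about which point the demand level is evaluated at — I expect the intended reading is that $\|\demandfunc(\price)\|$ serves as the reference level, so the telescoping should be arranged so the "base" demand is $\demandfunc[\good](\price)$), I get $|\demandfunc[\good](\otherprice) - \demandfunc[\good](\price)| \le \lelastic \,\demandfunc[\good](\price) \sum_{k} \frac{|\otherprice[k] - \price[k]|}{\price[k]} \le \frac{\lelastic\, \demandfunc[\good](\price)}{\|\price\|_\infty}\sum_k |\otherprice[k] - \price[k]|$, and similarly for $\supplyfunc$.

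Next I would combine these: using $\sum_k |\otherprice[k]-\price[k]| = \|\otherprice - \price\|_1 \le \sqrt{\numgoods}\|\otherprice - \price\|$ together with $\sum_\good \demandfunc[\good](\price)^2 = \|\demandfunc(\price)\|^2$ and the triangle inequality $|a_\good - b_\good| \le |a_\good| + |b_\good|$ for the demand/supply split, I would sum over $\good$ to obtain
$\|\excess(\otherprice) - \excess(\price)\|^2 \le \left(\frac{\lelastic(\|\demandfunc(\price)\| + \|\supplyfunc(\price)\|)}{\|\price\|_\infty}\right)^2 \|\otherprice - \price\|^2$ up to constant bookkeeping; I'd need to be attentive to whether the $\sqrt{\numgoods}$ factors and the cross terms from $(\|\demandfunc(\price)\| + \|\supplyfunc(\price)\|)^2$ come out exactly as in the statement, but the structure of the bound (a Lipschitz-like constant $L \doteq \frac{\lelastic(\|\demandfunc(\price)\| + \|\supplyfunc(\price)\|)}{\|\price\|_\infty}$) is forced by the estimates. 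Finally, since $\kernel$ is $1$-strongly convex, $\divergence[\kernel](\otherprice,\price) \ge \frac12\|\otherprice - \price\|^2$, so $\frac12\|\excess(\otherprice) - \excess(\price)\|^2 \le \frac12 L^2 \|\otherprice - \price\|^2 \le L^2 \divergence[\kernel](\otherprice, \price)$, which is exactly the claimed inequality.

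The main obstacle I anticipate is the telescoping/path argument for the coordinatewise elasticity bound: the elasticity $\elastic[{\demandfunc[\good]}][{\price[k]}](\price,\otherprice)$ as defined compares $\demandfunc[\good]$ only between two full price vectors differing (implicitly) in one relevant coordinate, so making the telescoping rigorous requires invoking the elasticity bound at each of the $\numgoods$ intermediate hybrid price vectors $(\otherprice[1],\dots,\otherprice[k],\price[k+1],\dots,\price[\numgoods])$, and then I must control the demand \emph{level} $\demandfunc[\good]$ at those hybrids in terms of $\demandfunc[\good](\price)$ — which again follows by the same elasticity bound (each one-coordinate step changes the level by at most a factor related to $\lelastic$), but this introduces a product of such factors rather than the clean linear bound stated. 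I would either (i) argue that in the regime of interest these factors are absorbed, or (ii) suspect the lemma is really being applied in the infinitesimal/pathwise sense (consistent with its use in \Cref{thm:mirror_extra_tatonn_convergence} along trajectory points), in which case the telescoping is replaced by an integral $\demandfunc[\good](\otherprice) - \demandfunc[\good](\price) = \int \sum_k \frac{\partial \demandfunc[\good]}{\partial \price[k]}\,d\price[k]$ and the elasticity bound gives $\left|\frac{\partial \demandfunc[\good]}{\partial \price[k]}\right| \le \lelastic \frac{\demandfunc[\good](\price)}{\price[k]}$ directly, making the level-control issue vanish. I'd go with whichever matches the paper's precise definition of elasticity between two points, and flag the constant as the only delicate bookkeeping.
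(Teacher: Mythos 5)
Your telescoping/path approach is not the one the paper uses, and, as you yourself flag, it would not reproduce the stated constant: the hybrid-point path argument forces you to (i) control the demand level at each intermediate hybrid via a product of $(1+O(\lelastic))$ factors, and (ii) convert $\|\otherprice-\price\|_1$ to $\|\otherprice-\price\|_2$ at the cost of a $\sqrt{\numgoods}$ factor — neither of which appears in the lemma. The source of the difficulty is a misreading of the definition. In the paper, $\elastic[{\demandfunc[\good]}][{\price[k]}](\price,\otherprice)$ is defined for two \emph{arbitrary} full price vectors $\price$ and $\otherprice$, with numerator the total change $\demandfunc[\good](\otherprice)-\demandfunc[\good](\price)$ (even when $\otherprice$ and $\price$ differ in many coordinates), normalized only by the change in the single coordinate $k$; and the $\lelastic$-elasticity condition requires this quantity to be bounded by $\lelastic$ uniformly over \emph{all} pairs $(\price,\otherprice)$ and all $(\good,k)$. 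That is a much stronger hypothesis than a one-coordinate-at-a-time bound, and it removes the need for telescoping entirely.

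Given that reading, the paper's proof is direct: for every $\good$ and every $k$ you immediately get $\lvert\demandfunc[\good](\otherprice)-\demandfunc[\good](\price)\rvert \le \frac{\lelastic\,\lvert\demandfunc[\good](\price)\rvert}{\price[k]}\,\lvert\otherprice[k]-\price[k]\rvert \le \frac{\lelastic\,\lvert\demandfunc[\good](\price)\rvert}{\price[k]}\,\lVert\otherprice-\price\rVert$; squaring, summing over $\good$, and then minimizing the resulting bound over $k$ replaces $\price[k]$ by $\max_k \price[k] = \lVert\price\rVert_\infty$. The same estimate for $\supplyfunc$, the triangle inequality on $\excess = \demandfunc - \supplyfunc$, and $1$-strong convexity of $\kernel$ (so $\frac12\lVert\otherprice-\price\rVert^2 \le \divergence[\kernel](\otherprice,\price)$) yield the claim with exactly the stated constant. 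Your closing intuition — that the ``level-control issue vanishes'' if the elasticity bound is taken to apply directly between the two endpoints — is precisely the correct resolution, but it is the stated definition rather than an alternative reading you need to resort to, so the proof should commit to it rather than hedge between two interpretations.
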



\Cref{lemma:bregman_cont_elastic} suggests that upper bounding the excess demand and lower bounding prices away from $\zeros[\numgoods]$ is sufficient to ensure the $(\lsmooth, \kernel)$-Bregman continuity of the excess demand, assuming a non-zero and finite Bregman constant
$\lsmooth$.


While it is not possible to ensure that prices remain bounded away from $\zeros[\numgoods]$ when running the mirror extrat\^atonnement process with price space $[0, 1]^\numgoods$, we can instead choose the price space $\simplex[\numgoods]$, in which case we can obtain a Bregman continuity bound from \Cref{lemma:bregman_cont_elastic}.
%
%
The price space $\simplex[\numgoods]$, however, does not include the zero vector $\zeros[\numgoods]$, which trivially ensures that balanced economies are variationally stable on $[0, 1]^\numgoods$.
In this sense, the restriction of the price space to $\simplex[\numgoods]$ can effectively destabilize an economy, potentially making the computation of a Walrasian equilibrium intractable. 
To overcome this challenge, we focus our attention on the class of competitive economies that are indeed variationally stable on $\simplex[\numgoods]$. 
This class includes among others Walrasian economies whose excess demand satisfies the weak axiom of revealed preferences (WARP) (\Cref{def:warp}, Appendix~\ref{sec_app:var_stable_classes}), the weak gross substitutes (WGS) condition (\Cref{def:wgs}, Appendix~\ref{sec_app:var_stable_classes}), and the law of supply and demand (\Cref{def:law_of_supply_and_demand}, Appendix~\ref{sec_app:var_stable_classes}).

\if 0
For balanced economies, by weak Walras' law a sufficient condition for the economy to be variationally stable on $\simplex[\numgoods]$ is the existence of $\price[][][*] \in \simplex[\numgoods]$ s.t.\ for all prices $\price \in \simplex[\numgoods]$, $\excess(\price) \in \excessset(\price)$, $
        \innerprod[{ \excess(\price)}][{\price[][][*]}] \geq 0$.
Now, suppose that there exists a commodity $\good \in \goods$ which is (weakly) demanded in excess for all $\price \in \simplex[\numgoods]$, i.e., $\excess[\good] (\price) \geq 0$. Then, setting $\price[][][*] = \basis[\good]$, we have $\innerprod[{ \excess(\price)}][{\price[][][*]}] = \innerprod[{ \excess(\price)}][{\basis[\good]}] = \excess[\good] (\price) \geq 0$. Hence, if there is a good which is never supplied in excess the economy is variationally stable.

Alternatively, a balanced economy is variationally stable on $\simplex[\numgoods]$ whenever there exists two commodities $\good, k \in \goods$, whose excess demands are negatively proportional for all prices, i.e., $\exists \alpha > 0$, s.t.\ $\excess[\good] (\price) \geq -\alpha \excess[k] (\price)$, Then, setting $\price[][][*] = \frac{1}{1 + \alpha}\basis[\good] + \frac{\alpha}{(1+\alpha)} \basis[k]$, we have $\innerprod[{ \excess(\price)}][{\price[][][*]}] = \frac{1}{1 + \alpha}\excess[\good] (\price)  + \frac{\alpha}{(1+\alpha)} \excess[k] (\price) \geq \frac{-\alpha}{1 + \alpha}\excess[k] (\price)  + \frac{\alpha}{(1+\alpha)} \excess[k] (\price) = 0$. In light of this observation, the variational stability assumption on $\simplex[\numgoods]$ can be seen as a rather mild assumption, as commodities whose excess demands are negatively correlated are abundant in the real world. For instance, airplane tickets and airplanes, whenever the excess demand for airline tickets is positive, this must mean that there are not enough airplanes, that is the excess demand for planes is negative.
\fi 

To use \Cref{lemma:bregman_cont_elastic}, we also have to ensure that the excess demand of the economy is bounded, which we achieve by bounding each consumer's consumption space for a commodity by the maximum aggregate supply of that commodity. 
Formally,
given $\lbounded \geq 0$, a \mydef{$\lbounded$-bounded economy} $(\numgoods, \demandfunc, \supplyfunc)$ is a Walrasian economy $(\numgoods, \excess)$ that consists of an \mydef{aggregate demand function} $\demandfunc: \R^\numgoods_+ \to \R^\numgoods_+$ and an \mydef{aggregate supply function} $\supplyfunc: \R^\numgoods_+ \to \R^\numgoods_+$ s.t.\ $\excess(\price) \doteq \demandfunc(\price) - \supplyfunc(\price)$,
$\|\demandfunc\|_\infty \leq \lbounded$, and
$\|\supplyfunc\|_\infty \leq \lbounded$. 
In \Cref{lemma:ad_economies_are_comp_bounded} (Appendix~\ref{sec_app:ad_comp}), we prove that any Arrow-Debreu competitive economy \cite{arrow-debreu} can be represented as a bounded continuous competitive economy; as such, this assumption is mild.

With these definitions in place, we can finally apply \Cref{lemma:bregman_cont_elastic} to derive polynomial-time convergence of the mirror extrat\^atonnement process as a reinterpretation of \Cref{thm:mirror_extra_tatonn_convergence}.

Going beyond variationally stable competitive economies on $\simplex[\numgoods]$, the local convergence behavior of mirror extrat\^atonnement can similarly be obtained by applying \Cref{thm:vi_mirror_extragrad_local}, i.e., replacing the assumption that the competitive economy is variationally stable with the assumption that the initial price iterate starts close enough to a 
local weak solution of the simplex VI $(\simplex[\numgoods], -\excess)$.

\if 0
\begin{theorem}[{Mirror Extrat\^atonnement Convergence in $\simplex[\numgoods]$}]\label{thm:mirror_extratatonn_var_stable}
    Let $(\numgoods, \demandfunc, \supplyfunc)$ be an $\lelastic$-elastic and $\lbounded$-bounded balanced economy which is variationally stable on $\simplex[\numgoods]$, and let $\price[][][*] \in \we(\numgoods, \demandfunc, \supplyfunc)$ be any of it Walrasian equilibria.
    Consider the mirror extrat\^atonnement process run on $(\numgoods, \excess)$, with a $1$-strongly-convex and $\kernelsmooth$-Lipschitz-smooth kernel function $\kernel$, any time horizon $\numhorizon \ \dotin \  \N$, any step size $\learnrate[ ][ ] \in (0, \frac{1}{2\sqrt{2}\numgoods \lelastic \lbounded}]$, a price space $\pricespace \doteq \simplex[\numgoods]$, and any initial price vector $\price[][0] \in \simplex[\numgoods]$, and let $\{\price[][\numhorizon], \price[][\numhorizon + 0.5] \}_{\numhorizon}$ be the sequence of prices generated. The following convergence bound holds: 
$
        \min_{k = 0, \hdots, \numhorizons} \max_{\price \in \simplex} \langle \excess(\price[][k+0.5]),  \price - \price[][k+0.5] \rangle \leq  \frac{2 \sqrt{2}(1 + \kernelsmooth)}{\learnrate[ ][ ]} \frac{\sqrt{\max_{\price \in \simplex}\divergence[\kernel][{\price[][][*]}][{\price[][0]}]}}{\sqrt{\numhorizons}}
$
    %
    %
    Further, we have that $\lim_{\numhorizon \to \infty} \price[][\numhorizon+0.5] = \lim_{\numhorizon \to \infty} \price[][\numhorizon] = \price[][][*]$ is a Walrasian equilibrium.
\end{theorem}
\fi 

\begin{restatable}[{Mirror Extrat\^atonnement convergence on the unit simplex}]{theorem}{thmmirrorextratatonnvarstable}
\label{thm:mirror_extratatonn_var_stable}
    Let $(\numgoods, \demandfunc, \supplyfunc)$ be a $\lbounded$-bounded balanced economy that is $\lelastic$-elastic and variationally stable on $\simplex[\numgoods]$, and let $\price[][][*] \in \simplex[\numgoods]$ be a price vector that makes it such. 
    In addition, let $\kernel$ be a $1$-strongly-convex and $\kernelsmooth$-Lipschitz-smooth kernel function.
    Imagine running the mirror extrat\^atonnement process on $(\numgoods, \demandfunc, \supplyfunc)$, with kernel function $\kernel$, step size $\learnrate[ ][ ] \in (0, \frac{1}{2\sqrt{2}\numgoods \lelastic \lbounded}]$, time horizon $\numhorizon \in \N$, price space $\pricespace \doteq \simplex[\numgoods]$, and initial price vector $\price[][0] \in \interior (\pricespace)$,
    Let $\left\{ \price[][\numhorizon], \price[][\numhorizon + 0.5] \right\}_{\numhorizon}$ be the sequence of prices generated. 
    Then, the following bound holds:
    \begin{align*}
        \min_{k = 0, \hdots, \numhorizons} \max_{\price \in \simplex} \langle \excess(\price[][k+0.5]),  \price - \price[][k+0.5] \rangle \leq  \frac{2 \sqrt{2}(1 + \kernelsmooth)}{\learnrate[ ][ ]} \frac{\sqrt{\divergence[\kernel][{\price[][][*]}][{\price[][0]}]}}{\sqrt{\numhorizons}}.
    \end{align*}
    Furthermore,
    $\price[][][**] \doteq \lim_{\numhorizon \to \infty} \price[][\numhorizon+0.5] = \lim_{\numhorizon \to \infty} \price[][\numhorizon]$
    is a Walrasian equilibrium of $(\numgoods, \demandfunc, \supplyfunc)$.
\end{restatable}

 
\if 0
With this theorem in hand, we make the following remarks before turning our attention to second order price-adjustment processes.

\begin{remark}[Contribution]
    While \citet{arrow-hurwicz} had in a seminal paper shown that a continuous-time variant of \emt\^atonnement price-adjustment process converge in continuous time to a Walrasian equilibrium in Walrasian economies which satisfy WARP, to the best of our knowledge such a result did not exist in discrete time. As such, our result is the first polynomial-time computation result for $\varepsilon$-Walrasian equilibrium, and first convergence result for a price adjustment in the class of Arrow-Debreu economies which satisfy WARP. 
\end{remark}

\begin{remark}[Boundedness of excess demand]\label{remark:we_tatonn_bounded_demand}
        The assumption that there exists $\overline{z} \geq 0$ s.t.\ for all $\numhorizon \in [\numhorizons]$, $\|\excess(\price[][\numhorizon])\| \leq \overline{z}$ is a common place assumption in the analysis of discrete time price-adjustment processes (see, for instance, \citet{fisher-tatonnement} or \Cref{chap:fisher_markets}), and is often guaranteed by doing a more fine grained analysis of the Walrasian economy at hand. That said, it is indeed possible to restrict the excess demand to bounded by the total amount of the commodity that can be ever supplied without modifying the Walrasian equilibria of the economy. This is indeed the approach that \citet{arrow-debreu} take in Section 3 of their paper for proving their seminal Walrasian equilibrium existence result. This restriction is also realistic from an economic perspective since it is not possible for the economy to consume more of a commodity that there can exist, and resources in the real-world are indeed scarce. Indeed, otherwise there would be no use for the economic sciences: the science of resource allocation under scarcity. We present the result in this format to maintain generality of the results for future work.
\end{remark}
\fi

\subsubsection{Mirror Extrat\^atonnement in Scarf Economies}

One of the earliest negative 
results in the literature is the example of a Walrasian economy provided by Herbert Scarf, in which continuous-time t\^atonnement cycles around the unique Walrasian equilibrium, while discrete-time variants spiral away from the equilibrium from any initial non-equilibrium price vector \cite{scarf1960instable}---this, all when the algorithm is run on the simplex. 
Formally, a \mydef{Scarf economy} is a Walrasian economy $(3, \excess^{\mathrm{scarf}})$ with three goods, for which the excess demand is the following singleton-valued function:
    \begin{align*}
        \excess^{\mathrm{scarf}}(\price) \doteq \left(
            \frac{\price[1]}{\price[1] + \price[2]} + \frac{\price[3]}{\price[1] + \price[3]} - 1,
            \frac{\price[1]}{\price[1] + \price[2]} + \frac{\price[2]}{\price[2] + \price[3]} - 1,
            \frac{\price[2]}{\price[2] + \price[3]} + \frac{\price[3]}{\price[1] + \price[3]} - 1
            \right) \enspace.
    \end{align*}

As we show in \Cref{lemma:scarf_var_stable_breg_cont} (Appendix~\ref{sec_app:walrasian}), the Scarf economy is 
Bregman-continuous on any price space bounded away from $0$ in every coordinate. 
We thus obtain the following 
result.

\klara{Is the Scarf economy operator integrable? More generally, is the Leontief-consumer economy operator integrable? Is there any informal notion of ``duality'' between the Walrasian and Arrow-Debreu VIs, where the former breaks as $\rho \to 1$ and the latter breaks as $\rho \to -\infty$?}
%

\klara{I would LOVE to use Scarf's Bregmanness on $\left[ \underline{p},1 \right]^3$ and, arguing that the path stays on this sub-box, get pathwise Bregman on $[0,1]^3$ (with zero!) \emph{ex ante} rather than \emph{ex post}.}
    
    
\begin{restatable}[Convergence of Mirror extrat\^atonnement in Scarf Economies]{corollary}{scarfconvergence}
\label{thm:scarf_convergence}
    Let $\kernel$ be a $1$-strongly-convex and $\kernelsmooth$-Lipschitz-smooth kernel function.
    Imagine running the mirror extrat\^atonnement process on the Scarf economy $\left(3, \excess^{\mathrm{scarf}}\right)$ on price space $\pricespace \doteq [0, 1]^3$ with kernel function $\kernel$ and initial price vector $\price[][0] \in \interior(\pricespace)$.
    Let $\left\{ \price[][\numhorizon], \price[][\numhorizon + 0.5] \right\}_{\numhorizon \in [\numhorizons]}$ \amy{why is this index $t$ and the next one $k$?} be the sequence of prices generated after $\numhorizons \in \N$ time steps, and assume there exists a step size $\learnrate[ ][ ] > 0$ and corresponding $\lsmooth \in (0, \frac{1}{\sqrt{2}\learnrate[ ][ ]}]$ s.t.\ $\frac{1}{2} \norm[{\excess^{\mathrm{scarf}}(\price[][k+0.5]) - \excess^{\mathrm{scarf}}(\price[][k])}]^2 \leq \lsmooth^2 \divergence[\kernel][{\price[][k+0.5]}][{\price[][k]}]$ for all $k \in [\numhorizons]$.
    
    In addition, for all $\varepsilon > 0$, if there exist time horizon $\altnumhorizons[\varepsilon] \gg O(\nicefrac{1}{\varepsilon^2})$ and step size $\learnrate[\varepsilon ][ ] > 0$ with corresponding $\lsmooth[\varepsilon] \in (0, \frac{1}{\sqrt{2}\learnrate[\varepsilon ][ ]}]$ s.t.\ $\frac{1}{2} \norm[\excess({\price[][][k+0.5]}) - \excess({\price[][][k])}]^2 \leq \lsmooth[\varepsilon]^2 \divergence[\kernel][{\price[][][k+0.5]}][{\price[][][k]}]$ for all $k \in [\altnumhorizons[\varepsilon]]$, then there exists a choice of time horizon $\numhorizons[\varepsilon] \ \dotin \ O \left( \nicefrac{\kernelsmooth^2\numgoods^2\divergence[\kernel][{\zeros[\numgoods]}][{\price[][][0]}]}{\learnrate[\varepsilon ][ ]^2\varepsilon^2} \right)\leq \altnumhorizons[\varepsilon]$ s.t.\ $\bestiter[\price][{\numhorizons[\varepsilon]}] \in \argmin_{\price[][][k+0.5] : k = 0, \hdots, \numhorizons[\varepsilon]} \divergence[\kernel] (\price[][][k+0.5], \price[][][k])$ is an $\varepsilon$-Walrasian equilibrium of $(3, \excess^{\mathrm{scarf}})$.
\end{restatable}

\amy{should we just say something here about how we have to initialize away from $\zeros$ for Bregman to hold. and then when we do, we seem to get convergence, even when $\eta$ and $\lambda$ are not in exactly this relationship.}

\section{Experiments}
\label{sec:experiments}
We describe two sets of experiments in this section.
First, we run 
mirror t\^atonnement and extrat\^atonnement, both with kernel function $\kernel(\price) \doteq \nicefrac{1}{2} \norm[\price]^2$ in an attempt to solve the Scarf economy.
The goal is to illustrate the differing convergence behavior between these two price-adjustment processes. 
Second, we run mirror extrat\^atonnement, again with kernel function $\kernel(\price) \doteq \nicefrac{1}{2} \norm[\price]^2$, on various Arrow-Debreu exchange economies \cite{arrow-debreu}.
This time, the goal is to demonstrate that 
extrat\^atonnement can efficiently solve reasonably large Walrasian economies in practice. 





\amy{are the results robust? do they hold up for various choices of step size?}

In \Cref{fig:scarf_phase_portraits}, we visualize the trajectories generated by the t\^atonnement and extrat\^atonnement processes on the Scarf economy.
As is well established {\cite{gillen2020divergence}, the sequence of prices generated by t\^atonnement, even when initialized very close to the equilibrium prices $(\nicefrac{1}{3}, \nicefrac{1}{3}, \nicefrac{1}{3})$ spirals away from these prices and ultimately lands at $(0, 0, 1)$, which is \emph{not\/} a Walrasian equilibrium.%
\footnote{Prices do not converge here in the technical sense.  Rather, further updates attempt to move prices outside the simplex, after which they are repeatedly projected back to this point.}
In contrast, the prices generated by the mirror extrat\^atonnement process spiral inwards towards the equilibrium prices, even when initialized far away from them.
An informal rationale for this behavior is as follows: 
The continuous-time variant of t\^atonnement is known to cycle around the equilibrium prices \cite{scarf1960instable}.
If we interpret the discrete-time t\^atonnement (resp. mirror extrat\^atonnement) process as an explicit (resp. implicit) discretization \cite{butcher2008numerical} of the continuous-time t\^atonnement dynamics, it is well understood that explicit (resp.\ implicit) discretization methods tend to be unstable (resp.\ stable) when continuous-time dynamics cycle.

\begin{figure}[h!]
    \centering
    
    \begin{subfigure}{0.46\textwidth}
        \centering
        \includegraphics[width=\linewidth, height=60mm, keepaspectratio]{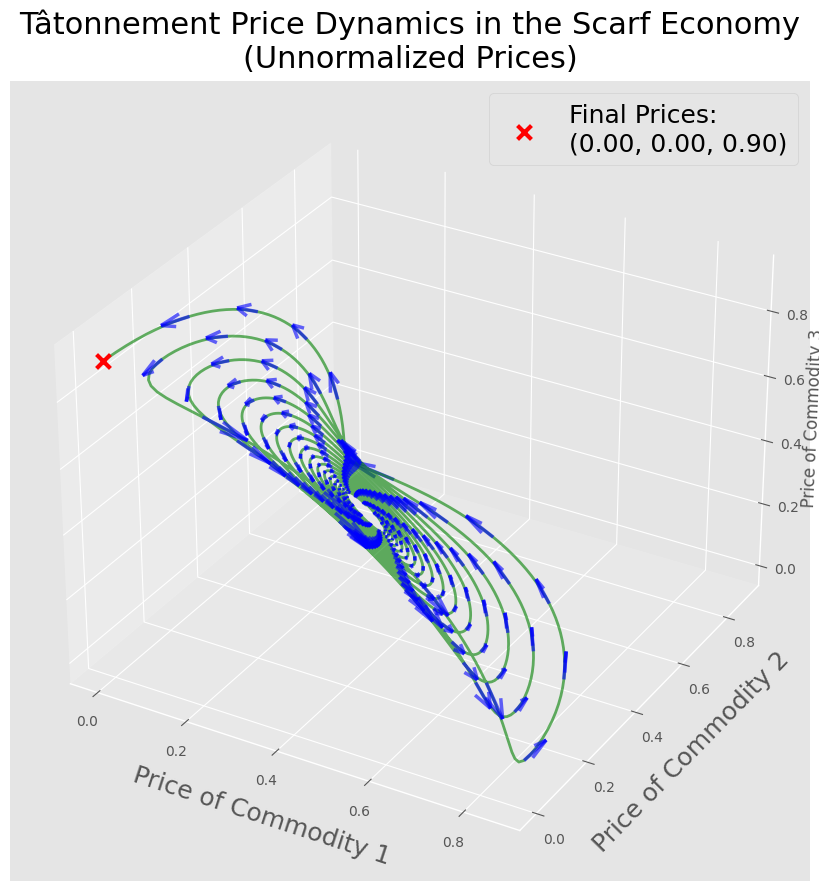}
    \end{subfigure}
    \hfill
    \begin{subfigure}{0.46\textwidth}
        \centering
        \includegraphics[width=\linewidth, height=60mm, keepaspectratio]{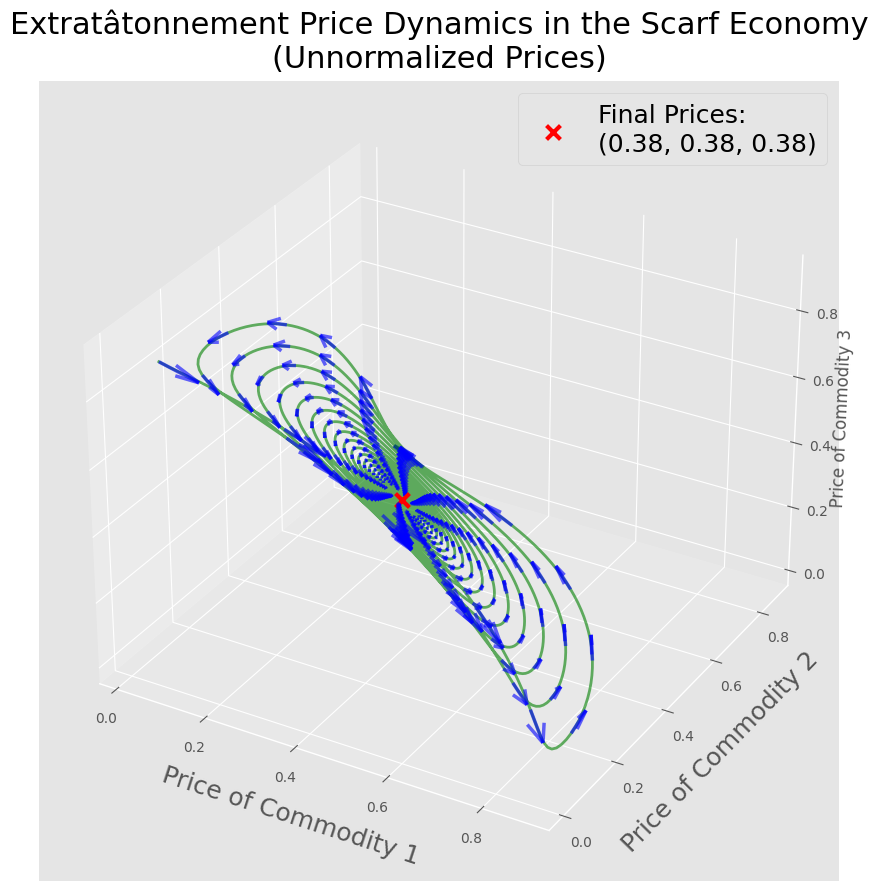}
    \end{subfigure}

    \caption{Phase portraits of t\^atonnement and extrat\^atonnement for the Scarf economy. All experiments were run on the unit box with step size $\eta = \nicefrac{7}{180}$. (Unnormalized) price plots show the price trajectories generated by t\^atonnement and extrat\^atonnement in the Scarf economy. Price plots post-hoc normalized to lie on the unit simplex appear in \Cref{sec_ap:experiments}. They are more triangular, but otherwise not qualitatively different.}
    \label{fig:scarf_phase_portraits}
\end{figure}

   \begin{table}[H]
   \caption{Arrow-Debreu exchange economy experimental setups}\label{table:exp_summary}
   \begin{center}
    \begin{tabular}{p{0.5cm} | p{1cm} | p{1.5cm} p{1.5cm} p{1.5cm} p{1.5cm} p{1.5cm}}
    \hline
    \makecell[l]{Exp \\ No.} & 
    \makecell[l]{Num. \\ Comm.} & 
    \makecell[l]{Num. \\ Linear} & 
    \makecell[l]{Num. \\ Cobb-\\ Douglas} & 
    \makecell[l]{Num. \\  CES  \\ $\rho \! \in \! \!(\!0, 1\!)$} & 
    \makecell[l]{Num. \\  CES \\ $\rho < 0$} & 
    \makecell[l]{Num. \\ Leontief} \\ \hline \hline
    1 & 500 & 0 & 0 & 0 & 0 & 600 \\
    2 & 500 & 0 & 0 & 0 & 600 & 0\\
    3 & 500 & 0 & 0 & 600 & 0 & 0\\
    4 & 500 & 0 & 600 & 0 & 0 & 0\\
    5 & 500 & 600 & 0 & 0 & 0 & 0\\
    6 & 1000 & 200 & 200 & 200 & 200 & 200\\
    7 & 1000 & 0 & 200 & 200 & 200 & 200\\ \hline
    \end{tabular}
   \end{center}
\end{table}

An Arrow-Debreu exchange economy $(\numbuyers, \numcommods, \consumptions, \consendow, \util)$ comprises $\numcommods \in \N$ commodities and $\numbuyers \in \N$ consumers, each $\consumer \in [\numconsumers]$ with a consumption space $\consumptions[\buyer]$, an endowment of commodities  $\consendow[\consumer] \in \R^\numcommods_+$, and a utility function $\util[\consumer]: \consumptions[\buyer] \to \R$. 
Such an economy can be represented as a bounded continuous competitive economy $(\numgoods, \excessset)$ where the excess demand correspondence is given by $\excessset(\price) \doteq \sum_{\player \in \players} \argmax\limits_{\consumption[\consumer] \in \consumptions[\consumer]: \consumption[\consumer] \cdot \price \leq \consendow[\consumer] \cdot \price}  \util[\consumer](\consumption[\consumer]) - \sum_{\consumer \in \consumers} \consendow[\consumer]$.%
\footnote{We refer the reader to Appendix~\ref{sec_app:ad_comp} on additional background and definitions on Arrow-Debreu exchange economies.} 

In our second set of experiments, we run extrat\^atonnement on Arrow-Debreu exchange economies, using the following utility function classes:
1.~linear: $\util[\buyer](\allocation[\buyer]) = \sum_{\good \in \goods} \valuation[\buyer][\good] \allocation[\buyer][\good]$; 
2.~Cobb-Douglas:  $\util[\buyer](\allocation[\buyer]) = \prod_{\good \in \goods} \allocation[\buyer][\good][][{\valuation[\buyer][\good]}]$; 
3.~Leontief:  $\util[\buyer](\allocation[\buyer]) = \min_{\good \in \goods} \left\{ \nicefrac{\allocation[\buyer][\good]}{\valuation[\buyer][\good]}\right\}$; 
and 4.~CES: $\util[\buyer](\allocation[\buyer]) = \sqrt[{\rho_\buyer}]{ \sum_{\good \in \goods} \valuation[\buyer][\good] \allocation[\buyer][\good][][{\rho_\buyer}]}$,
with each utility function parameterized by $\valuation[\buyer] \in \mathbb{R}_+^{\numbuyers}$, where every $\valuation[\buyer][\good]$ quantifies the value of commodity $\good$ to consumer $\consumer$.
These values are drawn randomly from a uniform distribution.

We study seven different reasonably-sized Arrow-Debreu exchange economies, with 500 or 1000 commodities, and 600, 800, or 1000 consumers.
Their precise makeup is shown in \Cref{table:exp_summary}.%
\footnote{For reproducibility, our code is available and ready to run on \coderepo.
In addition, we include all details of our experimental setup in Appendix~\ref{sec_ap:experiments}.}%
$^,$\footnote{Among these utility functions, linear and Leontief utilities could cause the demand of a consumer of one or more commodities to approach $\infty$ when the prices of these commodities approach $0$.
We thus bound the per-consumer per-commodity demand to be between 0 and 
the maximum supply of each commodity.
Bounding the consumer demand in this manner does not modify the set of Walrasian equillibria \cite{arrow-debreu}.}

\klara{The more linear the consumers get ($\rho \to 1$), the larger the coefficient $\lambda$ for the Bregman continuity of the excess demand (the correspondence remains upper hemicontinuous even for $\rho=1$). This affects the entire price space in a ``cobweb'' fashion, and a successful descent algorithm must step over these ``Bregman-discontinuities'' rather than circumvent them, which is what our story seems to suggest.} 

\begin{remark}
Even for very small $\epsilon > 0$, the approximation of linear utilities by CES utilities with $\rho=1-\epsilon$ is continuous, and is thus covered by our theory.
Nonetheless, the corresponding Bregman continuity coefficient $\lsmooth$ may become too large for the bound guaranteed by \Cref{thm:mirror_extratatonn_var_stable} to be meaningful. 
Even in the pathwise case of \Cref{thm:mirror_extra_tatonn_convergence}, where $\lsmooth$ remains finite in the limit as $\rho\to 1$ because the step size $\learnrate[ ][ ]$ is bounded away from zero, $\lsmooth$ may still become orders of magnitude larger for linear utilities than for other utility classes. 
See \Cref{rmk:linear_practice} in Appendix~\ref{sec_app:ad_comp} for an example. \klara{Is the placement okay like this?}
\end{remark}

\Cref{fig:experiment_plots_alt} depicts the convergence of extrat\^atonnement in all seven economies.
Moreover, the convergence rates (loosely; explanation forthcoming) respect our bounds (see \Cref{table:convergence_bounds_alt}).
Polynomial-time convergence is perhaps surprising, in light of \citeauthor{papadimitriou2010impossibility}'s observation that solving Arrow-Debreu economies is PPAD-complete (\citeyear{papadimitriou2010impossibility}).
But we are not contradicting their claim.
For one thing, the space of prices we consider is the unit box, not the unit simplex as in \citeauthor{papadimitriou2010impossibility}; thus, Minty is necessarily satisfied.
Additionally, we achieve our results by searching for a suitable step size, 
in consideration of pathwise Bregman continuity.

\if 0
\samy{\sklara{}{(Not all Arrow-Debreu economies on the unit simplex satisfy Minty.)} \amy{i think we need a citation here, or an example, say, in the appendix.} \klara{Ayy, I don't have a counter-example quite yet, besides it being hard to prove anything for $\rho < 0$\ldots We may need to weaken the claim. My guess is Leontief with asymmetric valuations and endowments\ldots}}{}
\amy{don't include until we have thought through some more!}
\fi

\klara{Convergence requires 1) Bregman continuity and 2) Minty. Both of these assumptions narrow down the class of solvable AD economies. Bregman continuity can be obtained a) pathwise ex-post through a step-size search or b) globally by proving Bregman continuity directly or c) globally by showing bounded excess demand and bounded elasticity of excess demand (these jointly imply global Bregman continuity). Minty can be obtained a) trivially for balanced economies by searching on the unit box or b) by (possibly more elaborate proof) for any smaller class of economies on the unit simplex. Papadimitriou and Yannakakis operate on the unit simplex. AD economies with $\rho<0$ do not necessarily satisfy Minty on the simplex, \sklara{though Scarf does}{including Scarf}. \amy{WRONG!?} AD economies are not necessarily globally Bregman-continuous on the unit simplex, though we can artificially bound excess demand and elasticity to get Bregman-continuity. Hardness likely lies in the class of AD economies where (NOT Bregman) OR (NOT Minty)}

\begin{figure}[htbp]
    \centering
    \begin{subfigure}{0.3\textwidth}
        \centering
        \includegraphics[height=30mm, width=45mm]{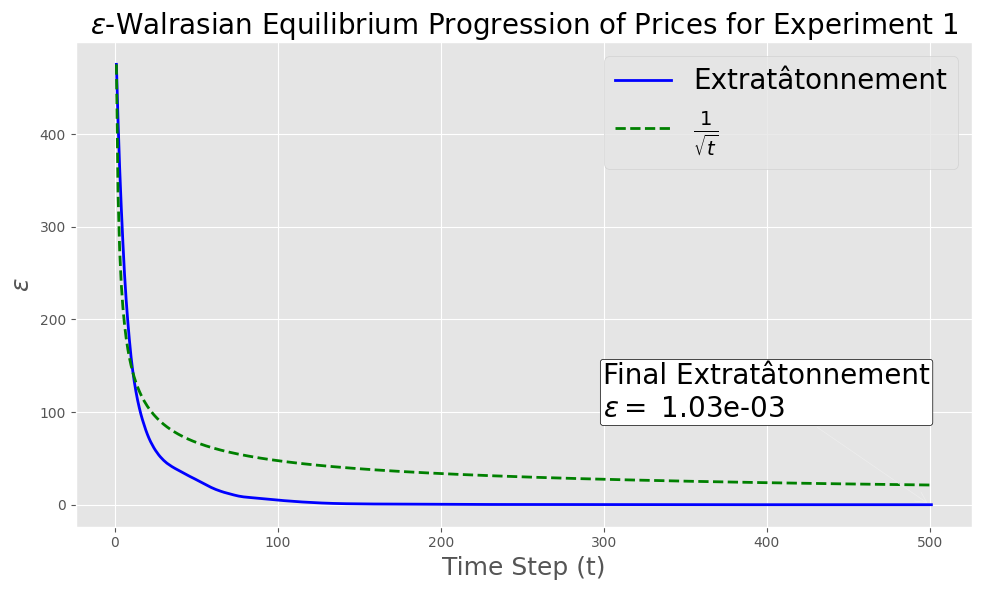}
    \end{subfigure}
    \hfill
    \begin{subfigure}{0.3\textwidth}
        \centering
        \includegraphics[height=30mm, width=45mm]{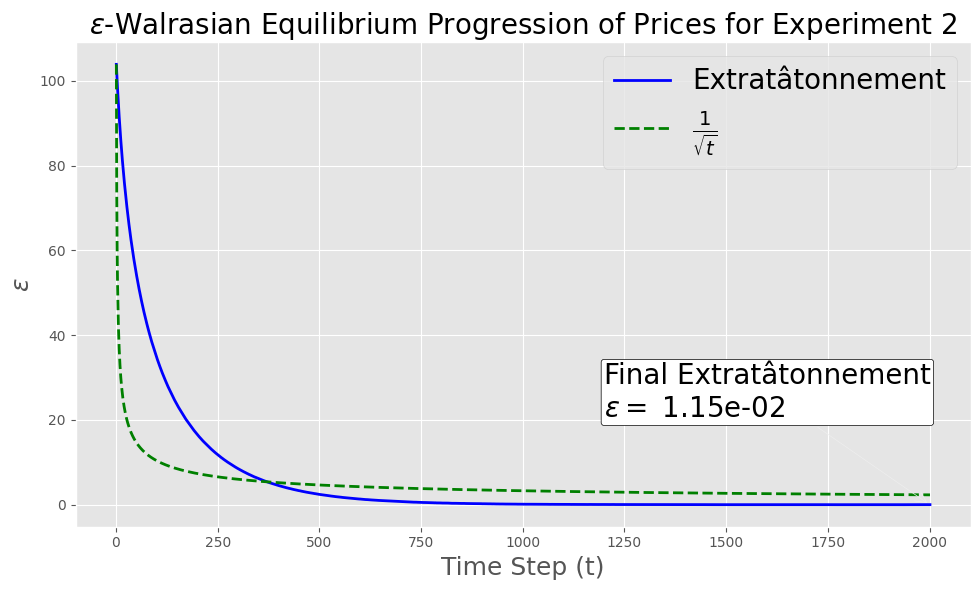}
    \end{subfigure}
    \hfill
    \begin{subfigure}{0.3\textwidth}
        \centering
        \includegraphics[height=30mm, width=45mm]{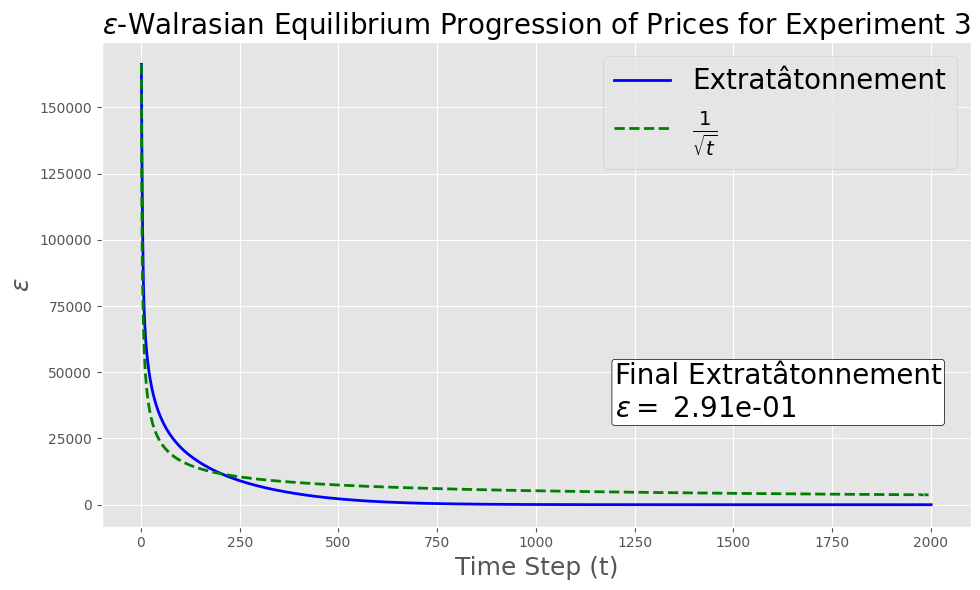}
    \end{subfigure}

    \vspace{0.3cm}

    \begin{subfigure}{0.3\textwidth}
        \centering
        \includegraphics[height=30mm, width=45mm]{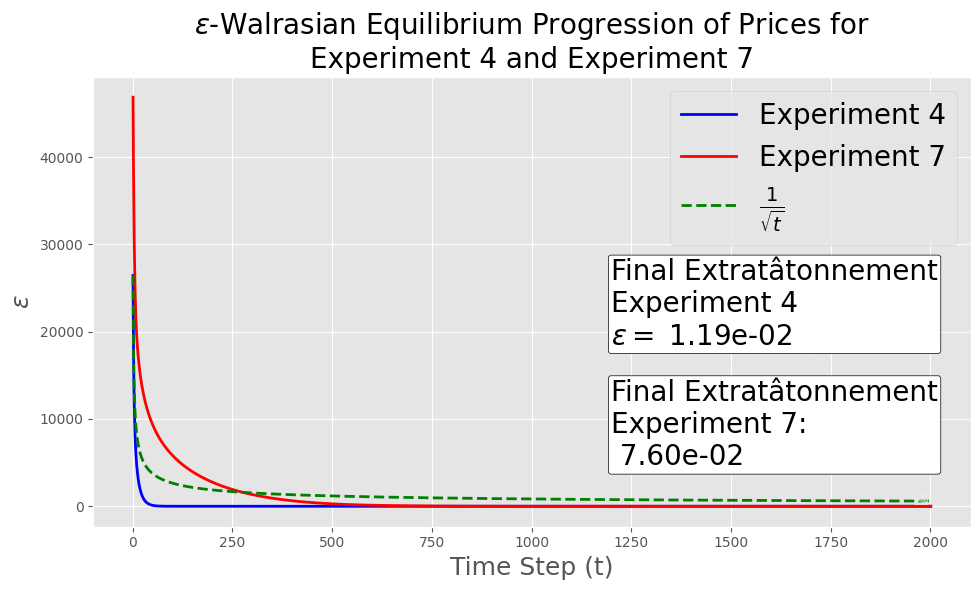}
    \end{subfigure}
    \hfill
    \begin{subfigure}{0.3\textwidth}
        \centering
        \includegraphics[height=30mm, width=45mm]{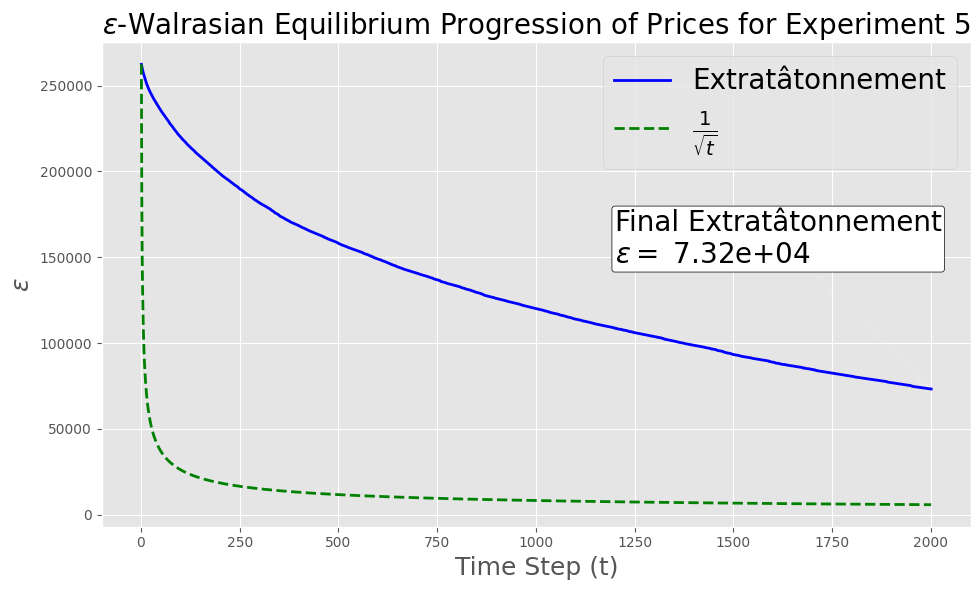}
    \end{subfigure}
    \hfill
    \begin{subfigure}{0.3\textwidth}
        \centering
        \includegraphics[height=30mm, width=45mm]{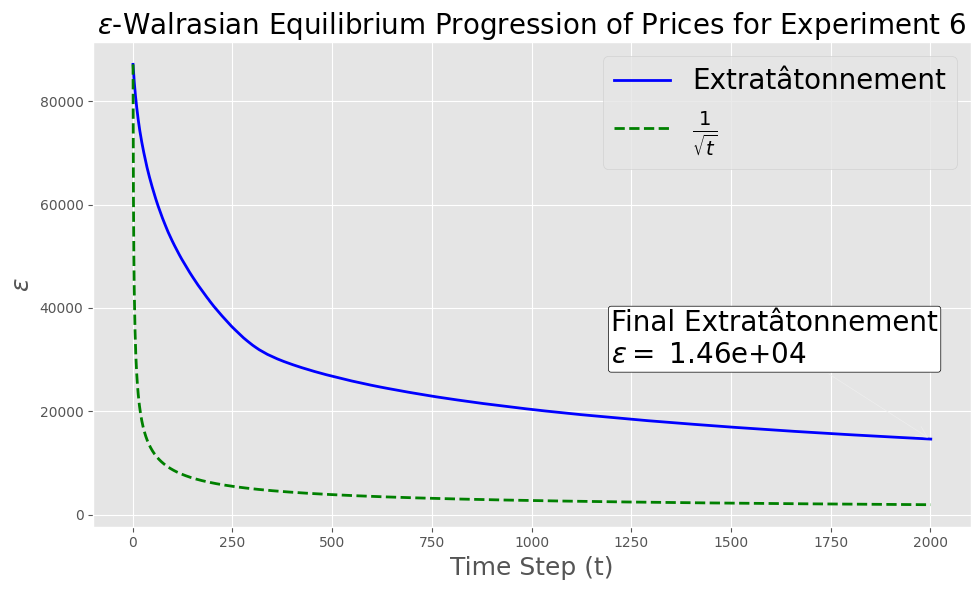}
    \end{subfigure}

    \caption{Per-iteration $\varepsilon$-Walrasian deviations in economies 1--7 using step sizes that correspond to the minimum best-iterate Walrasian deviation such that pathwise Bregman continuity is satisfied. Observe that $\varepsilon$ follows a $\nicefrac{1}{\sqrt{t}}$ trajectory in all economies except 5 and 6, in line with our theoretical results. Convergence is slower in 5 and 6, as these markets are inhabited by linear consumers, for which the convergence bounds guaranteed by our theory are much weaker due to the magnitude of the pathwise Bregman continuity coefficient.}
    \label{fig:experiment_plots_alt}
\end{figure}

\begin{table}[hbtp]
   \caption{Loose upper bounds (based on \Cref{thm:mirror_extragradient_global_convergence}) and observed Walrasian deviations
   in economies 1--7 using step sizes that minimize best-iterate Walrasian deviations while satisfying pathwise Bregman continuity. ``Minimum deviation'' corresponds to this minimum best-iterate Walrasian deviation, whereas ``final deviation'' corresponds to the Walrasian deviation of the last iterate. All Walrasian deviation values fall well below the corresponding loose theoretical upper bounds. The last iterate's Walrasian deviation was the same as best iterate's in economies 5 and 6, while in economy 7, the best iterate's Walrasian deviation was only marginally smaller than the last iterate's.}
   \label{table:convergence_bounds_alt}
    \begin{center}
        \renewcommand{\arraystretch}{1.1}
        \begin{tabular}{| m{.5cm} | m{3cm} m{3cm} m{2.5cm} m{2cm} |}
            \hline
            No. & Loose Upper Bound  & Minimum Deviation & Final Deviation & Step Size \\ 
            \hline 
            \hline
            1 & $2.22 \times 10^{1}$ & $7.66 \times 10^{-4}$ & $1.03 \times 10^{-3}$ & $2.30 \times 10^{0}$ \\
            2 & $3.91 \times 10^{2}$ & $6.98 \times 10^{-6}$ & $1.15 \times 10^{-2}$ & $6.54 \times 10^{-1}$ \\
            3 & $2.64 \times 10^{5}$ & $2.90 \times 10^{-1}$ & $2.91 \times 10^{-1}$ & $9.68 \times 10^{-4}$ \\
            4 & $4.50 \times 10^{3}$ & $1.19 \times 10^{-2}$ & $1.19 \times 10^{-2}$ & $5.68 \times 10^{-2}$ \\
            5 & $8.22 \times 10^{6}$ & $7.32 \times 10^{4}$ & $7.32 \times 10^{4}$ & $3.11 \times 10^{-5}$ \\
            6 & $3.57 \times 10^{6}$ & $1.46 \times 10^{4}$ & $1.46 \times 10^{4}$ & $7.17 \times 10^{-5}$ \\
            7 & $6.63 \times 10^{4}$ & $7.60 \times 10^{-2}$ & $7.60 \times 10^{-2}$ & $3.86 \times 10^{-3}$ \\
            \hline
        \end{tabular}
        \renewcommand{\arraystretch}{1.0}
    \end{center}
\end{table}

Specifically, we conduct a grid search over the step-size parameter for each economy to find appropriate settings, by uniformly discretizing an interval $[\eta_{\text{min}}, \eta_{\text{max}}]$ into 200 grid points.
We then record the best-iterate Walrasian deviation $(\varepsilon)$ for each of these points and the maximum Lipschitz coefficient (see~\Cref{table:convergence_bounds_alt}, columns 2 and 3).
The results of these grid searches for all seven economies are shown in \Cref{fig:ad_grid_search}.
In all cases, the minimum, best-iterate Walrasian deviation (indicated by a red X) was achieved at a step size at which pathwise Bregman continuity does \emph{not\/} hold.%
\footnote{We tested for pathwise-Bregman continuity using a fixed horizon, namely $500$ for economy 1, and $2000$ for economies 2 through 7.
It is conceivable that the property in fact holds for an alternative choice of $\tau$.}
The shaded regions of these plots indicate step sizes where pathwise Bregman continuity \emph{does\/} hold, as does a loose upper bound on the best-iterate Walrasian deviation (described below).


For our first set experiments with Arrow-Debreu exchange economies,
we choose as our step size that for which the best-iterate Walrasian deviation is minimized, \emph{among step sizes at which pathwise Bregman continuity is satisfied}---points marked by blue circles in the plots.
This choice ensures that our theory applies.
Having taken this choice, \Cref{fig:experiment_plots_alt} depicts the progression of the Walrasian deviations at prices generated during each iteration of extrat\^atonnement in all seven Arrow-Debreu economies.
We observe rapid convergence in those economies that lack linear consumers (1-4 and 7); but in economies with linear consumers (5 and 6), where results in the literature attest to the difficulty of convergence \citep{cole2019balancing}, 
convergence is notably slower. 
To nonetheless claim convergence in economies with linear consumers, we use \Cref{thm:mirror_extragradient_global_convergence} to compute a loose upper bound on the best-iterate Walrasian deviations. 
As we do not know $x^*$,
we replace the divergence from $x^*$ to $x^{(0)}$ with the maximal distance from $x^{(0)}$ to any point in the price space.
Since our price space is a box, this maximal distance is achieved by picking the farthest corner of the box from $x^{(0)}$. 
This upper bound is looser than the one presented in \Cref{thm:mirror_extragradient_global_convergence}; hence, violating this upper bound would certainly imply non-convergence.
The results presented in \Cref{table:convergence_bounds_alt} (column 1) show that no such violations are encountered, even in economies inhabited by linear consumers. 

\Cref{table:convergence_bounds_alt} (column 4) also lists the step sizes used in each experiment.
It is interesting to note the relationship between the choice of step size and the reported deviations.
For example, in economy 3 (CES with $\rho \in (0,1)$, where the step size is very small, convergence is likewise slower.
On the other hand, in economy 1 (Leontief), where the step size is large, convergence is fast.


\begin{figure}[htbp]
    \centering
    
    \begin{subfigure}[ht]{0.46\textwidth}
        \centering
        \includegraphics[width=\linewidth, height=60mm, keepaspectratio]{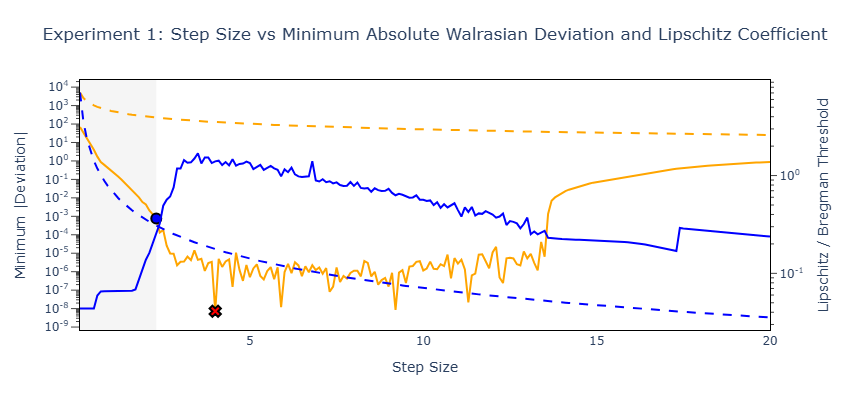}
    \end{subfigure}
    \hfill
    \begin{subfigure}[ht]{0.46\textwidth}
        \centering
        \includegraphics[width=\linewidth, height=60mm, keepaspectratio]{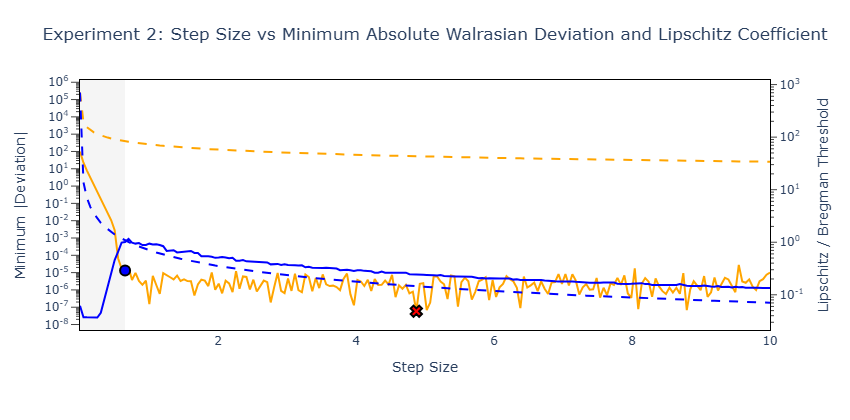}
    \end{subfigure}
    \begin{subfigure}[ht]{0.46\textwidth}
        \centering
        \includegraphics[width=\linewidth, height=60mm, keepaspectratio]{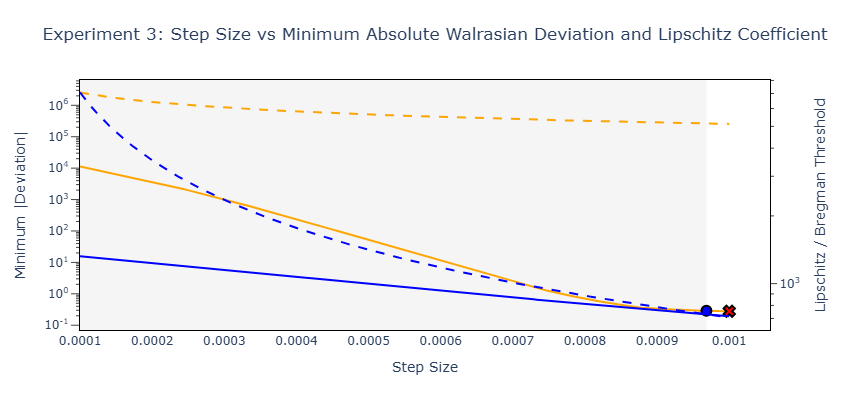}
    \end{subfigure}
    \hfill
    \begin{subfigure}[ht]{0.46\textwidth}
        \centering
        \includegraphics[width=\linewidth, height=60mm, keepaspectratio]{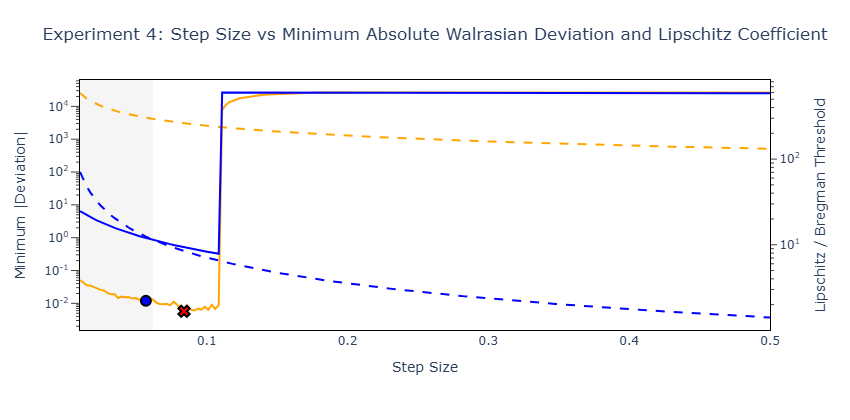}
    \end{subfigure}
       \begin{subfigure}[ht]{0.46\textwidth}
        \centering
        \includegraphics[width=\linewidth, height=60mm, keepaspectratio]{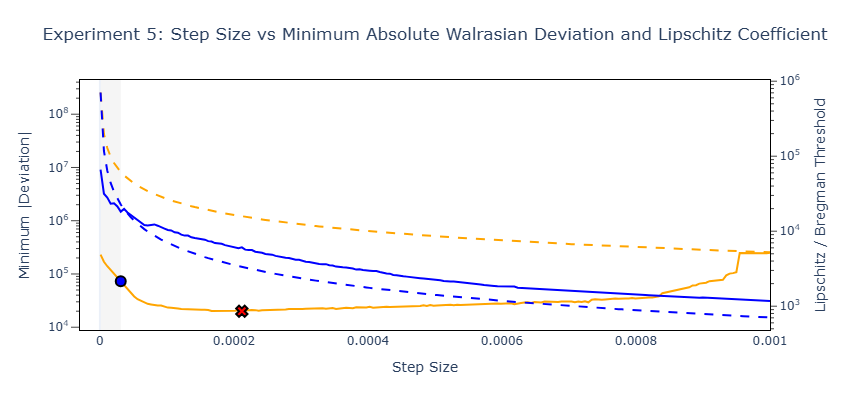}
    \end{subfigure}
    \hfill
    \begin{subfigure}[ht]{0.46\textwidth}
        \centering
        \includegraphics[width=\linewidth, height=60mm, keepaspectratio]{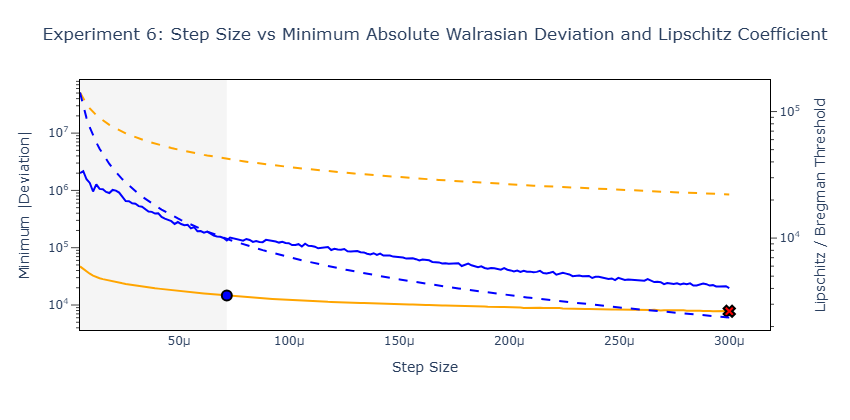}
    \end{subfigure}

    \begin{subfigure}[ht]{0.46\textwidth}
        \centering
        \includegraphics[width=\linewidth, height=60mm, keepaspectratio]{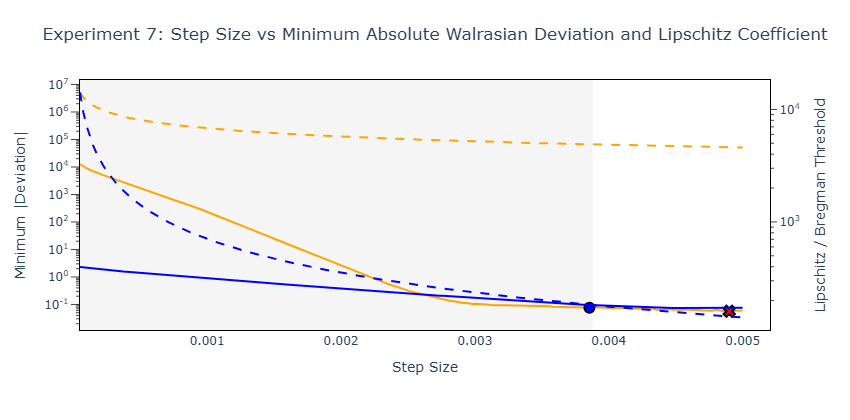}
    \end{subfigure}
    \hfill
    \begin{subfigure}[ht]{0.46\textwidth}
        \centering
        \includegraphics[width=\linewidth, height=60mm, keepaspectratio]{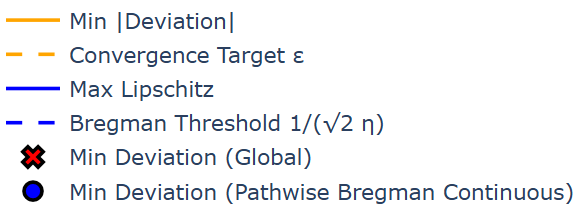}
    \end{subfigure}

    \caption{Grid search over 200 uniformly discretized step sizes. The best-iterate Walrasian deviation and maximum Lipschitz coefficient for each such step size are shown. Among these step sizes, that which achieves the minimum deviation overall and that which achieves the minimum deviation while also satisfying pathwise Bregman continuity are marked. ``Convergence Target'' is a loose upper bound on the best-iterate Walrasian deviation (based on \Cref{thm:mirror_extragradient_global_convergence}), and ``Bregman Threshold'' is the upper bound that the maximum Lipschitz coefficient must fall below for pathwise Bregman continuity to be satisfied (see Appendix~\ref{sec_ap:experiments}). The shaded region in each graph highlights those step sizes at which pathwise Bregman continuity and our loose upper bound are both satisfied, while the red Xs and the blue circles mark our choice step sizes in our experiments.}
    \label{fig:ad_grid_search}
\end{figure}

Finally, going beyond our theory, we run an additional set of experiments on the same seven Arrow-Debreu economies, but this time using step sizes that minimize the best-iterate Walrasian deviation, although pathwise Bregman continuity is not in fact satisfied. 
(I.e., We choose as the step sizes the points indicated by red Xs on the plots in \Cref{fig:ad_grid_search}.)
Although our theory no longer applies, we find that these convergence plots (see Appendix~\ref{sec_ap:ad_experiments}) follow similar trajectories to those shown in \Cref{fig:experiment_plots_alt}.
These experiments suggest that pathwise Bregman continuity, a condition that can only be verified post hoc, may be too restrictive.
We leave as future work the search for a weaker, preferably \emph{ex-ante\/} verifiable, convergence condition.
We also leave as future work the search for theoretical last-iterate guarantees that reflect what we observe in our convergence plots.

\klara{Gabriele Farina has some non-general \href{https://proceedings.mlr.press/v162/anagnostides22a.html}{last-iterate convergence work}.}


\if 0
\salp{}{The results generated by using the step sizes that yield minimum deviation values while satisfying pathwise Bregman continuity (presented in Appendix~\ref{sec_ap:ad_experiments}) in our Arrow-Debreu experiments are similar to the results generated by step sizes that minimize deviation. Once again, convergence was rapid in all economies except the ones inhabited by linear consumers. The equilibrium progression plots follow similar trajectories, and the convergence bound was once again satisfied in each experiment by both the final and minimum deviation values.}
\fi



\if 0
We observe that in all our experiments except 5 and 6---which include linear consumers and as such are not covered by our theory, because \sklara{}{treating the continuous excess demand correspondence as a function introduces discontinuities throughout the search space}%
\footnote{In our experiments, we choose to break ties ``evenly," i.e., if there are multiple bang-per-buck goods, all of them are demanded in equal amounts.}%
---the mirror extrat\^atonnement process converges to an $\varepsilon$-Walrasian equilibrium. 
\fi

\if 0
Finally, while our experiments obey our theory which suggests a best-iterate convergence to an $\varepsilon$-Walrasian equilibrium in $O \left( \nicefrac{1}{\varepsilon^2} \right)$ time steps, we observe that last-iterate convergence occurs only in economy 4, \alp{I'm confused about this. The final epsilon values annotated in the figures are the values of the convergence metric(Walrasian violation) at last iteration, so any experiment with a low enough epsilon should be last-iterate convergent??} \amy{asking Deni} corresponding to the case of Cobb-Douglas consumers, for which even t\^atonnement is known to converge in last iterates.
These findings suggest that achieving convergence in last iterates might not be possible \samy{}{in all Walrasian economies} with the mirror extrat\^atonnement process\sklara{}{, but may be possible under stronger assumptions on the excess demand}.
\fi

\section{Discussion}
A recent paper by \citet{werning2026tatonnement}, revisits the stability of tâtonnement from a complementary but distinct angle. Rather than working within the classical Walrasian framework, they embed a static general equilibrium model into an intertemporal setting with forward-looking monopolistic and monopsonistic price setters subject to pricing frictions, and derive tâtonnement-like price dynamics endogenously from agents' optimizing behavior. Their central stability results rest on two properties of this intertemporal environment: first, forward-looking households use savings to smooth anticipated price changes; 

Our results differ from those of Lorenzoni and Werning along several important dimensions. First, our convergence guarantees are stronger.
They are non-asymptotic---we provide explicit polynomial-time rates of convergence to an approximate Walrasian equilibrium---and global, while their stability analysis is local, relying on a linearization of the excess demand function around equilibrium. Second, our framework requires no additional structure beyond the classical Walrasian setting---we work directly with competitive price-taking agents, without imposing monopolistic market power, pricing frictions, or rational expectations dynamics. Third, our treatment of the Scarf economy does not rely on eliminating income effects through an intertemporal structure: we prove polynomial-time convergence of mirror extratâtonnement to the unique Walrasian equilibrium of the Scarf economy directly, working with its standard Leontief preferences and Walrasian excess demand. More broadly, Lorenzoni and Werning recover stability as a consequence of specific microfoundational assumptions on price setters' behavior. In contrast, our results establish stability as a property of the excess demand structure of the Walrasian economy itself, which can be viewed as a more fundamental statement.
We note that the two approaches are not in conflict: they address adjacent questions and the findings are complementary.

\section{Conclusion}
In this work, we analyze the mirror extragradient method, and show that it is a powerful computational tool for solving variational inequalities, extending existing results to establish polynomial-time convergence under suitable conditions.
Then, by leveraging this framework, we provide the first computationally tractable characterization of Walrasian equilibrium in balanced economies.
Specifically, we introduce the mirror extratâtonnement process, and show that it converges efficiently even in challenging cases like the Scarf economy. 
Our theoretical insights and empirical validation suggest that computational intractability in general equilibrium is largely a consequence of Bregman discontinuities, rather than fundamental hardness, offering a new perspective on Scarf's long-standing challenge in applied general equilibrium theory.

\section{Acknowledgments}
Denizalp Goktas was supported by a JP Morgan AI Fellowship.
This work was further supported by the Office of Naval Research (ONR) grant N00014-24-1-2657.

\bibliographystyle{auxiliary/ACM-Reference-Format}
\bibliography{bib,references}  

\appendix
\section{Related Work}
\label{sec_app:related}
\paragraph{Algorithms for Walrasian Economies}
A detailed inquiry into the computational properties of market equilibria was initiated by \citet{devanur2008market}, who studied a special case of the Arrow-Debreu competitive economy known as the \mydef{Fisher market} \cite{brainard2000compute}.
This model, for which Irving Fisher computed equilibrium prices using a hydraulic machine in the 1890s, is essentially the Arrow-Debreu model of a competitive economy, but there are no firms, and buyers are endowed with only one type of commodity---hereafter good%
\footnote{In the context of Fisher markets, commodities are typically referred to as goods \citep{fisher-tatonnement}, as Fisher markets are often analyzed for a single time period only.
More generally, in Arrow-Debreu markets, where commodities vary by time, location, or state of the world, "an apple today" may be different than "an apple tomorrow". For consistency with the literature, we refer to commodities as goods. }---an artificial currency 
\cite{brainard2000compute, AGT-book}.
\citet{devanur2002market} exploited a connection first made by \citet{eisenberg1961aggregation} between the \mydef{Eisenberg-Gale program} and Walrasian equilibrium to solve Fisher markets assuming buyers with linear utility functions, thereby providing a (centralized) polynomial-time algorithm for equilibrium computation in these markets~\cite{devanur2002market,devanur2008market}.
Their work was built upon by \citet{jain2005market}, who extended the Eisenberg-Gale program to all Fisher markets in which buyers have \mydef{continuous, quasi-concave, and homogeneous} utility functions, and proved that the equilibrium of Fisher markets with such buyers can be computed in polynomial time by interior point methods. 

Concurrent with this line of work on computing Walrasian equilibrium using centralized methods, a line of work on devising and proving 
convergence guarantees for price-adjustment processes (i.e., iterative algorithms that update prices according to a predetermined update rule) developed.
This literature has focused on devising \emph{natural\/} price-adjustment processes, like \emph{t\^atonnement}, which might explain or imitate the movement of prices in real-world markets.
In addition to imitating the law of supply and demand, \emph{t\^atonnement} has been observed to replicate the movement of prices in lab experiments, where participants are given endowments and asked to trade with one another \cite{gillen2020divergence}.
Perhaps more importantly, the main premise of research on the stability of Walrasian equilibrium in computer science 
is that for Walrasian equilibrium to be justified, not only should it be backed by a natural price-adjustment process as economists have long argued, but it should also be computationally efficient \cite{AGT-book}.

Another line of work considers price-adjustment processes in variants of Fisher markets.
\citet{cole2008fast} analyzed \emph{t\^atonnement\/} in a real-world-like model satisfying WGS called the ongoing market model.
In this model, \emph{t\^atonnement\/} once-again converges in polynomial-time \cite{cole2008fast, cole2010discrete}, and it has the advantage that it can be seen as an abstraction for market processes.
\citeauthor{cole2008fast}'s results were later extended by \citet{cheung2012tatonnement} to ongoing markets with \mydef{weak gross complements}, i.e., the excess demand of any commodity weakly increases if the price of any other commodity weakly decreases, fixing all other prices, and ongoing markets with a mix of WGC and WGS commodities.
The ongoing market model these two papers study contains as a special case the Fisher market; however \citet{cole2008fast} assume bounded own-price elasticity of Marshallian demand, and bounded income elasticity of Marshallian demand, while \citet{cheung2012tatonnement} assume, in addition to \citeauthor{cole2008fast}'s assumptions, bounded adversarial market elasticity, which can be seen as a variant of bounded cross-price elasticity of Marshallian demand, from below.
With these assumptions, these results cover Fisher markets with a small range of the well-known CES utilities, including CES Fisher markets with $\rho \in [0, 1)$ and WGC Fisher markets with $\rho \in (- 1, 0]$.%

\citet{fisher-tatonnement} built on this work by establishing the convergence of \emph{t\^atonnement\/} in polynomial time in nested CES Fisher markets, excluding the limiting cases of linear and Leontief markets, but nonetheless extending polynomial-time convergence guarantees for \emph{t\^atonnement\/} to Leontief Fisher markets as well.
More recently, \citet{cheung2018amortized} showed that \citeauthor{fisher-tatonnement}'s [\citeyear{fisher-tatonnement}] result extends to an asynchronous version of \emph{t\^atonnement}, in which good prices are updated during different time periods. 
In a similar vein, \citet{cheung2019tracing} analyzed \emph{t\^atonnement\/} in online Fisher markets, determining that \emph{t\^atonnement\/} tracks Walrasian equilibrium prices closely provided the market changes slowly.

Another price-adjustment process that has been shown to converge to market equilibria in Fisher markets is \mydef{proportional response dynamics}, first introduced by \citet{first-prop-response} for linear utilities; then expanded upon and shown to converge by \cite{proportional-response} for all CES utilities; and very recently shown to converge in Arrow-Debreu exchange economies with linear and CES ($\rho \in [0,1)$) utilities by \citeauthor{branzei2021proportional}. 
The study of the proportional response process was proven fundamental when \citeauthor{fisher-tatonnement} noticed its relationship to gradient descent.
This discovery opened up a new realm of possibilities in analyzing the convergence of market equilibrium processes.
For example, it allowed \citet{cheung2018dynamics} to generalize the convergence results of proportional response dynamics to Fisher markets for buyers with mixed CES utilities.
This same idea was applied by \citet{fisher-tatonnement} to prove the convergence of \emph{t\^atonnement\/} in Leontief Fisher markets, using the equivalence between mirror descent \cite{boyd2004convex}
on the dual of the Eisenberg-Gale program 
and \emph{t\^atonnement}, first observed by \citet{devanur2008market}.
More recently, \citet{gao2020first} developed 
methods to solve the Eisenberg-Gale convex program in the case of linear, quasi-linear, and Leontief Fisher markets.

An alternative to the (global) competitive economy model, in which an agent's trading partners are unconstrained, is the \citet{kakade2004graphical} model of a graphical economies.
This model features local markets, in which each agent can set its own prices for purchase only by neighboring agents, and likewise can purchase only from neighboring agents. 
Auction-like price-adjustment processes have been shown to converge in variants of this model assuming WGS \cite{andrade2021graphical}.

\paragraph{Algorithms for Variational Inequalities}

Variational inequalities \cite{facchinei2003finite} are a mathematical modeling framework whose study dates back to the early 1960s \cite{lions1967variational, hartman1966some, browder1965nonlinear, grioli1973proprieta, brezis2011methodes}. Their utility lies in their very broad mathematical formulation which allows one to solve other mathematical modeling problems using the tools of functional analysis. They have found a great number of applications to problems in engineering and finance \cite{facchinei2003finite} over the years, and have seen an increased interest due to their novel applications in machine learning, to problems ranging from the training of generative adversarial neural networks \cite{goodfellow2014gan} to robust optimization \cite{ben2009robust}.

Historically, the goal of the literature on solution methods for VIs has been to devise algorithms which are asymptotically guaranteed to converge to a strong or weak solution \cite{brezis2011functional}. An overwhelming majority of these works have focused on first-order methods for computing solutions of VIs, with higher order methods having been considered only in recent years (see, for instance, \citet{he2022convergence, huang2022approximation})
While a strong solution of a VI is guaranteed to exist in continuous VIs, most results on the computational complexity of strong solutions, concerns the class of monotone VIs (see, for instance \citet{cai2022tight}) with a few works focusing on VIs that satisfy the Minty condition (see, for instance, \citet{diakonikolas2020halpern}). 

The canonical algorithm to solve VIs is the projected gradient method \cite{cauchy1847methode, nesterov1998introductory} (\sklara{also known under such names as}{which generalizes methods such as} subgradient, gradient descent ascent and Arrow-Hurwicz-Uzawa \cite{arrow-hurwicz, arrow1958studies}). While asymptotic convergence of the projected gradient method to a solution can be shown for a subset of monotone VIs known as strongly monotone VIs\footnote{Recall that for monotone VIs, the set of strong and weak solutions are equal, as such here ``solution'' refers to both strong and weak solutions.}, in general monotone VIs, only ergodic asymptotic convergence (i.e., asymptotic convergence of the averaged iterates) to a strong solution can be guaranteed.
The earliest known algorithm with asymptotic convergence guarantees to a solution of a monotone VI, is the extragradient method, attributed to \citet{korpelevich1976extragradient}. Following this earlier success, \citet{popov1980modification} introduced a closely related algorithm called the optimistic gradient method which he also showed to converge to a solution. These initial extragradient and optimistic gradient algorithms would eventually become much more sophisticated with a large body of work appearing on  asymptotic convergence guarantees for variants of these earlier methods (e.g., \cite{solodov1999hybrid})\sklara{.}{}
of the optimality operator\sklara{}{.}

More recently, the literature has turned its attention to algorithms with non-asymptotic guarantees, and in particular to ones that are guaranteed to compute a $\vepsilon$-strong or $\vepsilon$-weak solution of a VI in polynomial-time, i.e., in a number of evaluations of the optimality operator $\vioperset$ which is polynomial in the inverse of the approximation parameter $\nicefrac{1}{\vepsilon}$, the dimensionality $\spacedim$ of the constraint set, and other relevant assumption specific parameters such as an upper bound on all of the values of the optimality operator. One of the earliest results in this direction was given by \citet{nemirovski2004prox}, who introduced the conceptual mirror-Prox Method, an elegant generalization of the extragradient Method, and established that $\vepsilon$-strong and $\vepsilon$-weak solutions can be computed in $O(\nicefrac{1}{\vepsilon})$ operations by averaging the iterates of the algorithm under the assumption that the the VI is monotone, and the optimality operator is Lipschitz-continuous. \citeauthor{nemirovski2004prox}'s work was subsequently followed by a large body of work on more sophisticated algorithms (e.g., 
\cite{auslender2005interior, diakonikolas2020halpern}) for monotone VIs, and better computational results for the projection method \cite{gidel2018variational}, the extragradient method  \cite{gorbunov2022extragradient, golowich2020tight, cai2022tight} and the optimistic gradient method \cite{gorbunov2022last}.

More recently, a number of works have considered first-order methods to compute a strong solution (e.g., \citet{loizou2021quasiSGDA, he2022convergence, diakonikolas2020halpern}) in VIs or a stationary point of the VI\footnote{A $(\varepsilon, \delta)$ stationary point of a VI $(\set, \vioperset)$ is a point $\vartuple[][*] \in \set$ s.t. for some $\vdelta \geq 0$ there exists $\vartuple \in \ball[\vdelta][{\vartuple[][*]}]$ and $\vartuple$ is a $\vepsilon$-strong solution. Convergence to this weaker solution concept is necessary for VIs in which $\vioperset$ is not singleton-valued for technical reasons, and any future work that seeks to generalize the results in this section should adopt this weaker definition to prove their convergence results.} (e.g., \citet{liu2021first}) that satisfy the Minty condition. The first-order methods considered by this more recent line of work on non-monotone variational inequalities include the extragradient method (e.g., \cite{wang2024extragradient, ofem2023modified}), Tseng's method (e.g., \cite{censor2011strong,thong2020self,uzor2023solving, dung2024convergence,aremu2024modified}), and the optimistic gradient method (e.g., \cite{Lin2022Perseus}) and its variants.

\if 0
\deni{Review}

\begin{definition}[Weakly-Monotone VIs]\label{defn:weakly_monotone_VI}
    A VI $(\set, \vioperset)$ is \mydef{$\wmonotone$-weakly-monotone VI} iff the optimality operator $\vioperset$ is $\wmonotone$-weakly-monotone. 
\end{definition} 

\citet{liu2021first} consider a very large class of continuous VIs in which the optimality operator satisfies weak-monotonicity, and provide non-asymptotic polynomial-time convergence bounds for the \mydef{inexact proximal point method}, an iterative algorithm which at each iteration solves a regularized version of the VI until converence. While these results are positive, at a high-level, the authors' non-asymptotic convergence is not to a $\vepsilon$-strong or $\vepsilon$-weak solution of the VI, but rather, either 1) to a \mydef{$\vepsilon$-stationary point of the VI}, i.e., a point $\vartuple[][*] \in \set$ s.t. for some $\vdelta \geq 0$ there exists $\vartuple \in \ball[\vdelta][{\vartuple[][*]}]$ and $\vartuple$ is a $\vepsilon$-strong solution, or 2) to a distribution over the constraint set $\set$ that in expectation that is a $\vepsilon$-strong solution.
Unfortunately, this computational complexity result can not be translated to a polynomial-time convergence to a $\vepsilon$-strong or $\vepsilon$-weak solution. 
\fi 

\section{Omitted Examples from Section \ref{section:vis}}
\label{sec_app:vi_examples}

\begin{example}[Non-Convergence of Gradient Method]\label{example:monotone_non_convergence}
    Consider the VI $(\set, \vioper)$ with $\set \doteq \R^2$ and $\vioper(\var, \othervar) = (-\othervar, \var)$. For this VI, we have $\svi(\set, \vioper) = \mvi(\set, \vioper) = \left\{ (0, 0) \right\} $. Suppose that $\left(\vartuple[][(0)], \othervartuple[][(0)]\right) \neq (0, 0)$, then for any $\learnrate[ ][ ] > 0$ the iterates generated by the gradient method are given by:
    \begin{align}
        \left(\var[(\numhorizon)], \othervar[(\numhorizon)]\right) \doteq \left( \var[(0)] - \learnrate[ ][ ] \sum_{k = 1}^{\numhorizon} \othervar[(k-1)],  \othervar[(0)] + \learnrate[ ][ ] \sum_{k = 1}^{\numhorizon} \var[(k-1)] \right) && \forall \numhorizon \in \N_{++}
    \end{align}
    and as such are unbounded, i.e., $\left\|(\vartuple[][(\numhorizon)], \othervartuple[][(\numhorizon)]) \right\| \to \infty$.
\end{example}

\begin{example}[Non-convergence in the absence of the Minty condition]\label{example:non_convergence_non_minty}
    Consider the VI $(\set, \vioper)$ where $\set \doteq \R$ and $\vioper(\var) \doteq 1 - \var[2]$. The set of strong solutions of VI is given by $\svi(\set, \vioper) = \{ -1, 1\}$. Notice that for any any $\var > 1$, $\vioper(\var) < 0$. As such, for the mirror (extra)gradient method, for any choice of step size $\learnrate[ ][ ] > 0$, if the initial iterate is initialized s.t. $\var[(0)] > 1$, then $\var[(t)] \to \infty$.
\end{example}

\section{Omitted Proofs for Section \ref{section:vis}}
\label{sec_app:vis}
Before we start our analysis of the mirror extragradient method (\Cref{alg:VI_mirror_extragrad}), we first prove the following technical lemma on Bregman divergences.

\begin{lemma}[Bregman Triangle Lemma]\label{lemma:bregman_triangle}
    Consider the Bregman divergence $\divergence[\kernel]: \set \times \set \to \R$ associated with a differentiable kernel function $\kernel: \set \to \R$. Let $\x, \y, \z \in \set$, we then have:
    \begin{align}
        \divergence[\kernel][\x][\z] + \divergence[\kernel][\y][\x] - \divergence[\kernel][\y][\z] = \langle \nabla \kernel(\x) - \nabla \kernel(\z), \x - \y \rangle.
    \end{align}
\end{lemma}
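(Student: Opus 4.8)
The plan is to prove the identity by direct expansion of the three Bregman divergences followed by cancellation; no clever idea is needed. Recall that for a differentiable kernel $\kernel : \set \to \R$ the Bregman divergence is $\divergence[\kernel][\vartuple][\othervartuple] = \kernel(\vartuple) - \kernel(\othervartuple) - \langle \nabla \kernel(\othervartuple), \vartuple - \othervartuple \rangle$; since $\kernel$ is assumed differentiable, $\nabla \kernel$ is well-defined at every point of $\set$, so all gradients below make sense.

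First I would expand each of the three terms appearing in the claim:
\begin{align*}
\divergence[\kernel][\x][\z] &= \kernel(\x) - \kernel(\z) - \langle \nabla \kernel(\z), \x - \z \rangle, \\
\divergence[\kernel][\y][\x] &= \kernel(\y) - \kernel(\x) - \langle \nabla \kernel(\x), \y - \x \rangle, \\
\divergence[\kernel][\y][\z] &= \kernel(\y) - \kernel(\z) - \langle \nabla \kernel(\z), \y - \z \rangle.
\end{align*}
Forming $\divergence[\kernel][\x][\z] + \divergence[\kernel][\y][\x] - \divergence[\kernel][\y][\z]$, the six scalar evaluations of $\kernel$ cancel in pairs ($\kernel(\x)$, $\kernel(\y)$, and $\kernel(\z)$ each occur once with a plus sign and once with a minus sign), leaving
\[
\divergence[\kernel][\x][\z] + \divergence[\kernel][\y][\x] - \divergence[\kernel][\y][\z] = -\langle \nabla \kernel(\z), \x - \z \rangle - \langle \nabla \kernel(\x), \y - \x \rangle + \langle \nabla \kernel(\z), \y - \z \rangle.
\]

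Finally I would collect the two terms carrying $\nabla \kernel(\z)$ using bilinearity of the inner product, namely $\langle \nabla \kernel(\z), \y - \z \rangle - \langle \nabla \kernel(\z), \x - \z \rangle = \langle \nabla \kernel(\z), \y - \x \rangle$, so the right-hand side becomes $\langle \nabla \kernel(\z), \y - \x \rangle - \langle \nabla \kernel(\x), \y - \x \rangle = \langle \nabla \kernel(\z) - \nabla \kernel(\x), \y - \x \rangle = \langle \nabla \kernel(\x) - \nabla \kernel(\z), \x - \y \rangle$, which is exactly the claimed expression. The only thing requiring care is the sign bookkeeping when distributing the inner products over the differences $\x - \z$ and $\y - \z$; there is no genuine mathematical obstacle, and differentiability of $\kernel$ is the sole hypothesis actually used.
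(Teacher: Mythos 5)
Your proof is correct and follows essentially the same route as the paper's: expand the three Bregman divergences, cancel the scalar $\kernel(\cdot)$ terms, and regroup the inner products. You add the final sign-flip step converting $\langle \nabla \kernel(\z) - \nabla \kernel(\x), \y - \x \rangle$ into the claimed $\langle \nabla \kernel(\x) - \nabla \kernel(\z), \x - \y \rangle$, which the paper leaves implicit.
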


\begin{proof}[Proof of \Cref{lemma:bregman_triangle}]

For all $\x, \y, \z \in \set$, we have:
\begin{align*}
& \divergence[\kernel][\x][\z] + \divergence[\kernel][\y][\x] - \divergence[\kernel][\y][\z]\\
&= \left[\kernel(\x) - \kernel(\z) - \langle \nabla \kernel(\z), \x - \z \rangle \right] +  \left[\kernel(\y) - \kernel(\x) - \langle \nabla \kernel(\x), \y - \x \rangle \right]
- \left[ \kernel(\y) - \kernel(\z) - \langle \nabla \kernel(\z), \y - \z \rangle \right]\\
&= - \langle \nabla \kernel(\z), \x - \z \rangle  - \langle \nabla \kernel(\x), \y - \x \rangle
+ \langle \nabla \kernel(\z), \y - \z \rangle\\
&= \langle \nabla \kernel(\z) - \nabla \kernel(\x), \y - \x \rangle.
\end{align*}
\end{proof}

\subsection{Global Convergence of the Mirror Extragradient Algorithm}

With the above technical lemma in hand, we are now ready to prove a progress lemma for the mirror extragradient method, which describes how the algorithm progresses from one iteration to another. Note that under the Minty condition, the following lemma implies  convergence to a weak solution since setting $\vartuple \doteq \vartuple[][*] \in \mvi(\set, \vioper)$, we obtain $\divergence[\kernel][{\vartuple[][*]}][{\vartuple[][][k]}] > \divergence[\kernel][{\vartuple[][*]}][{\vartuple[][][k + 1]}]$ for all $k \in [\numhorizons]$ (i.e., the distance to the weak solution $\vartuple[][*]$ is strictly decreasing). In addition, note that under the assumptions of the lemma, the VI $(\set, \vioper)$ is continuous, hence also implying convergence to a strong solution since any weak solution is also a strong solution in continuous VIs; see \Cref{remark:sol_set_propert}.

\begin{lemma}[Mirror Extragradient Progress]\label{lemma:mirror_extragrad_progress}
Consider the mirror extragradient algorithm (\Cref{alg:VI_mirror_extragrad}), run with a VI $(\set, \vioper)$ where $\set$ is non-empty, compact, and convex, together with a 1-strongly-convex kernel function $\kernel$, step size $\learnrate[ ][ ] > 0$, time horizon $\numhorizons \in \N$, and producing outputs $\left\{\vartuple[][][\numhorizon + 0.5], \vartuple[][][\numhorizon + 1]\right\}_{\numhorizon}$. Suppose that there exists a $\lsmooth \geq 0$ s.t. $\frac{1}{2}\left\|\vioper(\vartuple[][][k+0.5]) - \vioper(\vartuple[][][k])\right\|^2 \leq \lsmooth^2 \divergence[\kernel][{\vartuple[][][k+0.5]}][{\vartuple[][][k]}]$ for all $k \in [\numhorizons]$. Then, for all $k \in [\numhorizons]$ and $\vartuple \in \set$, the following inequality holds for the outputs $\left\{\vartuple[][][\numhorizon + 0.5], \vartuple[][][\numhorizon + 1]\right\}_{\numhorizon}$:
\begin{align}
\divergence[\kernel][{\vartuple}][{\vartuple[][][k]}] - \divergence[\kernel][{\vartuple}][{\vartuple[][][k+1]}] \geq  \learnrate[ ][ ]\langle \vioper(\vartuple[][][k+0.5]),\vartuple[][][k+0.5]  - \vartuple \rangle + \left( 1 - (\learnrate[ ][ ]\lsmooth)^2 \right)\divergence[\kernel][{\vartuple[][][k+0.5]}][{\vartuple[][][k]}].
\end{align}
\end{lemma}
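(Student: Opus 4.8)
The plan is to adapt the classical extragradient progress estimate to the Bregman geometry: every squared Euclidean distance is replaced by a Bregman divergence, and the polarization identity $2\langle a-b,a-c\rangle = \|a-b\|^2+\|a-c\|^2-\|b-c\|^2$ is replaced by the three-point identity of \Cref{lemma:bregman_triangle}. I would start from the first-order optimality conditions of the two proximal steps of \Cref{alg:VI_mirror_extragrad}: since each iterate minimizes a convex function over the convex set $\set$, stationarity gives, after clearing the step-size denominator, that for every $\vartuple \in \set$
\[
\langle \learnrate[ ][ ]\,\vioper(\vartuple[][][k]) + \nabla\kernel(\vartuple[][][k+0.5]) - \nabla\kernel(\vartuple[][][k]),\ \vartuple - \vartuple[][][k+0.5]\rangle \ \geq\ 0,
\]
\[
\langle \learnrate[ ][ ]\,\vioper(\vartuple[][][k+0.5]) + \nabla\kernel(\vartuple[][][k+1]) - \nabla\kernel(\vartuple[][][k]),\ \vartuple - \vartuple[][][k+1]\rangle \ \geq\ 0 .
\]
I would instantiate the first inequality at the test point $\vartuple = \vartuple[][][k+1]$ and keep the second at a generic $\vartuple \in \set$.

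Next I would apply \Cref{lemma:bregman_triangle} to rewrite each $\langle \nabla\kernel(\cdot) - \nabla\kernel(\cdot),\,\cdot - \cdot\rangle$ appearing above as a signed sum of three Bregman divergences; the second condition then reads $\learnrate[ ][ ]\langle \vioper(\vartuple[][][k+0.5]),\, \vartuple[][][k+1] - \vartuple\rangle \leq \divergence[\kernel][{\vartuple}][{\vartuple[][][k]}] - \divergence[\kernel][{\vartuple}][{\vartuple[][][k+1]}] - \divergence[\kernel][{\vartuple[][][k+1]}][{\vartuple[][][k]}]$. I would then split
\[
\learnrate[ ][ ]\langle \vioper(\vartuple[][][k+0.5]),\, \vartuple[][][k+1] - \vartuple\rangle = \learnrate[ ][ ]\langle \vioper(\vartuple[][][k+0.5]),\, \vartuple[][][k+0.5] - \vartuple\rangle + \learnrate[ ][ ]\langle \vioper(\vartuple[][][k+0.5]) - \vioper(\vartuple[][][k]),\, \vartuple[][][k+1] - \vartuple[][][k+0.5]\rangle + \learnrate[ ][ ]\langle \vioper(\vartuple[][][k]),\, \vartuple[][][k+1] - \vartuple[][][k+0.5]\rangle,
\]
and lower-bound the last term by $\divergence[\kernel][{\vartuple[][][k+0.5]}][{\vartuple[][][k]}] + \divergence[\kernel][{\vartuple[][][k+1]}][{\vartuple[][][k+0.5]}] - \divergence[\kernel][{\vartuple[][][k+1]}][{\vartuple[][][k]}]$ via the first (triangle-rewritten) optimality condition. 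The crucial bookkeeping observation is that the two $\divergence[\kernel][{\vartuple[][][k+1]}][{\vartuple[][][k]}]$ contributions cancel, leaving
\[
\divergence[\kernel][{\vartuple}][{\vartuple[][][k]}] - \divergence[\kernel][{\vartuple}][{\vartuple[][][k+1]}] \ \geq\ \learnrate[ ][ ]\langle \vioper(\vartuple[][][k+0.5]),\, \vartuple[][][k+0.5] - \vartuple\rangle + \learnrate[ ][ ]\langle \vioper(\vartuple[][][k+0.5]) - \vioper(\vartuple[][][k]),\, \vartuple[][][k+1] - \vartuple[][][k+0.5]\rangle + \divergence[\kernel][{\vartuple[][][k+0.5]}][{\vartuple[][][k]}] + \divergence[\kernel][{\vartuple[][][k+1]}][{\vartuple[][][k+0.5]}] .
\]

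Finally I would control the remaining cross term: by Cauchy--Schwarz and Young's inequality, $\learnrate[ ][ ]\langle \vioper(\vartuple[][][k+0.5]) - \vioper(\vartuple[][][k]),\, \vartuple[][][k+1] - \vartuple[][][k+0.5]\rangle \geq -\frac{\learnrate[ ][ ]^2}{2}\|\vioper(\vartuple[][][k+0.5]) - \vioper(\vartuple[][][k])\|^2 - \frac{1}{2}\|\vartuple[][][k+1] - \vartuple[][][k+0.5]\|^2$; the first term is at least $-(\learnrate[ ][ ]\lsmooth)^2\divergence[\kernel][{\vartuple[][][k+0.5]}][{\vartuple[][][k]}]$ by the pathwise Bregman-continuity hypothesis, and the second is at least $-\divergence[\kernel][{\vartuple[][][k+1]}][{\vartuple[][][k+0.5]}]$ by $1$-strong convexity of $\kernel$. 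Substituting these, the $\divergence[\kernel][{\vartuple[][][k+1]}][{\vartuple[][][k+0.5]}]$ terms cancel exactly and the coefficient of $\divergence[\kernel][{\vartuple[][][k+0.5]}][{\vartuple[][][k]}]$ becomes $1 - (\learnrate[ ][ ]\lsmooth)^2$, which is precisely the claimed inequality; note that no relation between $\learnrate[ ][ ]$ and $\lsmooth$ is required here (that restriction enters only later, to keep this coefficient nonnegative). The only genuine obstacle is keeping the constants straight — reconciling the normalization of $\divergence[\kernel]$ with the $\frac{1}{2\learnrate[ ][ ]}$ weight appearing in the prox steps, and taking the Young parameter equal to $\learnrate[ ][ ]$ so that the $\divergence[\kernel][{\vartuple[][][k+1]}][{\vartuple[][][k+0.5]}]$ terms cancel and the surviving coefficient is exactly $1 - (\learnrate[ ][ ]\lsmooth)^2$ rather than something weaker; conceptually it is the standard extragradient argument with \Cref{lemma:bregman_triangle} in place of the polarization identity.
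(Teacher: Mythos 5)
Your proposal is correct and follows essentially the same route as the paper's own proof: both start from the first-order optimality conditions of the two proximal steps, invoke the three-point (Bregman triangle) identity to convert gradient inner products into signed sums of divergences, split $\langle \vioper(\vartuple[][][k+0.5]),\, \vartuple[][][k+1] - \vartuple\rangle$ into the same three pieces, bound the cross term by Cauchy--Schwarz plus Young/AM--GM with the weight $\learnrate[ ][ ]$ chosen so that the $\divergence[\kernel][{\vartuple[][][k+1]}][{\vartuple[][][k+0.5]}]$ contributions cancel, and finish via pathwise Bregman-continuity and $1$-strong convexity. Your observation that no relation between $\learnrate[ ][ ]$ and $\lsmooth$ is needed at this stage is also accurate — that restriction only enters later, to make the coefficient $1 - (\learnrate[ ][ ]\lsmooth)^2$ nonnegative.
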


\begin{proof}[Proof of \Cref{lemma:mirror_extragrad_progress}]
By the first order optimality conditions of $\vartuple[][][k+0.5]$, we have for all $\vartuple \in \set$:
\begin{align*}
\left\langle \vioper(\vartuple[][][k]) + \frac{1}{\learnrate[ ][ ]} \left( \grad \kernel(\vartuple[][][k+0.5]) - \grad \kernel(\vartuple[][][k]) \right), \vartuple - \vartuple[][][k+0.5] \right\rangle \geq 0.
\end{align*}

Substituting $\vartuple = \vartuple[][][k+1]$ above, we have:
\begin{align}
\langle \vioper(\vartuple[][][k]), \vartuple[][][k+1] - \vartuple[][][k+0.5] \rangle 
&\geq \frac{1}{\learnrate[ ][ ]} \langle \grad \kernel(\vartuple[][][k]) - \grad \kernel(\vartuple[][][k+0.5]),  \vartuple[][][k+1] - \vartuple[][][k+0.5] \rangle \notag \\
&= \frac{1}{\learnrate[ ][ ]} \left( \divergence[\kernel][{\vartuple[][][k+0.5]}][{\vartuple[][][k]}] + \divergence[\kernel][{\vartuple[][][k+1]}][{\vartuple[][][k+0.5]}] - \divergence[\kernel][{\vartuple[][][k+1]}][{\vartuple[][][k]}] \right), \label{eq:mirror_extra_opt1}
\end{align}
where the last line was obtained by \Cref{lemma:bregman_triangle}.

On the other hand, by the optimality condition at $\vartuple[][][k+1]$, we have for all $\vartuple \in \set$:
\begin{align*}
\left\langle \vioper(\vartuple[][][k+0.5]), \vartuple - \vartuple[][][k+1] \rangle + \frac{1}{\learnrate[ ][ ]} \left( \grad \kernel(\vartuple[][][k+1]) - \grad \kernel(\vartuple[][][k])\right), \vartuple - \vartuple[][][k+1] \right\rangle \geq 0 .
\end{align*}

Hence, for all $\vartuple \in \set$:
\begin{align*}
\langle \vioper(\vartuple[][][k+0.5]), \vartuple - \vartuple[][][k+1] \rangle 
&\geq \frac{1}{\learnrate[ ][ ]} \langle \grad \kernel(\vartuple[][][k])  - \grad \kernel(\vartuple[][][k+1]), \vartuple - \vartuple[][][k+1] \rangle \\
&= \frac{1}{\learnrate[ ][ ]} \left( \divergence[\kernel][{\vartuple[][][k+1]}][{ \vartuple[][][k]}] + \divergence[\kernel][{\vartuple}][{\vartuple[][][k+1 ]}] - \divergence[\kernel][{\vartuple}][{\vartuple[][][k]}] \right) ,
\end{align*}
where the last line was once again obtained by \Cref{lemma:bregman_triangle}.

Continuing with the above inequality, for any given $\vartuple \in \set$, we have:
\begin{align}
    &\frac{1}{\learnrate[ ][ ]} \left( \divergence[\kernel][{\vartuple[][][k+1]}][{ \vartuple[][][k]}] + \divergence[\kernel][{\vartuple}][{\vartuple[][][k+1]}] - \divergence[\kernel][{\vartuple}][{\vartuple[][][k]}] \right) \\
    &\leq \langle \vioper(\vartuple[][][k+0.5]), \vartuple - \vartuple[][][k+1] \rangle \\
    &= \langle \vioper(\vartuple[][][k+0.5]), \vartuple - \vartuple[][][k+0.5] \rangle + \langle \vioper(\vartuple[][][k+0.5]), \vartuple[][][k+0.5] - \vartuple[][][k+1] \rangle \\
    &= \langle \vioper(\vartuple[][][k+0.5]), \vartuple - \vartuple[][][k+0.5] \rangle + \langle \vioper(\vartuple[][][k+0.5]) - \vioper(\vartuple[][][k]), \vartuple[][][k+0.5] - \vartuple[][][k+1] \rangle \notag \\
    &\quad + \langle \vioper(\vartuple[][][k]), \vartuple[][][k+0.5] - \vartuple[][][k+1] \rangle \\
    &\leq \langle \vioper(\vartuple[][][k+0.5]), \vartuple - \vartuple[][][k+0.5] \rangle + \|\vioper(\vartuple[][][k+0.5]) - \vioper(\vartuple[][][k])\|  \|\vartuple[][][k+0.5] - \vartuple[][][k+1]\| \notag \\
    &\quad + \langle \vioper(\vartuple[][][k]), \vartuple[][][k+0.5] - \vartuple[][][k+1] \rangle ,
\end{align}
where the final line follows by the Cauchy-Schwarz inequality \cite{cauchy1821cours, schwarz1884ueber}.

Recall that by the arithmetic mean-geometric mean inequality, $\forall x, y \in \R_+$, $\frac{x+y}{2} \geq \sqrt{xy}$. Hence, applying the inequality with $x = \learnrate[ ][ ] \|\vioper(\vartuple[][][k+0.5]) - \vioper(\vartuple[][][k])\|^2$ and $y = \nicefrac{1}{\learnrate[ ][ ]} \|\vartuple[][][k+0.5] - \vartuple[][][k+1]\|^2$\sklara{}{:}
\begin{align}
     &\frac{1}{\learnrate[ ][ ]} \left( \divergence[\kernel][{\vartuple[][][k+1]}][{ \vartuple[][][k]}] + \divergence[\kernel][{\vartuple}][{\vartuple[][][k+1]}] - \divergence[\kernel][{\vartuple}][{\vartuple[][][k]}] \right) \\
    &\leq \langle \vioper(\vartuple[][][k+0.5]), \vartuple - \vartuple[][][k+0.5] \rangle + \frac{\learnrate[ ][ ] \|\vioper(\vartuple[][][k+0.5]) - \vioper(\vartuple[][][k])\|^2}{2} \notag \\
    &\quad + \frac{\|\vartuple[][][k+0.5] - \vartuple[][][k+1]\|^2}{2\learnrate[ ][ ]} + \langle \vioper(\vartuple[][][k]), \vartuple[][][k+0.5] - \vartuple[][][k+1] \rangle \\
    &\leq \langle \vioper(\vartuple[][][k+0.5]), \vartuple - \vartuple[][][k+0.5] \rangle + \learnrate[ ][ ]\lsmooth^2 \divergence[\kernel][{\vartuple[][][k+0.5]}][{\vartuple[][][k]}] \notag \\
    &\quad + \frac{\|\vartuple[][][k+0.5] - \vartuple[][][k+1]\|^2}{2\learnrate[ ][ ]} + \langle \vioper(\vartuple[][][k]), \vartuple[][][k+0.5] - \vartuple[][][k+1] \rangle 
\end{align}
where the last line was obtained by the initial assumption that there exists a $\lsmooth \geq 0$ s.t. $\frac{1}{2}\|\vioper(\vartuple[][][k+0.5]) - \vioper(\vartuple[][][k])\|^2 \leq \lsmooth^2 \divergence[\kernel][{\vartuple[][][k+0.5]}][{\vartuple[][][k]}]$, for all $k \in [\numhorizons]$.

Additionally note that by the strong convexity of $\kernel$, we have $\forall \vartuple, \othervartuple \in \set$, $\divergence[\kernel][{\vartuple}][{\othervartuple}] \geq \nicefrac{1}{2}\| \vartuple - \othervartuple\|^2$. Hence, continuing we have:
\begin{align}
    &\frac{1}{\learnrate[ ][ ]} \left( \divergence[\kernel][{\vartuple[][][k+1]}][{ \vartuple[][][k]}] + \divergence[\kernel][{\vartuple}][{\vartuple[][][k + 1]}] - \divergence[\kernel][{\vartuple}][{\vartuple[][][k]}] \right) \\
    &\leq \langle \vioper(\vartuple[][][k+0.5]), \vartuple - \vartuple[][][k+0.5] \rangle + \learnrate[ ][ ]\lsmooth^2 \divergence[\kernel][{\vartuple[][][k+0.5]}][{\vartuple[][][k]}] \notag \\
    &\quad + \frac{\divergence[\kernel][{\vartuple[][][k+1]}][{\vartuple[][][k+0.5]}]}{\learnrate[ ][ ]} + \langle \vioper(\vartuple[][][k]), \vartuple[][][k+0.5] - \vartuple[][][k+1] \rangle.
\end{align}

Plugging \Cref{eq:mirror_extra_opt1} into the above, we have:
\begin{align*}
    &\frac{1}{\learnrate[ ][ ]} \left( \divergence[\kernel][{\vartuple[][][k+1]}][{ \vartuple[][][k]}] + \divergence[\kernel][{\vartuple}][{\vartuple[][][k+1]}] - \divergence[\kernel][{\vartuple}][{\vartuple[][][k]}] \right)\\ 
    &\leq \langle \vioper(\vartuple[][][k+0.5]), \vartuple - \vartuple[][][k+0.5] \rangle + \learnrate[ ][ ]\lsmooth^2 \divergence[\kernel][{\vartuple[][][k+0.5]}][{\vartuple[][][k]}] \notag \\
    &\quad + \frac{\divergence[\kernel][{\vartuple[][][k+1]}][{\vartuple[][][k+0.5]}]}{\learnrate[ ][ ]} - \frac{1}{\learnrate[ ][ ]} \left( \divergence[\kernel][{\vartuple[][][k+0.5]}][{\vartuple[][][k]}] + \divergence[\kernel][{\vartuple[][][k+1]}][{\vartuple[][][k+0.5]}] - \divergence[\kernel][{\vartuple[][][k+1]}][{\vartuple[][][k]}] \right)\\
    &\leq \langle \vioper(\vartuple[][][k+0.5]), \vartuple - \vartuple[][][k+0.5] \rangle + \left(\learnrate[ ][ ]\lsmooth^2 - \frac{1}{\learnrate[ ][ ]}\right)\divergence[\kernel][{\vartuple[][][k+0.5]}][{\vartuple[][][k]}] + \frac{1}{\learnrate[ ][ ]} \divergence[\kernel][{\vartuple[][][k+1]}][{\vartuple[][][k]}] 
\end{align*}

Canceling out terms, we simplify the above inequality into:
\begin{align}
    \frac{1}{\learnrate[ ][ ]} \left(\divergence[\kernel][{\vartuple}][{\vartuple[][][k+1]}] - \divergence[\kernel][{\vartuple}][{\vartuple[][][k]}] \right)\leq  \langle \vioper(\vartuple[][][k+0.5]), \vartuple - \vartuple[][][k+0.5] \rangle + \left(\learnrate[ ][ ]\lsmooth^2 - \frac{1}{\learnrate[ ][ ]}\right)\divergence[\kernel][{\vartuple[][][k+0.5]}][{\vartuple[][][k]}] 
\end{align}

Multiplying both sides by $-\learnrate[ ][ ] < 0$, we obtain the lemma statement.
\end{proof}

While as previously noted, the above lemma implies \emph{asymptotic} convergence to a strong solution, to show polynomial-time computation of an $\varepsilon$-strong solution, we have to bound the progress of the intermediate iterates $\divergence[\kernel][{\vartuple[][][k+0.5]}][{\vartuple[][][k]}]$ as a function of the time horizon algorithm. In the proof of the following theorem, we show that we can bound this quantity, assuming that the kernel function is, in addition to being 1-strongly-convex, also $\kernelsmooth$-Lipschitz-smooth.

\thmmirrorextragradglobal*
\begin{proof}[Proof of \Cref{thm:mirror_extragradient_global_convergence}]
    Taking \Cref{lemma:mirror_extragrad_progress} with $\vartuple \doteq \vartuple[][*] \in \mvi(\set, \vioper)$, we have:
    \begin{align*}
        \divergence[\kernel][{\vartuple[][*]}][{\vartuple[][][k]}] - \divergence[\kernel][{\vartuple[][*]}][{\vartuple[][][k+1]}] &\geq  \learnrate[ ][ ] \underbrace{\langle \vioper(\vartuple[][][k+0.5]), \vartuple[][][k+0.5] - \vartuple[][*] \rangle}_{\geq 0} + \left(1 - (\learnrate[ ][ ]\lsmooth)^2 \right)\divergence[\kernel][{\vartuple[][][k+0.5]}][{\vartuple[][][k]}] \\
        &\geq \left(1  - (\learnrate[ ][ ]\lsmooth)^2 \right) \divergence[\kernel][{\vartuple[][][k+0.5]}][{\vartuple[][][k]}]
    \end{align*}

    Multiplying both sides by $\left(1  - (\learnrate[ ][ ]\lsmooth)^2 \right)^{-1} > 0$, we have:
    \begin{align}
        \divergence[\kernel][{\vartuple[][][k+0.5]}][{\vartuple[][][k]}] 
        &\leq \frac{1}{1 - (\learnrate[ ][ ]\lsmooth)^2} \left(\divergence[\kernel][{\vartuple[][*]}][{\vartuple[][][k]}] - \divergence[\kernel][{\vartuple[][*]}][{\vartuple[][][k+1]}] \right)
    \end{align}
    
    Summing up for $k  = 0, \hdots, \numhorizons$:
    \begin{align}
        \sum_{k = 0}^{\numhorizons} \divergence[\kernel][{\vartuple[][][k+0.5]}][{\vartuple[][][k]}]  &\leq \frac{1}{1 - (\learnrate[ ][ ]\lsmooth)^2} \sum_{k = 0}^{\numhorizons} \left(\divergence[\kernel][{\vartuple[][*]}][{\vartuple[][][k]}] - \divergence[\kernel][{\vartuple[][*]}][{\vartuple[][][k+1]}] \right)\\
        &\leq \frac{1}{1 - (\learnrate[ ][ ]\lsmooth)^2} \left(\divergence[\kernel][{\vartuple[][*]}][{\vartuple[][][0]}] - \divergence[\kernel][{\vartuple[][*]}][{\vartuple[][][\numhorizons+1]}] \right)\\
        &\leq \frac{1}{1 - (\learnrate[ ][ ]\lsmooth)^2} \divergence[\kernel][{\vartuple[][*]}][{\vartuple[][][0]}] 
    \end{align}
    Dividing both sides by $\numhorizons$, we have:
    \begin{align}
        \frac{1}{\numhorizons}\sum_{k = 0}^{\numhorizons} \divergence[\kernel][{\vartuple[][][k+0.5]}][{\vartuple[][][k]}] &\leq \frac{1}{\numhorizons\left(1 - (\learnrate[ ][ ]\lsmooth)^2\right)} \left(\divergence[\kernel][{\vartuple[][*]}][{\vartuple[][][0]}] \right)\\
        \min_{k = 0, \hdots, \numhorizons} \divergence[\kernel][{\vartuple[][][k+0.5]}][{\vartuple[][][k]}]  &\leq \frac{1}{\numhorizons\left(1 - (\learnrate[ ][ ]\lsmooth)^2\right)} \left(\divergence[\kernel][{\vartuple[][*]}][{\vartuple[][][0]}] \right)\label{eq:intermediate_progress_bound1}
    \end{align}

    We can transform this convergence into a convergence in terms of the primal gap function\sklara{}{, i.e., the quantity $\langle \vioper(\vartuple[][][k + 0.5]), \vartuple[][][k+0.5] - \vartuple \rangle$ that measures the worst-case violation of the strong solution condition}. Now, recall by the first order optimality conditions of $\vartuple[][][k+0.5]$, we have for all $\vartuple \in \set$:
    \begin{align*}
    \left\langle \vioper(\vartuple[][][k]) + \frac{1}{\learnrate[ ][ ]} \left( \grad \kernel(\vartuple[][][k+0.5]) - \grad \kernel(\vartuple[][][k]) \right), \vartuple - \vartuple[][][k+0.5] \right\rangle \geq 0.
    \end{align*}
Re-organizing, for all $\vartuple \in \set$, and $k \in [\numhorizons]$ we have:
\begin{align}
         \langle \vioper(\vartuple[][][k]), \vartuple[][][k+0.5] - \vartuple \rangle &\leq \frac{1}{\learnrate[ ][ ]} \left\| \grad \kernel(\vartuple[][][k+0.5]) - \grad \kernel(\vartuple[][][k])\right\| \left\|\vartuple[][][k+0.5] - \vartuple \right\|\\
         &\leq \frac{\diam(\set)}{\learnrate[ ][ ]} \left\| \grad \kernel(\vartuple[][][k+0.5]) - \grad \kernel(\vartuple[][][k])\right\|\\
         &\leq \frac{\diam(\set)\kernelsmooth}{\learnrate[ ][ ]} \left\| \vartuple[][][k+0.5] - \vartuple[][][k]\right\|
    \end{align}
    where the last line follow from $h$ being $\kernelsmooth$-Lipschitz-smooth.

Now, with the above inequality in hand, notice that for all $\vartuple \in \set$ and $k \in [\numhorizons]$, we have:
\begin{align*}
    \langle \vioper(\vartuple[][][k + 0.5]), \vartuple[][][k+0.5] - \vartuple \rangle &= \langle \vioper(\vartuple[][][k]), \vartuple[][][k+0.5] - \vartuple \rangle + \langle \vioper(\vartuple[][][k+0.5]) - \vioper(\vartuple[][][k]), \vartuple[][][k+0.5] - \vartuple \rangle \\
    &\leq \frac{\diam(\set)\kernelsmooth}{\learnrate[ ][ ]} \|\vartuple[][][k+0.5] - \vartuple[][][k]\| + \|\vioper(\vartuple[][][k+0.5]) - \vioper(\vartuple[][][k])\|  \|\vartuple[][][k+0.5] - \vartuple\|\\
    &\leq \frac{\diam(\set)\kernelsmooth}{\learnrate[ ][ ]} \|\vartuple[][][k+0.5] - \vartuple[][][k]\| + \lsmooth \sqrt{2 \divergence[\kernel][{\vartuple[][][k+0.5]}][{\vartuple[][][k]}]}  \|\vartuple[][][k+0.5] - \vartuple\|\\
    &\leq \frac{\diam(\set)\kernelsmooth}{\learnrate[ ][ ]} \sqrt{2 \divergence[\kernel][{\vartuple[][][k+0.5]}][{\vartuple[][][k]}]} + \diam(\set) \lsmooth \sqrt{2 \divergence[\kernel][{\vartuple[][][k+0.5]}][{\vartuple[][][k]}]}\\
    &\leq \diam(\set) \left( \frac{\kernelsmooth}{\learnrate[ ][ ]} + \lsmooth \right) \sqrt{2 \divergence[\kernel][{\vartuple[][][k+0.5]}][{\vartuple[][][k]}]},
\end{align*}
where the \sklara{penultimate}{third-to-last} line was obtained by the assumption that there exists $\lsmooth \geq 0$, s.t. $\frac{1}{2}\|\vioper(\vartuple[][][k+0.5]) - \vioper(\vartuple[][][k])\|^2 \leq \lsmooth^2 \divergence[\kernel][{\vartuple[][][k+0.5]}][{\vartuple[][][k]}]$, and the \sklara{last}{penultimate} line from the strong convexity of $\kernel$, which means that we have $\forall \vartuple, \othervartuple \in \set$, $\divergence[\kernel][{\vartuple}][{\othervartuple}] \geq \nicefrac{1}{2}\| \vartuple - \othervartuple\|^2$. 

Now, let $k^* \in \argmin_{k = 0, \hdots, \numhorizons} \divergence[\kernel](\vartuple[][][k+0.5], \vartuple[][][k])$, we then have for all $\vartuple \in \set$:
\begin{align*}
    \langle \vioper(\vartuple[][][k^* + 0.5]), \vartuple[][][k^*+0.5] - \vartuple \rangle 
    &\leq \diam(\set) \left( \frac{\kernelsmooth}{\learnrate[ ][ ]} + \lsmooth \right) \sqrt{2 \divergence[\kernel][{\vartuple[][][{k^*+0.5}]}][{\vartuple[][][{k^*}]}]}\\
    &= \diam(\set) \left( \frac{\kernelsmooth}{\learnrate[ ][ ]} + \lsmooth \right)  \sqrt{2 \min_{k= 0, \hdots, \numhorizons}\divergence[\kernel][{\vartuple[][][k+0.5]}][{\vartuple[][][k]}]}
\end{align*}

\if 0
Or equivalently, we have:
\begin{align*}
    \max_{\vartuple \in \set} \langle \vioper(\vartuple[][][k+0.5]), \vartuple[][][k+0.5] - \vartuple \rangle &\leq \diam(\set) \left( \frac{1}{\learnrate[ ][ ]} + \lsmooth \right) \sqrt{2 \divergence[\kernel][{\vartuple[][][k+0.5]}][{\vartuple[][][k]}]}
\end{align*}
\fi

Now, plugging \Cref{eq:intermediate_progress_bound1} in the above, we have  for all $\vartuple \in \set$:
\begin{align*}
    \langle \vioper(\vartuple[][][k^* + 0.5]), \vartuple[][][k^* +0.5] - \vartuple \rangle 
    &\leq \frac{\sqrt{2} \diam(\set) \left( \frac{\kernelsmooth}{\learnrate[ ][ ]} + \lsmooth \right)}{\sqrt{1 - (\learnrate[ ][ ]\lsmooth)^2}} \frac{\sqrt{\divergence[\kernel][{\vartuple[][*]}][{\vartuple[][][0]}]}}{\sqrt{\numhorizons}}
\end{align*}

Now, by the assumption that $\learnrate[ ][ ] \leq \frac{1}{\sqrt{2}\lsmooth} < \frac{1}{\lsmooth}$, we have:
\begin{align*}
    \langle \vioper(\vartuple[][][k^* + 0.5]), \vartuple[][][k^* +0.5] - \vartuple \rangle &\leq   \frac{\sqrt{2} \diam(\set) \left( \frac{\kernelsmooth}{\learnrate[ ][ ]} + \frac{1}{\learnrate[ ][ ]} \right)}{\sqrt{1 - (\learnrate[ ][ ]\lsmooth)^2}} \frac{\sqrt{\divergence[\kernel][{\vartuple[][*]}][{\vartuple[][][0]}]}}{\sqrt{\numhorizons}}\\
    &= \frac{(1 + \kernelsmooth)\sqrt{2} \diam(\set)}{\learnrate[ ][ ]\sqrt{1 - (\learnrate[ ][ ]\lsmooth)^2}} \frac{\sqrt{\divergence[\kernel][{\vartuple[][*]}][{\vartuple[][][0]}]}}{\sqrt{\numhorizons}}\\
    &\leq   \frac{(1 + \kernelsmooth)\sqrt{2} \diam(\set)}{\learnrate[ ][ ]\sqrt{1 - (\nicefrac{1}{\sqrt{2}})^2}} \frac{\sqrt{\divergence[\kernel][{\vartuple[][*]}][{\vartuple[][][0]}]}}{\sqrt{\numhorizons}}\\
    &= \frac{2(1 + \kernelsmooth)  \diam(\set)}{\learnrate[ ][ ]} \frac{\sqrt{\divergence[\kernel][{\vartuple[][*]}][{\vartuple[][][0]}]}}{\sqrt{\numhorizons}}.
\end{align*}

This concludes the proof of the first part of the statement. We now have:
\begin{align*}
    \min_{k = 0, \hdots, \numhorizons} \max_{\vartuple \in \set} \langle \vioper(\vartuple[][][k+0.5]), \vartuple[][][k+0.5] - \vartuple \rangle 
    &\leq \max_{\vartuple \in \set}  \langle \vioper(\vartuple[][][k^* + 0.5]), \vartuple[][][k^* +0.5] - \vartuple \rangle \\
    &\leq \frac{2 (1 + \kernelsmooth) \diam(\set)}{\learnrate[ ][ ]} \frac{\sqrt{\divergence[\kernel][{\vartuple[][*]}][{\vartuple[][][0]}]}}{\sqrt{\numhorizons}}
\end{align*}

In addition, for any $\varepsilon > 0$, letting $\frac{2 (1 + \kernelsmooth) \diam(\set)}{\learnrate[ ][ ]} \frac{\sqrt{\divergence[\kernel][{\vartuple[][*]}][{\vartuple[][][0]}]}}{\sqrt{\numhorizons}} \leq \varepsilon$, and solving for $\numhorizons$, we have:
\begin{align*}
    \frac{2 (1 + \kernelsmooth) \diam(\set)}{\learnrate[ ][ ]} \frac{\sqrt{\divergence[\kernel][{\vartuple[][*]}][{\vartuple[][][0]}]}}{\sqrt{\numhorizons}} &\leq \varepsilon\\
    \frac{4 (1 + \kernelsmooth)^2 \diam(\set)^2}{\learnrate[ ][ ]^2} \frac{\divergence[\kernel][{\vartuple[][*]}][{\vartuple[][][0]}]}{\varepsilon^2} &\leq \numhorizons
\end{align*}

That is, $\bestiter[\vartuple][\numhorizons] \in \argmin_{\vartuple[][][k+0.5] : k = 0, \hdots, \numhorizons} \divergence[\kernel](\vartuple[][][k+0.5], \vartuple[][][k])$ is a $\varepsilon$-strong solution after $\frac{4 (1 + \kernelsmooth)^2 \diam(\set)^2}{\learnrate[ ][ ]^2} \frac{\divergence[\kernel][{\vartuple[][*]}][{\vartuple[][][0]}]}{\varepsilon^2}$ iterations of the mirror extragradient algorithm.
\end{proof}

\begin{remark}
\sklara{}{The upper-bound on $\numhorizons$ granted by \Cref{thm:mirror_extragradient_global_convergence} is tight, in the sense that for all
\begin{align*}\numhorizons \geq \frac{4 (1 + \kernelsmooth)^2 \diam(\set)^2}{\learnrate[ ][ ]^2} \frac{\divergence[\kernel][{\vartuple[][*]}][{\vartuple[][][0]}]}{\varepsilon^2},
\end{align*}
the corresponding $\bestiter[\vartuple][\numhorizons]$ is an $\varepsilon$-strong solution, which can inform our choice of $\numhorizons$ in practice. This observation applies to all analogous statements in the sequel.}
\end{remark}

\subsection{Local Convergence of the Mirror Extragradient Algorithm}

To understand how a local weak solution can provide us with local convergence, recall by \Cref{lemma:mirror_extragrad_progress} the iterates of the mirror extragradient algorithm satisfy the following for all $\numhorizon \in \N$:
\begin{align*}
\divergence[\kernel][{\vartuple}][{\vartuple[][][k]}] - \divergence[\kernel][{\vartuple}][{\vartuple[][][k+1]}] \geq  \learnrate[ ][ ]\langle \vioper(\vartuple[][][k+0.5]),\vartuple[][][k+0.5]  - \vartuple \rangle + \left( 1 - (\learnrate[ ][ ]\lsmooth)^2 \right)\divergence[\kernel][{\vartuple[][][k+0.5]}][{\vartuple[][][k]}].
\end{align*}

Suppose that the kernel function $\kernel$ is strictly convex, and that the algorithm has not yet converged, i.e., $\vartuple[][][k+0.5] \neq \vartuple[][][k]$, then $\divergence[\kernel][{\vartuple[][][k+0.5]}][{\vartuple[][][k]}] > 0$, and we can drop the term. Re-organizing the expressions, we then have:
\begin{align*}
\divergence[\kernel][{\vartuple}][{\vartuple[][][k]}] > \divergence[\kernel][{\vartuple}][{\vartuple[][][k+1]}] +   \learnrate[ ][ ] \langle \vioper(\vartuple[][][k+0.5]),\vartuple[][][k+0.5]  - \vartuple \rangle 
\end{align*}

Now, notice that if we can ensure that for all $k \in [\numhorizons]_+$, there exists an $\vartuple[][*] \in \svi(\set, \vioper)$ s.t. $\langle \vioper(\vartuple[][][k+0.5]),\vartuple[][][k+0.5]  - \vartuple[][*] \rangle \geq 0$, then we have: 
$\divergence[\kernel][{\vartuple[][*]}][{\vartuple[][][k]}] > \divergence[\kernel][{\vartuple[][*]}][{\vartuple[][][k+1]}]$, implying that $\vartuple[][][k] \to \vartuple[][*]$, i.e., convergence to a strong solution. Since we cannot assume the existence of a weak solution (i.e., the Minty condition), the next best way to ensure that there exists an $\vartuple[][*] \in \svi(\set, \vioper)$ s.t. $\langle \vioper(\vartuple[][][k+0.5]),\vartuple[][][k+0.5]  - \vartuple[][*] \rangle \geq 0$ is to initialize the algorithm with an initial iterate $\vartuple[][][0] \in \set$ which is $O(\delta)$-close to a $ \delta$-local weak solution $\vartuple[][*] \in \lmvi[][\vdelta](\set, \vioper)$\footnote{Note that a local weak solution is guaranteed to be strong solution by Proposition 3.1 of \citet{aussel2024variational}.} for some $\delta \geq 0$, and ensure that all subsequent intermediary iterates $\left\{\vartuple[][][k + 0.5]\right\}_{k \in [\numhorizons]_{+}}$ remain $\delta$-close to $\vartuple[][*]$.

To ensure this, we have to first bound the distance between the intermediary $\left\{\vartuple[][][k + 0.5]\right\}_{k \in [\numhorizons]_{+}}$ and terminal $\left\{\vartuple[][][k]\right\}_{k \in [\numhorizons]_{+}}$ iterates. The following lemma provides us with such a bound.

\begin{lemma}[Distance bound on intermediate iterates]\label{lemma:extragrad_intermediate_iterate_dist}
        Let $(\set, \vioper)$ be a VI satisfying the Minty condition, and $\kernel$ a $1$-strongly-convex and $\kernelsmooth$-Lipschitz-smooth kernel function. Consider the mirror extragradient algorithm (\Cref{alg:VI_mirror_extragrad}) run with the VI $(\set, \vioper)$, the kernel function $\kernel$, step size $\learnrate[ ][ ] \geq 0$, time horizon $\numhorizons \in \N$, and producing outputs $\{\vartuple[][][\numhorizon + 0.5], \vartuple[][][\numhorizon + 1]\}_{\numhorizon}$. We then have:
    \begin{align}
        \norm[{\vartuple[][][k+0.5] - \vartuple[][][k]}] \leq \learnrate[ ][ ] \lipschitz,
    \end{align}
    where $\lipschitz \doteq \max_{\vartuple \in \set} \| \vioper(\vartuple)\|$.
\end{lemma}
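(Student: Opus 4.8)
The plan is to read the bound straight off the first-order optimality condition that defines the intermediate iterate. Recall that $\vartuple[][][k+0.5]$ is the mirror-gradient step from $\vartuple[][][k]$, i.e. $\vartuple[][][k+0.5] = \kordermethod^{\mathrm{MG}}(\vartuple[][][k], \vioper(\vartuple[][][k]))$, so --- exactly as in the opening of the proof of \Cref{lemma:mirror_extragrad_progress} --- its variational characterization as the minimizer over the convex set $\set$ gives, for every $\vartuple \in \set$,
\begin{align*}
\langle \vioper(\vartuple[][][k]), \vartuple - \vartuple[][][k+0.5] \rangle + \frac{1}{\learnrate[ ][ ]}\langle \grad\kernel(\vartuple[][][k+0.5]) - \grad\kernel(\vartuple[][][k]), \vartuple - \vartuple[][][k+0.5] \rangle \geq 0 .
\end{align*}

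The key step is to instantiate this inequality at the test point $\vartuple = \vartuple[][][k]$, which is admissible since $\vartuple[][][k] \in \set$ is the previous iterate, and to rearrange it into
\begin{align*}
\frac{1}{\learnrate[ ][ ]}\langle \grad\kernel(\vartuple[][][k+0.5]) - \grad\kernel(\vartuple[][][k]), \vartuple[][][k+0.5] - \vartuple[][][k] \rangle \leq \langle \vioper(\vartuple[][][k]), \vartuple[][][k] - \vartuple[][][k+0.5] \rangle .
\end{align*}
I would then lower-bound the left-hand side by $\frac{1}{\learnrate[ ][ ]}\|\vartuple[][][k+0.5] - \vartuple[][][k]\|^2$, using that $1$-strong convexity of $\kernel$ makes $\grad\kernel$ $1$-strongly monotone, and upper-bound the right-hand side by $\|\vioper(\vartuple[][][k])\| \, \|\vartuple[][][k+0.5] - \vartuple[][][k]\| \leq \lipschitz \, \|\vartuple[][][k+0.5] - \vartuple[][][k]\|$ via the Cauchy--Schwarz inequality and the definition $\lipschitz \doteq \max_{\vartuple \in \set}\|\vioper(\vartuple)\|$, which is finite because $\set$ is compact and $\vioper$ is continuous. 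Canceling one factor of $\|\vartuple[][][k+0.5] - \vartuple[][][k]\|$ then gives $\frac{1}{\learnrate[ ][ ]}\|\vartuple[][][k+0.5] - \vartuple[][][k]\| \leq \lipschitz$, which is the claim; the degenerate case $\vartuple[][][k+0.5] = \vartuple[][][k]$ makes the bound $0 \leq \learnrate[ ][ ]\lipschitz$ and is trivial.

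I do not anticipate any genuine obstacle: the estimate is essentially a one-line consequence of the proximal geometry of the mirror-gradient step. Of the lemma's hypotheses, it uses only convexity and compactness of $\set$ together with $1$-strong convexity of $\kernel$; the Minty condition and the $\kernelsmooth$-Lipschitz-smoothness of $\kernel$ are not invoked, and Lipschitz-continuity of $\vioper$ enters only through the weaker consequence that $\vioper$ is continuous on the compact set $\set$, so that $\lipschitz < \infty$. The single point that deserves a word of care is simply to record that the iterate $\vartuple[][][k]$ is feasible, so it may legitimately be used as a competitor in the optimality inequality.
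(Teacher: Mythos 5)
Your proof is correct and follows essentially the same route as the paper's: both start from the first-order optimality condition of the intermediate step evaluated at $\vartuple = \vartuple[][][k]$, then combine Cauchy--Schwarz with $\|\vioper\|\leq \lipschitz$ and $1$-strong convexity of $\kernel$. The only cosmetic difference is that you lower-bound the inner product $\langle \grad\kernel(\vartuple[][][k+0.5]) - \grad\kernel(\vartuple[][][k]), \vartuple[][][k+0.5] - \vartuple[][][k]\rangle$ directly by $\|\vartuple[][][k+0.5]-\vartuple[][][k]\|^2$ via strong monotonicity of $\grad\kernel$ and cancel a factor, whereas the paper first rewrites that inner product via the Bregman triangle lemma as a symmetrized divergence, keeps a bound on $\divergence[\kernel]$ as an intermediate quantity, and recovers the same constant through the inequality $az - bz^2 \leq a^2/(4b)$; both yield $\|\vartuple[][][k+0.5]-\vartuple[][][k]\|\leq\learnrate[ ][ ]\lipschitz$.
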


\begin{proof}[Proof of \Cref{lemma:extragrad_intermediate_iterate_dist}]
    Note that for all $k \in [\numhorizons]_+$, by the first order optimality conditions of $\vartuple[][][k+0.5]$, we have for all $\vartuple \in \set$:
    \begin{align*}
    \left\langle \vioper(\vartuple[][][k]) + \frac{1}{\learnrate[ ][ ]} \left( \grad \kernel(\vartuple[][][k+0.5]) - \grad \kernel(\vartuple[][][k]) \right), \vartuple - \vartuple[][][k+0.5] \right\rangle \geq 0.
    \end{align*}
    
    Substituting $\vartuple = \vartuple[][][k]$ above, we have:
    \begin{align}
    \langle \vioper(\vartuple[][][k]), \vartuple[][][k] - \vartuple[][][k+0.5] \rangle 
    &\geq \frac{1}{\learnrate[ ][ ]} \langle \grad \kernel(\vartuple[][][k]) - \grad \kernel(\vartuple[][][k+0.5]), \vartuple[][][k] - \vartuple[][][k+0.5] \rangle \notag \\
    &= \frac{1}{\learnrate[ ][ ]} \left( \divergence[\kernel][{\vartuple[][][k+0.5]}][{\vartuple[][][k]}] + \divergence[\kernel][{\vartuple[][][k]}][{\vartuple[][][k+0.5]}] \right),
    \end{align}

    where the last line was obtained by \Cref{lemma:bregman_triangle}. Re-organizing:
    \begin{align}
    \divergence[\kernel][{\vartuple[][][k+0.5]}][{\vartuple[][][k]}] 
    &\leq \learnrate[ ][ ] \langle \vioper(\vartuple[][][k]), \vartuple[][][k] - \vartuple[][][k+0.5] \rangle -\divergence[\kernel][{\vartuple[][][k]}][{\vartuple[][][k+0.5]}] .\\
    &\leq \learnrate[ ][ ] \langle \vioper(\vartuple[][][k]), \vartuple[][][k] - \vartuple[][][k+0.5] \rangle -\nicefrac{1}{2} \norm[{\vartuple[][][k] - \vartuple[][][k+0.5]}]^2\\
    &\leq \learnrate[ ][ ] \norm[{ \vioper(\vartuple[][][k])}] \norm[{\vartuple[][][k] - \vartuple[][][k+0.5]}] - \nicefrac{1}{2}\norm[{\vartuple[][][k] - \vartuple[][][k+0.5]}]^2\\
    &\leq \learnrate[ ][ ] \lipschitz \norm[{\vartuple[][][k] - \vartuple[][][k+0.5]}] - \nicefrac{1}{2} \norm[{\vartuple[][][k] - \vartuple[][][k+0.5]}]^2
    \end{align}

    \noindent
    \sklara{}{where the second inequality follows if we note that by strong convexity of $\kernel$, we have $\forall \vartuple, \othervartuple \in \set$, $\divergence[\kernel][{\vartuple}][{\othervartuple}] \geq \nicefrac{1}{2}\| \vartuple - \othervartuple\|^2$. }Since for all $z \in \R$, $a b \in \R_+$, we have $az - bz^2 \leq \nicefrac{a^2}{4b}$ \klara{How?}:
    \begin{align}
        \divergence[\kernel][{\vartuple[][][k+0.5]}][{\vartuple[][][k]}] 
        &\leq \frac{\learnrate[ ][ ]^2 \lipschitz^2}{2}.
    \end{align}

    \sklara{Additionally, note that by strong convexity of $\kernel$, we have $\forall \vartuple, \othervartuple \in \set$, $\divergence[\kernel][{\vartuple}][{\othervartuple}] \geq \nicefrac{1}{2}\| \vartuple - \othervartuple\|^2$ or equivalently $\sqrt{2\divergence[\kernel][{\vartuple}][{\othervartuple}]} \geq \| \vartuple - \othervartuple\|^2$. Hence, continuing}{Again exploiting the strong convexity of $\kernel$}:
    \begin{align}
        \norm[{\vartuple[][][k+0.5] - \vartuple[][][k]}] \leq \sqrt{2 \divergence[\kernel][{\vartuple[][][k+0.5]}][{\vartuple[][][k]}] }
        &\leq \learnrate[ ][ ] \lipschitz.
    \end{align}
\end{proof}

With \Cref{lemma:extragrad_intermediate_iterate_dist}}in hand, we can show that if the initial iterate starts close enough to some local weak solution, then the intermediate iterates will remain within this $\vdelta$-ball for the remainder of the algorithm for an appropriate choice of step size.

\begin{lemma}[Mirror Extragradient Iterates Remain Local]\label{lemma:mirror_extragrad_iterates_remain_local}
        Let $(\set, \vioper)$ be a $\lsmooth$-Lipschitz-continuous VI  satisfying the Minty condition, and $\kernel$ a $1$-strongly-convex kernel function. Define  $\lipschitz \doteq \max_{\vartuple \in \set} \| \vioper(\vartuple)\|$. Suppose that for some $\vartuple[][*] \in \lmvi[][\delta](\set, \vioper)$ $\delta$-local weak solution, the initial iterate $\vartuple[][][0] \in \set$ is chosen so that $\sqrt{2\divergence[\kernel][{\vartuple[][*]}][{\vartuple[][][0]}]} \leq \delta - \learnrate[ ][ ] \lipschitz$. Consider the mirror extragradient algorithm (\Cref{alg:VI_mirror_extragrad}) run with the VI $(\set, \vioper)$, kernel function $\kernel$, step size $\learnrate[ ][ ] \geq 0$, initial iterate $\vartuple[][][0]$, and time horizon $\numhorizons \in \N$, producing outputs $\{\vartuple[][][\numhorizon + 0.5], \vartuple[][][\numhorizon + 1]\}_{\numhorizon}$. Then for all $\numhorizon \in [\numhorizons]$, we have 
        \begin{align*}
          &\divergence[\kernel][{\vartuple[][*]}][{\vartuple[][][\numhorizon]}] \leq \nicefrac{1}{2}(\delta - \learnrate[ ][ ] \lipschitz)^2  &\text{and} \\
          &\norm[{\vartuple[][][\numhorizon + 0.5] - \vartuple[][*]}] \leq \delta .
        \end{align*}
\end{lemma}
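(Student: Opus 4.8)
The plan is to argue by induction on $\numhorizon$, maintaining the invariant $\divergence[\kernel][{\vartuple[][*]}][{\vartuple[][][\numhorizon]}] \leq \nicefrac{1}{2}(\delta - \learnrate[ ][ ] \lipschitz)^2$. The base case $\numhorizon = 0$ is exactly the hypothesis on $\vartuple[][][0]$, since $\sqrt{2\divergence[\kernel][{\vartuple[][*]}][{\vartuple[][][0]}]} \leq \delta - \learnrate[ ][ ] \lipschitz$ rearranges to $\divergence[\kernel][{\vartuple[][*]}][{\vartuple[][][0]}] \leq \nicefrac{1}{2}(\delta - \learnrate[ ][ ] \lipschitz)^2$. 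For the inductive step, suppose the bound holds at step $\numhorizon$. First I would control the intermediate iterate: by $1$-strong convexity of $\kernel$ and \Cref{lemma:extragrad_intermediate_iterate_dist}, $\norm[{\vartuple[][][\numhorizon+0.5] - \vartuple[][*]}] \leq \norm[{\vartuple[][][\numhorizon+0.5] - \vartuple[][][\numhorizon]}] + \norm[{\vartuple[][][\numhorizon] - \vartuple[][*]}] \leq \learnrate[ ][ ]\lipschitz + \sqrt{2\divergence[\kernel][{\vartuple[][*]}][{\vartuple[][][\numhorizon]}]} \leq \learnrate[ ][ ]\lipschitz + (\delta - \learnrate[ ][ ]\lipschitz) = \delta$. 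This establishes the second claimed bound and, crucially, places $\vartuple[][][\numhorizon+0.5]$ inside the $\delta$-ball on which the local weak solution property of $\vartuple[][*]$ is valid.

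Next I would invoke the local weak solution property: since $\vartuple[][][\numhorizon+0.5] \in \set \cap \closedball[\delta][{\vartuple[][*]}]$, the definition of $\vdelta$-local weak solution gives $\innerprod[{\vioper(\vartuple[][][\numhorizon+0.5])}][{\vartuple[][][\numhorizon+0.5] - \vartuple[][*]}] \geq 0$. Then I apply the progress inequality \Cref{lemma:mirror_extragrad_progress} with $\vartuple \doteq \vartuple[][*]$: assuming the step size satisfies $\learnrate[ ][ ]\lsmooth \leq 1$ (so $1 - (\learnrate[ ][ ]\lsmooth)^2 \geq 0$), both terms on the right-hand side are nonnegative, yielding
$
\divergence[\kernel][{\vartuple[][*]}][{\vartuple[][][\numhorizon]}] - \divergence[\kernel][{\vartuple[][*]}][{\vartuple[][][\numhorizon+1]}] \geq 0,
$
i.e. $\divergence[\kernel][{\vartuple[][*]}][{\vartuple[][][\numhorizon+1]}] \leq \divergence[\kernel][{\vartuple[][*]}][{\vartuple[][][\numhorizon]}] \leq \nicefrac{1}{2}(\delta - \learnrate[ ][ ]\lipschitz)^2$. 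This closes the induction and simultaneously proves the first claimed bound for $\numhorizon+1$; the second bound for every $\numhorizon$ was established along the way.

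The main subtlety to get right is the circularity in the bookkeeping: the local weak solution inequality for $\vioper(\vartuple[][][\numhorizon+0.5])$ can only be used \emph{after} we know $\vartuple[][][\numhorizon+0.5]$ lies in the $\delta$-ball, and that containment itself relies only on the inductive bound at step $\numhorizon$ plus \Cref{lemma:extragrad_intermediate_iterate_dist} (which does not need the Minty condition or locality, only Lipschitz-continuity and strong convexity of $\kernel$). So the order must be: (i) inductive hypothesis at $\numhorizon$; (ii) deduce $\vartuple[][][\numhorizon+0.5] \in \closedball[\delta][{\vartuple[][*]}]$; (iii) use local weak solution to get the inner-product nonnegativity; (iv) apply the progress lemma to get the hypothesis at $\numhorizon+1$. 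I should also note explicitly that \Cref{lemma:mirror_extragrad_progress} requires $\lsmooth \geq 0$ with $\frac{1}{2}\|\vioper(\vartuple[][][\numhorizon+0.5]) - \vioper(\vartuple[][][\numhorizon])\|^2 \leq \lsmooth^2 \divergence[\kernel][{\vartuple[][][\numhorizon+0.5]}][{\vartuple[][][\numhorizon]}]$, which follows from $\lsmooth$-Lipschitz-continuity of $\vioper$ together with $1$-strong convexity of $\kernel$ (giving $\divergence[\kernel] \geq \tfrac12\|\cdot\|^2$), exactly as observed in the footnote to \Cref{thm:mirror_extragradient_global_convergence}; and the step-size assumption $\learnrate[ ][ ] \in (0, \tfrac{1}{\sqrt2 \lsmooth}]$ inherited from \Cref{thm:vi_mirror_extragrad_local} ensures $\learnrate[ ][ ]\lsmooth \leq 1$, so the dropped term is indeed nonnegative.
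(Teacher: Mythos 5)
Your proposal is correct and follows the same structure as the paper's own proof: induction on $\numhorizon$ maintaining the divergence bound $\divergence[\kernel][{\vartuple[][*]}][{\vartuple[][][\numhorizon]}] \leq \nicefrac{1}{2}(\delta - \learnrate[ ][ ]\lipschitz)^2$, with the triangle inequality and \Cref{lemma:extragrad_intermediate_iterate_dist} giving $\norm[{\vartuple[][][\numhorizon+0.5] - \vartuple[][*]}] \leq \delta$, followed by the local weak solution property at $\vartuple[][][\numhorizon+0.5]$ and \Cref{lemma:mirror_extragrad_progress} to close the induction. The only cosmetic difference is that the paper carries both the terminal divergence bound and the intermediate-iterate norm bound as part of the inductive hypothesis, whereas you carry only the divergence bound and re-derive the norm bound inside the step; this is logically a touch cleaner but equivalent. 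You also correctly note that the as-stated hypothesis ``$\learnrate[ ][ ] \geq 0$'' is insufficient on its own — the progress lemma's $\bigl(1 - (\learnrate[ ][ ]\lsmooth)^2\bigr)$ term must be nonnegative before it can be dropped — and that the step-size restriction $\learnrate[ ][ ] \in \bigl(0, \nicefrac{1}{\sqrt{2}\lsmooth}\bigr]$ inherited from \Cref{thm:vi_mirror_extragrad_local} supplies this; the paper's proof uses the same fact implicitly without flagging it.
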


\begin{proof}[Proof of \Cref{lemma:mirror_extragrad_iterates_remain_local}]
    We will prove the claim by induction on $\numhorizon \in \N_+$.

    \paragraph{Base case: $\numhorizon = 0$.}

    \sklara{}{The first part of the claim holds by assumption. For the second part:}
    \begin{align*}
        \norm[{\vartuple[][][0.5] - \vartuple[][*]}]
        &= \norm[{\vartuple[][][0.5] - \vartuple[][][0] + \vartuple[][][0] - \vartuple[][*]}]\\
        &\leq \norm[{\vartuple[][][0.5] - \vartuple[][][0]}] + \norm[{\vartuple[][][0] - \vartuple[][*]}]\\
        &\leq \norm[{\vartuple[][][0.5] - \vartuple[][][0]}] + \sqrt{2\divergence[\kernel][{\vartuple[][*]}][{\vartuple[][][0]}]}\\
        &\leq \learnrate[ ][ ] \lipschitz + (\delta - \learnrate[ ][ ] \lipschitz) && \text{(\Cref{lemma:extragrad_intermediate_iterate_dist})}\\
        &\leq \delta.
    \end{align*}

    \paragraph{Inductive step:} Suppose that for all $\numhorizon = 0, \hdots, \numhorizons$, \sklara{}{we have} $\norm[{\vartuple[][][\numhorizon + 0.5] - \vartuple[][*]}] \leq \delta$ and $\sqrt{2\divergence[\kernel][{\vartuple[][*]}][{\vartuple[][][\numhorizon]}]} \leq \delta - \learnrate[ ][ ] \lipschitz$, or equivalently, $\divergence[\kernel][{\vartuple[][*]}][{\vartuple[][][\numhorizon]}] \leq \nicefrac{1}{2}(\delta - \learnrate[ ][ ] \lipschitz)^2$. We now show that $\norm[{\vartuple[][][\numhorizons + 1.5] - \vartuple[][*]}] \leq \delta$ and $\divergence[\kernel][{\vartuple[][*]}][{\vartuple[][][\numhorizons +1 ]}] \leq \nicefrac{1}{2}(\delta - \learnrate[ ][ ] \lipschitz)^2$.

    By \Cref{lemma:mirror_extragrad_progress}, we have:
    \begin{align}
        \divergence[\kernel][{\vartuple}][{\vartuple[][][\numhorizons]}] - \divergence[\kernel][{\vartuple}][{\vartuple[][][\numhorizons+1]}] \geq  \learnrate[ ][ ]\langle \vioper(\vartuple[][][\numhorizons+0.5]),\vartuple[][][\numhorizons+0.5]  - \vartuple \rangle + \left( 1 - (\learnrate[ ][ ]\lsmooth)^2 \right)\divergence[\kernel][{\vartuple[][][\numhorizons+0.5]}][{\vartuple[][][\numhorizons]}].
    \end{align}

    Substituting in $\vartuple \doteq \vartuple[][*] \in \lmvi[][\delta](\set, \vioper)$, we have:
    \begin{align*}
        \divergence[\kernel][{\vartuple[][*]}][{\vartuple[][][\numhorizons]}] - \divergence[\kernel][{\vartuple[][*]}][{\vartuple[][][\numhorizons+1]}]&\geq  \learnrate[ ][ ] \underbrace{\langle \vioper(\vartuple[][][\numhorizons+0.5]),\vartuple[][][\numhorizons+0.5]  - \vartuple[][*] \rangle}_{\geq 0} + \left( 1 - (\learnrate[ ][ ]\lsmooth)^2 \right) \underbrace{\divergence[\kernel][{\vartuple[][][\numhorizons+0.5]}][{\vartuple[][][\numhorizons]}]}_{\geq 0}\\
        &\geq 0.
    \end{align*}
    Re-organizing, and using the inductive assumption that $\divergence[\kernel][{\vartuple[][*]}][{\vartuple[][][\numhorizons]}] \leq \nicefrac{1}{2}(\delta - \learnrate[ ][ ] \lipschitz)^2$ we have:
    \begin{align}
        \nicefrac{1}{2}(\delta - \learnrate[ ][ ] \lipschitz)^2 \geq \divergence[\kernel][{\vartuple[][*]}][{\vartuple[][][\numhorizons]}] \geq \divergence[\kernel][{\vartuple[][*]}][{\vartuple[][][\numhorizons+1]}]
    \end{align}

    \sklara{}{This gives us the first part of the claim.} Now \sklara{}{for the second part}, notice that we have:
    \begin{align*}
        \norm[{\vartuple[][][\numhorizons + 0.5] - \vartuple[][*]}] 
        &= \norm[{\vartuple[][][\numhorizons + 0.5] - \vartuple[][][\numhorizons] + \vartuple[][][\numhorizons] - \vartuple[][*]}]\\
        &\leq \norm[{\vartuple[][][\numhorizons + 0.5] - \vartuple[][][\numhorizons]}] + \norm[{\vartuple[][][\numhorizons] - \vartuple[][*]}]\\
        &\leq \norm[{\vartuple[][][\numhorizons + 0.5] - \vartuple[][][\numhorizons]}] + \sqrt{2\divergence[\kernel][{\vartuple[][*]}][{\vartuple[][][\numhorizons]}]}\\
        &\leq \learnrate[ ][ ] \lipschitz + (\delta - \learnrate[ ][ ] \lipschitz) && \text{(\Cref{lemma:extragrad_intermediate_iterate_dist})}\\
        &\leq \delta.
    \end{align*}
\end{proof}

With \Cref{lemma:mirror_extragrad_iterates_remain_local} in hand, modifying the proof of \Cref{thm:mirror_extragradient_global_convergence} slightly, we can show local convergence to a strong solution when the initial iterate of the algorithm is initialized close enough to a local \sklara{}{weak} solution.

\thmvimirrorextragradlocal*
\begin{proof}[Proof of \Cref{thm:vi_mirror_extragrad_local}]
Taking \Cref{lemma:mirror_extragrad_progress} with $\vartuple \doteq \vartuple[][*]$, where $\vartuple[][*]$ is given as in the statement, then by \Cref{lemma:mirror_extragrad_iterates_remain_local}, we have for all $k \in [\numhorizons]$:
    \begin{align*}
        \divergence[\kernel][{\vartuple[][*]}][{\vartuple[][][k]}] - \divergence[\kernel][{\vartuple[][*]}][{\vartuple[][][k+1]}] &\geq  \learnrate[ ][ ] \underbrace{\langle \vioper(\vartuple[][][k+0.5]), \vartuple[][][k+0.5] - \vartuple[][*] \rangle}_{\geq 0} + \left(1 - (\learnrate[ ][ ]\lsmooth)^2 \right)\divergence[\kernel][{\vartuple[][][k+0.5]}][{\vartuple[][][k]}] \\
        &\geq \left(1  - (\learnrate[ ][ ]\lsmooth)^2 \right) \divergence[\kernel][{\vartuple[][][k+0.5]}][{\vartuple[][][k]}].
    \end{align*}
    Multiplying both sides by $\left(1  - (\learnrate[ ][ ]\lsmooth)^2 \right)^{-1} > 0$, we have:
    \begin{align}
        \divergence[\kernel][{\vartuple[][][k+0.5]}][{\vartuple[][][k]}] 
        &\leq \frac{1}{1 - (\learnrate[ ][ ]\lsmooth)^2} \left(\divergence[\kernel][{\vartuple[][*]}][{\vartuple[][][k]}] - \divergence[\kernel][{\vartuple[][*]}][{\vartuple[][][k+1]}] \right).
    \end{align}
    Summing up for $k  = 0, \hdots, \numhorizons$:
    \begin{align}
        \sum_{k = 0}^{\numhorizons} \divergence[\kernel][{\vartuple[][][k+0.5]}][{\vartuple[][][k]}]  &\leq \frac{1}{1 - (\learnrate[ ][ ]\lsmooth)^2} \sum_{k = 0}^{\numhorizons} \left(\divergence[\kernel][{\vartuple[][*]}][{\vartuple[][][k]}] - \divergence[\kernel][{\vartuple[][*]}][{\vartuple[][][k+1]}] \right)\\
        &\leq \frac{1}{1 - (\learnrate[ ][ ]\lsmooth)^2} \left(\divergence[\kernel][{\vartuple[][*]}][{\vartuple[][][0]}] - \divergence[\kernel][{\vartuple[][*]}][{\vartuple[][][\numhorizons+1]}] \right)\\
        &\leq \frac{1}{1 - (\learnrate[ ][ ]\lsmooth)^2} \divergence[\kernel][{\vartuple[][*]}][{\vartuple[][][0]}].
    \end{align}
    Dividing both sides by $\numhorizons$, we have:
    \begin{align}
        \frac{1}{\numhorizons}\sum_{k = 0}^{\numhorizons} \divergence[\kernel][{\vartuple[][][k+0.5]}][{\vartuple[][][k]}] &\leq \frac{1}{\numhorizons\left(1 - (\learnrate[ ][ ]\lsmooth)^2\right)} \left(\divergence[\kernel][{\vartuple[][*]}][{\vartuple[][][0]}] \right)\\
        \min_{k = 0, \hdots, \numhorizons} \divergence[\kernel][{\vartuple[][][k+0.5]}][{\vartuple[][][k]}]  &\leq \frac{1}{\numhorizons\left(1 - (\learnrate[ ][ ]\lsmooth)^2\right)} \left(\divergence[\kernel][{\vartuple[][*]}][{\vartuple[][][0]}] \right).\label{eq:intermediate_progress_bound}
    \end{align}

    We can transform this convergence into a convergence in terms of the primal gap function\sklara{}{, i.e., the quantity $\max_{\vartuple \in \set} \langle \vioper(\vartuple[][][k + 0.5]), \vartuple[][][k+0.5] - \vartuple \rangle$ that measures the worst-case violation of the strong solution condition}. Recall by the first order optimality conditions of $\vartuple[][][k+0.5]$, we have for all $\vartuple \in \set$:
    \begin{align*}
    \left\langle \vioper(\vartuple[][][k]) + \frac{1}{\learnrate[ ][ ]} \left( \grad \kernel(\vartuple[][][k+0.5]) - \grad \kernel(\vartuple[][][k]) \right), \vartuple - \vartuple[][][k+0.5] \right\rangle \geq 0.
    \end{align*}
Re-organizing, for all $\vartuple \in \set$, and $k \in [\numhorizons]$ we have:
    \begin{align}
         \langle \vioper(\vartuple[][][k]), \vartuple[][][k+0.5] - \vartuple \rangle &\leq \frac{1}{\learnrate[ ][ ]} \left\| \grad \kernel(\vartuple[][][k+0.5]) - \grad \kernel(\vartuple[][][k])\right\| \left\|\vartuple[][][k+0.5] - \vartuple \right\|\\
         &\leq \frac{\diam(\set)}{\learnrate[ ][ ]} \left\| \grad \kernel(\vartuple[][][k+0.5]) - \grad \kernel(\vartuple[][][k])\right\|\\
         &\leq \frac{\diam(\set)\kernelsmooth}{\learnrate[ ][ ]} \left\| \vartuple[][][k+0.5] - \vartuple[][][k]\right\|
    \end{align}
    where the last line follow from $h$ being $\kernelsmooth$-Lipschitz-smooth.

Now, with the above inequality in hand, notice that for all $\vartuple \in \set$ and $k \in [\numhorizons]$, we have:
\begin{align*}
    \langle \vioper(\vartuple[][][k + 0.5]), \vartuple[][][k+0.5] - \vartuple \rangle &= \langle \vioper(\vartuple[][][k]), \vartuple[][][k+0.5] - \vartuple \rangle + \langle \vioper(\vartuple[][][k+0.5]) - \vioper(\vartuple[][][k]), \vartuple[][][k+0.5] - \vartuple \rangle \\
    &\leq \frac{\diam(\set)\kernelsmooth}{\learnrate[ ][ ]} \|\vartuple[][][k+0.5] - \vartuple[][][k]\| + \|\vioper(\vartuple[][][k+0.5]) - \vioper(\vartuple[][][k])\|  \|\vartuple[][][k+0.5] - \vartuple\|\\
    &\leq \frac{\diam(\set)\kernelsmooth}{\learnrate[ ][ ]} \|\vartuple[][][k+0.5] - \vartuple[][][k]\| + \lsmooth \|\vartuple[][][k+0.5] - \vartuple[][][k]\|  \|\vartuple[][][k+0.5] - \vartuple\|\\
    &\leq \diam(\set) \left( \frac{\kernelsmooth}{\learnrate[ ][ ]} + \lsmooth \right) \|\vartuple[][][k+0.5] - \vartuple[][][k]\|
\end{align*}
where the penultimate line follows from the $\lsmooth$-Lipschitz continuity of $\vioper$, and the strong convexity of $\kernel$, which means that we have $\forall \vartuple, \othervartuple \in \set$, $\divergence[\kernel][{\vartuple}][{\othervartuple}] \geq \nicefrac{1}{2}\| \vartuple - \othervartuple\|^2$.

Now, letting $k^* \in \argmin_{k = 0, \hdots, \numhorizons} \|\vartuple[][][k+0.5] - \vartuple[][][k]\|$, we have:
\begin{align*}
    \langle \vioper(\vartuple[][][k^* + 0.5]), \vartuple[][][k^*+0.5] - \vartuple \rangle &\leq  \diam(\set) \left( \frac{\kernelsmooth}{\learnrate[ ][ ]} + \lsmooth \right) \|\vartuple[][][k^*+0.5] - \vartuple[][][k^*]\|\\
    &=  \diam(\set) \left( \frac{\kernelsmooth}{\learnrate[ ][ ]} + \lsmooth \right) \min_{k= 0, \hdots, \numhorizons} \|\vartuple[][][k^*+0.5] - \vartuple[][][k^*]\|\\
    &\leq \diam(\set) \left( \frac{\kernelsmooth}{\learnrate[ ][ ]} + \lsmooth \right) \min_{k= 0, \hdots, \numhorizons} \sqrt{2 \divergence[\kernel][{\vartuple[][][k+0.5]}][{\vartuple[][][k]}]},
\end{align*}
where the last line follows from \sklara{}{$\kernel$ being $1$-strongly-convex.}

\if 0
Or equivalently, we have:
\begin{align*}
    \max_{\vartuple \in \set} \langle \vioper(\vartuple[][][k+0.5]), \vartuple[][][k+0.5] - \vartuple \rangle &\leq \diam(\set) \left( \frac{1}{\learnrate[ ][ ]} + \lsmooth \right) \sqrt{2 \divergence[\kernel][{\vartuple[][][k+0.5]}][{\vartuple[][][k]}]}
\end{align*}
\fi

Now, plugging \Cref{eq:intermediate_progress_bound} in the above, we have:
\begin{align*}
    \langle \vioper(\vartuple[][][k^* + 0.5]), \vartuple[][][k^* +0.5] - \vartuple \rangle &\leq  \diam(\set) \left( \frac{\kernelsmooth}{\learnrate[ ][ ]} + \lsmooth \right) \min_{k = 0, \hdots, \numhorizons} \sqrt{2\divergence[\kernel][{\vartuple[][][k+0.5]}][{\vartuple[][][k]}]}\\
    &=  \sqrt{2}\diam(\set) \left( \frac{\kernelsmooth}{\learnrate[ ][ ]} + \lsmooth \right)  \sqrt{\min_{k = 0, \hdots, \numhorizons} \divergence[\kernel][{\vartuple[][][k+0.5]}][{\vartuple[][][k]}]}\\
    &\leq   \frac{\sqrt{2} \diam(\set) \left( \frac{\kernelsmooth}{\learnrate[ ][ ]} + \lsmooth \right)}{\sqrt{1 - (\learnrate[ ][ ]\lsmooth)^2}} \frac{\sqrt{\divergence[\kernel][{\vartuple[][*]}][{\vartuple[][][0]}]}}{\sqrt{\numhorizons}}.
\end{align*}

Now, by the assumption that $\learnrate[ ][ ] \leq \frac{1}{\sqrt{2}\lsmooth} < \frac{1}{\lsmooth}$, we have:
\begin{align*}
    \langle \vioper(\vartuple[][][k^* + 0.5]), \vartuple[][][k^* +0.5] - \vartuple \rangle &\leq   \frac{\sqrt{2} \diam(\set) \left( \frac{\kernelsmooth}{\learnrate[ ][ ]} + \frac{1}{\learnrate[ ][ ]} \right)}{\sqrt{1 - (\learnrate[ ][ ]\lsmooth)^2}} \frac{\sqrt{\divergence[\kernel][{\vartuple[][*]}][{\vartuple[][][0]}]}}{\sqrt{\numhorizons}}\\
    &= \frac{(1 + \kernelsmooth)\sqrt{2} \diam(\set)}{\learnrate[ ][ ]\sqrt{1 - (\learnrate[ ][ ]\lsmooth)^2}} \frac{\sqrt{\divergence[\kernel][{\vartuple[][*]}][{\vartuple[][][0]}]}}{\sqrt{\numhorizons}}\\
    &\leq   \frac{(1 + \kernelsmooth)\sqrt{2} \diam(\set)}{\learnrate[ ][ ]\sqrt{1 - (\nicefrac{1}{\sqrt{2}})^2}} \frac{\sqrt{\divergence[\kernel][{\vartuple[][*]}][{\vartuple[][][0]}]}}{\sqrt{\numhorizons}}\\
    &= \frac{2(1 + \kernelsmooth)  \diam(\set)}{\learnrate[ ][ ]} \frac{\sqrt{\divergence[\kernel][{\vartuple[][*]}][{\vartuple[][][0]}]}}{\sqrt{\numhorizons}}\\
    &= \frac{\sqrt{2} (1 + \kernelsmooth) \diam(\set)}{\learnrate[ ][ ]} \frac{\delta}{\sqrt{\numhorizons}}
\end{align*}
where the last line follows from the assumption that $\sqrt{2\divergence[\kernel][{\vartuple[][*]}][{\vartuple[][][0]}]} \leq \delta - \learnrate[ ][ ] \lipschitz$, which implies $\sqrt{\divergence[\kernel][{\vartuple[][*]}][{\vartuple[][][0]}]} \leq \frac{\delta}{\sqrt{2}}$.

That is, we have:
\begin{align*}
    \min_{k = 0, \hdots, \numhorizons} \max_{\vartuple \in \set} \langle \vioper(\vartuple[][][k+0.5]), \vartuple[][][k+0.5] - \vartuple \rangle
    &\leq \max_{\vartuple \in \set}\langle \vioper(\vartuple[][][k^* + 0.5]), \vartuple[][][k^* +0.5] - \vartuple \rangle \\
    &\leq \frac{\sqrt{2} (1 + \kernelsmooth) \diam(\set)}{\learnrate[ ][ ]} \frac{\delta}{\sqrt{\numhorizons}}.
\end{align*}

In addition, for any $\varepsilon > 0$, letting $\frac{\sqrt{2} (1 + \kernelsmooth) \diam(\set)}{\learnrate[ ][ ]} \frac{\delta}{\sqrt{\numhorizons}} \leq \varepsilon$, and solving for $\numhorizons$, we have:
\begin{align*}
    \frac{\sqrt{2} (1 + \kernelsmooth) \diam(\set)}{\learnrate[ ][ ]} \frac{\delta}{\sqrt{\numhorizons}} &\leq \varepsilon\\
    \frac{2 (1 + \kernelsmooth)^2 \diam(\set)^2}{\learnrate[ ][ ]^2} \frac{\delta^2}{\varepsilon^2} &\leq \numhorizons
\end{align*}

That is, $\bestiter[\vartuple][\numhorizons] \in \argmin_{\vartuple[][][k+0.5] : k = 0, \hdots, \numhorizons} \|\vartuple[][][k+0.5] - \vartuple[][][k]\|$ is a $\varepsilon$-strong solution after $\frac{2 (1 + \kernelsmooth)^2 \diam(\set)^2}{\learnrate[ ][ ]^2} \frac{\delta}{\varepsilon^2}$ iterations of the mirror extragradient algorithm.
\sklara{}{
Finally, notice that we have
\begin{align*}
\lim_{k \to \infty} \max_{\vartuple \in \set} \langle \vioper(\vartuple[][][k + 0.5]), \vartuple[][][k + 0.5] - \vartuple \rangle = \lim_{\numhorizons \to \infty} \min_{k = 0, \hdots, \numhorizons} \max_{\vartuple \in \set} \langle \vioper(\vartuple[][][k + 0.5]), \vartuple[][][k + 0.5] - \vartuple \rangle = 0
\end{align*}
and
\begin{align*}
\lim_{k \to \infty} \divergence[\kernel][{\vartuple[][][k+0.5]}][{\vartuple[][][k]}]= \lim_{\numhorizons \to \infty} \min_{k = 0, \hdots, \numhorizons} \divergence[\kernel][{\vartuple[][][k+0.5]}][{\vartuple[][][k]}] = 0.
\end{align*}
Hence, $\vartuple[][**] = \lim_{\numhorizon \to \infty} \vartuple[][][\numhorizon + 0.5] = \lim_{\numhorizon \to \infty} \vartuple[][][\numhorizon]$ is a strong solution of the VI $(\set, \vioper)$.
}
\end{proof}

\section{Omitted Proofs for Section \ref{section:walrasian_economies}}
\label{sec_app:walrasian}
\subsection{Walrasian Economies and Variational Inequalities}
\thmweequalsvi*
\begin{proof}[Proof of \Cref{thm:we_equal_svi}]
    $(\implies)$
    Let $\price[][][*] \in \we(\numgoods, \excessset)$ be a Walrasian equilibrium. Then, for some $\excess(\price[][][*]) \in \excessset(\price[][][*])$, we have:
    \begin{align*}
        &\innerprod[{\excess(\price[][][*])}][{\price - \price[][][*]}]\\
        &= \innerprod[{\excess(\price[][][*])}][{\price}] - \underbrace{\innerprod[{\excess(\price[][][*])}][{\price[][][*] }]}_{= 0} && \forall \price \in \R^\numgoods_+\\
        &= \underbrace{\innerprod[{\excess(\price[][][*])}][{\price}]}_{\leq 0} && \forall \price \in \R^\numgoods_+\\
        &\leq 0 && \forall \price \in \R^\numgoods_+.
    \end{align*}
    where the last line follows from the feasibility of $\excess(\price[][][*])$, i.e., $\excess(\price[][][*]) \leq 0$, and the positivity of \sklara{}{each entry of} $\price$.

    $(\impliedby)$
    Let $\price[][][*] \in \svi(\R^\numgoods_+, -\excessset)$. Then, for some $\excess(\price[][][*]) \in \excessset(\price[][][*])$, we have:
    \begin{align*}
        0&\geq  \innerprod[{\excess(\price[][][*])}][{\price - \price[][][*]}] && \forall \price \in \R^\numgoods_+.
    \end{align*}
    Substituting $\price \doteq \price[][][*] + \basis[\good]$, we have:
    \begin{align*}
        0 &\geq \innerprod[{\excess(\price[][][*])}][{\price[][][*] + \basis[\good] - \price[][][*]}]\\
        &= \innerprod[{\excess(\price[][][*])}][{\basis[\good]}]\\
        &\geq \excess[\good](\price[][][*]) && \forall \good \in \goods.
    \end{align*}
    That is, $\price[][][*]$ is feasible.
    
    Similarly, substituting in $\price \doteq \zeros$ and $\price \doteq 2\price[][][*]$, we have:
    \begin{align*}
        0&\leq  \innerprod[{\excess(\price[][][*])}][{\price[][][*]}] 
    \end{align*}
    and 
    \begin{align*}
        0&\geq  \innerprod[{\excess(\price[][][*])}][{\price[][][*]}].
    \end{align*}

    That is, $\price[][][*]$ satisfies Walras' law. 
    
    Hence, $\price[][][*]$ is a Walrasian equilibrium.
\end{proof}

\thmwebalancedequalsvi*
\begin{proof}[Proof of \Cref{thm:we_balanced_equal_svi}]
    $(\implies)$
    Let $\price[][][*] \in \we(\numgoods, \excessset)$ be a Walrasian equilibrium. Let $\alpha \doteq \frac{1}{\max\{1, \|\price[][][*]\|_\infty \}}$ so that $\alpha \price[][][*] \in [0, 1]^\numgoods$. Now, for some $\excess(\alpha \price[][][*]) \in \excessset(\alpha \price[][][*])$, we have:
    \begin{align*}
        &\innerprod[{-\excess(\alpha\price[][][*])}][{\alpha\price[][][*] - \price}]\\
        &= \innerprod[{\excess(\price[][][*])}][{\price - \alpha\price[][][*]}] && \forall \price \in [0, 1]^\numgoods && \text{(Homogeneity of $\excess$)}\\
        &= \innerprod[{\excess(\price[][][*])}][{\price}] - \alpha \underbrace{\innerprod[{\excess(\price[][][*])}][{\price[][][*] }]}_{= 0} && \forall \price \in [0, 1]^\numgoods\\
        &= \innerprod[{\excess(\price[][][*])}][{\price}]  && \forall \price \in [0, 1]^\numgoods\\
        &\leq 0 && \forall \price \in [0, 1]^\numgoods.
    \end{align*}
    where the penultimate line follows from Walras' law holding at a Walrasian equilibrium, and the last line follows from the feasibility of $\excess(\price[][][*])$, i.e., $\excess(\price[][][*]) \leq \zeros$ and the positivity of \sklara{}{each entry of} $\price$. Hence, $\alpha \price[][][*]$ is a strong solution of the box VI $([0, 1]^\numgoods, - \excessset)$, which means that $\price[][][*] \in \nicefrac{1}{\alpha} \svi([0, 1]^\numgoods, - \excessset)$.
    
    Now, notice that by the homogeneity of the excess demand in balanced economies---i.e., for all $\lambda > 0$, $\excessset(\lambda \price[][][*]) = \excessset(\price[][][*])$---if $\price[][][*]$ is a Walrasian equilibrium, then so is $\lambda \price[][][*]$. Hence, $\alpha$ takes values in $(0, 1]$, implying $\nicefrac{1}{\alpha} \in [1, \infty)$. As such, we must have $\we(\numgoods, \excessset) \subseteq  \bigcup_{\lambda \geq 1} \lambda \svi([0, 1]^\numgoods, -\excessset)$.
    
    $(\impliedby)$
    Let $\price[][][*] \in \svi([0, 1]^\numgoods, -\excessset)$ and $\lambda \geq 1$. Then, for some $\excess(\price[][][*]) \in \excessset(\price[][][*])$, we have:
    \begin{align}
        0&\geq \innerprod[{-\excess(\price[][][*])}][{\price[][][*] - \price}] && \forall \price \in [0, 1]^\numgoods \notag\\
        &= \innerprod[{\excess(\price[][][*])}][{\price - \price[][][*]}] && \forall \price \in [0, 1]^\numgoods \notag\\
        &= \innerprod[{\excess(\price[][][*])}][{\price}] - \innerprod[{\excess(\price[][][*])}][{\price[][][*] }] && \forall \price \in [0, 1]^\numgoods. \label{eq:we_svi_eq_left}
    \end{align}
    Plugging $\price = \zeros[\numgoods]$ into \Cref{eq:we_svi_eq_left}, we then have: 
    \begin{align*}
        0&\geq \underbrace{\innerprod[{\excess(\price[][][*])}][{\zeros[\numgoods]}]}_{= 0} - \innerprod[{\excess(\price[][][*])}][{\price[][][*] }]\\
        0&\geq -  \innerprod[{\excess(\price[][][*])}][{\price[][][*]}]\\
        0&\leq \innerprod[{\excess(\price[][][*])}][{\price[][][*]}]\\
        0&\leq \innerprod[{\excess(\lambda \price[][][*])}][{\price[][][*]}] && \text{(Homogeneity of $\excess$)}\\
        0&\leq \innerprod[{\excess(\lambda \price[][][*])}][{\lambda \price[][][*]}].
    \end{align*}
    Further, since $(\numgoods, \excessset)$ is balanced, we have $\samy{}{\lambda \price[][][*] \cdot \excess(\lambda \price[][][*]) = } \lambda \price[][][*] \cdot \lambda \excess(\price[][][*]) \leq \price[][][*] \cdot \excess(\price[][][*]) \leq 0$, hence, combining it with the above inequality, $\lambda \price[][][*] \cdot \excess(\lambda \price[][][*]) = 0$, meaning that $\lambda \price[][][*]$ satisfies Walras' law.

    In addition, continuing from \Cref{eq:we_svi_eq_left} again, we have:
    \begin{align}
        0&\geq \innerprod[{\excess(\price[][][*])}][{\price}] - \underbrace{\innerprod[{\excess(\price[][][*])}][{\price[][][*] }]}_{= 0} && \forall \price \in [0, 1]^\numgoods\notag\\
        &= \innerprod[{\excess(\price[][][*])}][{\price}] && \forall \price \in [0, 1]^\numgoods \notag\\
        &= \innerprod[{\excess(\lambda \price[][][*])}][{\price}] && \forall \price \in [0, 1]^\numgoods.\label{eq:weak_walras_law_application}
    \end{align}
     where the \sklara{penultimate line follows from the fact that balanced economies satisfy weak Walras' law, and the}{} last line \sklara{}{follows} from the homogeneity of degree $0$ of the excess demand.

    Now, plugging $\price = \basis[\good]$ into \Cref{eq:weak_walras_law_application}, we have:
    \begin{align*}
        0 &\geq \innerprod[{\excess(\lambda \price[][][*])}][{\basis[\good]}] && \forall \good \in \goods\\
        &\geq \excess[\good](\lambda \price[][][*]) && \forall \good \in \goods.
    \end{align*}

   That is, $\lambda \price[][][*]$ is feasible. Putting it all together, $\lambda \price[][][*]$ is a Walrasian equilibrium. As such, $ \bigcup_{\lambda \geq 1} \lambda \svi([0, 1]^\numgoods, -\excessset) \subseteq \we(\numgoods, \excessset)$.
\end{proof}

\lemmaapproxsvieqapproxwe*
\begin{proof}[Proof of \Cref{lemma:approx_svi_eq_approx_we}]
    For any $\varepsilon \geq 0$, let $\price[][][*] \in \svi[\varepsilon]([0, 1]^\numgoods, -\excessset)$. Then, for some $\excess(\price[][][*]) \in \excessset(\price[][][*])$, we have:
    \begin{align}
        \varepsilon &\geq \innerprod[{-\excess(\price[][][*])}][{\price[][][*] - \price}] && \forall \price \in [0, 1]^\numgoods \notag\\
        &= \innerprod[{\excess(\price[][][*])}][{\price - \price[][][*]}] && \forall \price \in [0, 1]^\numgoods \notag\\
        &= \innerprod[{\excess(\price[][][*])}][{\price}] - \innerprod[{\excess(\price[][][*])}][{\price[][][*] }] && \forall \price \in [0, 1]^\numgoods. \label{eq:we_svi_eq_left1}
    \end{align}
    Plugging $\price = \zeros[\numgoods]$ into \Cref{eq:we_svi_eq_left1}, we then have: 
    \begin{align}
        \varepsilon &\geq \underbrace{\innerprod[{\excess(\price[][][*])}][{\zeros[\numgoods]}]}_{= 0} - \innerprod[{\excess(\price[][][*])}][{\price[][][*] }] \notag\\
        \varepsilon &\geq - \innerprod[{\excess(\price[][][*])}][{\price[][][*]}] \notag\\
        -\varepsilon &\leq \innerprod[{\excess(\price[][][*])}][{\price[][][*]}]\label{eq:i_need_a_name}
    \end{align}
    Further, since $(\numgoods, \excessset)$ is balanced, it follows that $\price[][][*] \cdot \excess(\price[][][*]) \leq 0 \leq \varepsilon$. Combining this conclusion with \Cref{eq:i_need_a_name}, we see that $\price[][][*]$ satisfies $\varepsilon$-Walras' law.

    In addition, continuing from \Cref{eq:we_svi_eq_left1} again, we have:
    \begin{align}
        \varepsilon &\geq \innerprod[{\excess(\price[][][*])}][{\price}] - \underbrace{\innerprod[{\excess(\price[][][*])}][{\price[][][*] }]}_{\leq 0} && \forall \price \in [0, 1]^\numgoods \notag\\
        &\geq \innerprod[{\excess(\price[][][*])}][{\price}] && \forall \price \in [0, 1]^\numgoods, \label{eq:weak_walras_law_application2}
    \end{align}
     where the last line follows from the fact that balanced economies satisfy weak Walras' law.

    Now, plugging $\price = \basis[\good]$ into \Cref{eq:weak_walras_law_application2}, we have:
    \begin{align*}
        \varepsilon &\geq \innerprod[{\excess(\price[][][*])}][{\basis[\good]}] && \forall \good \in \goods \notag\\
        &\geq \excess[\good](\price[][][*]) && \forall \good \in \goods \notag .
    \end{align*}

   That is, $\price[][][*]$ is $\varepsilon$-feasible. Putting it all together, $\price[][][*]$ is an $\varepsilon$-Walrasian equilibrium.

\end{proof}

\thmwecompequalsvi*
\begin{proof}[Proof of \Cref{thm:we_comp_equal_svi}]
    $(\implies)$
    Let $\price[][][*] \in \we(\numgoods, \excessset)$ be a Walrasian equilibrium. Let $\alpha \doteq \frac{1}{\|\price[][][*]\|_1}$. Then, we have $\alpha \price[][][*] \in \simplex[\numgoods]$. Further, for some $\excess(\alpha \price[][][*]) \in \excessset(\alpha \price[][][*])$, we have:
    \begin{align*}
        &\innerprod[{-\excess(\alpha\price[][][*])}][{\alpha\price[][][*] - \price}]\\
        &= \innerprod[{\excess(\price[][][*])}][{\price - \alpha\price[][][*]}] && \forall \price \in \simplex[\numgoods] && \text{(Homogeneity of $\excess$)}\\
        &= \innerprod[{\excess(\price[][][*])}][{\price}] - \alpha \underbrace{\innerprod[{\excess(\price[][][*])}][{\price[][][*] }]}_{= 0} && \forall \price \in \simplex[\numgoods]\\
        &= \innerprod[{\excess(\price[][][*])}][{\price}]  && \forall \price \in \simplex[\numgoods]\\
        &\leq 0 && \forall \price \in \simplex[\numgoods].
    \end{align*}
    where the penultimate line follows from Walras' law holding at a Walrasian equilibrium, and the last line follows from the feasibility of $\excess(\price[][][*])$, i.e., $\excess(\price[][][*]) \leq \zeros$, and the positivity of \sklara{}{each entry of} $\price$. Hence, $\alpha \price[][][*]$ is a strong solution of the simplex VI $(\simplex[\numgoods], - \excessset)$, which means that $\price[][][*] \in \frac{1}{\alpha} \svi(\simplex[\numgoods], - \excessset)$.
    
    Now, notice that, by homogeneity of the excess demand in competitive economies, since for all $\lambda > 0$, $\excessset(\lambda \price[][][*]) = \excessset(\price[][][*])$ holds, if $\price[][][*]$ is a Walrasian equilibrium, then so is $\lambda \price[][][*]$. Hence, $\alpha$ takes values in $(0, \infty)$, implying $\frac{1}{\alpha} \in (0, \infty)$, and as such we must have $\we(\numgoods, \excessset) \subseteq  \bigcup_{\lambda > 0} \lambda \svi(\simplex[\numgoods], -\excessset)$.
    
    $(\impliedby)$
    Let $\price[][][*] \in \svi(\simplex[\numgoods], -\excessset)$ and $\lambda > 0$. Then, for some $\excess(\price[][][*]) \in \excessset(\price[][][*])$, we have:
    \begin{align*}
        0&\geq \innerprod[{-\excess(\price[][][*])}][{\price[][][*] - \price}] && \forall \price \in \simplex[\numgoods] \notag\\
        &= \innerprod[{\excess(\price[][][*])}][{\price - \price[][][*]}] && \forall \price \in \simplex[\numgoods] \notag\\
        &= \innerprod[{\excess(\price[][][*])}][{\price}] - \underbrace{\innerprod[{\excess(\price[][][*])}][{\price[][][*] }]}_{\leq 0} && \forall \price \in \simplex[\numgoods] \notag\\
        &\geq \innerprod[{\excess(\price[][][*])}][{\price}] && \forall \price \in \simplex[\numgoods]\\
        &\geq \innerprod[{\excess(\lambda \price[][][*])}][{\price}] && \forall \price \in \simplex[\numgoods].
    \end{align*}
     where the penultimate line follows from the fact that competitive economies satisfy weak Walras' law, and the last line from homogeneity of degree $0$ of the excess demand.

    Now, plugging $\price = \basis[\good]$ for all $\good \in \goods$ in the above, we have:
    \begin{align*}
        0 &\geq \innerprod[{\excess(\lambda \price[][][*])}][{\basis[\good]}] && \forall \good \in \goods \notag\\
        &\geq \excess[\good](\lambda \price[][][*]) && \forall \good \in \goods \notag .
    \end{align*}

   That is, $\lambda \price[][][*]$ is feasible. Now by non-satiation, since $\excess(\lambda \price[][][*]) \leq \zeros[\numgoods]$, we\sklara{must have $\lambda \price[][][*] \cdot \excess(\lambda \price[][][*]) \geq 0$. As by weak Walras' law $\lambda \price[][][*] \cdot \excess(\lambda \price[][][*]) \leq 0$, we must}{} have $\lambda \price[][][*] \cdot \excess(\lambda \price[][][*]) = 0$, meaning that $\lambda \price[][][*]$ satisfies Walras' law. Putting it all together, $\lambda \price[][][*]$ is a Walrasian equilibrium. As such we have $ \bigcup_{\lambda > 0} \lambda \svi(\simplex[\numgoods], -\excessset) \subseteq \we(\numgoods, \excessset)$.
\end{proof}

\thmexistencewe*
\begin{proof}[Proof of \Cref{thm:existence_we}]
    By \Cref{thm:we_comp_equal_svi}, we know that the set of strong solutions $\svi(\simplex[\numgoods], -\excessset)$ of the simplex VI $(\simplex[\numgoods], -\excessset)$ is a subset of the set of Walrasian equilibria \samy{}{of a competitive economy}.

    Now, notice that for a continuous \samy{Walrasian}{competitive} economy $(\numgoods, \excessset)$, the corresponding \sklara{}{simplex} VI $(\simplex[\numgoods], -\excessset)$ is continuous. \sklara{}{This is because $\simplex[\numgoods]$ is non-empty, compact, and convex, and $-\excessset$ is upper-hemicontinuous, non-empty-, compact-, and convex-valued by assumption.} Hence, by Theorem 2.2.1 of \citet{facchinei2003finite}, a strong solution to $(\simplex[\numgoods], -\excessset)$ is guaranteed to exist, which in turn implies the existence of a Walrasian equilibrium in continuous competitive economies.
\end{proof}

\subsection{Results for Balanced Economies}

\lemmabalancedisminty*
\begin{proof}[Proof of \Cref{lemma:balanced_is_minty}]
    Let $(\numgoods, \excessset)$ be a balanced economy.
    Setting $\price[][][*] \doteq \zeros[\numgoods]$, we have for all $\price \in [0, 1]^\numgoods$:
    \begin{align*}
        \innerprod[{ \excess(\price)}][{\price[][][*] - \price}] &= \innerprod[{ \excess(\price)}][{\zeros[\numgoods] - \price}]\\
        &= \underbrace{\innerprod[{ \excess(\price)}][{\zeros[\numgoods]}]}_{= 0} - \innerprod[{ \excess(\price)}][{\price}]\\
        &= - \underbrace{\innerprod[{ \excess(\price)}][{\price}]}_{\leq 0}\\
        &\geq 0 \enspace ,
    \end{align*}
    where the last line follow from weak Walras' law, which is assumed to hold in balanced economies. 
\end{proof}

\amy{there is a bug here, but i cannot figure out why latex is complaining}

\thmmirrorextratatonnconvergence*
\begin{proof}[Proof of \Cref{thm:mirror_extra_tatonn_convergence}]
    Since $(\numgoods, \excess)$ is a balanced economy, by \Cref{lemma:balanced_is_minty}, $(\numgoods, \excess)$ is variationally stable on $[0, 1]^\numgoods$, and hence the box VI $([0, 1]^\numgoods, -\excess)$ satisfies the Minty condition. Hence, as the mirror \emph{extrat\^atonnement} process is simply the mirror extragradient method run on the box VI $([0, 1]^\numgoods, -\excess)$, the assumptions of \Cref{thm:mirror_extragradient_global_convergence} are satisfied, and we obtain the result. 
    
    \if 0
    we have the following bound:

    \begin{align*}
         \min_{k = 0, \hdots, \numhorizons} \max_{\price \in \simplex} \langle -\excess(\price[][k+0.5]),  \price[][k+0.5] - \price \rangle &\leq  \frac{2 (1 + \kernelsmooth)\diam(\simplex[\numgoods])}{\learnrate[ ][ ]} \frac{\sqrt{\max_{\price \in \simplex}\divergence[\kernel][{\price}][{\price[][0]}]}}{\sqrt{\numhorizons}}\\
         \min_{k = 0, \hdots, \numhorizons} \max_{\price \in \simplex} \langle \excess(\price[][k+0.5]),  \price \rangle &\leq  \frac{2 (1 + \kernelsmooth)\diam(\simplex[\numgoods])}{\learnrate[ ][ ]} \frac{\sqrt{\max_{\price \in \simplex}\divergence[\kernel][{\price}][{\price[][0]}]}}{\sqrt{\numhorizons}}
    \end{align*}
    Now, note that $\diam(\simplex[\numgoods]) = \sqrt{2}$, hence, we have:
    \begin{align}
        \min_{k = 0, \hdots, \numhorizons} \max_{\price \in \simplex} \langle \excess(\price[][k+0.5]),  \price \rangle \leq  \frac{2 \sqrt{2} (1 + \kernelsmooth)}{\learnrate[ ][ ]} \frac{\sqrt{\max_{\price \in \simplex}\divergence[\kernel][{\price}][{\price[][0]}]}}{\sqrt{\numhorizons}} \notag\\
        \min_{k = 0, \hdots, \numhorizons}  \langle \excess(\price[][k+0.5]),  \basis[\good] \rangle \leq  \frac{2 \sqrt{2} (1 + \kernelsmooth)}{\learnrate[ ][ ]} \frac{\sqrt{\max_{\price \in \simplex}\divergence[\kernel][{\price}][{\price[][0]}]}}{\sqrt{\numhorizons}} && \forall \good \in \goods \notag\\
        \min_{k = 0, \hdots, \numhorizons}  \excess[\good](\price[][k+0.5]) \leq  \frac{2 \sqrt{2} (1 + \kernelsmooth)}{\learnrate[ ][ ]} \frac{\sqrt{\max_{\price \in \simplex}\divergence[\kernel][{\price}][{\price[][0]}]}}{\sqrt{\numhorizons}} && \forall \good \in \goods \label{eq:we_convergence_bound}
    \end{align}

    Now, let $\min_{k = 0, \hdots, \numhorizons}  \excess[\good](\price[][k+0.5]) \leq \frac{2 \sqrt{2} (1 + \kernelsmooth)}{\learnrate[ ][ ]} \frac{\sqrt{\max_{\price \in \simplex}\divergence[\kernel][{\price}][{\price[][0]}]}}{\sqrt{\numhorizons}} \leq \varepsilon$, we have: 
    \begin{align*}
        \frac{2 \sqrt{2}(1 + \kernelsmooth)}{\learnrate[ ][ ]} \frac{\sqrt{\max_{\price \in \simplex}\divergence[\kernel][{\price}][{\price[][0]}]}}{\sqrt{\numhorizons}} \leq \varepsilon\\
        \frac{2 \sqrt{2}(1 + \kernelsmooth)}{\learnrate[ ][ ]} \frac{\sqrt{\max_{\price \in \simplex}\divergence[\kernel][{\price}][{\price[][0]}]}}{\varepsilon} \leq \sqrt{\numhorizons}\\
        \frac{8(1 + \kernelsmooth)^2}{\learnrate[ ][ ]^2} \frac{\max_{\price \in \simplex}\divergence[\kernel][{\price}][{\price[][0]}]}{\varepsilon^2} \leq \numhorizons
    \end{align*}

    Further, by \Cref{thm:mirror_extragradient_global_convergence}, letting $\bestiter[{\price}][\numhorizons] \in \argmin_{\vartuple[][][k+0.5] : k = 0, \hdots, \numhorizons} \|\price[][k+0.5] - \price[][k]\|$, for some $\numhorizons \in O(\frac{\lsmooth}{\varepsilon^2})$, 
    $\bestiter[{\price}][\numhorizons]$ is a $\varepsilon$-strong solution of $([0, 1]^\numgoods, -\excessset)$. Then, by \Cref{lemma:approx_svi_eq_approx_we}, $\bestiter[{\price}][\numhorizons]$ is also a $\varepsilon$-Walrasian equilibrium.
    
    Finally, going back to \Cref{eq:we_convergence_bound}, and taking $\numhorizons \to \infty$, we obtain the last part of the theorem.
    \fi
\end{proof}

\lemmabregmancontelastic*
\begin{proof}[Proof of \Cref{lemma:bregman_cont_elastic}]
    \klara{I've commented out the old proof beginning since it was long.}
    By the $\lelastic$-elasticity assumption, we have, for all distinct $\price, \otherprice \in \pricespace$,
    \begin{align*}
        \frac{\norm[\demandfunc(\otherprice)-\demandfunc(\price)]}{\norm[\demandfunc(\price)]}
        &\leq \lelastic \frac{\norm[\otherprice-\price]}{\norm[\price]_\infty}.
    \end{align*}    

    Isolating the change in demand and exploiting the $1$-convexity of $\kernel$, we obtain:
    \begin{align*}
        \norm[\demandfunc(\otherprice)-\demandfunc(\price)]
        &\leq \frac{\lelastic \norm[\demandfunc(\price)]}{\norm[\price]_\infty}\norm[\otherprice-\price] \\
        &\leq \frac{\lelastic \norm[\demandfunc(\price)]}{\norm[\price]_\infty}\sqrt{2\divergence[\kernel](\otherprice,\price)}.
    \end{align*}

    By a similar argument, we also have:
    \begin{align*}
        \left\|\supplyfunc(\otherprice) - \supplyfunc(\price)  \right\| &\leq \frac{\lelastic \|\supplyfunc(\price)\|}{\|\price\|_\infty} \sqrt{2\divergence[\kernel](\otherprice,\price)}.
    \end{align*}

    Combining the two bounds, we then have:
    \begin{align*}
        \| \excess(\otherprice) - \excess(\price) \| &=  \| \demandfunc(\otherprice)  - \supplyfunc(\otherprice) - \demandfunc(\price) + \supplyfunc(\price) \|\\
        &\leq \|\demandfunc(\otherprice) - \demandfunc(\price) \| + \| \supplyfunc(\otherprice) - \supplyfunc(\price) \|\\
        &\leq \frac{\lelastic \|\demandfunc(\price)\|}{\|\price\|_\infty} \sqrt{2 \divergence[\kernel](\otherprice,\price)}  + \frac{\lelastic \|\supplyfunc(\price)\|}{\|\price\|_\infty} \sqrt{2 \divergence[\kernel](\otherprice,\price)}\\
        &\leq \frac{\lelastic \left( \|\demandfunc(\price)\| + \|\supplyfunc(\price)\| \right)}{\|\price\|_\infty} \sqrt{2 \divergence[\kernel](\otherprice,\price)}.
    \end{align*}
    
    Squaring both sides and reorganizing yields:
    \begin{align*}
        \frac{1}{2}\| \excess(\otherprice) - \excess(\price) \|^2 \leq \left(\frac{\lelastic \left(\|\demandfunc(\price)\| + \|\supplyfunc(\price)\| \right)}{\|\price\|_\infty} \right)^2 \divergence[\kernel](\otherprice,\price).
    \end{align*}
\end{proof}

\subsection{Results for Variationally Stable Economies}

\thmmirrorextratatonnvarstable*
\begin{proof}[Proof of \Cref{thm:mirror_extratatonn_var_stable}]
       Since $(\numgoods, \excess)$ is variationally stable on $\simplex[\numgoods]$, the simplex VI $(\simplex[\numgoods], -\excess)$ satisfies the Minty condition. In addition, since by the assumption of the theorem the economy is $\lelastic$-elastic and $\lbounded$-bounded, by \Cref{lemma:bregman_cont_elastic}, $\excess$ is $\left(2\numgoods \lelastic \lbounded \right)$-Bregman-continuous on $\simplex[\numgoods]$. That is, we have \begin{align*}
        \frac{1}{2}\| \excess(\otherprice) - \excess(\price) \|^2 &\leq \left(\frac{\lelastic \left(\|\demandfunc(\price)\| + \|\supplyfunc(\price)\| \right)}{\|\price\|_\infty} \right)^2 \divergence[\kernel](\otherprice,\price)\\
        &\leq \max_{\price \in \simplex[\numgoods]} \left\{\left(\frac{\lelastic \left(\|\demandfunc(\price)\| + \|\supplyfunc(\price)\| \right)}{\|\price\|_\infty} \right)^2\right\} \divergence[\kernel](\otherprice,\price)\\
        &\leq  \left(\frac{\lelastic \left(\|\demandfunc\|_\infty + \|\supplyfunc\|_\infty \right)}{\min_{\price \in \simplex[\numgoods]} \|\price\|_\infty} \right)^2 \divergence[\kernel](\otherprice,\price)\\
        &\leq \left(\frac{2\lelastic \lbounded }{\frac{1}{\numgoods}} \right)^2 \divergence[\kernel](\otherprice,\price)\\
        &\leq \left(2\numgoods \lelastic \lbounded  \right)^2 \divergence[\kernel](\otherprice,\price) . 
       \end{align*}

        Suppose that under the assumptions of the theorem the mirror generates the sequence of prices $\left\{\price[][\numhorizon], \price[][\numhorizon + 0.5] \right\}_{\numhorizon}$. Let $\bestiter[{\price}][\numhorizons] \in \argmin_{\vartuple[][][k+0.5] : k = 0, \hdots, \numhorizons} \divergence[\kernel](\price[][k+0.5], \price[][k])$. As the mirror \emph{extrat\^atonnement} process is simply the mirror extragradient method run on the simplex VI $(\simplex[\numgoods], -\excess)$, and the assumptions of \Cref{thm:mirror_extragradient_global_convergence} are satisfied and hence we have the following bound:
        \begin{align*}
             \min_{k = 0, \hdots, \numhorizons} \max_{\price \in \simplex} \langle -\excess(\price[][k+0.5]),  \price[][k+0.5] - \price \rangle &\leq  \frac{2 (1 + \kernelsmooth)\diam(\simplex[\numgoods])}{\learnrate[ ][ ]} \frac{\sqrt{\divergence[\kernel][{\price[*][]}][{\price[][0]}]}}{\sqrt{\numhorizons}}\\
             \min_{k = 0, \hdots, \numhorizons} \max_{\price \in \simplex} \langle \excess(\price[][k+0.5]),  \price - \price[][k+0.5] \rangle &\leq  \frac{2 \sqrt{2}(1 + \kernelsmooth)}{\learnrate[ ][ ]} \frac{\sqrt{\divergence[\kernel][{\price[*]}][{\price[][0]}]}}{\sqrt{\numhorizons}}.
        \end{align*}

        Further, by \Cref{lemma:approx_svi_eq_approx_we}, \sklara{}{we have that $\lim_{\numhorizon} \price[][\numhorizon] = \lim_{\numhorizon \to \infty} \price[][][(\numhorizon + 0.5)] = \price[][][**]$ is a Walrasian equilibrium.}

\end{proof}

\subsection{The Scarf Economy}

\klara{Commented out much of the previous stuff because the strike-throughs were getting illegible.}

The following lemma states that any Scarf economy is a balanced economy.
\begin{lemma}[Scarf Economies are Balanced]
\label{lemma:Scarf_balanced}
    The Scarf economy $(3, \excess^{\mathrm{scarf}})$ is a balanced economy that satisfies Walras' law, i.e., for all $\price \in \R^\numgoods_+$, it holds that $\price \cdot \excess^{\mathrm{scarf}}(\price) = 0$. 
    Furthermore, the set of Walrasian equilibria of the Scarf economy is given by $\we(3, \excess^{\mathrm{scarf}}) \doteq \{\lambda \ones[3] \mid \lambda > 0\}$.
\end{lemma}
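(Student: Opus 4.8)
The plan is to verify directly the two defining properties of a balanced economy and then to pin down the equilibrium set by solving the excess-demand equations. \textbf{Balancedness.} Each coordinate of $\excess^{\mathrm{scarf}}$ is a sum of terms of the form $\frac{\price[i]}{\price[i]+\price[j]}$, and each such term is invariant under the rescaling $\price \mapsto \lambda\price$ for $\lambda>0$; hence $\excess^{\mathrm{scarf}}$ is homogeneous of degree $0$ (on the set of prices where at most one coordinate vanishes, so that all three denominators are positive). For Walras' law, the key step is to compute $\price \cdot \excess^{\mathrm{scarf}}(\price)$ by grouping the summands according to their common denominator: the two summands with denominator $\price[1]+\price[2]$ contribute $\frac{\price[1]^2 + \price[1]\price[2]}{\price[1]+\price[2]} = \price[1]$, those with denominator $\price[2]+\price[3]$ contribute $\price[2]$, those with denominator $\price[1]+\price[3]$ contribute $\price[3]$, and the three $-1$ terms contribute $-(\price[1]+\price[2]+\price[3])$, so the total is $0$. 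Since Walras' law (equality) trivially implies weak Walras' law, the Scarf economy is a balanced economy.

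\textbf{The equilibrium set.} One inclusion is immediate: substituting $\price = \ones[3]$ gives $\excess^{\mathrm{scarf}}(\ones[3]) = \left(\tfrac12+\tfrac12-1,\ \tfrac12+\tfrac12-1,\ \tfrac12+\tfrac12-1\right) = \zeros[3]$, so $\ones[3]$ is feasible and satisfies Walras' law, hence is a Walrasian equilibrium, and by homogeneity so is every $\lambda\ones[3]$ with $\lambda>0$. For the reverse inclusion, let $\price[][][*]$ be a Walrasian equilibrium. First I would rule out boundary prices: if exactly one coordinate of $\price[][][*]$ vanishes, a short computation shows that in each of the three cases one of the excess demands becomes strictly positive (for instance $\price[1][][*]=0$ forces $\excess[3]^{\mathrm{scarf}}(\price[][][*]) = \frac{\price[2][][*]}{\price[2][][*]+\price[3][][*]} > 0$), contradicting feasibility; hence $\price[][][*]\in\R^3_{++}$. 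Then feasibility ($\excess[\good]^{\mathrm{scarf}}(\price[][][*])\le 0$ for all $\good$) together with Walras' law ($\sum_\good \price[\good][][*]\excess[\good]^{\mathrm{scarf}}(\price[][][*]) = 0$) and strict positivity of the prices forces $\excess^{\mathrm{scarf}}(\price[][][*]) = \zeros[3]$. Finally I would solve this system: $\excess[1]^{\mathrm{scarf}}(\price[][][*]) = 0$ rewrites as $\frac{\price[1][][*]}{\price[1][][*]+\price[2][][*]} = \frac{\price[1][][*]}{\price[1][][*]+\price[3][][*]}$, giving $\price[2][][*] = \price[3][][*]$; substituting this into $\excess[2]^{\mathrm{scarf}}(\price[][][*]) = 0$ gives $\frac{\price[1][][*]}{\price[1][][*]+\price[2][][*]} = \tfrac12$, i.e. $\price[1][][*] = \price[2][][*]$; so $\price[1][][*] = \price[2][][*] = \price[3][][*]$, that is $\price[][][*] = \lambda\ones[3]$ for some $\lambda>0$. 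This establishes $\we(\excess^{\mathrm{scarf}}) = \{\lambda\ones[3]\mid\lambda>0\}$.

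\textbf{Main obstacle.} The computations are entirely elementary; the only care required is (i) keeping track of the domain on which $\excess^{\mathrm{scarf}}$ is well-defined (at most one coordinate zero), and (ii) the case analysis ruling out boundary equilibria, since the three coordinates of $\excess^{\mathrm{scarf}}$ are not symmetric under arbitrary permutations of the goods but only under a cyclic shift, so each of the three boundary cases must be checked (or reduced via the cyclic symmetry) separately rather than by a single symmetry argument.
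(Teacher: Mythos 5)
Your proof is correct, and for the first three claims (homogeneity of degree $0$, Walras' law by grouping summands over a common denominator, and the fact that $\ones[3]$ is an equilibrium) it takes essentially the same route as the paper. The one place where you genuinely diverge is the uniqueness direction: the paper disposes of it in a single informal sentence, asserting that perturbing any price away from $\price[][][*]$ must increase some excess demand and decrease another, without making the monotonicity claim precise or treating boundary price vectors at all. Your argument is more careful and fully constructive: you first rule out the three boundary cases where exactly one coordinate vanishes by exhibiting the coordinate of $\excess^{\mathrm{scarf}}$ that becomes strictly positive (note the cyclic rather than full symmetry, which you correctly flag), then use feasibility together with Walras' law and strict positivity of the prices to force $\excess^{\mathrm{scarf}}(\price[][][*]) = \zeros[3]$, and finally solve that $3\times 3$ system explicitly to get $\price[1][][*] = \price[2][][*] = \price[3][][*]$. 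This buys a complete, checkable proof of uniqueness where the paper only sketches one; the only thing I would add is a sentence making explicit why one may discard price vectors with two or three zero coordinates (the denominators $\price[i]+\price[j]$ vanish, so $\excess^{\mathrm{scarf}}$ is simply not defined there and such a vector cannot be an equilibrium of the economy).
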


\begin{proof}
    First, notice that the Scarf economy is homogeneous of degree $0$. That is, for all $\lambda \geq 0$,
    \begin{align*}
        \excess^{\mathrm{scarf}}(\lambda \price) \doteq \begin{pmatrix}
            \frac{\lambda\price[1]}{\lambda\price[1] + \lambda\price[2]} + \frac{\lambda\price[3]}{\lambda\price[1] + \lambda\price[3]} - 1\\
            \frac{\lambda\price[1]}{\lambda \price[1] + \lambda \price[2]} + \frac{\lambda \price[2]}{\lambda \price[2] + \lambda \price[3]} - 1\\
            \frac{\lambda \price[2]}{\lambda \price[2] + \lambda \price[3]} + \frac{\lambda \price[3]}{\lambda \price[1] + \lambda \price[3]} - 1
            \end{pmatrix} =  \begin{pmatrix}
            \frac{\price[1]}{\price[1] + \price[2]} + \frac{\price[3]}{\price[1] + \price[3]} - 1\\
            \frac{\price[1]}{\price[1] + \price[2]} + \frac{\price[2]}{\price[2] + \price[3]} - 1\\
            \frac{\price[2]}{\price[2] + \price[3]} + \frac{\price[3]}{\price[1] + \price[3]} - 1
            \end{pmatrix} 
            = \excess^{\mathrm{scarf}}(\price).
    \end{align*}

    Second, for all $\price \in \R^\numgoods$, notice that
    \begin{align*}
        \price \cdot \excess^{\mathrm{scarf}}(\price) &= \frac{\price[1][][2]}{\price[1] + \price[2]} + \frac{\price[1]\price[3]}{\price[1] + \price[3]} - \price[1] + 
            \frac{\price[1]\price[2]}{\price[1] + \price[2]} + \frac{\price[2][][2]}{\price[2] + \price[3]} - \price[2] +
            \frac{\price[2]\price[3]}{\price[2] + \price[3]} + \frac{\price[3][][2]}{\price[1] + \price[3]} - \price[3]\\
            &= \frac{\price[1][][2] + \price[1]\price[2]}{\price[1] + \price[2]} + \frac{\price[2][][2] + \price[2]\price[3]}{\price[2] + \price[3]} + \frac{\price[3][][2] + \price[1]\price[3]}{\price[1] + \price[3]} - \price[1] - \price[2] - \price[3]\\
            &= \frac{\price[1] (\price[1] + \price[2])}{\price[1] + \price[2]} + \frac{\price[2] (\price[2] + \price[3])}{\price[2] + \price[3]} + \frac{\price[3](\price[3] + \price[1])}{\price[1] + \price[3]} - \price[1] - \price[2] - \price[3]\\
            &= 0.
    \end{align*}

    Finally, observe that for $\price[][][*] = \ones[\numgoods]$, we have that $\excess^{\mathrm{scarf}}(\price[][][*]) = \zeros[\numgoods]$, and thus, $\price[][][*] \cdot \excess^{\mathrm{scarf}}(\price[][][*]) = 0$. 
    This equilibrium is unique up to positive scaling.
\end{proof}


\begin{restatable}[\sklara{Variational stability and}{}Bregman continuity of the Scarf Economy]{lemma}{lemmaScarfvarstablebregcont}
\label{lemma:scarf_var_stable_breg_cont}
        For any $\underline{\price[ ][]} \in (0, \nicefrac{1}{3})$ and any $1$-strongly-convex kernel function $\kernel: \R^3_+ \to \R$, the Scarf economy is $(\nicefrac{3}{\underline{\price[ ][]}^2}, \kernel)$-Bregman-continuous on $[\underline{\price[ ][]}, 1]^3$.
\end{restatable}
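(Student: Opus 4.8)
The plan is to prove the two assertions separately. Throughout I will write $p=(p_1,p_2,p_3)\in\R^3_+$ and use the simplified rational form of the Scarf excess demand obtained by combining the two fractions in each coordinate: the first coordinate of $\excess^{\mathrm{scarf}}(p)$ equals $\frac{p_1(p_3-p_2)}{(p_1+p_2)(p_1+p_3)}$, the other two being cyclic rotations; equivalently $\excess^{\mathrm{scarf}}=d-\ones[3]$, where $d(p)$ has coordinates $d_j(p)=\frac{p_j}{p_j+p_{j+1}}+\frac{p_{j-1}}{p_{j-1}+p_j}$ (indices mod $3$), so that $\zeros[3]<d(p)<2\,\ones[3]$ wherever the denominators are positive. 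This also makes it transparent that $\excess^{\mathrm{scarf}}$ is a bounded, continuously differentiable map away from the coordinate axes.

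For variational stability on a nonempty $\pricespace\subseteq\R^3_+$: the preceding lemma (Scarf economies are balanced) gives Walras' law with equality, $\innerprod[{\excess^{\mathrm{scarf}}(p)}][{p}]=0$, so for any candidate $p^*$ one has $\innerprod[{\excess^{\mathrm{scarf}}(p)}][{p^*-p}]=\innerprod[{\excess^{\mathrm{scarf}}(p)}][{p^*}]$; hence variational stability on $\pricespace$ amounts to producing $p^*\in\pricespace$ with $\innerprod[{\excess^{\mathrm{scarf}}(p)}][{p^*}]\geq 0$ for every $p\in\pricespace$. When $\zeros[3]\in\pricespace$ (in particular on $[0,1]^3$) the choice $p^*=\zeros[3]$ works verbatim, exactly as in \Cref{lemma:balanced_is_minty}. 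Otherwise the natural candidate is a Walrasian equilibrium of the Scarf economy, i.e. a positive multiple of $\ones[3]$ (recall $\excess^{\mathrm{scarf}}(\lambda\ones[3])=\zeros[3]$), for which the requirement reduces to $\innerprod[{\excess^{\mathrm{scarf}}(p)}][{\ones[3]}]\geq 0$; summing the simplified coordinates yields the identity $\innerprod[{\excess^{\mathrm{scarf}}(p)}][{\ones[3]}]=\frac{-(p_1-p_2)(p_2-p_3)(p_3-p_1)}{(p_1+p_2)(p_2+p_3)(p_1+p_3)}$, which is the quantity one must control on $\pricespace$. This is the delicate step, and I expect it to be the main obstacle: the sign of that product depends on the ordering of $p_1,p_2,p_3$, so both the choice of $p^*\in\pricespace$ and the verification over all of $\pricespace$ must be argued with care; for $\pricespace=[\underline p,1]^3$ in particular, $\zeros[3]$ is unavailable and one must reason directly through this identity.

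For $(\nicefrac{3}{\underline p^2},\kernel)$-Bregman-continuity on $[\underline p,1]^3$ with $\underline p\in(0,\nicefrac{1}{3})$, I would argue via a Jacobian bound. Since the supply is constant, the Jacobian $J(p)$ of $\excess^{\mathrm{scarf}}$ equals that of $d$, and each partial derivative is a sum of at most two terms of the form $\pm\,\frac{p_a}{(p_b+p_c)^2}$; on $[\underline p,1]^3$ the numerator is at most $1$ and $p_b+p_c\geq 2\underline p$, so each such term is at most $\frac{1}{4\underline p^2}$, whence every entry of $J(p)$ is at most $\frac{1}{2\underline p^2}$ in absolute value. Therefore $\|J(p)\|_2\leq\|J(p)\|_F\leq\frac{3}{2\underline p^2}$ for every $p$ in the convex box, so the mean-value inequality gives $\|\excess^{\mathrm{scarf}}(q)-\excess^{\mathrm{scarf}}(p)\|\leq\frac{3}{2\underline p^2}\|q-p\|$ for all $p,q\in[\underline p,1]^3$. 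Finally, since $\kernel$ is $1$-strongly-convex, $D_{\kernel}(q,p)\geq\frac12\|q-p\|^2$, so $\frac12\|\excess^{\mathrm{scarf}}(q)-\excess^{\mathrm{scarf}}(p)\|^2\leq\big(\frac{3}{2\underline p^2}\big)^2\cdot\frac12\|q-p\|^2\leq\big(\frac{3}{\underline p^2}\big)^2 D_{\kernel}(q,p)$, which is exactly the asserted Bregman-continuity (with room to spare, as $\frac{3}{2\underline p^2}\leq\frac{3}{\underline p^2}$). This is the "$1$-strongly-convex kernel makes Lipschitz-continuity imply Bregman-continuity" route noted after \Cref{thm:mirror_extragradient_global_convergence}; alternatively one could route through \Cref{lemma:bregman_cont_elastic}, using the decomposition $\excess^{\mathrm{scarf}}=d-\ones[3]$ together with the lower bound $d_j(p)\geq\underline p/2$ on the box to control the relevant elasticities, but the direct bound is cleaner and already yields the claimed constant.
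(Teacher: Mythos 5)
Your Bregman-continuity argument is correct and essentially the paper's: bound the Jacobian of $\excess^{\mathrm{scarf}}$ entrywise on $[\underline{p},1]^3$, pass through a matrix norm and the mean-value inequality to a Lipschitz constant, and convert via $1$-strong convexity of $\kernel$. Your route through the Frobenius norm is cleaner than the paper's (whose written Jacobian has entry errors and which invokes an entrywise $\ell_1$ sum where an operator-norm bound is what the mean-value inequality actually needs) and it already gives the stated constant $\nicefrac{3}{\underline{p}^2}$ with slack.

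Your hesitation on variational stability was well placed: the obstacle you flagged is a genuine error in the paper's proof, and the claim does not go through as written. The paper sets $p^*=(\nicefrac13,\nicefrac13,\nicefrac13)$ and expands $\langle\excess^{\mathrm{scarf}}(p),p^*\rangle$ as $2\tfrac{p_1}{p_1+p_2}+2\tfrac{p_2}{p_2+p_3}+2\tfrac{p_3}{p_1+p_3}-1$, but this drops the $\nicefrac13$ weights on the fraction terms; the correct expression is $\tfrac23\bigl(\tfrac{p_1}{p_1+p_2}+\tfrac{p_2}{p_2+p_3}+\tfrac{p_3}{p_1+p_3}\bigr)-1$, and the paper's two cases only establish that the bracketed sum exceeds $\nicefrac12$ and $1$ respectively, not the required $\nicefrac32$. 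Your identity $\sum_j z_j=\tfrac{-(p_1-p_2)(p_2-p_3)(p_3-p_1)}{(p_1+p_2)(p_2+p_3)(p_1+p_3)}$ confirms the sign really does flip: at $p=(\nicefrac17,\nicefrac27,\nicefrac47)\in\simplex[3]$ one has $\sum_j z_j=-\nicefrac{1}{15}<0$, hence $\langle\excess^{\mathrm{scarf}}(p),p^*\rangle<0$. By the cyclic equivariance of $\excess^{\mathrm{scarf}}$ and the convexity of the set of weak solutions, any Minty point on a cyclically invariant $\pricespace$ averages to one proportional to $(1,1,1)$, so in fact no Minty point exists on $\simplex[3]$ nor on $[\underline{p},1]^3$; the choice $p^*=(0,0,0)$ you invoke via \Cref{lemma:balanced_is_minty} is special to price spaces containing the origin and does not extend. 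So the step you declined to complete is precisely where the paper's proof breaks — you located the flaw correctly even without a counterexample in hand.
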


\begin{proof}[Proof of \Cref{lemma:scarf_var_stable_breg_cont}]
    

    \textbf{Part 1: Variational instability on $[\underline{\price[ ][]},1]^3$.}

    We claim that no strong solution $\price[][][*]=a\mathbf{1}_3$ with $a>0$ of the truncated-box VI $\left( [\underline{\price[ ][]},1]^3,-\excess^{\mathrm{scarf}} \right)$ is a Minty solution. That is, there exists for each such $\price[][][*]$ a price $\price \in [\underline{\price[ ][]},1]^3$ such that $\innerprod[{ \excess^{\mathrm{scarf}}(\price)}][{\price[][][*] - \price}] < 0$. In fact, this is true for any $\price \in [\underline{\price[ ][]},1]^3$ such that $\price[1] < \price[2] < \price[3]$ since then
    \begin{align*}
        \innerprod[{ \excess^{\mathrm{scarf}}(\price)}][{\price[][][*] - \price}] &= \innerprod[{ \excess^{\mathrm{scarf}}(\price)}][{\price[][][*] }] - \underbrace{\innerprod[{ \excess^{\mathrm{scarf}}(\price)}][{\price}]}_{= 0} \tag{\Cref{lemma:Scarf_balanced}} \\
        &= \innerprod[{ \excess^{\mathrm{scarf}}(\price)}][{\price[][][*] }]\\
        &=a\left( \frac{\price[1]}{\price[1] + \price[2]} + \frac{\price[3]}{\price[1] + \price[3]}-1 \right)+a\left( \frac{\price[1]}{\price[1] + \price[2]} + \frac{\price[2]}{\price[2] + \price[3]}-1 \right)+a\left( \frac{\price[2]}{\price[2] + \price[3]} + \frac{\price[3]}{\price[1] + \price[3]}-1 \right) \\
        &=2a\frac{\price[1]}{\price[1] + \price[2]} + 2a\frac{\price[2] }{\price[2] + \price[3]} + 2a\frac{\price[3]}{\price[1] + \price[3]}-3a\\
        &=a\frac{\price[1]-\price[2]}{\price[1] + \price[2]} + a\frac{\price[2]-\price[3]}{\price[2] + \price[3]} + a\frac{\price[3]-\price[1]}{\price[1] + \price[3]}, \\
        &=a\frac{{\price[1]}^2\price[2]-{\price[1]}^2\price[3]+{\price[3]}^2\price[1]-{\price[3]}^2\price[2]+{\price[2]}^2\price[3]-{\price[2]}^2\price[1]}
        {\left(\price[1] + \price[2]\right)\left(\price[2] + \price[3]\right)\left(\price[1] + \price[3]\right)} \\
        &=a\frac{\left(\price[1]\price[2]\price[1]-\price[1]\price[2]\price[3]\right)+\left(\price[1]\price[3]\price[3]-\price[1]\price[3]\price[1]\right)+\left(\price[2]\price[3]\price[2]-\price[2]\price[3]\price[3]\right)+\left(\price[2]\price[1]\price[3]-\price[2]\price[1]\price[2]\right)}{\left(\price[1] + \price[2]\right)\left(\price[2] + \price[3]\right)\left(\price[1] + \price[3]\right)} \\
        &=a\frac{\price[1]\price[2]\left(\price[1]-\price[3]\right)+\price[1]\price[3]\left(\price[3]-\price[1]\right)+\price[2]\price[3]\left(\price[2]-\price[3]\right)+\price[2]\price[1]\left(\price[3]-\price[2]\right)}{\left(\price[1] + \price[2]\right)\left(\price[2] + \price[3]\right)\left(\price[1] + \price[3]\right)} \\
        &=a\frac{\price[1]\left(\price[2]-\price[3]\right)\left(\price[1]-\price[3]\right)+\price[2]\left(\price[2]-\price[3]\right)\left(\price[3]-\price[1]\right)}{\left(\price[1] + \price[2]\right)\left(\price[2] + \price[3]\right)\left(\price[1] + \price[3]\right)} \\
        &=a\frac{\left(\price[1]-\price[2]\right)\left(\price[2]-\price[3]\right)\left(\price[1]-\price[3]\right)}{\left(\price[1] + \price[2]\right)\left(\price[2] + \price[3]\right)\left(\price[1] + \price[3]\right)},
    \end{align*}
    and the numerator is negative. Hence, the Scarf economy is variationally unstable on $[\underline{\price[ ][]},1]^3$.

    \textbf{Part 2: Bregman continuity on $[\underline{\price[ ][]}, 1]^3$.}
    
    Notice that the excess demand is differentiable, with its Jacobian given by:
    %
    \begin{align*}
        \grad \excess(\price) = 
        \begin{bmatrix}
            -\frac{\price[1]}{(\price[1] + \price[2])^2} - \frac{\price[3]}{(\price[1] + \price[3])^2} 
            & -\frac{\price[1]}{(\price[1] + \price[2])^2} 
            & -\frac{\price[3]}{(\price[1] + \price[3])^2} \\[10pt]
            -\frac{\price[1]}{(\price[1] + \price[2])^2} 
            & -\frac{\price[1]}{(\price[1] + \price[2])^2} - \frac{\price[3]}{(\price[2] + \price[3])^2} 
            & -\frac{\price[2]}{(\price[2] + \price[3])^2} \\[10pt]
            -\frac{\price[3]}{(\price[1] + \price[3])^2} 
            & -\frac{\price[2]}{(\price[2] + \price[3])^2}
            & -\frac{\price[2]}{(\price[2] + \price[3])^2} - \frac{\price[3]}{(\price[1] + \price[3])^2}
        \end{bmatrix}.
    \end{align*}
    
    Thus, the Jacobian consists of entries of the form of $f(x,y) \doteq \frac{x}{(x+y)^2}$. 
    For $x, y \in [\underline{\price[ ][]}, 1]$, it follows that $|f(x,y)| \leq \frac{1}{4 \underline{\price[ ][]}^2}$, which means that the absolute value of the off diagonal entries of $\grad \excess(\price)$ are bounded by $\frac{1}{4 \underline{\price[ ][]}^2}$, while the diagonal entries are bounded by $\frac{1}{2\underline{\price[ ][]}^2}$.
    Hence, for all $\price \in [\underline{\price[ ][]}, 1]^3$, it holds that $\norm[\grad \excess(\price)]_1 \leq \frac{3}{2\underline{\price[ ][]}^2} + \frac{6}{4 \underline{\price[ ][]}^2} = \frac{3}{\underline{\price[ ][]}^2}$.
    Then, by the mean value theorem, $\excess^{\mathrm{scarf}}$ is $\nicefrac{3}{\underline{\price[ ][]}^2}$-Lipschitz-continuous on $[\underline{\price[ ][]}, 1]^3$, i.e., for all $\price, \otherprice \in [\underline{\price[ ][]}, 1]^3$, $\norm[\excess(\price) - \excess(\otherprice)] \leq \nicefrac{3}{\underline{\price[ ][]}^2} \norm[\otherprice - \price]$.
    Now, 
    since $\kernel$ is $1$-strongly-convex, we have, for all $\price, \otherprice \in \R^3_+$, $\frac{1}{2} \norm[\price - \otherprice]^2 \leq \divergence[\kernel](\price, \otherprice)$. 
    Finally, for all $\price, \otherprice \in [\underline{\price[ ][]}, 1]^3$, 
    \begin{align*}
        \frac{1}{2} \norm[\excess(\price) - \excess(\otherprice)]^2 &\leq \frac{1}{2} \left( \frac{3}{\underline{\price[ ][]}^2} \right)^2 \norm[\price - \otherprice]^2\\ 
        &\leq \left( \frac{3}{\underline{\price[ ][]}^2} \right)^2 \divergence[\kernel](\price, \otherprice).
    \end{align*}
\end{proof}


\section{Arrow-Debreu Competitive Economies as Competitive Walrasian Economies}
\label{sec_app:ad_comp}
An \mydef{Arrow-Debreu economy} $(\numbuyers, \numcommods, \consumptions, \consendow, \util)$, denoted $(\consumptions, \consendow, \util)$ when clear from context, comprises a finite set of $\numcommods \in \N_+$ divisible \mydef{commodities} and $\numconsumers \in \N_+$ \mydef{consumers}.
Each consumer $\consumer \in \consumers$ is characterized by a \mydef{set of consumptions} $\consumptions[\consumer] \subseteq \R^{\numcommods}$, an \mydef{endowment} of commodities $\consendow[\consumer] = \left(\consendow[\consumer][1], \dots, \consendow[\consumer][\numcommods] \right) \in \R^\numconsumers$, and a \mydef{utility function} $\util[\consumer]: \R^{\numcommods} \to \mathbb{R}$ which for any \mydef{consumption} $\consumption[\consumer] \in \consumptions[\consumer]$ describes the utility $\util[\consumer](\consumption[\consumer])$ consumer $\consumer$ derives.\footnote{In line with the literature (see, for instance, \cite{debreu1954representation}), the  value of this utility function should not be interpreted to have any meaning, and the utility function $\util[\consumer]$ should be understood to represent a preference relation $\prefer[\consumer]$ on the space of consumptions $\consumptions[\consumer]$ so that for any two consumptions $\consumption[\consumer], \consumption[\consumer][][][\prime] \in \consumptions$, $\util[\consumer](\consumption[\consumer]) \geq \util[\consumer](\consumption[\consumer][][][\prime]) \implies \consumption[\consumer] \prefer[\consumer] \consumption[\consumer][][][\prime]$. }
We define any collection of per-consumer consumptions $\consumption \doteq (\consumption[1], \hdots, \consumption[\numconsumers]) \in \consumptions$ a \mydef{consumption profile}, where $\consumptions \doteq \bigtimes_{\consumer \in \consumers} \consumptions[\consumer]$ is the \mydef{set of consumption profiles}, and any collection of per-consumer endowments an \mydef{endowment profile} $\consendow \doteq \left(\consendow[1], \hdots, \consendow[\numbuyers] \right) \in \R^{\numconsumers \numcommods}$. 

\begin{remark}
    For ease of exposition, without loss of generality, we restrict ourselves to Arrow-Debreu exchange economies and opt to not present Arrow-Debreu competitive economies (see \citet{arrow-debreu}) which in addition to consumers also contain firms. Nevertheless, our focus on Arrow-Debreu exchange economies is without loss of generality since any firm can be represented as a consumer in an Arrow-Debreu exchange economy by adding an additional commodity into the economy which represents ownership of the firm, setting the consumption space of the new consumer to be equal to the production space of the firm, and its utility function so that it seeks to maximize its consumption of the commodity associated with the firm's ownership. The commodity associated with ownership of the firm should further appear in the endowments of consumers that are supposed to have a contractual claim over the profits of the firms. A similar, albeit much more complicated reduction than described here was proposed earlier by \citet{garg2015markets}, to which we refer the reader for additional details.
\end{remark}

\begin{definition}
    An \mydef{Arrow-Debreu equilibrium} $(\consumption[][][][*], \price[][][*])$ is a tuple comprising consumptions $\consumption[][][][*] \in \R_+^{\numconsumers \times \numbuyers}$ and prices $\price[][][*] \in \simplex[\numcommods]$ s.t.
        \begin{enumerate}
        \item ({\mydef{Utility maximization}})  all consumers $\consumer \in \consumers$, maximize their utility constrained by the value of their endowment: $ \max\limits_{\consumption[\consumer] \in \consumptions[\consumer]: \consumption[\consumer] \cdot \price[][][*] \leq \consendow[\consumer] \cdot \price[][][*]} \util[\consumer](\consumption[\consumer]) \leq \util[\consumer](\consumption[\consumer][][][*]) $;
        \item ({\mydef{Feasibility}}) the consumptions are feasible, i.e., $\sum_{\consumer \in \consumers} \consumption[\consumer][][][*]  \leq \sum_{\consumer \in \consumers} \consendow[\consumer]$;
        \item ({\mydef{Walras' law}}) the value of the demand and the supply are equal, i.e., $\price[][][*] \cdot \left( \sum_{\consumer \in \consumers} \consumption[\consumer][][][*]  - \sum_{\consumer \in \consumers} \consendow[\consumer] \right) = 0$,
    \end{enumerate}
\end{definition}

\begin{assumption}\label{assum:ad_economy}
    Any Arrow-Debreu economy $(\consumptions, \consendow, \util)$ satisfies the following conditions for all consumers $\consumer \in \consumers$:
    \begin{enumerate}
        \item (Closed consumption set) $\consumptions[\consumer]$ is non-empty, bounded from below, closed, and convex;
        \item (Feasible budget set)  There exists a consumption that is strictly less than the consumer's endowment, i.e., for all $\consumer \in \consumers$, there exists $\consumption[\consumer]  \in \consumptions[\consumer]$, s.t. $\consumption[\consumer] < \consendow[\consumer]$;
        \item (Continuity) $\util[\consumer]$ is continuous;
        \item (\sklara{}{Explicit} Quasiconcavity) $\util[\consumer]$ is \sklara{}{explicitly} quasi-concave, i.e., for all $\consumption[\consumer], \consumption[\consumer][][][\prime] \in \R^\numcommods, \lambda \in(0,1)$:
        \begin{align*}
        \util[\consumer](\lambda \consumption[\consumer] + (1-\lambda) \consumption[\consumer][][][\prime]) \geq \min \left\{\util[\consumer](\consumption[\consumer]), \util[\consumer](\consumption[\consumer][][][\prime]) \right\}
        \end{align*}
        \sklara{}{and}
        \begin{align*}
        \util[\consumer](\consumption[\consumer]) > \util[\consumer](\consumption[\consumer][][][\prime]) \implies \util[\consumer](\lambda \consumption[\consumer] + (1-\lambda) \consumption[\consumer][][][\prime]) > \util[\consumer](\consumption[\consumer][][][\prime]);
        \end{align*}
        \item (Non-satiation) $\util[\consumer]$ is non-satiated, i.e.,  $\forall \consumption[\consumer] \in \consumptions[\consumer]$, there exists $\consumption[\consumer][][][\prime] \in \consumptions[\consumer]$ s.t. $\util[\consumer](\consumption[\consumer][][][\prime]) > \util[\consumer](\consumption[\consumer][][][])$.
    \end{enumerate}
\end{assumption}

\begin{definition}[Walrasian Arrow-Debreu Competitive Economy]
    Given an Arrow-Debreu economy $(\numconsumers, \numcommods, \consumptions, \consendow, \util)$, the \mydef{Walrasian Arrow-Debreu competitive economy} $(\numcommods, \excessset)$ is a Walrasian economy with the excess demand correspondence given as:
    \begin{align*}
        \excessset(\price) &= \sum_{\player \in \players} \left[\argmax\limits_{\consumption[\consumer] \in \consumptions[\consumer][\prime]: \consumption[\consumer] \cdot \price \leq \consendow[\consumer] \cdot \price}  \util[\consumer](\consumption[\consumer]) \right] - \sum_{\consumer \in \consumers} \consendow[\consumer] \enspace ,
    \end{align*}
    where $\consumptions[\consumer][\prime] \doteq \left\{\consumption[\consumer] \mid \sum_{k \in \consumers} \consumption[k] \leq \sum_{k \in \consumers} \consendow[k], \consumption[{k}] \in \consumptions[{k}] \right\}$.
\end{definition}

From the proof of Theorem~1 of \citet{arrow-debreu}, we can infer that any Walrasian equilibrium $\price[][][*] \in \simplex[\numgoods]$ of the  Walrasian Arrow-Debreu competitive economy $(\numcommods, \excessset)$ is an Arrow-Debreu equilibrium price of $(\numconsumers, \numcommods, \consumptions, \consendow, \util)$. Further, as shown in the following lemma, the \samy{Walrasian}{} \sklara{}{Arrow-Debreu} economy $(\numcommods, \excessset)$, as the name suggests, gives rise to a Walrasian competitive economy. 

\begin{lemma}[Arrow-Debreu Economies are Walrasian competitive Economies]\label{lemma:ad_economies_are_comp_bounded}
    Consider the Walrasian Arrow-Debreu competitive economy $(\numcommods, \excessset)$ associated with the Arrow-Debreu economy $(\numconsumers, \numcommods, \consumptions, \consendow, \util)$. Then, $\excessset$ satisfies the following:
    \begin{enumerate}
        \item (Homogeneity of degree $0$) For all $\lambda >0$, $\excessset(\lambda \price) = \excessset(\price)$;
        \item (Weak Walras' law) For all $\price \in \R^\numgoods_+$ and $\excess(\price) \in \excessset(\price)$,  $\price \cdot \excess(\price) \leq 0$;
        \item (Non-Satiation) for all $\price \in \R^\numgoods_+$, and $\excess(\price) \in \excessset(\price)$, $\excess(\price) \leq \zeros[\numgoods] \implies \price \cdot \excess(\price) = 0$;
        \item (Continuity) The excess demand correspondence $\excessset$ is upper hemicontinuous on $\simplex[\numgoods]$, non-empty-, compact-, and convex-valued;
        \item (Boundedness) For all $\price \in \R^\numgoods_+$, and $\excess(\price) \in \excessset(\price)$, $\| \excess(\price)\|_\infty < \infty$.
    \end{enumerate}
    
    That is, the Walrasian Arrow-Debreu competitive economy $(\numcommods, \excessset)$ associated with the Arrow-Debreu economy $(\numconsumers, \numcommods, \consumptions, \consendow, \util)$, is a continuous competitive economy which is bounded.
    
\end{lemma}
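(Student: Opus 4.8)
The plan is to verify the five listed properties one at a time, working from the explicit form $\excessset(\price) = \sum_{\consumer \in \consumers} D_\consumer(\price) - \sum_{\consumer \in \consumers}\consendow[\consumer]$, where $D_\consumer(\price) \doteq \argmax\{\util[\consumer](\consumption[\consumer]) \mid \consumption[\consumer] \in \consumptions[\consumer][\prime],\ \consumption[\consumer]\cdot\price \le \consendow[\consumer]\cdot\price\}$ is the truncated Walrasian demand of consumer $\consumer$ and $\consumptions[\consumer][\prime]$ is as in the definition. Note first that $\consumptions[\consumer][\prime]$ is the intersection of the closed convex sets $\consumptions[k]$ (via the constraint $\consumption[k]\in\consumptions[k]$) with the half-space $\{\sum_k\consumption[k] \le \sum_k\consendow[k]\}$, so it is closed and convex; it is bounded below because each $\consumptions[\consumer]$ is, and bounded above by $\sum_k\consendow[k]$, hence compact. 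Homogeneity of degree $0$ (item 1) is immediate, since the budget constraint $\consumption[\consumer]\cdot\price \le \consendow[\consumer]\cdot\price$ is unchanged under $\price\mapsto\lambda\price$ for $\lambda>0$, so $D_\consumer(\lambda\price)=D_\consumer(\price)$ and therefore $\excessset(\lambda\price)=\excessset(\price)$. Weak Walras' law (item 2) follows by summing the individual budget constraints: every $\consumption[\consumer]\in D_\consumer(\price)$ has $\consumption[\consumer]\cdot\price\le\consendow[\consumer]\cdot\price$, whence $\price\cdot\excess(\price)=\sum_\consumer(\consumption[\consumer]-\consendow[\consumer])\cdot\price\le 0$. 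Boundedness (item 5) is equally direct: $D_\consumer(\price)\subseteq\consumptions[\consumer][\prime]$, a bounded set independent of $\price$, so $\|\excess(\price)\|_\infty$ is bounded uniformly in $\price$.

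For non-satiation (item 3), I would argue by contradiction: suppose $\excess(\price)\le\zeros[\numgoods]$ but $\price\cdot\excess(\price)<0$. Then $\sum_\consumer\consumption[\consumer]\cdot\price<\sum_\consumer\consendow[\consumer]\cdot\price$ for $\consumption[\consumer]\in D_\consumer(\price)$, so some consumer $\consumer$ has a strictly slack budget $\consumption[\consumer]\cdot\price<\consendow[\consumer]\cdot\price$. By the non-satiation hypothesis on $\util[\consumer]$ there is $b\in\consumptions[\consumer]$ with $\util[\consumer](b)>\util[\consumer](\consumption[\consumer])$; moving from $\consumption[\consumer]$ a short distance along the segment toward $b$ keeps the point in $\consumptions[\consumer]$ (convexity), keeps it affordable (strict slackness and linearity of $\price\cdot(\cdot)$), and — via continuity and quasiconcavity of $\util[\consumer]$, exactly as in the demand-exhausts-budget argument of \citet{arrow-debreu} — yields a bundle that is both feasible in the truncated economy and strictly preferred to $\consumption[\consumer]$, contradicting $\consumption[\consumer]\in D_\consumer(\price)$. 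Hence $\price\cdot\excess(\price)=0$.

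For continuity (item 4), I would restrict attention to $\price\in\simplex[\numgoods]$ and analyze the budget correspondence $B_\consumer(\price)\doteq\{\consumption[\consumer]\in\consumptions[\consumer][\prime]\mid\consumption[\consumer]\cdot\price\le\consendow[\consumer]\cdot\price\}$, which is compact- and convex-valued and upper hemicontinuous by a closed-graph argument. Lower hemicontinuity requires a cheaper-point (Slater) condition: by the "feasible budget set" part of Assumption~\ref{assum:ad_economy}, each consumer $k$ admits $\consumption[k]<\consendow[k]$ with $\consumption[k]\in\consumptions[k]$; taken jointly these give $\sum_k\consumption[k]<\sum_k\consendow[k]$, so the chosen $\consumption[\consumer]$ lies in $\consumptions[\consumer][\prime]$, and since $\price\in\simplex[\numgoods]$ forces $\price\ne\zeros[\numgoods]$ while $\consumption[\consumer]<\consendow[\consumer]$ coordinatewise, we get $\consumption[\consumer]\cdot\price<\consendow[\consumer]\cdot\price$. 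Thus $B_\consumer$ is continuous on $\simplex[\numgoods]$, so Berge's Maximum Theorem makes $D_\consumer$ upper hemicontinuous with nonempty compact values, and convex-valued because $\util[\consumer]$ is quasiconcave over the convex set $B_\consumer(\price)$. A finite sum of such correspondences, minus the constant $\sum_\consumer\consendow[\consumer]$, retains all of these properties. Finally, items 1--2 say $(\numcommods,\excessset)$ is balanced, item 3 upgrades it to competitive, item 4 makes it continuous, and item 5 makes it bounded, which is the claim.

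The main obstacle is item 3: making "a slack budget contradicts optimality" rigorous using only continuity, quasiconcavity, and (pointwise) non-satiation rather than the stronger local non-satiation — the delicate point being to produce a strictly preferred bundle that remains affordable and inside the truncated consumption set. This is precisely the classical step in \citet{arrow-debreu}, and the intended approach is to import it verbatim. A secondary (routine but fiddly) point is the lower-hemicontinuity/cheaper-point verification in item 4, which is where the hypotheses of Assumption~\ref{assum:ad_economy} and the restriction to $\simplex[\numgoods]$ rather than all of $\R^\numgoods_+$ are genuinely used.
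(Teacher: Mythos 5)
Your proposal is correct and follows essentially the same route as the paper: homogeneity and weak Walras' law come directly from invariance and summing budget constraints, non-satiation by the same contradiction argument (strictly-slack budget plus the Arrow–Debreu convex-combination step invoking continuity and quasiconcavity), continuity by Berge's maximum theorem after verifying the cheaper-point condition on $\simplex[\numgoods]$, and boundedness from the compactness of the truncated consumption set $\consumptions[\consumer][\prime]$. The step you flag as the main obstacle — producing a strictly preferred affordable bundle from pointwise non-satiation, continuity, and quasiconcavity — is handled in the paper exactly as you describe, i.e., by importing the classical Arrow–Debreu convex-combination argument.
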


\begin{proof}[Proof of \Cref{lemma:ad_economies_are_comp_bounded}]

    \textbf{Homogeneity.}
    For all $\lambda >0$, we have:
    \begin{align*}
        \excessset(\lambda \price) &= \sum_{\player \in \players} \left[\argmax\limits_{\consumption[\consumer] \in \consumptions[\consumer][\prime]: \consumption[\consumer] \cdot (\lambda \price) \leq \consendow[\consumer] \cdot (\lambda \price)}  \util[\consumer](\consumption[\consumer]) \right] - \sum_{\consumer \in \consumers} \consendow[\consumer]\\
        &= \sum_{\player \in \players} \left[\argmax\limits_{\consumption[\consumer] \in \consumptions[\consumer][\prime]: \lambda \consumption[\consumer] \cdot \price \leq \lambda \consendow[\consumer] \cdot  \price}  \util[\consumer](\consumption[\consumer]) \right] - \sum_{\consumer \in \consumers} \consendow[\consumer]\\
        &= \sum_{\player \in \players} \left[\argmax\limits_{\consumption[\consumer] \in \consumptions[\consumer][\prime]: \consumption[\consumer] \cdot \price \leq \consendow[\consumer] \cdot  \price}  \util[\consumer](\consumption[\consumer]) \right] - \sum_{\consumer \in \consumers} \consendow[\consumer] = \excessset(\price).
    \end{align*}

    \textbf{\sklara{}{Weak} Walras' law.}
    Fix any $\price \in \R^\numcommods_+$, and let for all consumers $\consumer \in \consumers$, $\consumption[\consumer][][][*] \in \argmax\limits_{\consumption[\consumer] \in \consumptions[\consumer][\prime]: \consumption[\consumer] \cdot \price \leq \consendow[\consumer] \cdot  \price}  \util[\consumer](\consumption[\consumer])$. Then, we have:
    \begin{align*}
        \consumption[\consumer][][][*] \cdot \price  \leq \consendow[\consumer][][][]  \cdot \price.
    \end{align*}
    Summing up across all consumers, and re-organizing, we have:
    \begin{align*}
        \price \cdot \left( \sum_{\consumer \in \consumers} \consumption[\consumer][][][*]   - \sum_{\consumer \in \consumers} \consendow[\consumer][][][]  \right)  \leq 0.
    \end{align*}
    Hence, we have for all $\price \in \R^\numgoods_+$ and $\excess(\price) \in \excessset(\price)$,  $\price \cdot \excess(\price) \leq 0$.
    
    \textbf{Non-Satiation}
    \sklara{}{We now transform \emph{local} non-satiation into \emph{global} non-satiation.} Fix any $\price \in \simplex[\numcommods]$, and let for all consumers $\consumer \in \consumers$, $\consumption[\consumer][][][*] \in \argmax\limits_{\consumption[\consumer] \in \consumptions[\consumer][\prime]: \consumption[\consumer] \cdot \price \leq \consendow[\consumer] \cdot  \price}  \util[\consumer](\consumption[\consumer])$. Suppose by contradiction that $\excess(\price) \leq \zeros[\numgoods]$, but there exists some consumer $\consumer \in \consumers$ s.t.:
    \begin{align*}
        \consumption[\consumer][][][*] \cdot \price  < \consendow[\consumer][][][]  \cdot \price.
    \end{align*}

    Now, by non-satiation \sklara{}{as defined in \Cref{assum:ad_economy}}, there exists \sklara{}{an} $\consumption[\consumer][][][\prime] \in \consumptions[\consumer]$ s.t. $\util[\consumer](\consumption[\consumer][][][\prime]) > \util[\consumer](\consumption[\consumer][][][*])$. As a result, there must also exist \sklara{}{a} $\lambda \in (0, 1)$ s.t. for the consumption $\consumption[\consumer][][][\dagger] \doteq \lambda \consumption[\consumer][][][\prime] + (1- \lambda) \consumption[\consumer][][][*]$, we have: 
    \begin{enumerate}
        \item $\consumption[\consumer][][][\dagger] \in \consumptions[\consumer][\prime]$ since $\consumption[\consumer][][][*] \in \interior(\consumptions[\buyer][\prime])$;
        \item $\util[\consumer](\consumption[\consumer][][][\dagger]) > \util[\consumer](\consumption[\consumer][][][*])$ \deni{$\util[\consumer](\consumption[\consumer][][][\dagger]) = \util[\consumer](\lambda \consumption[\consumer][][][\prime] + (1- \lambda) \consumption[\consumer][][][*]) \geq \min\{\util[\consumer]( \consumption[\consumer][][][\prime]), \util[\consumer](\consumption[\consumer][][][*])  \} > \min\{\util[\consumer]( \consumption[\consumer][][][\prime]), \util[\consumer](\consumption[\consumer][][][*])  \} > \util[\consumer](\consumption[\consumer][][][*])$} since $\util[\consumer]$ is \sklara{}{explicitly} quasiconcave;
        \item $\consumption[\consumer][][][\dagger] \cdot \price  \leq \consendow[\consumer][][][]  \cdot \price$ since the function $\consumption[\consumer] \mapsto \consumption[\consumer] \cdot \price$ is \sklara{continuous}{monotone}.
    \end{enumerate}
    However, this is a contradiction since $\consumption[\consumer][][][*] \in \argmax\limits_{\consumption[\consumer] \in \consumptions[\consumer][\prime]: \consumption[\consumer] \cdot \price \leq \consendow[\consumer] \cdot \price} \util[\consumer](\consumption[\consumer])$.
    
    Hence, for all consumers $\consumer \in \consumers$ we must have:
    \begin{align}
        \consumption[\consumer][][][*] \cdot \price  = \consendow[\consumer][][][*]  \cdot \price.
    \end{align}

    Summing the above across $\consumer \in \consumers$, and re-organizing the expression, we have for all $\excess(\price) \in \excessset(\price)$:
    \begin{align*}
        0 = \price[][][*] \cdot \left( \sum_{\consumer \in \consumers} \consumption[\consumer][][][*]  - \sum_{\consumer \in \consumers} \consendow[\consumer] \right) = \price[][][*] \cdot \excess(\price).
    \end{align*}

    \textbf{Continuity.}

    Since $\consumptions[][\prime]$ is non-empty, compact, and convex, and for all consumers $\consumer \in \consumers$, $\util[\consumer]$ is continuous and quasiconcave, and $\exists \consumption[\consumer]  \in \consumptions[\consumer]$ s.t.\ $\consumption[\consumer] < \consendow[\consumer ]$, the assumptions of Berge's maximum theorem \cite{berge1997topological} hold, and the excess demand $\excessset$ is upper hemicontinuous, non-empty, compact, and convex-valued over $\simplex[\numcommods]$.

    \textbf{Boundedness}
    Since for all consumers $\consumer \in \consumers$, $\consumptions[\consumer]$ is bounded from below, $\consumptions[\consumer][\prime]$ must be bounded as it is bounded from above by $\sum_{\consumer \in \consumers} \consendow[\consumer]$. Hence, for all consumers $\consumer \in \consumers$, $\consumptions[\consumer][\prime]$ is compact. Hence, we must have for all $\price \in \R^\numgoods_+$, and $\excess(\price) \in \excessset(\price)$, \sklara{$\| \excess(\price)\|_\infty < \diam(\consumptions[\consumer][\prime])$}{$| \excess(\price)| \leq \diam(\consumptions[\consumer][\prime])$}.
\end{proof}

\begin{remark}[Linear Utilities in Practice]
\label{rmk:linear_practice}
\sklara{}{The linear utility function class in Arrow-Debreu exchange economies is equivalent to the CES utility function class when $\rho \to 1$. As $\rho \to 1$, while the excess demand correspondence remains upper hemicontinuous by \Cref{lemma:ad_economies_are_comp_bounded}, the Bregman continuity coefficient of the excess demand $\lsmooth \to \infty$. This affects the price space in a way that even pathwise Bregman continuity with a more meaningful coefficient may not be recoverable for a mirror extrat\^atonnement path from an initial iterate $\price[][][(0)]$ to an equilibrium $\price[][][*]$.

To see why, let $A$ and $B$ be two goods, $v_A$ and $v_B$ a buyer's valuations for these goods, and $p_A$ and $p_B$ the corresponding prices. If $\frac{v_A}{p_A}=\frac{v_B}{p_B}$, the buyer is indifferent between the goods. The excess demand for this price, buyer, and goods is thus the line between all of $A$ and all of $B$. If $\numgoods$ is the number of goods, the preimage of these jumps on the $(\numgoods - 1)$-dimensional price simplex $\simplex[\numgoods]$ is an $(\numgoods - 2)$-dimensional ``cobweb'' that intersects the interior of $\simplex[\numgoods]$ and is ``rooted'' in each vertex. See \Cref{fig:linear_excess} for the excess demand $\excess$ in this case.}

\begin{figure}[h]
    \centering
    \includegraphics[width=0.75\linewidth]{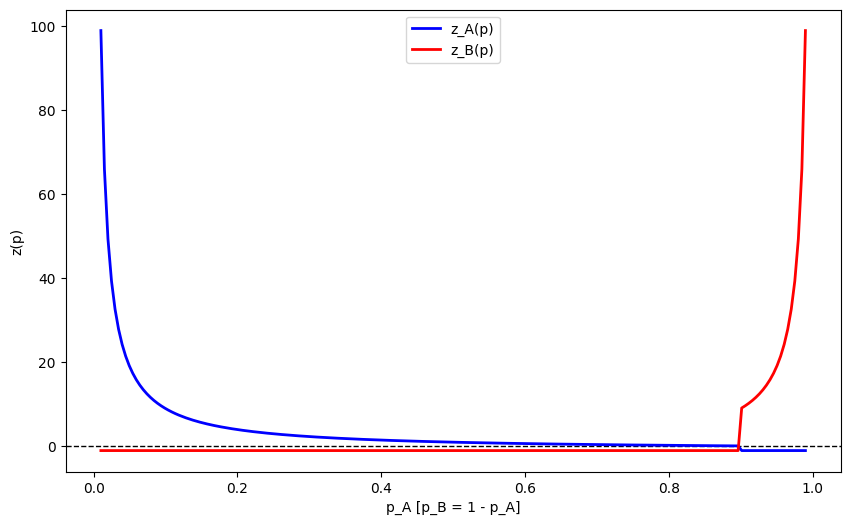}
    \caption{\sklara{}{Excess demand $\excess$ (unbounded for demonstrative purposes) on the price space $\simplex[2]$ projected onto the unit segment. A buyer is endowed with $1.0$ units of goods $A$ and $B$ and has valuations $v_A=0.9$ and $v_B=0.1$.}}
    \label{fig:linear_excess}
\end{figure}
\end{remark}

\section{Classes of Variationally Stable Economies on $\simplex[\numgoods]$}\label{sec_app:var_stable_classes}

We now discuss some important classes of Walrasian economies which are variationally stable on $\simplex[\numgoods]$. The most basic class of Walrasian economies which are variationally stable on $\simplex[\numgoods]$ are those which satisfy the law of supply and demand. Intuitively, these Walrasian economies are those for which the excess demand is downward sloping.

\begin{definition}[Law of supply and demand economies]\label{def:law_of_supply_and_demand}
    Given a Walrasian economy $(\numgoods, \excessset)$, an excess demand correspondence is said to satisfy the \mydef{law of supply and demand} iff     
    \begin{align}
        \innerprod[{\excess(\otherprice) - \excess(\price)}][{\otherprice - \price}] \leq 0 && \text{ for all $\excess(\price) \in \excessset(\price), \excess(\otherprice) \in \excessset(\otherprice)$}.
    \end{align}
\end{definition}

We note that the excess demand of a Walrasian economy satisfies the law of supply and demand iff $-\excessset$ is monotone. This implies that $-\excessset$ is quasimonotone, and hence for any non-empty and compact price space $\pricespace \subseteq \R^\numgoods_+$ the VI $(\pricespace, -\excessset)$ satisfies the Minty condition (see Lemma 3.1 of \citet{he2017solvability}), meaning that any Walrasian economy which satisfies the law of supply and demand is variationally stable on $\pricespace$. 

Another important class of Walrasian economies which are variationally stable on $\simplex[\numgoods]$ is the class of Walrasian economies which satisfy the weak gross substitutes condition. Intuitively, these are Walrasian economies for which the excess demand for a given good only increases when the price of some other good increases. 
While we omit the proof as it is involved, we note that any continuous balanced weak gross substitutes Walrasian economy $(\numgoods, \excessset)$ which satisfies Walras' law, i.e., (for all $\price \in \R^\numgoods_+, \excess(\price) \in \excessset(\price)$, $\price \cdot \excess(\price)=0$) is a subset of the class of variationally stable economies on $\pricespace \subseteq \R^\numgoods_+$ for any non-empty and compact price space $\pricespace$ (see, for instance Lemma~5 of \citet{arrow1959stability}).

\begin{definition}[Weak Gross Substitutes economies]\label{def:wgs}
    Given a Walrasian economy $(\numgoods, \excessset)$, an excess demand correspondence is said to satisfy the \mydef{weak gross substitutes condition} iff for all $\price, \otherprice \in \R^\numgoods_+$ s.t. for some $k \in \goods$, $\otherprice[k] > \price[k]$ and for all $\good \neq k, \otherprice[\good] = \price[\good]$, we have:
    \begin{align}
        \excess[\good](\otherprice) \geq \excess[\good](\price) 
        && \text{ for all $\excess(\price) \in \excessset(\price), \excess(\otherprice) \in \excessset(\otherprice)$}.
    \end{align}
\end{definition}

\if 0
\begin{lemma}
    Suppose that $(\numgoods, \excessset)$ is a Walrasian economy with excess demand correspondence which satisfies the weak gross substitutes condition and which has a non-empty set of Walrasian equilibria. Then, $(\numgoods, \excessset)$ is variationally stable on any non-empty and compact set $\pricespace \subseteq \R^\numgoods_+$.
\end{lemma}
\begin{proof}
    Let $(\numgoods, \excessset)$ be a Walrasian economy which satisfies the weak gross substitutes condition, and $\price[][][*] \in \we(\numgoods, \excessset)$ be a Walrasian equilibrium of $(\numgoods, \excessset)$. Without loss of generality, given any price $\price \in \R^\numgoods$ relabel the commodities, suppose that $\price[1] - \price[1][][*] \leq \price[2] - \price[2][][*] \leq \hdots \leq \price[\numgoods] - \price[\numgoods][][*]$ and let $\good^*$ the last index s.t. $\price[{\good^*}] - \price[{\good^*}][][*] \leq 0$ if such an index exists and $0$ otherwise. Then, define the following associated price vectors:
    \begin{align*}
        \price[][0] &= (\price[1][][*], \price[2][][*], \price[3][][*], \hdots, \price[\numgoods][][*])\\
        \price[][1] &= (\price[1], \price[2][][*], \price[3][][*] \hdots, \price[1][][*])\\
        \price[][2] &= (\price[1], \price[2], \price[3][][*] \hdots, \price[\numgoods][][*])\\
        \price[][3] &= (\price[1], \price[2], \price[3] \hdots, \price[\numgoods][][*])\\
        &\vdots\\
        \price[][\numgoods] &= (\price[1], \price[2], \hdots, \price[\numgoods])
    \end{align*}

    Observe then we have:
    \begin{align}
        \innerprod[{\excess(\price[][][*])}][{\price - \price[][][*]}] &= \innerprod[{\excess(\price[][][*])}][{\price}] - \underbrace{\innerprod[{\excess(\price[][][*])}][{\price[][][*]}]}_{= 0} \notag\\
        &= \underbrace{\innerprod[{\excess(\price[][][*])}][{\price}]}_{\leq 0} \notag\\
        &\leq 0 \label{eq:wgs_bound_1}
    \end{align}
    where the penultimate line follows from Walras' law holding at the Walrasian equilibrium $\price[][][*]$, and the last line from feasibility holding at the Walrasian equilibrium $\price[][][*]$, and the positive of prices $\price$.

    Further, notice that we have:
    \begin{align}
        \price - \price[][][*] = \price[][\numgoods] - \price[][0] = \price[][\numgoods] - \price[][0] + \sum_{k = 1}^{\numgoods -1} [\price[][k] - \price[][k - 1]] = \sum_{k = 1}^{\numgoods} [\price[][k] - \price[][k - 1 ]] \label{eq:wgs_cycle_prices}
    \end{align}

    Similarly, we also have
    \begin{align}
        \excess(\price) - \excess(\price[][][*]) = \excess(\price[][\numgoods]) - \excess(\price[][0]) = \sum_{k = 1}^{\numgoods} [\excess(\price[][k]) - \excess(\price[][k -1])] \label{eq:wgs_cycle_excess}
    \end{align}

    Hence, we have:
    \begin{align*}
         &\innerprod[{ \excess(\price)}][{\price - \price[][][*]}] \\
        &\leq \innerprod[{ \excess(\price)}][{\price - \price[][][*]}] - \underbrace{\innerprod[{\excess(\price[][][*])}][{\price - \price[][][*]}]}_{\leq 0} && \text{(\Cref{eq:wgs_bound_1})}\\
        &= \innerprod[{ \excess(\price) - \excess(\price[][][*])}][{\price - \price[][][*]}]\\
        &=\innerprod[{ \sum_{k =1}^{\numgoods} \excess(\price[][k]) - \excess(\price[][k-1])}][{ \price[][m] - \price[][0]}] && \text{(\Crefrange{eq:wgs_cycle_prices}{eq:wgs_cycle_excess})}\\
        &= \sum_{k = 1}^{\numgoods} \sum_{l = 1}^{\numgoods}  \innerprod[{  \excess(\price[][k]) - \excess(\price[][k-1])}][{ \price[][l] - \price[][l-1]}]\\
        &= \sum_{k = 1}^{\numgoods} \sum_{\good = 1}^{\numgoods} [\excess[\good](\price[][k]) - \excess[\good](\price[][k - 1])] [\price[\good][\good] - \price[\good][\good - 1]]\\
        &= \sum_{\good = 1}^\numgoods \sum_{k = 1}^{\good -1 }[\excess[\good](\price[][k]) - \excess[\good](\price[][k - 1])] [\price[\good][\good] - \price[\good][\good - 1]]\\
        & \quad + \sum_{\good = 1}^\numgoods \sum_{k = \good + 1}^{\numgoods} [\excess[\good](\price[][k]) - \excess[\good](\price[][k - 1])] [\price[\good][\good] - \price[\good][\good - 1]] \\
        &\quad +   \underbrace{\sum_{\good \in \goods}[\excess[\good](\price[][\good]) - \excess[\good](\price[][\good - 1])] [\price[\good][\good] - \price[\good][\good - 1]]}_{\leq 0 \text{  by Weak Gross Substitutes}}\\
        &= \sum_{\good = 1}^\numgoods \sum_{k = 1}^{\good -1 }[\excess[\good](\price[][k]) - \excess[\good](\price[][k - 1])] [\price[\good] - \price[\good][][*]]\\
        & \quad + \sum_{\good = 1}^\numgoods \sum_{k = \good + 1}^{\numgoods} [\excess[\good](\price[][k]) - \excess[\good](\price[][k - 1])] [\price[\good] - \price[\good][][*]]\\
    \end{align*}

    Now, notice that we must have $\sum_{\good = 1}^\numgoods \sum_{k = 1}^{\good -1 }[\excess[\good](\price[][k]) - \excess[\good](\price[][k - 1])] [\price[\good][\good] - \price[\good][\good - 1]]+ \sum_{\good = 1}^\numgoods \sum_{k = \good + 1}^{\numgoods} [\excess[\good](\price[][k]) - \excess[\good](\price[][k - 1])] [\price[\good][\good] - \price[\good][\good - 1]]\leq 0$.
    where the last inequality follows from the weak gross substitutes assumption and the fact that $\price[1][1] - \price[1][0] = \price[1][][] - \price[1][][*]  = \price[1] - \price[1][][*] \leq \hdots  \price[\numgoods] - \price[\numgoods][][*] = \price[\numgoods][\numgoods] - \price[\numgoods][\numgoods - 1]$.
\end{proof}
\fi 

Going further, we can show that any Walrasian economy which satisfies the well-known weak axiom of revealed preferences \cite{afriat1967construction, arrow-hurwicz} is variationally stable on $\simplex[\numgoods]$ (and more generally on any non-empty and compact price space $\pricespace \subseteq \R^\numgoods$). To this end, let us first define the weak axiom of revealed preferences for balanced economies. 

\begin{definition}[WARP excess demand]\label{def:warp}
    Given a Walrasian economy $(\numgoods, \excessset)$, an excess demand correspondence is said to satisfy the \mydef{weak axiom of revealed preferences} (\mydef{WARP}) iff for all $\excess(\price) \in \excessset(\price), \excess(\otherprice) \in \excessset(\otherprice)$:     
    \begin{align*}
        \innerprod[{\excess(\otherprice)}][{\price}] \leq \innerprod[{\excess(\otherprice)}][{\otherprice}] \text{ and } \excess(\price) \neq \excess(\otherprice)  \implies \innerprod[{\excess(\price)}][{\otherprice}] > \innerprod[{\excess(\price)}][{\price}].
    \end{align*}
\end{definition}

\begin{remark}
    This definition of (WARP) is adapted to arbitrary Walrasian economies and as such is a generalization of the usual definition for economies which satisfy Walras' law (i.e., for all $\price \in \R^\numgoods_+$, $\price \cdot \excess(\price)=0$), which requires that $\excessset$ is singleton-valued, and $\innerprod[{\excess(\otherprice)}][{\price}] \leq 0 
    \text{ and } \excess(\price) \neq \excess(\otherprice)  
    \implies \innerprod[{\excess(\price)}][{\otherprice}] > 0$ (i.e., for all $\price \in \R^\numgoods_+$, $\price \cdot \excess(\price)=0$). 
\end{remark}

As we show next, WARP implies that $-\excessset$ is pseudomonotone in balanced economies.\footnote{To be more precise, we note that an excess demand function \sklara{$-\excessset$}{$-\excess$} satisfies WARP iff \sklara{$-\excessset$}{$-\excess$} is strictly pseudomonotone\sklara{}{, i.e., for all distinct $\price, \otherprice \in \pricespace$, $\left< -\excess(\price), \otherprice - \price \right> \geq 0 \, \text{ implies } \left< -\excess(\otherprice), \otherprice - \price\right> > 0$.}. However, as this result will not be used we present the more general result.}

\begin{lemma}[WARP $\implies$ pseudomonotone ]\label{lemma:warp_implies_pseudomonotone}
    If the excess demand correspondence $\excessset$ of a Walrasian economy $(\numgoods, \excessset)$ satisfies WARP, then $-\excessset$ is pseudomonotone.
\end{lemma}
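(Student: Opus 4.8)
The plan is to reduce the statement to a single application of the WARP inequality (\Cref{def:warp}). Unpacking the definition of pseudomonotonicity for the correspondence $-\excessset$, what must be shown is that for all prices $\price, \otherprice \in \R^\numgoods_+$ and all selections $\excess(\price) \in \excessset(\price)$ and $\excess(\otherprice) \in \excessset(\otherprice)$, the implication $\innerprod[{-\excess(\price)}][{\otherprice - \price}] \geq 0 \implies \innerprod[{-\excess(\otherprice)}][{\otherprice - \price}] \geq 0$ holds; equivalently, after multiplying through by $-1$, that $\innerprod[{\excess(\price)}][{\otherprice}] \leq \innerprod[{\excess(\price)}][{\price}]$ implies $\innerprod[{\excess(\otherprice)}][{\otherprice}] \leq \innerprod[{\excess(\otherprice)}][{\price}]$.

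So I would fix such a pair $\price, \otherprice$ and selections, assume $\innerprod[{\excess(\price)}][{\otherprice}] \leq \innerprod[{\excess(\price)}][{\price}]$, and split into two cases. If $\excess(\price) = \excess(\otherprice)$, then the target inequality $\innerprod[{\excess(\otherprice)}][{\otherprice}] \leq \innerprod[{\excess(\otherprice)}][{\price}]$ is literally the assumed inequality, so there is nothing to prove. If $\excess(\price) \neq \excess(\otherprice)$, then the assumption $\innerprod[{\excess(\price)}][{\otherprice}] \leq \innerprod[{\excess(\price)}][{\price}]$ together with $\excess(\otherprice) \neq \excess(\price)$ is precisely the antecedent of WARP applied to the ordered pair $(\otherprice, \price)$ (i.e. with the roles of the two prices in \Cref{def:warp} interchanged), so its conclusion gives $\innerprod[{\excess(\otherprice)}][{\price}] > \innerprod[{\excess(\otherprice)}][{\otherprice}]$, which in particular yields the desired (non-strict) inequality. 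Observe that this in fact produces a strict version, consistent with the footnote's remark that WARP characterizes strict pseudomonotonicity of $-\excessset$.

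This is a short argument and I do not expect a genuine obstacle; the only point requiring care is the bookkeeping of inequality directions --- writing the pseudomonotonicity condition in the right implication form and recognizing that WARP must be applied to the pair $(\otherprice, \price)$ rather than $(\price, \otherprice)$ --- together with the small observation that the degenerate case $\excess(\price) = \excess(\otherprice)$, which WARP's hypothesis explicitly excludes, is exactly the case in which the pseudomonotonicity conclusion coincides with its hypothesis. No homogeneity, continuity, or Walras'-law structure of the economy is needed beyond WARP itself, so the same proof applies to arbitrary (not necessarily balanced) Walrasian economies.
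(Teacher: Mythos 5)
Your proof is correct and is essentially the paper's argument: both reduce pseudomonotonicity of $-\excessset$ to the standard implication between two inner-product inequalities, split into the cases $\excess(\price) = \excess(\otherprice)$ (where the conclusion is the hypothesis verbatim) and $\excess(\price) \neq \excess(\otherprice)$ (where a single invocation of WARP closes the argument with a strict inequality), and use no structure of the economy beyond WARP itself. The only cosmetic difference is which of the two price vectors you call the "source" and which the "target" — you apply WARP to the pair $(\otherprice,\price)$ where the paper applies it to $(\price,\otherprice)$ — but since WARP is universally quantified over price pairs, this is the same argument up to renaming.
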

\begin{proof}
    Suppose that $\excessset$ satisfies WARP, and that $\innerprod[{-\excess(\otherprice)}][{\otherprice - \price}] = \innerprod[{\excess(\otherprice)}][{\price - \otherprice}] \leq 0$\sklara{, then we have $\innerprod[{-\excess(\price)}][{\otherprice - \price}] = \innerprod[{\excess(\price)}][{\price - \otherprice}]$}{.}
    If $\excess(\price) \neq \excess(\otherprice)$, then, by WARP, we have $\innerprod[{\excess(\price)}][{\price - \otherprice}] < 0$. 
    
    Otherwise, if $\excess(\price) = \excess(\otherprice)$, then we have:
    \begin{align*}
        \innerprod[{\excess(\price)}][{\price - \otherprice}] =  \innerprod[{\excess(\otherprice)}][{\price - \otherprice}] \leq 0.
    \end{align*}

    That is, if $\excessset$ satisfies WARP, we have:
    \begin{align*}
        \innerprod[{-\excess(\otherprice)}][{\otherprice - \price}] \leq 0 \implies \innerprod[{-\excess(\price)}][{\otherprice - \price}] \leq 0.
    \end{align*}
    
    Hence, $-\excessset$ is pseudomonotone.
\end{proof}

An important consequence of \Cref{lemma:warp_implies_pseudomonotone} is that since $-\excessset$ is pseudomonotone, for any non-empty and compact price space $\pricespace \subseteq \R^\numgoods_+$ the VI $(\pricespace, -\excessset)$ satisfies the Minty condition (see Lemma 3.1 of \citet{he2017solvability}). As such, we have the following corollary of \Cref{lemma:warp_implies_pseudomonotone}. 

\begin{corollary}[WARP $\implies$ Minty's condition]\label{cor:warp_iff_minty}
    Any Walrasian economy which satisfies WARP is variationally stable on any non-empty and compact price space $\pricespace \subseteq \R^\numgoods_+$. 
\end{corollary}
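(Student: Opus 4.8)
The plan is to chain together the implications already established in the excerpt, so the argument is essentially a one-line deduction dressed up. First I would invoke \Cref{lemma:warp_implies_pseudomonotone}: if the excess demand correspondence $\excessset$ of a Walrasian economy $(\numgoods, \excessset)$ satisfies WARP, then $-\excessset$ is pseudomonotone. Since pseudomonotonicity implies quasimonotonicity (recall the chain $\mathrm{monotone} \implies \mathrm{pseudomonotone} \implies \mathrm{quasimonotone}$ noted in the preliminaries on correspondences), it follows that $-\excessset$ is quasimonotone.

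Next, fix an arbitrary non-empty and compact price space $\pricespace \subseteq \R^\numgoods_+$ and consider the VI $(\pricespace, -\excessset)$. Because $-\excessset$ is quasimonotone and $\pricespace$ is non-empty and compact, Lemma~3.1 of \citet{he2017solvability} applies and yields that the set of weak (Minty) solutions is non-empty, i.e., $\mvi(\pricespace, -\excessset) \neq \emptyset$. By definition, this says exactly that the VI $(\pricespace, -\excessset)$ satisfies the Minty condition.

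Finally, I would close by recalling the definition of variational stability: a Walrasian economy $(\numgoods, \excessset)$ is variationally stable on $\pricespace$ iff the VI $(\pricespace, -\excessset)$ satisfies the Minty condition. Combining this with the previous paragraph gives that any WARP economy is variationally stable on every non-empty compact $\pricespace \subseteq \R^\numgoods_+$, which is the claim. I do not expect any genuine obstacle here; the only point worth a sentence of care is that Lemma~3.1 of \citet{he2017solvability} requires only quasimonotonicity together with compactness of the constraint set and \emph{no} continuity hypothesis on the operator, so the corollary holds without imposing that the economy be continuous. Thus the statement follows immediately once \Cref{lemma:warp_implies_pseudomonotone} is in hand.
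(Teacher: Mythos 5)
Your argument is correct and follows the same route the paper takes: apply \Cref{lemma:warp_implies_pseudomonotone} to get pseudomonotonicity of $-\excessset$, then invoke Lemma~3.1 of \citet{he2017solvability} over any non-empty compact $\pricespace$ to conclude the Minty condition holds, which is the definition of variational stability. Your explicit mention of the intermediate implication pseudomonotone $\implies$ quasimonotone simply makes transparent the hypothesis under which that lemma is stated; the paper elides this step but the reasoning is identical.
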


\section{Experiment Details}
\label{sec_ap:experiments}
\paragraph{Computational Resources}
\salp{Our experiments were run on MacOS machine with 8GB RAM and an Apple M1 chip, and took about 10 minutes to run. Only CPU resources were used.}{Our experiments (excluding grid search experiments) were run on Google Colaboratory using Python 3 Google Compute Engine, and took around 300 seconds to run with only CPU resources. Relevant specifications can be found in Table~\ref{table:exp_specs}}.

\begin{table}[H]
   \caption{Specifications of the Python 3 Google Compute Engine instance used for experiments\sklara{}{.}}\label{table:exp_specs}
   \begin{center}
    \begin{tabular}{c@{\hspace{2em}} c@{\hspace{2em}}}
    \textbf{Component} & \textbf{Specification} \\ \hline
    CPU & 2x Intel\textsuperscript{\textregistered{}} Xeon\textsuperscript{\textregistered{}} CPU @ 2.20GHz \\ \hline
    RAM & 12.7 GB \\ \hline
    Disk Storage & 107.7 GB \\ \hline
    Operating System & Ubuntu 22.0.4.5 LTS\\ \hline
    \end{tabular}
   \end{center}
\end{table}

\paragraph{Programming Languages, Packages, and Licensing}
We ran our experiments in Python 3.12.12 \cite{python2025}
, using NumPy \cite{numpy},  Jax \cite{jax2018github}, and  JaxOPT \cite{jaxopt_implicit_diff}.
All figures were graphed using Matplotlib \cite{matplotlib}. 

Python software and documentation are licensed under the PSF License Agreement. Numpy is distributed under a liberal BSD license. Pandas is distributed under a new BSD license. Matplotlib only uses BSD compatible code, and its license is based on the PSF license.

\paragraph{Experimental Setup Details}

Each economy is initialized using a random seed to ensure reproducibility. Each consumer is assigned an initial endowment, drawn from a uniform distribution:
$
\consendow[][][][\prime] \sim \mathrm{Unif}(10^{-6}, 1), \quad \forall \consumer \in [\numconsumers], \good \in [\numgoods].
$
For numerical stability, we restrict the total economy-wide aggregate supply of each commodity to remain fixed at $10$\footnote{This is without loss of generality since commodities are divisible.}, to this end we normalize the endowments of consumers for all $\good \in \goods$,  $\consumer \in \consumers$ to obtain their final endowment:
\[
\consendow[\consumer][\good] \doteq \frac{10 \consendow[\consumer][\good][][\prime]}{\sum_{\consumer \in \consumers} \consendow[\consumer][\good][][\prime]}.
\]

Each consumer’s valuation of each commodity is drawn from a uniform distribution, i.e., for all $\good \in \goods$, $\consumer \in \consumers$:
\[
\valuation[\consumer][\good] \sim \mathrm{Unif}(0, 1) .
\]

For any CES consumer $\consumer \in \consumers$, the elasticity of substitution parameter $\rho_\consumer$, is drawn as follows from the uniform distribution for substitutes and complements consumers respectively:
\begin{align*}
&\rho_\consumer^{\text{substitutes}} \sim \mathrm{Unif}(0.6, 0.9) \
&\rho_\consumer^{\text{complements}} &\sim \mathrm{Unif}(-1000, -1)
\end{align*}

The initial price vector $\price[][0]$ for the algorithms is drawn from a uniform distribution s.t. for all $\good \in \goods$:
\[
\price[\good][0] \sim \mathrm{Unif}(1, 10) .
\]
We note that while we initialize the prices between $1$ and $10$ for numerical stability, this choice is without loss of generality since the excess demand is homogeneous of degree $0$.

To summarize. Given a random seed, the initialization process consists of:
1) Sampling endowments from a uniform distribution and normalizing them to ensure total supply constraints; 2) sampling valuations from a uniform distribution; 3) sampling substitution parameters for CES consumers, 4) generating an initial price vector.

\alp{Should we add the following section? If so, should we use $z$ instead of $f$ to make it clear that pathwise Bregman continuity is assumed for excess demand? }
\paragraph{Pathwise Bregman Continuity}

The pathwise Bregman continuity is verified for every experiment. For the choice of kernel function $\kernel(\price) \doteq \frac{1}{2}\|\price\|^2$, Bregman divergence corresponds to half of Euclidean distance, therefore Bregman continuity becomes: 
\begin{align*}
    \frac{1}{2} \norm[\vioper({\vartuple[][][k+0.5]}) - \vioper({\vartuple[][][k])}]^2 &\leq \lsmooth^2 \frac{1}{2}\norm[{\vartuple[][][k+0.5]} - {\vartuple[][][k]}]^2 \\
    \frac{\norm[\vioper({\vartuple[][][k+0.5]}) - \vioper({\vartuple[][][k])}]^2}{\norm[{\vartuple[][][k+0.5]} - {\vartuple[][][k]}]^2}  &\leq \lsmooth^2 \\
    \frac{\norm[\vioper({\vartuple[][][k+0.5]}) - \vioper({\vartuple[][][k])}]}{\norm[{\vartuple[][][k+0.5]} - {\vartuple[][][k]}]}  &\leq \lsmooth.
\end{align*}

Because Theorem~\ref{thm:mirror_extragradient_global_convergence} relies on the existence of a $\lsmooth \in (0, \frac{1}{\sqrt{2}\learnrate[ ][ ]}]$, we have:
\begin{align*}
    \frac{\norm[\vioper({\vartuple[][][k+0.5]}) - \vioper({\vartuple[][][k])}]}{\norm[{\vartuple[][][k+0.5]} - {\vartuple[][][k]}]} &\leq \lsmooth \leq \frac{1}{\sqrt{2}\learnrate[ ][ ]} \\
    \frac{\norm[\vioper({\vartuple[][][k+0.5]}) - \vioper({\vartuple[][][k])}]}{\norm[{\vartuple[][][k+0.5]} - {\vartuple[][][k]}]} &\leq \frac{1}{\sqrt{2}\learnrate[ ][ ]}.
\end{align*}

For every experiment, the value of the left hand-side term (the Lipschitz coefficient) is recorded at each iteration of the \emph{extrat\^atonnement} algorithm. The maximum value among these recorded Lipschitz coefficients is then checked to be less than $\frac{1}{\sqrt{2}\learnrate[ ][ ]}$, which is sufficient to show that the Bregman continuity is satisfied. This empirically verifies that excess demand is pathwise Bregman-continuous during the experiments. 

\paragraph{Grid Search Experiments}

We performed a grid search for a suitable step size for our experiments with Arrow-Debreu economies.
To do so, we discretized an interval into 200 equally-spaced step sizes.
Table~\ref{table:grid_intervals} lists the intervals that were used for each economy.

\begin{table}[hbtp]
   \caption{The intervals $[\eta_{\text{min}}, \eta_{\text{max}}]$ used for grid search on each Arrow-Debreu economy.}
   \label{table:grid_intervals}
    \begin{center}
        \renewcommand{\arraystretch}{1.1}
        \begin{tabular}{| m{1.25cm} | m{1.75cm} m{1.75cm}|}
            \hline
            Exp No. & $\eta_{\text{min}}$ & $\eta_{\text{min}}$ \\ 
            \hline \hline
            1 & 0.1 & 20 \\
            2 & 0.001 & 10 \\
            3 & 0.0001 & 0.001 \\
            4 & 0.01 & 0.5 \\
            5 & 0.000001 & 0.001 \\
            6 & 0.000005 & 0.0003 \\
            7 & 0.00005 & 0.005 \\ \hline
        \end{tabular}
        \renewcommand{\arraystretch}{1.0}
    \end{center}
\end{table}

\begin{figure}[h!]
    \centering
    
    \begin{subfigure}{0.46\textwidth}
        \centering
        \includegraphics[width=\linewidth, height=60mm, keepaspectratio]{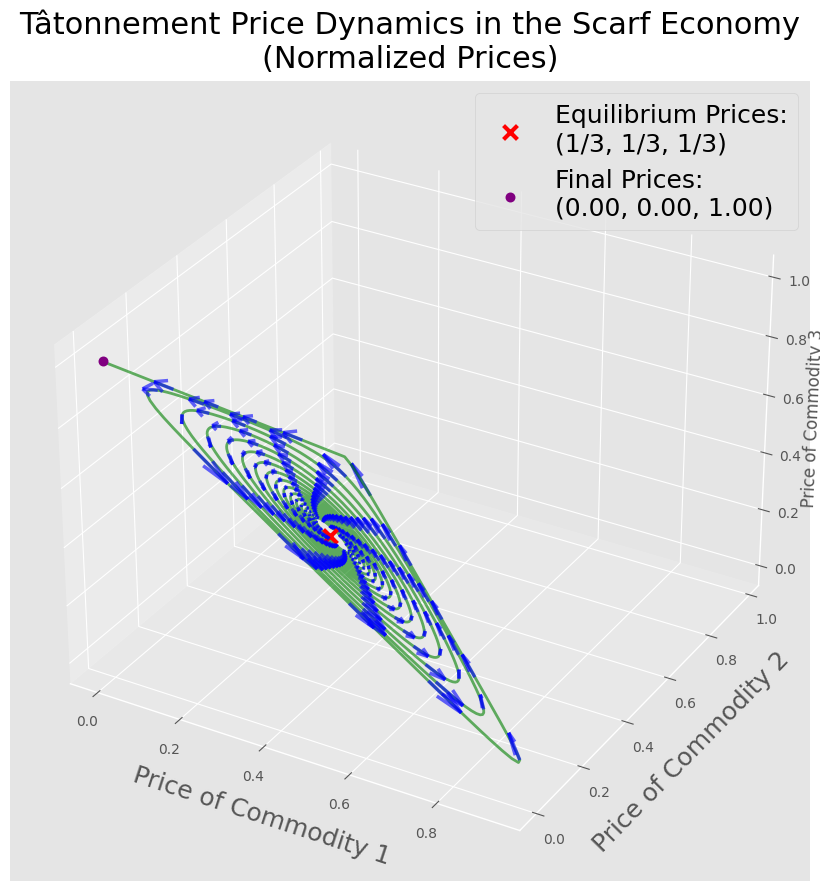}
    \end{subfigure}
    \hfill
    \begin{subfigure}{0.46\textwidth}
        \centering
        \includegraphics[width=\linewidth, height=60mm, keepaspectratio]{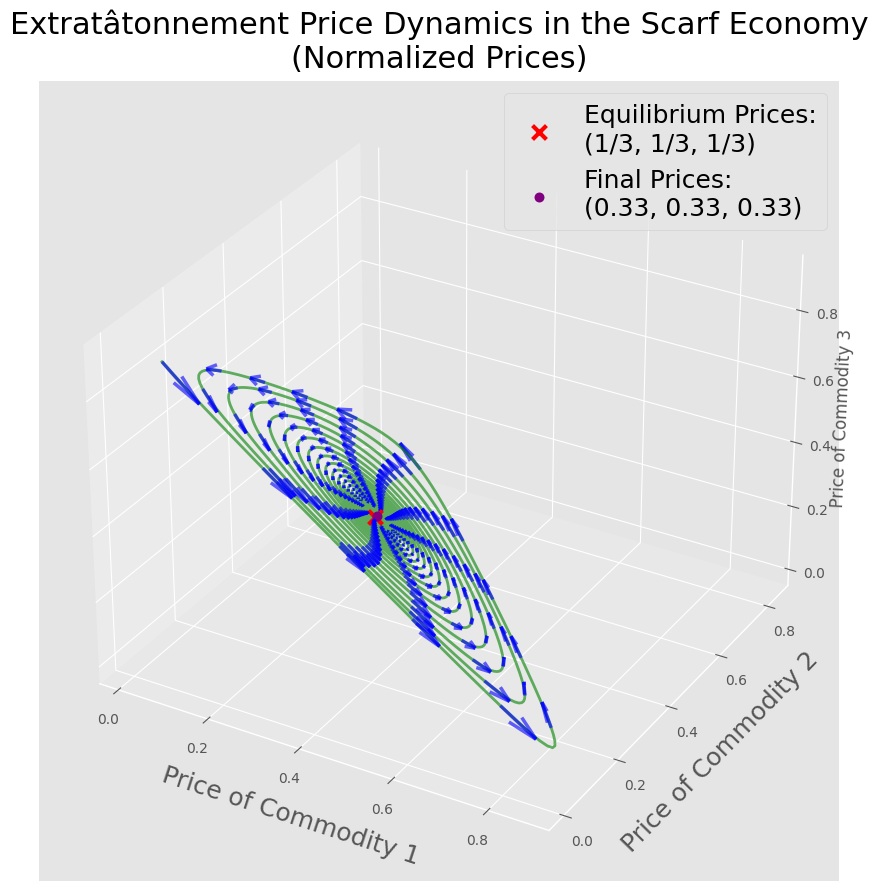}
    \end{subfigure}

    \caption{Phase portraits of t\^atonnement and extrat\^atonnement for the Scarf economy. All experiments were run on the unit box with step size $\eta = \nicefrac{7}{180}$. (Normalized) price plots show the price trajectories generated by t\^atonnement and extrat\^atonnement in the Scarf economy. They are more triangular, but otherwise not qualitatively different.}
\end{figure}

\section{Additional Experiments with Arrow-Debreu Economies}
\label{sec_ap:ad_experiments}
In \Cref{fig:experiment_plots_opt}, we plot the Walrasian deviations at prices generated by each step of extrat\^atonnement in our seven Arrow-Debreu exchange economies, using step sizes that correspond to the (global) minimum best-iterate Walrasian deviation without satisfying pathwise Bregman continuity (See~\Cref{table:ad_grid_search_step_sizes}). Similar to the convergence plots in \Cref{fig:experiment_plots_alt}, which use the step sizes that correspond to the minimum best-iterate Walrasian deviation while satisfying pathwise Bregman continuity, we observe fast convergence in all markets that are not inhabited by any linear consumers (economies 1--4 and 7); otherwise, convergence is slower.

\begin{figure}[htbp]
    \centering
    \begin{subfigure}{0.3\textwidth}
        \centering
        \includegraphics[height=30mm, width=45mm]{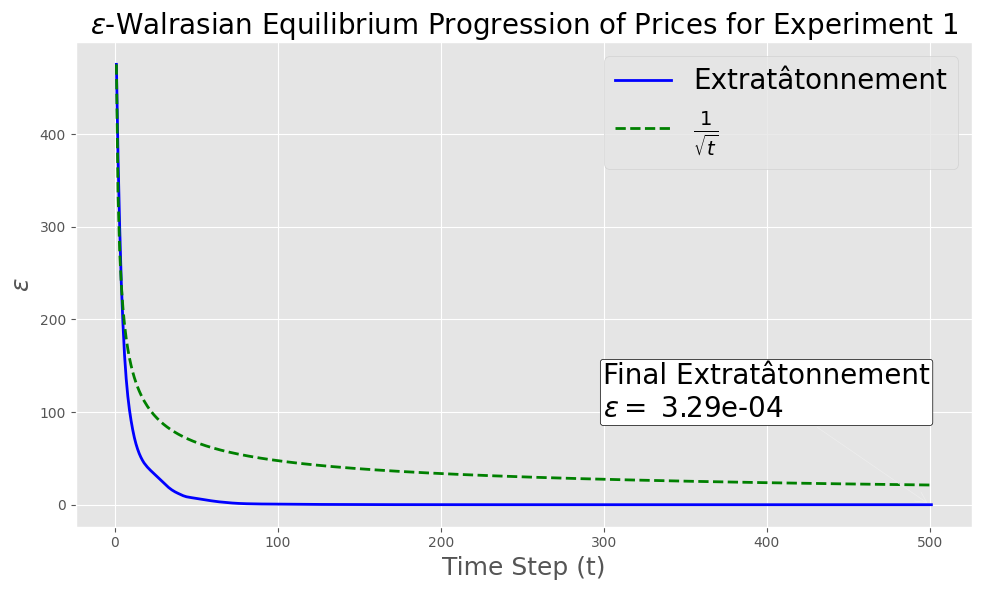}
    \end{subfigure}
    \hfill
    \begin{subfigure}{0.3\textwidth}
        \centering
        \includegraphics[height=30mm, width=45mm]{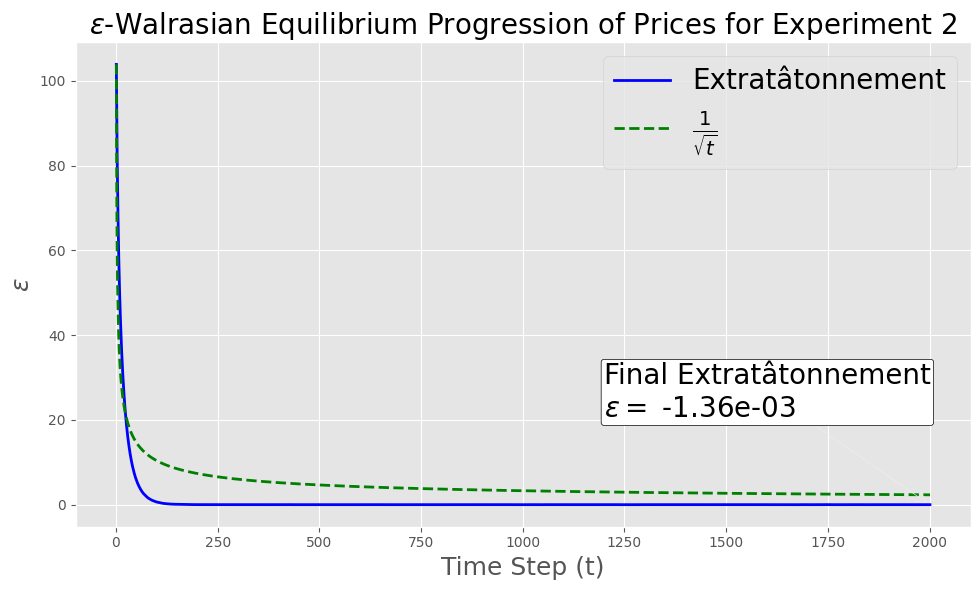}
    \end{subfigure}
    \hfill
    \begin{subfigure}{0.3\textwidth}
        \centering
        \includegraphics[height=30mm, width=45mm]{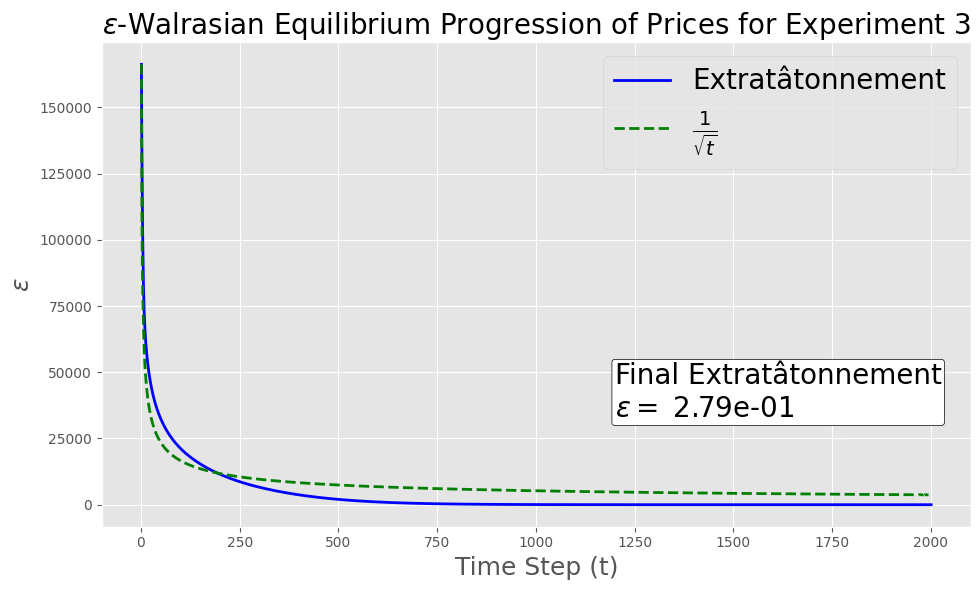}
    \end{subfigure}

    \vspace{0.3cm}

    \begin{subfigure}{0.3\textwidth}
        \centering
        \includegraphics[height=30mm, width=45mm]{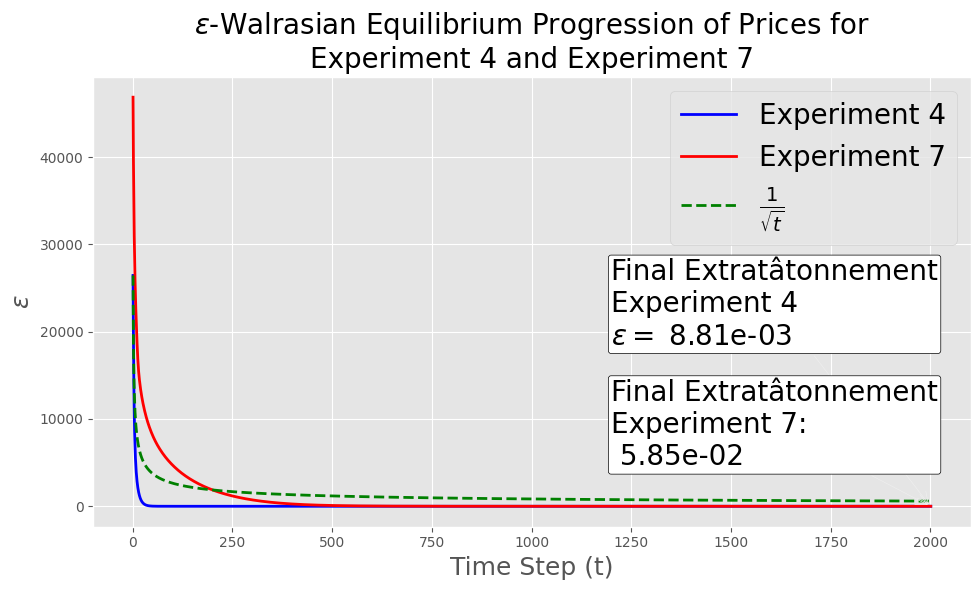}
    \end{subfigure}
    \hfill
    \begin{subfigure}{0.3\textwidth}
        \centering
        \includegraphics[height=30mm, width=45mm]{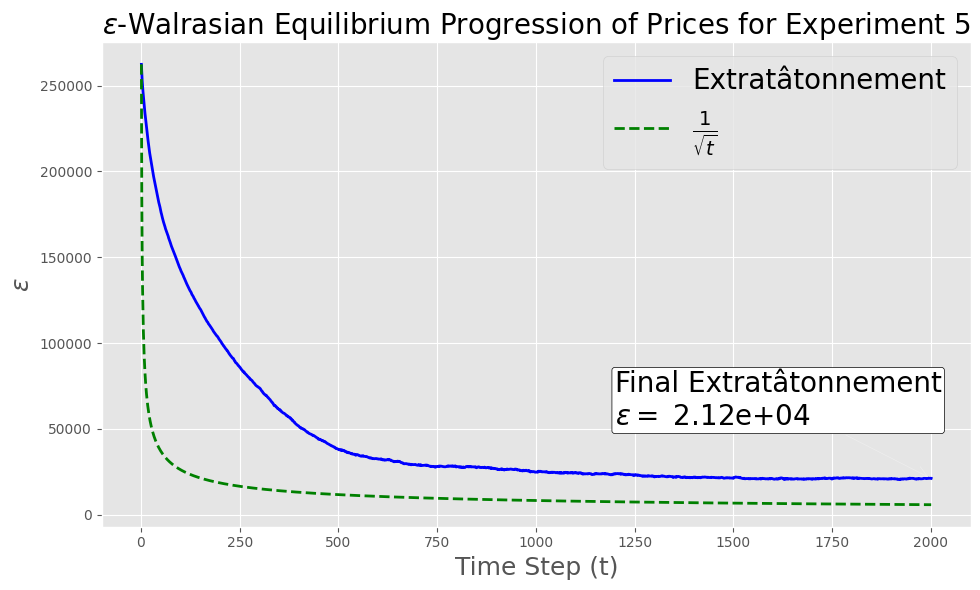}
    \end{subfigure}
    \hfill
    \begin{subfigure}{0.3\textwidth}
        \centering
        \includegraphics[height=30mm, width=45mm]{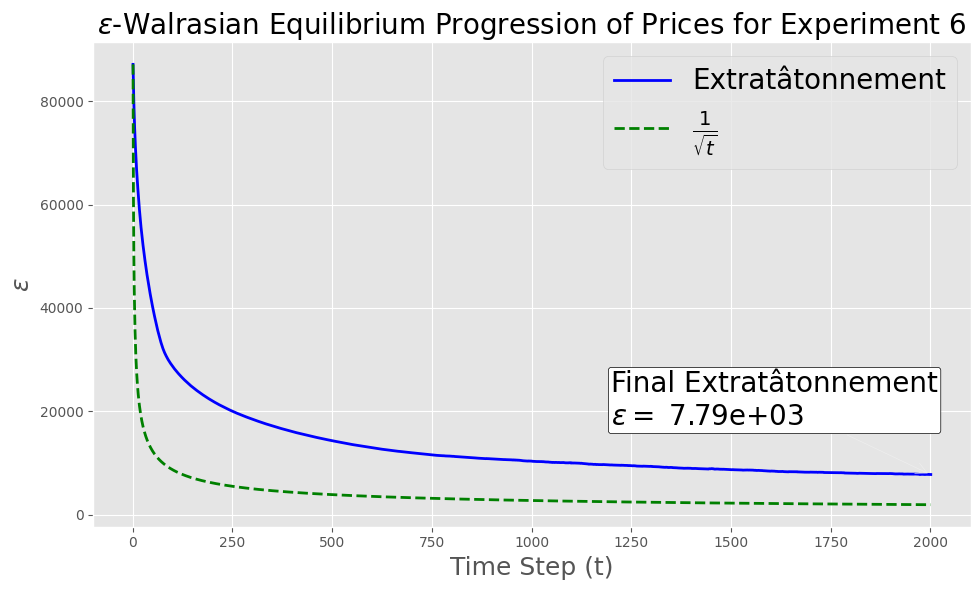}
    \end{subfigure}

    \caption{\sklara{$\epsilon$}{$\varepsilon$}-Walrasian deviations in Experiments 1--7 using step sizes that correspond to the minimum best-iterate Walrasian deviation without satisfying pathwise Bregman continuity. 
    These trajectories are similar to those generated using step sizes that minimize Walrasian  deviation while satisfying pathwise Bregman continuity
    (see \Cref{fig:experiment_plots_alt}).}
    \label{fig:experiment_plots_opt}
\end{figure}

\if 0
The results presented in \Cref{table:convergence_bounds_opt} show that the minimum and final deviation values during these experiments also fall under the established loose upper bound, providing evidence of convergence for extrat\^atonnement when pathwise Bregman continuity is satisfied, as established by our theory.

\begin{table}[hbtp]
   \caption{Theoretical upper bounds (based on \Cref{thm:mirror_extragradient_global_convergence}) and observed Walrasian deviations in Experiments 1--7
   using step sizes that minimize Walrasian deviation. 
   Minimum deviations correspond to the Walrasian deviation of the best iterate, whereas final deviations correspond to the Walrasian deviation of the last iterate. All deviation values fall well below the corresponding loose theoretical upper bounds.} 
    \label{table:convergence_bounds_opt}
    \begin{center}
        \renewcommand{\arraystretch}{1.1}
        \begin{tabular}{| m{1.25cm} | m{3cm} m{2cm} m{3.5cm} m{3cm}|}
            \hline
            Exp No. & Loose Upper Bound & Step Size & Minimum Deviation & Final Deviation \\ 
            \hline \hline
            1 & $1.28 \times 10^{2}$ & $4.00 \times 10^{0}$ & $7.19 \times 10^{-9}$ & $3.29 \times 10^{-4}$ \\
            2 & $5.25 \times 10^{1}$ & $4.87 \times 10^{0}$ & $1.49 \times 10^{-6}$ & $1.36 \times 10^{-3}$\\
            3 & $2.56 \times 10^{5}$ & $1.00 \times 10^{-3}$ &$2.79 \times 10^{-1}$ & $2.79 \times 10^{-1}$\\
            4 & $3.05 \times 10^{3}$ & $8.39 \times 10^{-2}$ &$5.82 \times 10^{-3}$ & $8.81 \times 10^{-3}$\\
            5 & $1.21 \times 10^{6}$ & $2.12 \times 10^{-4}$ &$2.05 \times 10^{4}$ & $2.12 \times 10^{4}$\\
            6 & $8.52 \times 10^{5}$ & $3.00 \times 10^{-4}$ &$7.78 \times 10^{3}$ & $7.79 \times 10^{3}$\\
            7 & $5.22 \times 10^{4}$ & $4.90 \times 10^{-3}$ & $5.83 \times 10^{-2}$ & $5.85 \times 10^{-2}$\\ \hline
        \end{tabular}
        \renewcommand{\arraystretch}{1.0}
    \end{center}
\end{table}

\fi

\begin{table}[hbtp]
   \caption{Grid-searched step sizes that minimize Walrasian deviations in economies 1--7. \emph{Global} (\emph{Bregman Continuous}) represents the step sizes that minimize Walrasian deviations without (while) satisfying pathwise Bregman continuity. \amy{are there any interesting conclusions to be drawn from these numbers? maybe not. maybe we don't actually need this table? \@Alp? if we want to keep this table, then we should add some insights about the Global column to the text.}}
   \label{table:ad_grid_search_step_sizes}
    \begin{center}
        \renewcommand{\arraystretch}{1.1}
        \begin{tabular}{|c| c  c |}
            \hline
            No. & Step Size (Global)  & Step Size (Bregman Continuous) \\ 
            \hline 
            \hline
            1 & $4.00 \times 10^{0}$ & $2.30 \times 10^{0}$ \\
            2 & $4.87 \times 10^{0}$ & $6.54 \times 10^{-1}$ \\
            3 & $1.00 \times 10^{-3}$ & $9.68 \times 10^{-4}$ \\
            4 & $8.39 \times 10^{-2}$ & $5.68 \times 10^{-2}$ \\
            5 & $2.12 \times 10^{-4}$ & $3.11 \times 10^{-5}$ \\
            6 & $3.00 \times 10^{-4}$ & $7.17 \times 10^{-5}$ \\
            7 & $4.90 \times 10^{-3}$ & $3.86 \times 10^{-3}$ \\
            \hline
        \end{tabular}
        \renewcommand{\arraystretch}{1.0}
    \end{center}
\end{table}

\end{document}